\documentclass[11pt]{article}

\usepackage[T1]{fontenc}
\usepackage[utf8]{inputenc}
\usepackage[english]{babel}
\usepackage{lmodern}
\usepackage{microtype}
\usepackage[margin=1in]{geometry}

\usepackage{amsmath,amssymb,amsthm}
\usepackage{mathtools}
\usepackage{bbm}

\mathtoolsset{showonlyrefs}

\numberwithin{equation}{section}

\usepackage{graphicx}
\usepackage{float}
\usepackage{wrapfig}
\usepackage{subcaption}
\usepackage{array}
\usepackage{multirow}
\usepackage{tabularx}
\usepackage{booktabs}
\usepackage[thinlines]{easytable}

\usepackage{algorithm}
\usepackage{algpseudocode}

\usepackage{enumitem}
\usepackage[toc,page]{appendix}
\usepackage{refcount}
\usepackage{natbib}

\usepackage[table,dvipsnames]{xcolor}
\usepackage[normalem]{ulem}
\usepackage{soul}

\usepackage[hidelinks]{hyperref}

\theoremstyle{definition}

\newtheorem{theo}{Theorem}[section]
\newtheorem{lemma}[theo]{Lemma}

\newtheorem{cor}[theo]{Corollary}

\newtheorem{rem}[theo]{Remark}
\newtheorem{ass}[theo]{Assumption}

\definecolor{violet}{RGB}{148,0,211}

\newcommand{\R}{\mathbb{R}}
\newcommand{\N}{\mathbb{N}}
\newcommand{\E}{\mathbb{E}}
\newcommand{\PR}{\mathbb{P}}

\newcommand{\Var}{\operatorname{Var}}
\newcommand{\cov}{\operatorname{cov}}

\newcommand{\lb}{\left(}
\newcommand{\rb}{\right)}

\begin{document}

\title{Change Point Detection and Localization in \\ High-Dimensional Time Series}

\author{
Patrick Bastian\thanks{These authors contributed equally to this work.}
\quad
Daria Tieplova\footnotemark[1]
\quad
Nina Dörnemann
\quad
Tim Kutta
\\[0.6em]
\small Department of Mathematics, Aarhus University, Aarhus, Denmark
\\[0.3em]
\small
\texttt{patrick.bastian@math.au.dk},
\texttt{dtieplova@math.au.dk}
\\
\small
\texttt{ndoernemann@math.au.dk},
\texttt{tim.kutta@math.au.dk}
}

\date{}

\maketitle

\begin{abstract}
We present new inference tools for change point detection in high-dimensional time series. We discuss two distinct statistical applications: First, sequential change point testing in an incoming data-stream. Second, retrospective localization of multiple changes, with confidence intervals at a globally controlled error level. Test statistics are built on the maximum norm to generate power against sparse and asynchronous changes. Both problems are tackled by related multiscale statistics that search for changes in the data at many different levels of resolution. For fixed dimension, our statistical approaches can be validated using traditional Hölderian invariance principles. In this paper, we present the high-dimensional analogue: Hölder-Gauss-approximations, which can be (roughly) interpreted as the Gaussian approximation for a Hölder-norm of the high-dimensional partial sum process. Such approximations are of interest beyond change point detection and can be used for other problems such as for stationarity testing in high dimensions.

We evaluate finite-sample performance in a simulation study and give an application to air contamination due to wildfires in California, which occurs asynchronously across a panel of measuring stations.
\end{abstract}

\medskip

\noindent
\textbf{Keywords:}
Change point detection;
Gaussian approximation;
high-dimensional;
monitoring;
sequential testing.

\bigskip

\section{Introduction} 

In this paper, we develop new tools for change point detection and localization in the mean parameter of a high-dimensional time series $(X_n)_{n \in \mathbb{N}}$ in $\mathbb{R}^d$. We study two related statistical problems. First, sequential change point testing in an incoming data stream, where the analyst compares a stationary benchmark sample of size $N$ to subsequently arriving observations. Second, retrospective localization of multiple changes from a single offline sample of $N$ data vectors. In both problems, changes may be sparse, in the sense that only few components are affected, and asynchronous, in the sense that different components may change at different times. This is the regime in which maximum-norm based methods are natural.\\
In sequential change point testing, the procedure should asymptotically control the type-I error and, after detecting a change, ideally identify those components where a change has occurred. In retrospective localization, there may be many changes, at many different locations, and the main task is to provide confidence intervals for the change points at a globally controlled error rate. Although these two problems are statistically different, we show that both can be treated by related high-dimensional multiscale statistics.\\
The idea of multiscale approaches becomes easily understandable when juxtaposing it with a MOSUM (Moving Sum) approach. A MOSUM scan statistic for changes is essentially based on comparing two sums of data points, each of length $h$, before and after a change point candidate $k$, to see if their difference is unusually large, i.e.
\[
\hat \gamma^{\mathrm{MOSUM}}(k) 
:= 
\bigg\|
\sum_{i=k-h+1}^k X_i 
-
\sum_{i=k+1}^{k+h}X_i
\bigg\|.
\]
Here $\|x\|= \max_{j=1,\ldots,d}|x_j|$ denotes the maximum norm, and we have suppressed scaling constants in $\hat \gamma^{\mathrm{MOSUM}}$. MOSUM scans are performed by maximizing over candidate values $k$, typically over all $k=h,\ldots,N-h$. Asymptotics are formulated for $N \to \infty$ and $h=h_N$ growing, but negligible compared to $N$.\\
The delicate part of MOSUM, and indeed of any similar kernel-based scanning procedure, is the choice of the ``right'' bandwidth $h$. For illustration, consider multiple change point detection. MOSUM performs best if changes have approximately a distance of order $h$ to each other. If distances are smaller than $h$, then nearby changes may mask each other. If distances are much larger than $h$, then MOSUM leaves power on the table by using an unnecessarily small subsample. For real-valued time series, this is already troublesome, because the relevant change point distances are rarely known beforehand. In high dimensions, with sparse and asynchronous changes, the issue becomes very severe. Different components may have different relevant temporal scales, and choosing a tailored bandwidth for each component becomes quite impractical.\\
Multiscale statistics provide a natural way out. Rather than fixating on a single bandwidth $h$, they search over many, or even all, possible bandwidth choices. Small scales are then available for close and abrupt changes, while large scales are available for weaker and more isolated changes. Besides power, multiscale statistics bring further benefits: narrow confidence intervals in retrospective localization and short detection delays in sequential testing.\\
The statistical gain comes at a mathematical cost. Maximization over many scales is non-trivial to justify and requires a careful weighting scheme. For fixed dimension, the validation of such statistics can often be based on so-called Hölderian invariance principles (see \cite{rackauskas:suquet:2004} for details). These are essentially refinements of Donsker's theorem, stating that the partial sum process
\[
P_{N}(t):= 
\begin{cases}
\frac{1}{\sqrt{N}} \sum_{i=1}^{n} X_{i}, 
& Nt =n \in \mathbb{N}_0,\\[0.2cm]
\textnormal{linearly interpolated}, 
& \textnormal{else},
\end{cases}
\]
after centering, converges to a Brownian motion on a space of Hölder continuous functions. With a continuous mapping argument, the fine Hölder topology then justifies convergence for multiscale statistics that are essentially maxima of weighted increments of $P_N$.\\
In this work, we develop a high-dimensional analogue of this idea, which we call a Hölder-Gauss approximation. Roughly speaking, it gives a Gaussian approximation for weighted maxima of increments of the high-dimensional partial sum process. This is the mathematical backbone of the paper. The main difficulty is that not all scales can be approximated by Gaussian variables: very short sums are not approximately normal. Our argument therefore combines Gaussian approximation on sufficiently large scales with uniform bounds on short scales, where the weighting scheme makes the remaining contribution negligible. Suppressing noise on small scales while still retaining power against local signals is the central trade-off.
 We summarize our main contributions.
\begin{itemize}
    \item[(I)] We develop novel Hölder-Gauss approximations for high-dimensional partial sum processes. These results justify the use of weighted multiscale statistics even when the dimension grows exponentially fast with the sample size.

    \item[(II)] We propose new sequential change point detectors for high-dimensional data streams. The procedures are based on maximum-norm multiscale statistics, allow for sparse and asynchronous changes, and come with asymptotic level guarantees. We also derive support estimators for the changed components and obtain detection-delay bounds.

    \item[(III)] We develop a retrospective segmentation method for high-dimensional time series. The method scans over components, locations and scales, and returns confidence intervals for multiple change points at a globally controlled error level.

    \item[(IV)] We investigate the finite-sample behavior of the proposed procedures in simulations and demonstrate their use on air-quality data from California, where wildfire-related contamination spreads asynchronously across a panel of measuring stations.
\end{itemize}

\noindent The remainder of this paper is organized as follows: We discuss some related literature in the rest of the Introduction. In Section~\ref{sec:2}, we develop the sequential monitoring procedure and prove its main theoretical properties. Section~\ref{sec:theo} contains the core Hölder-Gauss approximation for the online statistic. In Section~\ref{sec:seg}, we turn to retrospective segmentation and construct confidence intervals for multiple changes in high-dimensional time series, with the corresponding Hölder-Gauss approximation given in Section \ref{sec:seg:theory}. Finite-sample properties are studied in Section~\ref{sec:3}, using simulations and wildfire data. Mathematical proofs are gathered in the Appendix.

\subsection{Related literature}  \label{sec:lit}

We review a selected number of related works without trying to be exhaustive.\\
The new Hölder-Gauss approximations proposed in this work can be seen as a high-dimensional analogue of Hölderian invariance principles. Such results were proved originally by \cite{lamperti:1962} for i.i.d. scalar data, with adaptations to dependent data, including under mixing assumptions, by \cite{hamadouche:2000,rackauskas:suquet:2009,giraudo:2017,kutta:dette:wang:2025}. Versions for data in Hilbert and Banach spaces can be found, for example, in \cite{rackauskas:suquet:2009} and \cite{kutta:dornemann:2025}. Traditional papers on Hölderian invariance principles often illustrate their statistical use by testing epidemic alternatives -- essentially a two-change model, where the two changes may be close to or far apart from each other. While specific, this problem is prototypical for broader classes of scan-based testing problems, including the sequential and multiple change point problems discussed below.\\
Our proof of Hölder-Gauss approximations partly builds on high-dimensional Gaussian approximations for maximum norms, in the spirit of \cite{chernozhukov:chetverikov:kato:2013}; more specifically, we draw on results by \cite{chetverikov:wilhelm:kim:2021} when studying our new statistics. These results are used to approximate the contribution of scans over long scales. The short-scale contribution is shown to be asymptotically negligible by concentration inequalities. This negligibility is the counterpart, in our setting, of the vanishing Hölder modulus of continuity on small scales in traditional Hölderian invariance principles; see, for example, the discussion in \cite{rackauskas:suquet:2009}.\\
Statistically, we study two distinct problems: sequential change point detection and retrospective change point localization. Sequential change point testing in this paper refers to procedures which compare incoming data to a training period and then monitor sequentially over an open-ended test period; this formulation goes back to \cite{chu:stinchcombe:white:1996}. An appealing aspect of this setting is that no in-control parameters need to be specified in advance and no parametric assumptions need to be imposed on the error model. It should therefore be distinguished from formulations without a training period, such as \cite{chen:wang:samworth:2024}, which is tailored to high-dimensional Gaussian data streams.\\
Within the framework of \cite{chu:stinchcombe:white:1996}, there are several active directions of research. One line of work studies changes in increasingly complex time series models, with applications in economics; see, for example, \cite{horvath:lazar:liu:wang:xue:2025} and \cite{horvath:trapani:wang:2025}. Another line extends scan statistics, such as the statistic proposed by \cite{fremdt:2015}, to multiscale weighting, which can yield substantially shorter detection delays and often higher power; see \cite{kutta:dornemann:2025} and \cite{bastian:kutta:2025}. Our monitoring statistic is related to this latter line and can be viewed as a strongly weighted, high-dimensional adaptation of \cite{fremdt:2015}.
A further recent trend is the extension of sequential monitoring to more complex data types, including functional data \citep{kutta:kokoszka:2025}, distributional panels \citep{horvath:kokoszka:wang:2021}, and networks \citep{dornemann:kokoszka:kutta:lee:2025}. However, these works on complex data structures do not employ multiscale statistics, despite their benefits in terms of detection delay. The closest work to our monitoring procedure is \cite{gosmann:stoehr:heiny:dette:2022}, who develop monitoring with respect to the max-norm. Our approach differs in two important respects. First, it is multiscale and hence leads to shorter delays, as demonstrated both theoretically and empirically below. Second, their Gaussian approximations are formulated for a finite monitoring horizon and therefore do not by themselves yield open-ended monitoring guarantees in the sense of \cite{chu:stinchcombe:white:1996}. Since we allow open-ended monitoring, our proof requires an additional decomposition of the monitoring statistic into an early part, where the Hölder-Gauss approximation is applied, and a later part controlling the monitoring tail. The latter step relies on a tail bound for the modulus of continuity of the partial sum process on a non-compact domain, roughly comparable in role to Hájek - Rényi inequalities in traditional scalar monitoring; see \cite{aue:kirch:2024}.\\
The second statistical application studied in this paper is retrospective segmentation, or multiple change point localization, in high-dimensional time series. A useful review of general segmentation methods is given by \cite{cho:kirch:2024}. In this area, many works focus on detection and localization, whereas our focus is on simultaneous inference: we construct confidence intervals for the collection of change locations (across all dimensions) with globally controlled error rate.\\
In standard univariate (or multivariate) models, several procedures provide such global guarantees. One example is the classical MOSUM approach as discussed, for example, in \cite{eichinger:kirch:2018}. Multiscale confidence methods have been developed under the label of SMUCE since \cite{frick:munk:sieling:2014}, with extensions to dependent data by \cite{dette:eckle:vetter:2020}. SMUCE may be viewed as an additive multiscale method, because it discounts small scales by subtracting a scale-dependent penalty. A delicate aspect of SMUCE is that its location confidence statements are coupled to the selected number of changes, which can lead to counterintuitive coverage behavior; see the discussion in \cite{fryzlewicz:2024}. The Narrowest Significance Pursuit of \cite{fryzlewicz:2024} avoids prior estimation of the number of changes and delivers simultaneous confidence intervals for change locations. Theoretically, its guarantees are finite-sample, yet it does require quantiles of a multiresolution norm depending on the noise distribution. More recently, Hölder-type scalings have been employed for multiple change point localization by \cite{kutta:dette:wang:2025} and \cite{koehne:mies:2025}. Expanding such Hölder-type statistics to high dimensions is the main objective of this paper. Hölder-type scalings may be viewed as multiplicative multiscale statistics, because they suppress the noise contribution on small scales by a scale-dependent factor. For multivariate data and epidemic change testing, this idea goes back at least to \cite{rackauskas:suquet:2004}; its use for localizing many changes is more recent.\\
We end by pointing to some  high-dimensional segmentation methods that seem most related to our work.
 Earlier methods include sparsified or projected binary segmentation approaches, such as \cite{cho:2016} and \cite{wang:samworth:2018}, which focus on detecting and localizing changes in high-dimensional mean structures. More recent work has begun to address inference. For high-dimensional mean-shift models, \cite{kaul:michailidis:2025} develop refined local least-squares estimators and derive asymptotically valid confidence intervals for multiple change point locations under sparsity assumptions. Related inference results for high-dimensional regression models under temporal dependence are obtained by \cite{xu:wang:zhao:yu:2024}. Other recent procedures, such as \cite{cho:owens:2024}, provide computationally efficient segmentation methods for high-dimensional regression settings under dependence and non-Gaussianity, with consistency guarantees for the number and locations of changes. These works show that high-dimensional change point inference is an active area of research. Their inferential constructions, however, are typically based on refined estimators obtained after an initial localization step. This differs from the approach pursued here, where confidence intervals arise directly from a global multiscale statistic and are simultaneous over all detected changes. In this sense, our segmentation results complement the existing high-dimensional segmentation literature by providing a SMUCE/NSP-type inferential object in a high-dimensional setting.

\section{Statistical methodology: Sequential detection} \label{sec:2}

In this section, we develop a sequential test for detecting changes in the mean parameter of a high-dimensional time series. Our approach relies on a novel Hölder-Gauss approximation, which allows us to approximate our multiscale detector by the supremum of a Gaussian process over different scales and over long time horizons. We also propose a set estimator for those components where a change has occurred, and prove that the probability of falsely including an unchanged component is asymptotically controlled at a prescribed level. 

\subsection{Model and method}\label{sec:21}

\noindent  Let $(X_n)_{n \ge 1}$ denote a sequence of random vectors in $\mathbb{R}^d$, where we use the notation $X_n = (X_{n,1}, \ldots, X_{n,d})^\top \in \R^d$. The aim of the analyst is detecting changes in the mean parameter of the $X_n$ and localizing the components where said changes have occurred. Our procedure does not require any in-control parameters. Rather, changes are compared to an initial training period, which serves as a benchmark for the subsequent inference. To be precise, for a training size $N \in \mathbb{N}$, the analyst starts with a sample $
\{X_1,\dots,X_N\}$ of random vectors which is assumed to be free of changes. Then, after time $N$, the monitoring period starts, where data arrive sequentially; first $X_{N+1}$, then $X_{N+2}$, and so forth, theoretically indefinitely. 
At every time $k$ in the monitoring period, the analyst scans for changes across all components. We formalize this process with a data-generation scheme and a testing formulation.\\
For each component $j \in \{1,\ldots,d\}$ with a change, let $N + k_j^\star \in \mathbb{N}$ denote the time where the mean of the data changes from a pre-change value $\mu_j^{(1)} \in \mathbb{R}$ to a post-change value $\mu_j^{(2)} \in \mathbb{R}$. For components without a change, one may formally set $k_j^\star=\infty$. With this notation in hand, we express our data as follows:
\begin{align} \label{e:amoc}
X_{i,j} = \mu_j^{(1)} \, \mathbb{I}\{i \leq N + k_j^\star\} 
+ \mu_j^{(2)} \, \mathbb{I}\{i > N + k_j^\star\} 
+ \varepsilon_{i,j}, 
\qquad 1 \leq j \leq d, \; i \geq 1.  
\end{align}
Here, $(\varepsilon_{i,j})$ denotes a centered array of noise variables. 
It is often helpful to refer to all components $j$ that undergo a change, and we thus define the set
\begin{align} \label{eq_def_C}
\mathcal{C}(\infty) := \{ j \in \{1,\dots,d\} : \mu_j^{(1)} \neq \mu_j^{(2)}, k_j^\star<\infty \}.
\end{align}
If $\mathcal{C}(\infty) = \emptyset$, there is no change in any component. The global null hypothesis is that no component ever changes during the monitoring period. The sequential procedure raises an alarm once the observed data provide evidence against this null. We thus consider the hypothesis pair
\begin{align} \label{e:hypothesis}
&H_0: \mathcal{C}(\infty) = \emptyset, \qquad \textnormal{vs.} \qquad
H_1: \mathcal{C}(\infty) \neq \emptyset. 
\end{align}
To test this hypothesis pair, we suggest the use of a sequential CUSUM statistic based on the maximum norm $\| \cdot \|$. It can be calculated at any time $k \ge 1$ in the monitoring period, using the available data $X_1, X_2,\ldots,X_{N+k}$, and is defined as follows:
\begin{align} \label{e:statmain}
\widehat{\Gamma}_N^{\beta}(k):=  &\max_{0 \leq \ell < k} \bigg\{w_{\beta}(\ell, k, N) \cdot \Big\|\frac{k-\ell}{N} \sum_{n=1}^{N} X_{n}-\sum_{n=N+\ell+1}^{N+k}X_{n} \Big\| \bigg\}\,.  
\end{align}
Essentially, for fixed $(\ell,k)$, the statistic compares the mean during the training period to the mean in a recent monitoring block $N+\ell+1,\ldots,N+k$. Notice that at time $k$, scans are performed over all possible $\ell \in \{0,\ldots,k-1\}$, i.e. over the entire past. Scans are combined using a weighted maximum, with the weight function 
 $w_{\beta}$ described below. The above data aggregation scheme was first developed by \cite{fremdt:2015} for regression models and has, to the best of our knowledge, never been used in the context of high-dimensional data. For aggregation, \cite{fremdt:2015} used the weight function from \cite{horvath:kokoszka:huskova:steinebach:2003}, which emphasizes early changes. Since we want to develop an approach that performs well for early and late changes, and inspired by \cite{kutta:dornemann:2025},
we equip the CUSUM with a more powerful Hölder-type weight function 
$w_{\beta}$,
defined for $\beta \in [0,1/2)$ as
\begin{align} \label{eq_def_omega}
      w_{\beta}(\ell, k, N) =  \bigg\{(k-\ell)^{\beta} \big( N+k\big)^{1-\beta} N^{-1/2}\bigg\}^{-1}, \qquad 0 \le \ell<k<\infty.
\end{align}
The weight parameter $\beta \in [0,1/2)$ is user-determined. Essentially, larger values of $\beta$ lead to a higher weighting of small scales $k-\ell$, which in turn leads to faster detection of changes. As we will demonstrate in the theory section below, the price of larger $\beta$ is that Gaussian approximations for the detector $\widehat{\Gamma}_N^{\beta}(k)$ only hold for lower dimensions, though permissible growth rates are still stretched-exponential in $N$ for any $\beta \in [0,1/2)$. The term ``Hölder weights'' for \eqref{eq_def_omega} is derived from the fact that, at least heuristically, $\widehat{\Gamma}_N^{\beta}(k)$ corresponds to a Hölder norm for the partial sum process in $\mathbb{R}^d$. 

We now suggest the following statistical procedure: At any time $k$ in the monitoring period, calculate $\widehat{\Gamma}_N^{\beta}(k)$ and check whether it exceeds a threshold value $q$. If $\widehat{\Gamma}_N^{\beta}(k)\le q$, monitoring continues. If $\widehat{\Gamma}_N^{\beta}(k)> q$, a change is declared and the null hypothesis $H_0$ in \eqref{e:hypothesis} is rejected. An adequate threshold value $q$ depends on the targeted nominal level of the testing problem and the distribution of $\sup_{k \ge 1}\widehat{\Gamma}_N^{\beta}(k)$ under $H_0$. Deriving this distribution is the main theoretical challenge of this work; see Theorem \ref{thm:main}. For practical computation, we propose in Algorithm \ref{alg:bootstrap_quantile} a parametric bootstrap procedure. The use of Algorithm \ref{alg:bootstrap_quantile} ensures an asymptotic approximation for the user-determined level $\alpha \in (0,1)$.

\begin{algorithm}[H]
\caption{Bootstrap estimation of $\hat q^N_{1-\alpha}$}
\label{alg:bootstrap_quantile}
\begin{algorithmic}[1]
\Require Simulation horizon $T$, number of bootstrap samples $B$, level $\alpha$

\Ensure Estimated quantile $\hat q^N_{1-\alpha}$

\State Compute the centered sample covariance matrix $\hat\Sigma$ from $X_1, \ldots, X_{N}$
\For{$b = 1,\dots,B$}
    \State Generate independent samples $Z_1^{(b), \star},\dots,Z_{N+T}^{(b), \star} \sim \mathcal N(0,\hat\Sigma)$
    \For{$k = 1,\dots,T$}
        \State Compute $\hat\Gamma_N^{\beta, \star}(k)$ according to \eqref{e:statmain} using $\{Z_t^{(b),\star}\}_{t=1}^{N+T}$ as data
    \EndFor
    \State Compute
    \[
        \hat \Gamma_N^{\beta,(b),\star} \;=\; \max_{1\leq k \leq T} \hat \Gamma_N^{\beta,\star}(k)
    \]
\EndFor
\State Compute $\hat q^N_{1-\alpha}$ as the $\lceil B(1-\alpha)\rceil$-th order statistic of 
$\{\hat \Gamma_N^{\beta,(b),\star}\}_{b=1}^B$
\end{algorithmic}
\end{algorithm}
\noindent
In computation, the open-ended supremum is approximated by a supremum over a large finite simulation horizon $T$. In theory, the open-ended quantile is consistently approximated by choosing $T=T_N$ such that $T_N/N$ grows at a  slow polynomial rate; this follows from the proof of Theorem \ref{thm:main}.\\
Traditionally, sequential change point tests are formulated for a univariate time series and stop after detecting a change, i.e. after rejecting $H_0$. However, in a multivariate setup, it can be natural, after identifying a change in component $j_1$, to continue monitoring the remaining components $\{1,...,d\}\setminus \{j_1\}$ and determine whether further changes occur. Indeed, such a setup may have the interpretation of a change spreading across components. For this reason, we define in the next step a set estimator for $ \mathcal{C}(\infty)$. The estimator is defined sequentially at time $k$ in the monitoring period as follows:
\begin{align}\label{e:setest}
 \widehat{\mathcal{C}}_N(k) := &\big\{j : \max_{1 \le r \le k}\widehat{\Gamma}_N^{\beta}(j,r)>q \big\}, \qquad \textnormal{where}\\
\widehat{\Gamma}_N^{\beta}(j,k):= & \max_{0 \leq \ell < k} \bigg\{w_{\beta}(\ell, k, N) \cdot \Big|\frac{k-\ell}{N} \sum_{n=1}^{N} X_{n,j}-\sum_{n=N+\ell+1}^{N+k}X_{n,j} \Big| \bigg\}.   \label{e:comp}
\end{align}
The set estimator $ \widehat{\mathcal{C}}_N(k)$ can be shown to satisfy asymptotic false detection guarantees analogous to those of a sequential hypothesis test; details are given in the next section. We already point out that $ \widehat{\mathcal{C}}_N(k) $ is by construction increasing in $k$, i.e.
\[
 \widehat{\mathcal{C}}_N(k) \subset  \widehat{\mathcal{C}}_N(k+1).
\]
This property is desirable, because it means that a change declared in some component $j$ at time $k$ will not be revoked at a later time $k'$.

\subsection{Theoretical analysis} \label{sec:theo}

We now proceed to the mathematical analysis of the methodology described in Section \ref{sec:21}. 
Under the model of that section, we analyze the behavior of the statistic \eqref{e:statmain} in an asymptotic framework where $N \to \infty$, corresponding to an increasing size of the training sample. The dimension $d=d^{(N)}$ is a non-decreasing sequence of natural numbers and may grow at a stretched-exponential rate in $N$, with the growth rate specified in Assumption \ref{ass:main} below. For the study of alternatives, all model objects may depend on $N$, including the dimension $d$, the noise distribution, the mean vectors, the change locations, and the set $\mathcal C(\infty) = \mathcal C^{(N)}(\infty)$ defined in \eqref{eq_def_C}. To avoid changing alternatives along subsequences, we assume formally that
\[
\mathcal C^{(N)}(\infty)\subset \mathcal C^{(N+1)}(\infty), \qquad N\ge 1.
\]
This assumption has no practical meaning for a fixed data set; it only ensures that, asymptotically, either the null holds for all $N$ or the alternative holds eventually. We suppress the index $N$ whenever no confusion can arise.
To state our assumptions, we need the definition of the $\Psi_2$-Orlicz norm of a real-valued random variable $Y$, which we recall as
\begin{align*}
    \|Y\|_{\Psi_2} 
        := \inf\Bigl\{ t > 0 : \; 
\mathbb{E}\bigl[\exp\!\bigl((Y/t)^2\bigr)\bigr] \leq 2 \Bigr\}.
\end{align*}
Note that $\|Y\|_{\Psi_2}$ is also called the subgaussian norm, as subgaussian random variables can be characterized as having a finite $\Psi_2$-Orlicz norm. 
\newpage 
\begin{ass} \label{ass:main}
    \begin{enumerate}[label=(A-\arabic*)] $ $
           \item \label{ass1} There exists a positive constant 
        \[
       0 <C_1< \frac{1-2\beta}{12-18 \beta}\qquad \textnormal{such that}\qquad   0 \le \limsup_N \frac{d}{\exp(N^{C_1})} <\infty.
        \]
        \item \label{ass2} For any fixed $N$, the sequence of model errors $(\varepsilon_i)_{i \in \mathbb{N}}$ consists of i.i.d. centered random vectors in $\mathbb{R}^d$.
        \item \label{ass3}
        For some $C_2>0$, it holds for $\varepsilon_1 = (\varepsilon_{1,1}, \ldots, \varepsilon_{1,d})^\top$ that 
        \[
        \sup_N \sup_{j=1,\ldots,d} \|\varepsilon_{1,j}\|_{\Psi_2} 
        \le C_2.
        \]
        \item \label{ass4}
       It holds for some constants $0< \underline{\sigma} \le \bar \sigma<\infty$ that
        \[
        \underline{\sigma}^2 \le \inf_N \min_{j} \mathbb{E}\varepsilon_{1,j}^2 \le \sup_N \max_{j} \mathbb{E}\varepsilon_{1,j}^2 \le \bar \sigma^2.
        \]
    \end{enumerate}
\end{ass}

\noindent The main step of our analysis is to prove the following Hölder-Gauss approximation, which states that, under $H_0$, the distribution of the detector $\sup_{k \ge 1} \hat\Gamma_N^{\beta}(k)$ can be approximated by the supremum of a high-dimensional Gaussian process. The name of this result is derived from the fact that it generalizes Hölderian invariance principles that are well explored in finite dimensions; see the literature review in Section \ref{sec:lit}. To this end, let $Z_1,Z_2,\ldots$ denote i.i.d. centered Gaussian vectors in $\mathbb{R}^d$ which have the same covariance matrix as $\varepsilon_1$. Also define the Gaussian detector
\begin{align} \label{e:fullcwZ}
\widetilde{\Gamma}_N^{\beta}(k):=  &\max_{0 \leq \ell < k} \bigg\{w_{\beta}(\ell, k, N) \cdot \Big\|\frac{k-\ell}{N} \sum_{n=1}^{N} Z_{n}-\sum_{n=N+\ell+1}^{N+k}Z_{n} \Big\| \bigg\}\, .  
\end{align} 

\begin{theo}[Hölder-Gauss Approximation]
\label{thm:main}
    Suppose that $H_0$ holds true and that Assumption \ref{ass:main} is satisfied. Then, for any $x_0>0$, it holds that 
\[
\lim_{N\to\infty} \sup_{x>x_0}\big| \mathbb{P}\big(\sup_{k\ge 1}\widehat{\Gamma}_N^{\beta}(k) \le x\big)-\mathbb{P}\big(\sup_{k\ge 1}\widetilde{\Gamma}_N^{\beta}(k)\le x\big)\big| =0.
\]
\end{theo}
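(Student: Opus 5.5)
Under $H_0$ the means cancel in \eqref{e:statmain}: $\frac{k-\ell}{N}\sum_{n\le N}\mu^{(1)}-\sum_{n=N+\ell+1}^{N+k}\mu^{(1)}=0$. Hence we may replace $X_n$ by $\varepsilon_n$ throughout. I plan to prove the theorem in three steps: a time truncation, a scale truncation, and a Gaussian approximation on the remaining finite index set.

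\textbf{Step 1: Time truncation.} Fix a horizon $T_N=N^{1+\delta}$ for a small $\delta>0$. I will show that
\[
\sup_{k>T_N}\widehat\Gamma_N^\beta(k)\le \sup_{k\le T_N}\widehat\Gamma_N^\beta(k)+o_P(1)
\]
in the sense that the excess probability vanishes uniformly in $x>x_0$, and that the same holds for $\widetilde\Gamma_N^\beta$.

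For this, decompose the increment into two parts. The training part has size $\frac{k-\ell}{N}\|\sum_{n\le N}\varepsilon_n\|=O_P\big((k-\ell)N^{-1/2}\sqrt{\log d}\big)$. The monitoring part is $\|S_{N+k}-S_{N+\ell}\|$. Multiplying by the weight $(k-\ell)^{-\beta}(N+k)^{\beta-1}N^{1/2}$, the training part contributes at most $(k/(N+k))^{1-\beta}\cdot$ (a tight Gaussian maximum). Letting $k\to\infty$ this does not vanish, so the truncation must instead exploit that both processes have essentially the same limit behaviour.

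A cleaner route is a dyadic blocking of the monitoring time $k\in(2^mN,2^{m+1}N]$. On such a block, apply a maximal inequality for the weighted increments of the sub-Gaussian partial sums (a Hölder modulus bound, as in the remark on Hájek--Rényi inequalities). The aim is to show that on the blocks with $2^m>N^{\delta}$, the \emph{variation} of $k\mapsto \widehat\Gamma^\beta_N(k)$ beyond its value at $T_N$ is at most $\eta$ with probability $1-o(1)$. The reason this works is that for $k\gg N$ the statistic is dominated by the training term $\|\bar\varepsilon_N\|\sqrt N\,(k-\ell)^{1-\beta}/(N+k)^{1-\beta}$, which is essentially monotone and saturated. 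The remainder, of order $(k)^{1/2-\beta}(N+k)^{\beta-1}N^{1/2}\sqrt{\log(dk)}\approx (N/k)^{1/2}\sqrt{\log d}$, summed over blocks, is small once $k\ge N^{1+\delta}$ and $\delta$ is chosen relative to $C_1$. The same argument applies verbatim to the Gaussian detector.

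\textbf{Step 2: Scale truncation.} For $k\le T_N$, split the scales into $k-\ell\le L_N$ and $k-\ell>L_N$, with $L_N=N^{\gamma}$. On short scales, a union bound together with sub-Gaussian tails gives
\[
\max_{k-\ell\le L}\|S_{N+k}-S_{N+\ell}\|\lesssim \sqrt{L\log(dT_N)}.
\]
After weighting, this is $\lesssim L^{1/2-\beta}N^{-1/2}\sqrt{\log d}\to0$ if $\gamma(1/2-\beta)<1/2-C_1/2$. The same bound holds for $Z$. Since the long-scale maximum stays bounded away from $0$ when $x>x_0$, short scales cannot be the argmax with nonnegligible probability.

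\textbf{Step 3: Gaussian approximation.} On long scales, every statistic is a max over indices $(j,\ell,k,\pm)$ of sums of independent vectors. Indeed, each $\widehat\Gamma$-term is a linear functional $\sum_n a_n^{(\ell,k)}\varepsilon_{n,j}$ of the full vector $(\varepsilon_1,\dots,\varepsilon_{N+T_N})$. I will cast this as $\max$ of a high-dimensional sum with dimension $p=2d\,T_N^2$ and apply the Gaussian approximation of \cite{chetverikov:wilhelm:kim:2021} for maxima of sums of independent (not identically weighted) vectors, together with Gaussian anti-concentration. The variances are bounded below by \ref{ass4}, since the training variance has order $(k-\ell)^2/N\cdot$ weight$^2$ and the scales are long.

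The error rate will take the form $(B_N^2\log^7(pN)/n_{\rm eff})^{1/6}$. The coefficient bound involves the short-scale weight ratio $L_N^{-\beta}$, which appears through the max summand size $\max_n|a_n|\sim$ weight. The constraint $C_1<\frac{1-2\beta}{12-18\beta}$ should emerge from optimizing $\gamma$ and $\delta$ against this error.

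\textbf{Main obstacle.} I expect the main obstacle to be Step 1, the open-ended horizon. Steps 2 and 3 are standard, but uniform control of $\sup_{k>T_N}$ requires a non-compact Hölder modulus tail bound that is delicate because of the training term. If the argument above falls short, I would first try conditioning on $\bar\varepsilon_N$ and applying a chaining bound over dyadic blocks to the remaining monitoring process.
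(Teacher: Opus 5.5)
Your architecture matches the paper's: cut the horizon at a polynomial multiple of $N$ using a uniform modulus bound, discard short scales by a sub-Gaussian union bound, and apply the Chetverikov--Wilhelm--Kim approximation to the remaining coordinates. Your Step~1 is a valid variant of the paper's treatment of the horizon. The paper carries $N^{-1/2}\|\sum_{n\le N}\varepsilon_n\|$ along as $d$ extra coordinates (Lemma~\ref{lem1}). You can instead use that at $\ell=0$, $\widehat{\Gamma}_N^{\beta}(T_N)\ge (T_N/(N+T_N))^{1-\beta}N^{-1/2}\|\sum_{n\le N}\varepsilon_n\|-O_P(\sqrt{N\log d/T_N})$. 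Combined with Lemma~\ref{lem:psum_conc} for the monitoring part, this gives $\sup_{k\le T_N}\widehat{\Gamma}_N^{\beta}(k)\le\sup_{k}\widehat{\Gamma}_N^{\beta}(k)\le\sup_{k\le T_N}\widehat{\Gamma}_N^{\beta}(k)+\eta_N$, with $\eta_N\to0$ polynomially once $\delta>C_1$. So the open horizon is not the main obstacle.

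The genuine gap is in Step~3: the long-scale variances are \emph{not} bounded below by \ref{ass4}. The $(k,\ell,j)$-coordinate has variance $\sigma_j^2\{(\tfrac{k-\ell}{N+k})^{2(1-\beta)}+\tfrac{N}{N+k}(\tfrac{k-\ell}{N+k})^{1-2\beta}\}$. This is of order one only when $k-\ell\gtrsim N$ and $k-\ell\asymp k$. At $k-\ell\approx N^{\gamma}$, $k\approx N^{1+\delta}$ it is of order $v_N^2=N^{-2(1-\beta)(1-\gamma+\delta)}\to0$. The rate $(B_N^2\log^7(pN)/n)^{1/6}$ you quote is the Chernozhukov--Chetverikov--Kato bound, which presupposes all variances bounded away from zero.

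The paper's proof is built around this degeneracy. Lemma~\ref{lem:th_chetverikov} needs the variance bound for only one coordinate (e.g.\ $\ell=0$, $k\ge N$) and only controls thresholds $x\ge x_0$; this is why the theorem is stated for $x>x_0$. Every other perturbation must then be polynomially small relative to $v_N$. The $\bar N^{-1}$-shift from truncating the summands is absorbed by Lemma~\ref{lem:anti_conc} with $b=v_N$. The covariance mismatch between truncated and untruncated Gaussian vectors is handled by Lemma~\ref{lem:compar_gaus} with its explicit $\sigma^{-1}$ factor. These two steps produce the last two constraints in \eqref{ass:delta_eta}.

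Your Step~1 shift runs into the same problem in a sharper form. Lemma~\ref{lem:anti_conc} gives $\eta_N(\sqrt{2\log p}+2)/v_N$, which diverges for every admissible choice of exponents. You therefore need an anti-concentration bound that first discards coordinates with variance $\lesssim 1/\log p$ (they cannot reach $x_0$), as in Lemma~\ref{Lemma:AntiConc}, giving $C(\eta_N\log p+p^{-1})$. The paper needs the same step, and leaves it implicit, when passing from Lemma~\ref{lem1} to Lemma~\ref{lemma:main_gauss_apprx}.

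Step~2 contains an exponent slip that hides where the restriction on $C_1$ comes from. Since $(N+k)^{\beta-1}\le N^{\beta-1}$, the weighted bound is $N^{1/2}h^{1/2-\beta}(N+k)^{\beta-1}\sqrt{\log(dT_N)}\le (h/N)^{1/2-\beta}\sqrt{\log(dT_N)}$, not $L_N^{1/2-\beta}N^{-1/2}\sqrt{\log d}$. You therefore need $(1-\gamma)(1-2\beta)>C_1$: the cut must stay polynomially below $N$, as in the paper's $k-\ell\le(N+k)^{1-\delta}$ with $C_1<\delta(1-2\beta)$. Your condition $\gamma(1/2-\beta)<(1-C_1)/2$ would allow $L_N\ge N$ as soon as $C_1\le2\beta$. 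That cannot be right, because scales of order $N$ carry the non-degenerate part of the statistic.

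Two smaller points on Step~2. You must also include the training part on short scales; it is of order $(L_N/N)^{1-\beta}\sqrt{\log d}$ and harmless. And the argument needed there is just $\mathbb{P}(\text{short-scale max}>x_0)\to0$, not that the long-scale maximum stays away from zero.

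With the correct cut, the normalised summands have $\Psi_2$-norm at most $b_N=N^{\delta/2+\beta(1-\gamma)}$. The error in Lemma~\ref{lem:th_chetverikov} is then of order $(b_N^2\log^{11}(p\bar N)/\bar N)^{1/6}$ with $\bar N=N+T_N$. The restriction on $C_1$ comes from the tension between $2\beta(1-\gamma)+11C_1<1$ and $(1-\gamma)(1-2\beta)>C_1$, together with the $v_N$-constraints above.
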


Theorem \ref{thm:main} provides theoretical justification for the level approximation of the statistical procedures developed in Section \ref{sec:21}. We state this precisely in the next corollary. In the second part, we also derive a false positive guarantee for the set estimator $\widehat{\mathcal{C}}_N$, controlling the risk of erroneously including a component into our support estimate where no change has occurred. To this end, denote by $q^{(N)}_{1-\alpha}$ the  $(1-\alpha)$-quantile of $\sup_{k\ge 1}\widetilde{\Gamma}_N^{\beta}(k)$.

\begin{cor} \label{cor:main:level}
    Suppose that Assumption \ref{ass:main} holds and let $\alpha \in (0,1/2)$ be fixed. 
    Then, supposing $H_0$, we have
    \[
    \lim_{N \to \infty} \mathbb{P}_{H_0}\big(\sup_{k\ge 1}\widehat{\Gamma}_N^{\beta}(k) >q^{(N)}_{1-\alpha}\big) = \alpha.
    \]
    Moreover, with $q=q^{(N)}_{1-\alpha}$ in \eqref{e:setest},
    \[
    \limsup_{N \to \infty} \mathbb{P}\Big(\exists j \in \bigcup_{k\ge 1}\widehat{\mathcal{C}}_N(k) \,\,\, \textnormal{with} \,\,\, j \not \in \mathcal{C}(\infty)\Big) \le \alpha.
    \]
\end{cor}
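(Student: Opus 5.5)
The first claim follows from Theorem~\ref{thm:main} once we know that the limiting Gaussian law has no atoms and that the relevant quantiles are bounded away from zero. I will first show that $q^{(N)}_{1-\alpha}\ge x_0$ for some fixed $x_0>0$ and all large $N$. Because $\alpha<1/2$, $q^{(N)}_{1-\alpha}$ is at least the median of $\sup_k\widetilde\Gamma_N^\beta(k)$. That supremum dominates a single coordinate at $k=1,\ell=0$, namely $w_\beta(0,1,N)\,|N^{-1}\sum_{n\le N}Z_{n,1}-Z_{N+1,1}|$. Here $w_\beta(0,1,N)=N^{1/2}(N+1)^{\beta-1}$, which tends to zero, so this crude choice fails. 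I will instead take $\ell=0$ and $k=N$: then $w_\beta=N^{1/2}N^{-\beta}(2N)^{\beta-1}\asymp N^{-1/2}$, and the increment is $\sum_{n\le N}Z_{n,1}-\sum_{n=N+1}^{2N}Z_{n,1}$, which has standard deviation $\sqrt{2N}\,\sigma_1\ge\sqrt{2N}\,\underline\sigma$ by \ref{ass4}. This component is therefore a centered normal with variance bounded below, so its median absolute value is bounded below by a positive constant. That constant serves as $x_0$.

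Next I need continuity of the Gaussian law, i.e. that $\sup_{x}\mathbb P(|\sup_k\widetilde\Gamma_N^\beta(k)-x|\le\epsilon)\to0$ as $\epsilon\to0$, uniformly in $N$ for $x\ge x_0$. The plan is to use the anti-concentration inequality for maxima of Gaussian processes (Nazarov's inequality / Chernozhukov--Chetverikov--Kato). Each coordinate of the process has variance bounded below by a constant, using the same choice $\ell=0,k=N$. A truncated-horizon version, as in the proof of Theorem~\ref{thm:main}, reduces the problem to finitely many indices. The resulting bound is of order $\epsilon\sqrt{\log(\#\text{indices})}$, where the index count is $d\cdot T_N^2$.

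This anti-concentration step is the main obstacle. The $\sqrt{\log}$ factor blows up, so $\epsilon$-continuity is not uniform in $N$ and we cannot conclude that $\mathbb P(\sup\widetilde\Gamma=q^{(N)})$-type atoms vanish simply by taking limits. Fortunately we need less. Gaussian suprema are continuous random variables once the variance is positive; indeed, by Tsirelson's theorem the law of a supremum of a separable Gaussian process is absolutely continuous above its left endpoint. This gives $\mathbb P(\sup_k\widetilde\Gamma_N^\beta(k)>q^{(N)}_{1-\alpha})=\alpha$ exactly for each $N$. Applying Theorem~\ref{thm:main} at $x=q^{(N)}_{1-\alpha}>x_0$ then yields
\[
\mathbb P_{H_0}\big(\sup_k\widehat\Gamma_N^\beta(k)>q^{(N)}_{1-\alpha}\big)=\alpha+o(1).
\]

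For the second claim, restrict to the coordinates $\mathcal U=\{1,\dots,d\}\setminus\mathcal C(\infty)$. For $j\in\mathcal U$ the mean of $X_{n,j}$ is constant, so $\widehat\Gamma_N^\beta(j,r)$ depends only on the errors $\varepsilon_{n,j}$. The event in question is therefore $\{\max_{j\in\mathcal U}\sup_r\widehat\Gamma_N^\beta(j,r)>q\}$. Define $\widehat\Gamma^{0}$ as the statistic \eqref{e:statmain} computed from the pure errors. This quantity satisfies the conditions of Theorem~\ref{thm:main}, since under $H_0$ the data equal a constant plus noise and constants cancel in the increments. It dominates the max over $\mathcal U$, so the probability is at most $\mathbb P(\sup_k\widehat\Gamma^0(k)>q^{(N)}_{1-\alpha})$. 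The threshold $q^{(N)}_{1-\alpha}$ is built from $Z$ having the covariance of $\varepsilon_1$, so the first part applies and this probability tends to $\alpha$, which gives the $\limsup\le\alpha$ bound.
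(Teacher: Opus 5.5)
Your proposal is correct and follows essentially the same route as the paper, which obtains the corollary directly from Theorem~\ref{thm:main} together with the continuity of the distribution function of a convex functional (here the supremum) of a Gaussian process, i.e.\ the Tsirelson-type result cited from Bogachev. You make explicit two points the paper leaves implicit. First, $q^{(N)}_{1-\alpha}$ stays bounded away from zero (via the $(\ell,k)=(0,N)$ coordinate and \ref{ass4}), so the restriction $x>x_0$ in Theorem~\ref{thm:main} costs nothing. Second, for the set estimator, the unchanged components are dominated by the pure-noise detector, to which the first claim applies.
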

The level guarantees in Corollary \ref{cor:main:level} are formulated in terms of the true quantile $q^{(N)}_{1-\alpha}$, which still depends on the unknown covariance matrix $\Sigma$ of the error distribution. However, one may use the empirical covariance matrix $\hat \Sigma$ of the training sample $X_1, \ldots X_N$ to simulate quantiles; see Algorithm \ref{alg:bootstrap_quantile}. This replacement is justified by Lemma \ref{lem:compar_gaus} in the Appendix, together with the fact that under our model assumptions it is not hard to show that $\max_{1 \leq i,j \leq d}|\Sigma_{ij}-\hat \Sigma_{ij}| = o_P(1)$.\\
Next, we characterize the behavior of the sequential test and set estimator in the presence of changes. For this purpose, we show two separate results. First, we analyze the deterministic drift of the test statistic in the presence of changes. Since we consider a max-norm statistic, we can express this most naturally using the componentwise statistic \eqref{e:comp}. Second, we analyze the size of the random part, i.e. the order of the test statistic under $H_0$. This follows by using an exponential moment result for suprema of Gaussian processes. Together we obtain:

\begin{theo} \label{thm:alt}
    Suppose that Assumption \ref{ass:main} holds. Moreover, suppose that there exists a change in component $j$ at location $N+k_j^{\star}$ such that $|\mu_j^{(1)}-\mu_j^{(2)}|=:\Delta_j>0$. Then, it holds for $k>k_j^{\star}$ that
    \begin{align} \label{e:dec:main}
        \widehat{\Gamma}_N^{\beta}(j,k) \ge
        \bigg(\frac{k-k^\star_j}{N+k}\bigg)^{1-\beta} \big(\Delta_j \sqrt{N}\,\big)
        - \mathcal{O}_P\Big(\sup_{k\ge 1}\widetilde{\Gamma}_N^{\beta}(k)\Big).
    \end{align}
    The constants implied by the $\mathcal{O}_P$-term are independent of $j$ and a further bound for it is implied by the fact that
    \begin{align} \label{e:bound:WC}
    \sup_{k\ge 1}\widetilde{\Gamma}_N^{\beta}(k) = \mathcal{O}_P\big( \log^{1/2}(ed)\big).
    \end{align}
\end{theo}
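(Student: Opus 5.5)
The plan is to decompose the componentwise statistic into a deterministic drift and a centered noise part, and then treat each separately. Under the model \eqref{e:amoc}, write $X_{n,j}=\mu_{n,j}+\varepsilon_{n,j}$. Fix $k>k_j^\star$ and choose the scale $\ell=k_j^\star$ in \eqref{e:comp}; this is admissible since $0\le k_j^\star<k$. Then
\[
\frac{k-\ell}{N}\sum_{n=1}^N X_{n,j}-\sum_{n=N+\ell+1}^{N+k}X_{n,j}
=(k-k_j^\star)\big(\mu_j^{(1)}-\mu_j^{(2)}\big)
+\Big(\frac{k-\ell}{N}\sum_{n=1}^N \varepsilon_{n,j}-\sum_{n=N+\ell+1}^{N+k}\varepsilon_{n,j}\Big),
\]
because the training period is change-free and the block $N+k_j^\star+1,\dots,N+k$ is entirely post-change. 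This uses only the change in component $j$, so other components play no role. Bounding the maximum over $\ell$ from below by this single term and applying the reverse triangle inequality gives
\[
\widehat{\Gamma}_N^{\beta}(j,k)\ge w_\beta(k_j^\star,k,N)(k-k_j^\star)\Delta_j-\sup_{r\ge1}\widehat{\Gamma}_N^{\beta,\varepsilon}(r),
\]
where $\widehat{\Gamma}_N^{\beta,\varepsilon}$ denotes the detector \eqref{e:statmain} computed from the errors, using the max-norm over all components. Inserting \eqref{eq_def_omega} yields $w_\beta(k_j^\star,k,N)(k-k_j^\star)=(k-k_j^\star)^{1-\beta}(N+k)^{\beta-1}N^{1/2}$, which is exactly the drift in \eqref{e:dec:main}. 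The remainder is independent of $j$, which gives the uniformity claim.

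Next, I would note that $\widehat{\Gamma}_N^{\beta,\varepsilon}$ is the statistic under $H_0$. By Theorem \ref{thm:main}, its supremum has the same limiting distribution as $\sup_k\widetilde{\Gamma}_N^{\beta}(k)$ on $(x_0,\infty)$. This implies $\sup_k\widehat{\Gamma}_N^{\beta,\varepsilon}(k)=\mathcal{O}_P(\sup_k\widetilde{\Gamma}_N^{\beta}(k))$, in the sense that quantiles are comparable. Concretely, for $M$ large, $\mathbb{P}(\sup\widehat\Gamma^\varepsilon>M q_{1-\eta}^{(N)})\le \mathbb{P}(\sup\widetilde\Gamma>q_{1-\eta}^{(N)})+o(1)=\eta+o(1)$. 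Here one needs $q^{(N)}_{1-\eta}$ bounded away from zero, which follows from \ref{ass4}: a single component at a fixed scale already yields a nondegenerate Gaussian.

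For \eqref{e:bound:WC}, the Gaussian detector is the supremum of a centered Gaussian process indexed by $(j,\ell,k)$. I would first bound its variances uniformly. The variance of the increment is $\sigma_j^2\{(k-\ell)^2/N+(k-\ell)\}$. After weighting, this becomes at most
\[
\bar\sigma^2\Big[\Big(\tfrac{k-\ell}{N+k}\Big)^{2-2\beta}+(k-\ell)^{1-2\beta}N(N+k)^{2\beta-2}\Big]\le C,
\]
since $k-\ell\le N+k$, $(k-\ell)^{1-2\beta}\le (N+k)^{1-2\beta}$ and $N\le N+k$. Then I would apply a union bound over $j$ together with a bound on the expected supremum over $(\ell,k)$ for each fixed $j$. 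The latter is a one-dimensional Hölder-weighted Brownian functional. Its expectation is finite and bounded in $N$: rescale time by $N$ and use the Hölder continuity of Brownian motion with exponent $\beta<1/2$, together with a law-of-iterated-logarithm type control as $k\to\infty$ coming from the factor $(N+k)^{1-\beta}$. Borell–TIS then gives $\mathbb{E}\sup\le C\sqrt{\log(ed)}$, and Markov's inequality yields \eqref{e:bound:WC}.

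The main obstacle is the uniform-in-$N$ finiteness of the expected supremum over the non-compact index set $k\ge1$ for a single component. I would handle this by a blocking argument over dyadic ranges $N+k\in[2^mN,2^{m+1}N)$. On each block I would use Brownian scaling, which reduces it to a fixed Hölder-type functional scaled by $2^{-m(1/2-\beta)}$ roughly, apply Borell–TIS on each block, and sum the resulting geometric tail bounds.
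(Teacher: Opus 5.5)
Your proof is correct and follows the paper's argument. You fix $\ell=k_j^\star$, split off the drift $w_\beta(k_j^\star,k,N)(k-k_j^\star)\Delta_j$ with the reverse triangle inequality, and control the $j$-free noise remainder, which is the $H_0$ detector built from the errors, via Theorem~\ref{thm:main}. For \eqref{e:bound:WC} the paper only invokes an exponential-moment bound for Gaussian suprema, and your Brownian rescaling, dyadic blocking, Borell--TIS and union-bound argument fills in that step.

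One small precision: the lower bound on $q^{(N)}_{1-\eta}$ must use a scale of order $N$, e.g.\ $(\ell,k)=(0,N)$, whose weighted variance is $2^{2\beta-1}\sigma_j^2\ge 2^{2\beta-1}\underline{\sigma}^2$. At scales fixed in original time units the weighted variance tends to zero.
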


The bound in \eqref{e:bound:WC} is a worst-case upper bound, which is sharp, for example, in the case of i.i.d. noise components. In other cases, such as for discretized functional time series, it can be true that the growth is much slower, or even that the term is $\mathcal{O}_P(1)$. The same bound obviously holds when the dimension $d$ is fixed.
For our analysis, Theorem \ref{thm:alt} has two distinct and related uses: It can be used to analyze detection delays and consistency of the test statistic. We begin with consistency, both for the sequential test and for the set estimator.

\begin{cor} \label{cor:alt1}
Suppose that Assumption \ref{ass:main} holds. 
Then, if there exists a $j$ which may depend on $N$ such that
\[
\lim_{N \to \infty} \frac{\Delta_j \sqrt{N}}{\sqrt{\log(ed)}} = \infty, \quad \textnormal{then} \quad 
\lim_{N \to \infty} \mathbb{P}\big(\sup_{k\ge 1}\widehat{\Gamma}_N^{\beta}(k) >q_{1-\alpha}^{(N)}\big)=1.
\]
Similarly, if 
\[
\lim_{N \to \infty} \min_{j \in \mathcal{C}(\infty)}\frac{\Delta_j \sqrt{N}}{\sqrt{\log(ed)}} = \infty, \quad \textnormal{then} \quad 
\lim_{N \to \infty} \mathbb{P}\bigg(\mathcal{C}(\infty)\subset \bigcup_{k\ge 1}\widehat{\mathcal{C}}_N(k)\bigg)=1.
\]
\end{cor}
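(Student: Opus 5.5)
The plan is to combine the drift lower bound of Theorem \ref{thm:alt} with an upper bound on the critical value $q^{(N)}_{1-\alpha}$. There are three steps.

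\textbf{Step 1: bound the critical value.} By \eqref{e:bound:WC}, $\sup_{k\ge1}\widetilde{\Gamma}_N^{\beta}(k)=\mathcal{O}_P(\log^{1/2}(ed))$. Since $\alpha$ is fixed, the $(1-\alpha)$-quantile of a tight-at-rate-$\log^{1/2}(ed)$ sequence satisfies $q^{(N)}_{1-\alpha}=\mathcal{O}(\log^{1/2}(ed))$ deterministically. To see this, pick $M$ such that $\mathbb{P}(\sup_k\widetilde{\Gamma}_N^\beta(k)>M\log^{1/2}(ed))<\alpha$ for all large $N$. Then $q^{(N)}_{1-\alpha}\le M\log^{1/2}(ed)$.

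\textbf{Step 2: lower bound the statistic on an event of probability tending to one.} For the first claim, fix the component $j=j_N$ from the hypothesis. Note that $\sup_k\widehat{\Gamma}_N^\beta(k)\ge \sup_k\widehat{\Gamma}_N^\beta(j,k)$, since the max-norm dominates each coordinate. Apply \eqref{e:dec:main} at a well-chosen $k>k_j^\star$. The factor $\big((k-k_j^\star)/(N+k)\big)^{1-\beta}$ tends to $1$ as $k\to\infty$, and the $\mathcal{O}_P$ term does not depend on $k$. So, taking $k$ large (for instance $k=k_j^\star+N^2+k_j^{\star}$, which makes the ratio at least $1/2$ up to a constant), we get
\[
\sup_{k}\widehat{\Gamma}_N^\beta(j,k)\ge c\,\Delta_j\sqrt N-\mathcal{O}_P(\log^{1/2}(ed)).
\]
If $k_j^\star$ is infinite the component is not changed, so there is no issue here. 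Dividing by $\log^{1/2}(ed)$, the right-hand side diverges in probability. Together with Step 1, this yields $\mathbb{P}(\sup_k\widehat{\Gamma}_N^\beta(k)>q^{(N)}_{1-\alpha})\to1$.

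\textbf{Step 3: the set estimator.} This is the main point requiring care, because uniformity over $j\in\mathcal{C}(\infty)$ is needed, and the number of such $j$ may grow. Here I use that the $\mathcal{O}_P$ term in \eqref{e:dec:main} has constants independent of $j$. More precisely, I would reread the proof of Theorem \ref{thm:alt} to see that the random remainder is bounded by a single random variable $R_N$, namely the noise part $\sup_{k}\max_j$ of the centred componentwise statistic. The distribution of $R_N$ matches $\sup_k\widehat{\Gamma}_N^\beta(k)$ under $H_0$, which by Theorem \ref{thm:main} and \eqref{e:bound:WC} is $\mathcal{O}_P(\log^{1/2}(ed))$. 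This gives simultaneously for all $j\in\mathcal{C}(\infty)$:
\[
\max_{r\ge1}\widehat{\Gamma}_N^\beta(j,r)\ge c\,\Delta_j\sqrt N-R_N\ge c\min_{j\in\mathcal{C}(\infty)}\Delta_j\sqrt N-R_N.
\]
This quantity exceeds $q^{(N)}_{1-\alpha}$ with probability tending to one. Hence every $j\in\mathcal{C}(\infty)$ lies in $\widehat{\mathcal{C}}_N(k)$ for some finite $k$, i.e.\ $\mathcal{C}(\infty)\subset\bigcup_k\widehat{\mathcal{C}}_N(k)$.

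The main obstacle is this uniform-in-$j$ remainder, which must be handled via the common dominating variable rather than termwise $\mathcal{O}_P$ statements.
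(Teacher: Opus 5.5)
Your proposal is correct and follows the paper's own argument. It combines the drift bound \eqref{e:dec:main} at a sufficiently large $k$ with $q^{(N)}_{1-\alpha}=\mathcal{O}(\sqrt{\log(ed)})$ obtained from \eqref{e:bound:WC}. For the set-estimator part, which the paper omits as ``similar'', you use a single dominating noise variable $R_N$: the detector computed from the errors alone, controlled via Theorem \ref{thm:main}. This is exactly what the $j$-independence of the $\mathcal{O}_P$-term in Theorem \ref{thm:alt} is meant to supply.
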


Finally, we come to the topic of detection delays and we focus again on individual components to make the results more precise. Suppose there exists a change in some component $j$ which may depend on $N$. The detection delay in component $j$ is the time elapsing between the occurrence of the change $k^\star_j$ and the first time that $\widehat{\Gamma}_N^{\beta}(j,k) >q_{1-\alpha}^{(N)}$. Denoting this first crossing time by $\hat k_j$, we define the detection delay as
\[
\tau_j := \max(0, \hat k_j-k_j^\star).
\]
Taking the positive part prevents us from having negative detection delays that result from erroneous and premature rejections. 

\begin{cor} \label{cor:alt2}
Suppose that Assumption \ref{ass:main} holds and that there exists a change in component $j$, with $k_j^\star  =  \lfloor N \theta \rfloor $ for some $\theta>0$. Then
\[
\textnormal{if} \quad \lim_{N \to \infty} \frac{\Delta_j \sqrt{N}}{\sqrt{\log(ed)}} = \infty, \quad\textnormal{it follows that} \quad
\tau_j = \mathcal{O}_P \bigg( \bigg(\frac{\log(ed) N^{1-2\beta}}{\Delta_j^2 }\bigg)^{1/[2(1-\beta)]}\bigg).
\]
\end{cor}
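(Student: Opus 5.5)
Corollary~\ref{cor:alt2} follows from the lower bound \eqref{e:dec:main} of Theorem~\ref{thm:alt} combined with an upper bound on the threshold $q^{(N)}_{1-\alpha}$. First, by \eqref{e:bound:WC} and the fact that $q^{(N)}_{1-\alpha}$ is a fixed quantile of $\sup_{k\ge1}\widetilde\Gamma_N^\beta(k)$, I get $q^{(N)}_{1-\alpha} \le C_\alpha \log^{1/2}(ed)$ for a constant $C_\alpha$ and all large $N$. This holds because a sequence of random variables that is $\mathcal O_P(a_N)$ has all fixed quantiles $\mathcal O(a_N)$. Also, by the uniformity in $j$ stated in Theorem~\ref{thm:alt}, the random term in \eqref{e:dec:main} is bounded, for any $\varepsilon>0$, by $M_\varepsilon\log^{1/2}(ed)$ with probability at least $1-\varepsilon$, simultaneously for all $k$. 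The uniformity in $k$ matters here. Inspecting the proof of Theorem~\ref{thm:alt}, the remainder is controlled by $\sup_k$ of the noise part of the componentwise statistic, which is dominated by $\sup_k \widehat\Gamma^\beta_N(k)$ evaluated on the errors. That quantity is $\mathcal O_P(\log^{1/2}(ed))$ by Theorem~\ref{thm:main} and \eqref{e:bound:WC}.

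Next, on this event, take $k = k_j^\star + m$. Then $\widehat\Gamma_N^\beta(j,k) > q^{(N)}_{1-\alpha}$ as soon as
\[
\Big(\frac{m}{N+k_j^\star+m}\Big)^{1-\beta}\Delta_j\sqrt N > (C_\alpha+M_\varepsilon)\log^{1/2}(ed).
\]
Since $k_j^\star=\lfloor N\theta\rfloor$, I restrict to $m\le N$, so that $N+k_j^\star+m\le (2+\theta)N$. A sufficient condition is then
\[
m \ge (2+\theta)N\Big(\frac{(C_\alpha+M_\varepsilon)\log^{1/2}(ed)}{\Delta_j\sqrt N}\Big)^{1/(1-\beta)} = C'_{\varepsilon}\Big(\frac{\log(ed)\,N^{1-2\beta}}{\Delta_j^2}\Big)^{1/[2(1-\beta)]}.
\]
This follows from the algebra $N\cdot N^{-1/(2(1-\beta))} = N^{(1-2\beta)/(2(1-\beta))}$. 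Call the right-hand side $m_N$.

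The hypothesis $\Delta_j\sqrt N/\sqrt{\log(ed)}\to\infty$ gives $m_N/N\to0$. Hence $m_N\le N$ eventually, and the restriction $m\le N$ is consistent. Therefore, with probability at least $1-\varepsilon$, the first crossing time satisfies $\hat k_j\le k_j^\star+\lceil m_N\rceil$, so $\tau_j=\max(0,\hat k_j-k_j^\star)\le \lceil m_N\rceil$. Since $\varepsilon$ is arbitrary, this gives the claimed $\mathcal O_P$ rate. The positive part in $\tau_j$ handles premature crossings automatically.

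The main obstacle is ensuring that the $\mathcal O_P$ remainder in \eqref{e:dec:main} is uniform over $k$. The bound is needed at the single, $N$-dependent time $k_j^\star+\lceil m_N\rceil$, but stating it cleanly requires confirming that the remainder is bounded by a $k$-free random variable. I would verify this directly by splitting $\widehat\Gamma^\beta_N(j,k)$ into drift plus noise via the triangle inequality. Choosing $\ell=k_j^\star$, the noise is bounded by the weighted sup of the centered componentwise CUSUM, which is exactly what Theorem~\ref{thm:main} controls.
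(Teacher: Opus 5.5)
Your proposal is correct and follows essentially the same route as the paper: combine the drift lower bound \eqref{e:dec:main} from Theorem~\ref{thm:alt} with $q^{(N)}_{1-\alpha}=\mathcal{O}(\log^{1/2}(ed))$, restrict $k$ to a window where $N+k\asymp N$, and solve for $k-k_j^\star$. You are more explicit than the paper, which simply restricts to $k_j^\star\le k\le Ck_j^\star$: you check that $m_N/N\to 0$ so the restriction is consistent, and you make the remainder bound uniform in $k$, but the argument is the same.
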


If $\Delta_j$ is bounded away from zero, this becomes
\[
\tau_j
=
\mathcal{O}_P\left[
(\log (ed))^{1/[2(1-\beta)]}
N^{(1-2\beta)/[2(1-\beta)]}
\right].
\]
This corollary demonstrates that delay times are generally shorter for larger $\beta \in [0,1/2)$, i.e. when putting more weight on the recent past. Recall that there exists an interaction between $\beta$ and the dimension $d$ in our theory, where larger $\beta$ requires smaller $d$. In the case where $d$ grows polynomially in $N$, the setup of \cite{gosmann:stoehr:heiny:dette:2022}, any value for $\beta$ is permitted and we can ensure very fast detection. Indeed, in that case, we obtain delays of size
\[
\mathcal{O}_P\left[
(\log N)^{1/[2(1-\beta)]}N^{\zeta}
\right],
\qquad
\zeta =\frac{1-2\beta}{2(1-\beta)},
\]
where $\zeta$ can be made arbitrarily small by choosing $\beta$ close to $1/2$. In contrast to our above results, analyzing the method by \cite{gosmann:stoehr:heiny:dette:2022} shows that their delays are of size $\mathcal{O}_P(\log^{1/2}(N)N^{1/2})$, which is much longer. Another advantage over \cite{gosmann:stoehr:heiny:dette:2022} of our methodology is that we justify it for open-ended monitoring, which is the gold-standard in the framework by \cite{chu:stinchcombe:white:1996}; see the recent review by \cite{aue:kirch:2024}.

\section{Statistical methodology: Segmentation} \label{sec:seg}

In this section, we develop a retrospective segmentation method for detecting and localizing component-specific mean changes in a high-dimensional time series. As in the previous section, theory under the null distribution relies on a Hölder-Gauss approximation for a multiscale scan statistic. The output of the method is a collection of (component-indexed) confidence intervals for change point locations.

\subsection{Model and method}\label{sec:seg:model}

Consider a sample of random vectors \( X_1, \dots, X_N \in \mathbb{R}^d \), where as before the $i$th component of $X_n$ is denoted by $X_{n,i}$. We decompose
\begin{align} \label{e:seg:model}
X_{n,i} = \mu_{n,i} + \varepsilon_{n,i}, \qquad n=1,\ldots,N,\quad i=1,\ldots,d,
\end{align}
where \( \mu_n \in \mathbb{R}^d \) is a deterministic mean vector and \( \varepsilon_n \in \mathbb{R}^d \) is a centered noise vector. The mean sequence is assumed to be piecewise constant in each component. More precisely, for every component $i$, let
\begin{align} \label{e:seg:cp}
c_{0,i} := 0 < c_{1,i} < \dots < c_{K_i,i} < N = c_{K_i+1,i}
\end{align}
denote the component-specific change point locations. We write
\begin{align} \label{e:seg:mean}
\mu_{n,i} = \mu_i^{(k)}, \qquad c_{k-1,i} < n \le c_{k,i},\qquad k=1,\ldots,K_i+1.
\end{align}
Thus the $k$th change in component $i$ occurs at location $c_{k,i}$ and has jump size
\begin{align} \label{e:seg:jump}
\Delta_{k,i}:=
|\mu_i^{(k+1)}-\mu_i^{(k)}|,
\qquad k=1,\ldots,K_i.
\end{align}
We also define the local separation of the $k$th change by
\begin{align} \label{e:seg:sep}
\delta_{k,i}
:=
\min\big(c_{k+1,i}-c_{k,i}, c_{k,i}-c_{k-1,i}\big).
\end{align}
The number of changes $K_i$ is unknown, and our goal is to construct confidence intervals for the associated component-specific change locations \(c_{k,i}\).
For detecting changes in component $i$, we use the symmetric scan statistic
\begin{align}
\label{eq:def:segmentation:scan}
    \gamma_i(n,h)
    =
    \frac{\sum_{m=n-h+1}^{n}X_{m,i}-\sum_{m=n+1}^{n+h}X_{m,i}}{N^{1/2-\beta}h^{\beta}},
\end{align}
where, as before, \(\beta\in[0,1/2)\) is a weight parameter and the pair $(n,h)$ is chosen such that
$1\le n-h+1<n+h\le N$.
For fixed $(n,h)$, the statistic compares the block immediately to the left of $n$ with the block immediately to the right of $n$. Large values of $|\gamma_i(n,h)|$ provide evidence for a change inside the discrete interval
\[
I(n,h):=\{n-h+1,\ldots,n+h-1\}.
\]
Our statistical method is multiscale, because it scans over many choices of $h$. Small values of $h$ are useful for localizing large and closely spaced changes, whereas large values of $h$ are needed for smaller but well-separated changes. The precise signal-to-noise condition under which all changes are detected is stated in Section~\ref{sec:seg:theory}. The use of the scan statistic \eqref{eq:def:segmentation:scan} with Hölder weights is inspired by \cite{kutta:dette:wang:2025}, where such statistics were considered for Banach space-valued data, including functional data, based on corresponding Hölderian invariance principles. That framework, however, does not cover the high-dimensional setting considered here.
Now, let
\begin{align}  \label{eq_def_mathcal_S}
\mathcal{S}\subset \{(n,h): 1 \le n-h+1<n+h\le N\}
\end{align}
be the set of indices over which scans are performed and, for later reference, define its $h$-slice by
\[
\mathcal{S}_h:= \{n: (n,h) \in \mathcal{S}\}.
\]
Different choices for $\mathcal{S}$ are possible. The exhaustive choice is
\begin{align} \label{e:seg:Sstar}
    \mathcal{S}^\star
    :=
    \{(n,h): 1 \le n-h+1<n+h\le N\},
\end{align}
which is statistically natural but computationally expensive. A cheaper alternative is given by a geometrically thinned grid,
\begin{align} \label{e:seg:Stheta}
    \mathcal{S}^\theta
    :=
    \{(n,h) \in \mathcal{S}^\star: \exists m \in \mathbb{N}_0
    \textnormal{ such that } h = \lfloor \theta^m\rfloor \},
    \qquad \theta>1.
\end{align}
Using $\mathcal{S}^\theta$ reduces the number of scanned pairs from order $\mathcal{O}(N^2)$ to order $\mathcal{O}(N\log N)$, an important improvement since scans need to be performed across all $d$ components. For notational simplicity in the proofs we will assume that $(\lfloor N/2\rfloor,\lfloor N/2\rfloor-1) \in \mathcal S$. Our basic scan procedure along one fixed component is described in Algorithm \ref{alg:seg:scan}. Subsequently, componentwise scans are combined to COMBI-SCAN in Algorithm \ref{alg:seg:combi}. The outputs are confidence intervals for component-specific changes.  To obtain confidence intervals that are as small as possible, Algorithm \ref{alg:seg:scan} begins by scanning small scales $h$, for which the interval length is \(2h-1\), and then proceeds to larger scales. Changes are detected, and confidence intervals created, whenever the scan statistic $|\gamma(n,h)|$ exceeds an input threshold $q$, analogous to the threshold from the sequential monitoring section. Once an interval is reported, future accepted intervals are required to be disjoint from it, which avoids double detection of the same change. 

\begin{algorithm}[h]
\caption{SCAN: Scan across a single scalar time series} \label{alg:seg:scan}
\begin{algorithmic}[1]
    \State \textbf{Input:} data $x_1,\ldots,x_N \in \mathbb{R}$, index set $\mathcal{S}$, threshold $q>0$
    \State \textbf{Output:} $\hat{\mathcal{I}}$, a collection of discrete confidence intervals
    \State \textbf{Initialize} $I_{\mathrm{all}}=\emptyset$ and $\hat{\mathcal{I}}=\emptyset$
    \For{$h=1,\ldots,\lfloor N/2\rfloor $}
        \If{$\mathcal{S}_h \neq \emptyset$}
            \For{$n \in \mathcal{S}_h$ in increasing order}
                \State Compute $\gamma(n,h)$ and set
$I:=\{n-h+1,\ldots,n+h-1\}$
\If{$|\gamma(n,h)|>q$ and $I\cap I_{\mathrm{all}}=\emptyset$}
    \State Update
    $I_{\mathrm{all}}\to I_{\mathrm{all}}\cup I$
    and
    $\widehat{\mathcal I}\to\widehat{\mathcal I}\cup\{I\}$
\EndIf
            \EndFor
        \EndIf
    \EndFor 
    \State \Return $\hat{\mathcal{I}}$
\end{algorithmic}
\end{algorithm}

\begin{algorithm}[h]
\caption{COMBI-SCAN: Scan across a multivariate time series} \label{alg:seg:combi}
\begin{algorithmic}[1]
    \State \textbf{Input:} data $X_1,\ldots,X_N \in \mathbb{R}^d$, index set $\mathcal{S}$, threshold $q>0$
    \State \textbf{Output:} $\hat{\mathcal{I}}$, a collection of component-indexed confidence intervals
    \State \textbf{Initialize} $\hat{\mathcal{I}}=\emptyset$
    \For{$i=1,\ldots,d$}
        \State Compute $ \hat{\mathcal{I}}_i := \textnormal{SCAN}((X_{1,i},...,X_{N,i}), \mathcal{S},q) $
        \State Update  $\hat{\mathcal{I}} \to \hat{\mathcal{I}} \cup \{(I,i): I\in \hat{\mathcal{I}}_i\}$
    \EndFor 
    \State \Return $\hat{\mathcal{I}}$
\end{algorithmic}
\end{algorithm}

\begin{algorithm}[h]
\caption{Bootstrap estimation of $\hat q^N_{1-\alpha}$}
\label{alg:seg_bootstrap_quantile}
\begin{algorithmic}[1]
\Require Index set $\mathcal S$, number of bootstrap samples $B$, level $\alpha$

\Ensure Estimated quantile $\hat q^N_{1-\alpha}$

\State Compute first differences $D_n=X_{n+1}-X_n$, $n=1,\ldots,N-1$, and set
\[
    \hat\Sigma_D
    =
    \frac{1}{2(N-1)}
    \sum_{n=1}^{N-1}D_nD_n^\top .
\]

\For{$b = 1,\dots,B$}
    \State Generate independent samples
    $Z_1^{(b),\star},\ldots,Z_N^{(b),\star}\sim \mathcal N(0,\hat\Sigma_D)$.
    \State Compute $T_N^{(b),\star}
        =
        \max_{(n,h)\in\mathcal S}
        \max_{1\le i\le d}
        |\gamma_i^{(b),\star}(n,h)|,$
    
    where $\gamma_i^{(b),\star}(n,h)$ is defined as in \eqref{eq:def:segmentation:scan}, with $X$ replaced by $Z^{(b),\star}$.
\EndFor

\State Compute $\hat q^N_{1-\alpha}$ as the $\lceil B(1-\alpha)\rceil$-th order statistic of
$\{T_N^{(1),\star},\ldots,T_N^{(B),\star}\}$.
\end{algorithmic}
\end{algorithm}

If the threshold $q$ in COMBI-SCAN is properly calibrated, each reported interval contains at least one true change point (in the corresponding component), with asymptotic probability \(\ge 1-\alpha\). 
For calibration, we suggest again using a parametric bootstrap based on Gaussian data.  A version for segmentation is given in Algorithm \ref{alg:seg_bootstrap_quantile}. Notice that, for sequential monitoring, estimating the covariance of the noise was relatively simple because it could be based on the uncontaminated training sample. Since no such sample is available in the segmentation problem, we estimate the covariance matrix using first-order differences of the data, to minimize the impact of potential changes. This estimator is appropriate when the contribution of mean changes to the empirical covariance is negligible, i.e. if mathematically speaking
\[
\left\|
\frac{1}{N}
\sum_{n=1}^{N-1}
(\mu_{n+1}-\mu_n)(\mu_{n+1}-\mu_n)^\top
\right\|_{\max}
=o(1).
\]
This condition is of course satisfied if there are not too many changes that are not too large. The theoretical foundation of the COMBI-SCAN with bootstrap calibration relies once more on a Hölder-Gauss approximation, which is given in the next section.

\subsection{Theoretical analysis} \label{sec:seg:theory}

We now analyze the theoretical properties of the segmentation procedure. The main task is once more the derivation of a Hölder-Gauss approximation to justify the quantile calibration in Algorithm \ref{alg:seg_bootstrap_quantile}. For our theory in this section, we will always confine ourselves to the case where the scan goes over all indices \( \mathcal{S}^\star \) in \eqref{e:seg:Sstar}. Still, similar results can be formulated for sparse versions such as \(\mathcal{S}^\theta\).
To describe the approximating null distribution, let $Z_1,Z_2,\ldots$ denote i.i.d. centered Gaussian vectors in $\mathbb{R}^d$ which have the same covariance matrix as $\varepsilon_1$ and define the corresponding scan statistic
\begin{align} \label{e:seg:tildegamma}
\tilde \gamma_i(n,h)
=
    \frac{\sum_{m=n-h+1}^{n}Z_{m,i}
    -
    \sum_{m=n+1}^{n+h}Z_{m,i}}
    {N^{1/2-\beta}h^{\beta}}.
\end{align}
The following result is the segmentation analogue of Theorem~\ref{thm:main}. It is stated under the no-change model, which is needed for subsequent inference.

\begin{theo}
\label{thm:seg:gauss}
Suppose that Assumption~\ref{ass:main} holds and that no changes occur, i.e. \(K_i=0\) for all \(i=1,\ldots,d\). Then, for any $x_0>0$, it follows that
\begin{align}
\label{e:seg:HG}
\sup_{t \ge x_0}
\bigg|
\mathbb{P}\bigg(
\max_{1 \leq i \leq d,\,(n,h) \in \mathcal{S}^\star}
|\gamma_i(n,h)|
\le t
\bigg)
-
\mathbb{P}\bigg(
\max_{1 \leq i \leq d,\,(n,h) \in \mathcal{S}^\star}
|\tilde \gamma_i(n,h)|
\le t
\bigg)
\bigg|
=o(1).
\end{align}
\end{theo}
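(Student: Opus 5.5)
Under the no-change model every $\gamma_i(n,h)$ is a sum of centered noise, so the plan is to split the scan set $\mathcal S^\star$ into small and large scales at a threshold $h_N = N^{a}$, with $a\in(0,1)$ to be tuned against the dimension restriction in Assumption~\ref{ass1}. On the large scales $\mathcal S_{>}:=\{(n,h)\in\mathcal S^\star: h> h_N\}$ I will apply a high-dimensional Gaussian approximation to a maximum of sums of independent vectors. On the small scales $\mathcal S_{\le}$ I will show that both $\max_{i,\mathcal S_\le}|\gamma_i|$ and $\max_{i,\mathcal S_\le}|\tilde\gamma_i|$ are $o_P(1)$. Because of the restriction $t\ge x_0>0$, the small-scale part then cannot affect the event $\{\max\le t\}$ except on a set of vanishing probability, up to an anti-concentration error. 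This is the same architecture as for Theorem~\ref{thm:main}, but without the open-ended monitoring tail, so it should be strictly easier.

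\textbf{Small scales.} For fixed $(n,h,i)$, the numerator of $\gamma_i(n,h)$ is a sum of $2h$ i.i.d.\ centered subgaussian variables with $\Psi_2$-norm at most $C_2$ by \ref{ass3}, so it is subgaussian with parameter of order $\sqrt h$. Hence $|\gamma_i(n,h)|$ is subgaussian with parameter of order $h^{1/2-\beta}N^{\beta-1/2}\le (h_N/N)^{1/2-\beta}$. A union bound over at most $dN h_N$ triples gives
\[
\max_{i,(n,h)\in\mathcal S_\le}|\gamma_i(n,h)| = \mathcal O_P\big((h_N/N)^{1/2-\beta}\sqrt{\log(dN)}\big) = \mathcal O_P\big(N^{-(1-a)(1/2-\beta)+C_1/2}\big),
\]
which is $o_P(1)$ if $C_1<(1-a)(1-2\beta)$. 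The Gaussian version $\tilde\gamma$ is handled by the identical argument.

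\textbf{Large scales.} Write $\max_{i,\mathcal S_>}|\gamma_i(n,h)|=\max_{v\in V}\big|N^{-1/2}\sum_{m=1}^N \langle a_v(m),\varepsilon_m\rangle\big|$. Here $v=(i,n,h)$ indexes at most $p\le dN^2$ linear functionals, and the coefficients are $a_v(m)=\pm (N/h)^{\beta}e_i$ on the two blocks and $0$ elsewhere. These are sums of independent, non-identically distributed vectors in $\mathbb R^{p}$ with summands bounded in $\Psi_2$ by $C(N/h_N)^{\beta}$. The variances are bounded below: $\Var(\gamma_i(n,h))=2h^{1-2\beta}N^{2\beta-1}\sigma_{ii}\ge 2\underline\sigma^2 (h_N/N)^{1-2\beta}$, which is small. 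I will therefore apply the Chernozhukov--Chetverikov--Kato type bound in the form of \cite{chetverikov:wilhelm:kim:2021} to the standardized or unstandardized maximum, with the unequal variances handled by the Nazarov anti-concentration version. The resulting Kolmogorov distance is of order $\big(B_N^2\log^7(pN)/N\big)^{1/6}$ times polynomial factors in the inverse minimal variance. Here $B_N\asymp (N/h_N)^{\beta}$, and the variance factor contributes powers of $(N/h_N)^{1-2\beta}$. Tracking these exponents, together with $\log p\lesssim N^{C_1}$, against the small-scale requirement should reproduce exactly the constraint $C_1<(1-2\beta)/(12-18\beta)$ for a suitable $a$. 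I expect this bookkeeping, and obtaining a lower-variance-robust version of the Gaussian approximation, to be the main obstacle.

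\textbf{Combination.} Let $M_>$ and $\tilde M_>$ denote the large-scale maxima. Then
\[
\PR(M_>\le t-\eta)-\PR(\text{small part}>\eta)\le \PR(\max\le t)\le \PR(M_>\le t),
\]
and similarly for the Gaussian version. Using the large-scale approximation and choosing $\eta=\eta_N\to0$ slowly, it remains to show $\sup_{t\ge x_0}\PR(t-\eta_N<\tilde M_>\le t)\to0$. For this I will use Gaussian anti-concentration for maxima. Since variances can be small, I will apply it to the subfamily with $h\ge \epsilon N$ (variances bounded below) for the level $t\ge x_0$. Indices with smaller $h$ have $\tilde\gamma$ of size $\mathcal O_P(\epsilon^{1/2-\beta}\sqrt{\log (dN)})$ and cannot reach $x_0$ unless $\epsilon$ shrinks appropriately with $N$. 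This step is where $x_0>0$ is essential, and if needed I will refine the splitting with a third, intermediate scale band.
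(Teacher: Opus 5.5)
Your outline is workable and has the same architecture as the paper. The paper omits this proof and refers to the proof of Theorem~\ref{thm:main}, without the open-ended tail of Lemma~\ref{lem1}. Small scales $h\le N^{1-\delta}$ are removed by a sub-Gaussian union bound exactly as you do; your condition $C_1<(1-a)(1-2\beta)$ is the paper's $C_1<\delta(1-2\beta)$ with $a=1-\delta$. A high-dimensional CLT is then used on the remaining scales.

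The combination step, however, needs no anti-concentration. Since the statistic is a maximum, $\{M\le t\}=\{M_>\le t\}\cap\{M_\le\le t\}$, so
$0\le \PR(M_>\le t)-\PR(M\le t)\le \PR(M_\le>x_0)=o(1)$ uniformly in $t\ge x_0$, and likewise for $\tilde M$. This is exactly how the paper reduces Lemma~\ref{lemma:main_gauss_apprx} to \eqref{eq_step1}--\eqref{eq_step2}, and it is where $x_0>0$ is really needed. Your additive shift is valid but lossy. Nazarov's bound costs $\eta_N\sqrt{\log(dN)}/\sigma_{\min}$, so $\eta_N$ cannot tend to zero ``slowly'': it must be polynomially small. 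Yet $\eta_N$ must also dominate the small-scale maximum near $h_N$, which is $\sqrt{\log(dN)}$ times the smallest large-scale standard deviation. That tension is what forces your intermediate band. The band can be made to work, but the difficulty is self-inflicted.

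On the large scales the paper keeps unstandardized coordinates and proceeds as follows:
\begin{enumerate}[label=(\roman*)]
\item It truncates the summands and applies Lemma~\ref{lem:th_chetverikov}. That lemma needs a variance lower bound for only one coordinate, supplied by $(\lfloor N/2\rfloor,\lfloor N/2\rfloor-1)\in\mathcal S^\star$, and its conclusion holds for $x\ge x_0$.
\item It absorbs the tiny truncation shift with Lemma~\ref{lem:anti_conc}, using an explicit, polynomially small variance bound.
\item It restores the untruncated covariance with Lemma~\ref{lem:compar_gaus}, whose constant depends explicitly on $\sigma^{-1}$.
\end{enumerate}

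Your standardization option, which you leave as the main obstacle, closes this step more directly. The event $\{\max_v|\gamma_v|\le t\}$ is a hyperrectangle. You may therefore divide each coordinate by its exact standard deviation $\{2\sigma_{ii}(h/N)^{1-2\beta}\}^{1/2}$ and apply the Chernozhukov--Chetverikov--Kato hyperrectangle CLT with unit variances. The standardized summands are $\pm\varepsilon_{m,i}\{N/(2h\sigma_{ii})\}^{1/2}$ on the two blocks, so $B_N\asymp (N/h_N)^{1/2}$. The resulting error is of order $(\log^7(dN)/h_N)^{1/6}$, uniformly in all $t$. The Gaussian counterpart is exactly the standardized vector of the $\tilde\gamma_i(n,h)$, so no truncation or covariance comparison is needed.

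With $h_N=N^a$ this requires $7C_1<a<1-C_1/(1-2\beta)$, which is feasible whenever $C_1<(1-2\beta)/(8-14\beta)$. That condition is implied by Assumption~\ref{ass:main}, part~\ref{ass1}. You should not expect to recover the constant $(1-2\beta)/(12-18\beta)$ exactly. The paper itself notes that this bound is a common, non-sharp choice made to serve both monitoring and segmentation.
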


Before proceeding to the analysis of our scan procedures, we record a small result of mathematical interest. In our Hölder-Gauss approximation, we approximate the distribution of a scan statistic, i.e. increments of a partial sum process over a grid, by the corresponding scan over Gaussian data. It is interesting to note that the grid is not needed in the Gaussian case, at least in theory. One may instead take all weighted increments of the associated $d$-dimensional Brownian motion, which brings the result even closer to classical Hölderian invariance principles.

\begin{cor}
\label{cor:BrownianApprox:main}
Let $B$ be a $d$-dimensional centered Brownian motion, determined by the covariance matrix of $\varepsilon_1$. Suppose that Assumption \ref{ass:main}, parts \ref{ass1} and \ref{ass4}, are satisfied. Define
\[
    D^\beta
    :=
    \sup_{0<y<x<x+y<1}
    \frac{
    \big\|\{B(x)-B(x-y)\}-\{B(x+y)-B(x)\}\big\|
    }{y^\beta},
    \qquad \beta \in [0,1/2).
\]
Then, for any $x_0>0$,
\begin{align}
\sup_{t \geq x_0}
\bigg|
\mathbb{P}\big(D^\beta \le t\big)
-
\mathbb{P}\bigg(
\max_{1 \leq i \leq d,\,(n,h) \in \mathcal{S}^\star}
|\tilde \gamma_i(n,h)|
\le t
\bigg)
\bigg|
= o(1).
\end{align}
\end{cor}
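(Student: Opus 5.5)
The approach is to couple the discrete Gaussian scan with the Brownian functional directly. Set $Z_m := \sqrt{N}\,\{B(m/N)-B((m-1)/N)\}$. These are i.i.d. $\mathcal N(0,\Sigma)$, so the maximum of $|\tilde\gamma_i(n,h)|$ over $i$ and $(n,h)\in\mathcal S^\star$ has the same law as
\[
M_N:=\max_{(n,h)\in\mathcal S^\star}\frac{\|\{B(n/N)-B((n-h)/N)\}-\{B((n+h)/N)-B(n/N)\}\|}{(h/N)^\beta}.
\]
This holds because $N^{1/2-\beta}h^\beta = N^{1/2}(h/N)^\beta$. Hence $M_N$ is the supremum defining $D^\beta$ restricted to the grid $x=n/N$, $y=h/N$. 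Trivially $M_N\le D^\beta$, so
\[
0\le \mathbb P(M_N\le t)-\mathbb P(D^\beta\le t)\le \mathbb P\big(M_N\le t<D^\beta\big)\le \mathbb P\big(|D^\beta-M_N|>\epsilon_N\big)+\mathbb P\big(t-\epsilon_N<D^\beta\le t\big).
\]
It therefore suffices to find $\epsilon_N\to 0$ that makes both terms small uniformly in $t\ge x_0$.

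The first term is a discretization error. I would split the continuous sup according to the scale $y$. For $y\le \rho_N$, with $\rho_N\to0$ slowly, I would bound $D^\beta$ restricted to small scales by the Hölder modulus of $B$ of order $\beta'\in(\beta,1/2)$. Componentwise Gaussian concentration (Borell--TIS together with a Garsia--Rodemich--Rumsey/chaining bound) and a union bound over $d$ give a contribution $O_P(\rho_N^{\beta'-\beta}\sqrt{\log d})$. The same bound covers $M_N$ on small scales. For $y\ge\rho_N$, I would move each $(x,y)$ to the nearest grid point $(\lfloor xN\rfloor/N,\lfloor yN\rfloor/N)$. The numerator then changes by at most a few increments of $B$ over intervals of length $1/N$, which is $O_P(\sqrt{\log(Nd)/N})$ uniformly by a union bound over $d$ and $N$ positions together with a Lévy modulus. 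The weight changes by a relative factor $O(1/(N\rho_N))$, multiplied by $D^\beta=O_P(\sqrt{\log d})$. Combined,
\[
|D^\beta-M_N| = O_P\big(\rho_N^{\beta'-\beta}\sqrt{\log d}+\rho_N^{-\beta}\sqrt{\log(Nd)/N}+\sqrt{\log d}/(N\rho_N)\big).
\]
Assumption \ref{ass1} gives $\log d\lesssim N^{C_1}$ with $C_1$ small, so a choice $\rho_N=N^{-a}$ with suitable small $a>0$ makes this $o_P(\epsilon_N)$ for $\epsilon_N = N^{-b}$ with $b>0$.

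The second term needs anti-concentration, and I expect this to be the main obstacle. I would handle it by anti-concentration of $M_N$ rather than of $D^\beta$: the window $t-\epsilon_N<D^\beta\le t$ is contained in $t-\epsilon_N<M_N\le t+\epsilon_N$ up to the event already controlled. $M_N$ is a finite maximum of $2dN^2$ Gaussian variables. Each has variance bounded below by $c\,\underline\sigma^2 (h/N)^{1-2\beta}\ge c\,\underline\sigma^2 N^{-(1-2\beta)}$ by Assumption \ref{ass4}, so Nazarov's inequality (as used via Lemma \ref{lem:compar_gaus} and \cite{chernozhukov:chetverikov:kato:2013}) gives a bound of order $\epsilon_N N^{1/2-\beta}\sqrt{\log(dN)}$. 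This is too crude, because small-variance coordinates spoil the bound. The fix is to restrict to $h\ge \rho_N N$ before applying Nazarov. The restricted variances are then $\gtrsim\rho_N^{1-2\beta}$, and the small-scale part is below $x_0/2$ with high probability by the modulus bound above. Since $t\ge x_0$, the event $\{M_N\in(t-\epsilon_N,t+\epsilon_N]\}$ then coincides with the same event for the large-scale maximum, up to a vanishing probability. This yields $\epsilon_N\rho_N^{-(1/2-\beta)}\sqrt{\log(dN)}\to0$ once the polynomial rates are balanced using $C_1<(1-2\beta)/(12-18\beta)$. The restriction $t\ge x_0$ is exactly what allows the low-variance scales to be discarded. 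Finally, equality in law of the discrete Gaussian scan and $M_N$ completes the argument; only \ref{ass1} and \ref{ass4} are used.
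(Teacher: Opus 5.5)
Your overall route is the paper's. In the i.i.d. case, the paper's covariance-comparison step (Lemma \ref{Lemma:LongCovEst} with Lemma \ref{lem:compar_gaus}) collapses to your exact identity in law between the discrete Gaussian scan and the grid-restricted Brownian functional $M_N$. What remains is a discretization bound, which plays the role of Lemma \ref{Lemma:Brownian:Discr:Univ}, and an anti-concentration step that discards low-variance scales using $t\ge x_0$, which plays the role of Lemma \ref{Lemma:AntiConc}. The one-sided reduction via $M_N\le D^\beta$ is a clean simplification.

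There is, however, a step that fails as written: you use the same cutoff $\rho_N$ in both terms.

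\begin{itemize}
\item Your bound on $|D^\beta-M_N|$ contains the small-scale term $\rho_N^{\beta'-\beta}\sqrt{\log d}$, so the first term forces $\epsilon_N\gg\rho_N^{\beta'-\beta}\sqrt{\log d}$.
\item Your Nazarov step forces $\epsilon_N\ll\rho_N^{1/2-\beta}/\sqrt{\log(dN)}$.
\item Together these require $\rho_N^{\beta'-1/2}\sqrt{\log(ed)\log(dN)}\to0$. This is impossible, because $\beta'<1/2$ and $\rho_N\to0$; no balancing of polynomial rates rescues it.
\end{itemize}

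The cure is to decouple the discretization from the variance cutoff. The paper does this implicitly: its discretization lemma uses its own internal split at $N^{-c}$, while Lemma \ref{Lemma:AntiConc} discards only coordinates with variance below order $1/\log(d|\mathcal S^\star|)$.

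The simplest version for your argument runs as follows.
\begin{itemize}
\item Take a coarse cutoff $\rho_N'=N^{-a'}$ with $a'(\beta'-\beta)>C_1/2$, so that $a'$ is only slightly above $C_1/(1-2\beta)$.
\item Use $t\ge x_0$ to discard the scales $y<\rho_N'$ of both $D^\beta$ and $M_N$ at once. The grid small scales are dominated by the continuous ones, and the latter stay below $x_0/2$ with probability tending to one.
\item Discretize only over $y\ge\rho_N'$. The error there is $O_P\big(\rho_N'^{-\beta}\sqrt{\log(Nd)/N}+\sqrt{\log d}/(N\rho_N')\big)$, and no small-scale term enters $\epsilon_N$.
\item Apply Lemma \ref{lem:anti_conc} to the grid maximum over $h\ge\lfloor\rho_N' N\rfloor$. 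Its variances are $\gtrsim\rho_N'^{1-2\beta}$, and the relevant window is $t-\epsilon_N<M_\ell\le t$. (In your version the window should have been $t-2\epsilon_N<M_N\le t$, not $(t-\epsilon_N,t+\epsilon_N]$.)
\end{itemize}

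With $\epsilon_N=N^{-b}$ you then need
\[
a'(1/2-\beta)+C_1/2<b<\min\{1/2-a'\beta,\,1-a'\}-C_1/2 .
\]
This is feasible when $C_1<(1-2\beta)/(3-4\beta)$, which is implied by \ref{ass1}.

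Alternatively, keep a fine cutoff $N^{-a}$ with $a\approx 1/(3/2-\beta)$ inside the discretization. This gives $|D^\beta-M_N|=O_P(N^{-\kappa})$ for any $\kappa<(1-2\beta)/(3-2\beta)-C_1/2$. You then use the separate coarse $\rho_N'$ only for the Nazarov step.
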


Corollary \ref{cor:BrownianApprox:main} is mainly of conceptual interest, since quantile calibration will in practice rely on the discrete Gaussian statistic in Theorem~\ref{thm:seg:gauss}. Let \(q^N_{1-\alpha}\) denote the $(1-\alpha)$ quantile of
\[
\max_{1 \leq i \leq d,\,(n,h) \in \mathcal{S}^\star}
|\tilde \gamma_i(n,h)|.
\]
We can now turn to the statistical properties of COMBI-SCAN. 

\begin{cor}
\label{cor:seg:false}
Suppose that Assumption~\ref{ass:main} holds and that COMBI-SCAN is run with threshold $q^N_{1-\alpha}$. Then
\begin{align} \label{e:wlp}
\limsup_{N\to\infty}
\mathbb{P}\left(
\exists (I,i)\in\widehat{\mathcal I}:
I\cap\{c_{1,i},\ldots,c_{K_i,i}\}=\emptyset
\right)
\le \alpha.  
\end{align}
Now suppose that, additionally,
\begin{align} 
\label{e:msp:cond}
\min_{\substack{1\le i\le d\\1\le k\le K_i}}
\frac{
\delta_{k,i}^{\,2(1-\beta)}
\Delta_{k,i}^2
}{
N^{1-2\beta}\log(ed)}
\to \infty.
\end{align}
Then
\begin{align} \label{e:slp}
\mathbb{P}\left(
\forall i=1,\ldots,d,\ \forall k=1,\ldots,K_i,\
\exists (I,i)\in\widehat{\mathcal I}
\textnormal{ such that } 
I \cap \{c_{1,i},\ldots, c_{K_i,i}\} =\{c_{k,i}\}
\right)
\to 1.
\end{align}
\end{cor}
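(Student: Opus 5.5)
The plan is to prove the two statements separately. Both rest on the fact that the scan statistic decomposes into a deterministic signal part and a pure-noise part. For every $i$ and $(n,h)\in\mathcal S^\star$ we write $\gamma_i(n,h)=\gamma_i^{\mu}(n,h)+\gamma_i^{\varepsilon}(n,h)$, where $\gamma_i^{\mu}$ is the scan \eqref{eq:def:segmentation:scan} applied to the means $\mu_{n,i}$ and $\gamma_i^{\varepsilon}$ is the scan applied to the errors $\varepsilon_{n,i}$. The noise field $\gamma^\varepsilon$ has exactly the null distribution of Theorem~\ref{thm:seg:gauss}, whatever the mean structure. Set $M_N:=\max_{i,(n,h)}|\gamma^\varepsilon_i(n,h)|$.

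\textbf{First statement \eqref{e:wlp}.} Suppose some pair $(I,i)$ is reported with $I=I(n,h)$ and $I\cap\{c_{1,i},\dots,c_{K_i,i}\}=\emptyset$. Then $\mu_{m,i}$ is constant on $\{n-h+1,\dots,n+h\}$, because the only position in this window where a change could sit and still be missed by $I$ is $c=n+h$. That change would take effect only after index $n+h$, since $\mu_{m,i}$ is constant on $(c_{k-1,i},c_{k,i}]$. Hence $\gamma_i^\mu(n,h)=0$ and $|\gamma_i(n,h)|=|\gamma_i^\varepsilon(n,h)|>q^N_{1-\alpha}$. The bad event is therefore contained in $\{M_N>q^N_{1-\alpha}\}$.

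By Theorem~\ref{thm:seg:gauss}, applied to the errors, $\mathbb P(M_N>q^N_{1-\alpha})\le \alpha+o(1)$. This is legitimate provided $q^N_{1-\alpha}\ge x_0$ for some fixed $x_0>0$. That lower bound holds because the Gaussian maximum stochastically dominates a single $|\tilde\gamma_1(\lfloor N/2\rfloor,\lfloor N/2\rfloor-1)|$, which is a nondegenerate normal variable with variance bounded below by \ref{ass4}. Its $(1-\alpha)$-quantile is therefore bounded away from zero.

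\textbf{Second statement \eqref{e:slp}.} We combine the same decomposition with the bound $q^N_{1-\alpha}=\mathcal O(\sqrt{\log(ed)})$ and $M_N=\mathcal O_P(\sqrt{\log(ed)})$. Both follow from Theorem~\ref{thm:seg:gauss} together with a Gaussian maximal inequality (Borell--TIS/chaining over scales, analogous to \eqref{e:bound:WC}).

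Fix a change $c=c_{k,i}$ and let $h^\ast:=\lfloor\delta_{k,i}/2\rfloor$. Consider the set $J$ of pairs $(n,h)$ with $h\le h^\ast$ and $I(n,h)\ni c$. Every such interval contains only the change $c$, since $2h-1<\delta_{k,i}$.

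The idea is to track what SCAN has done when it reaches scale $h^\ast$ in component $i$. Either an interval containing $c$ has already been reported, at some smaller scale, and it contains no other change; or no interval containing $c$ has been reported. In the second case we look at $(c,h^\ast)$. There $|\gamma^\mu_i(c,h^\ast)|=\Delta_{k,i}h^{\ast 1-\beta}/N^{1/2-\beta}$, which diverges relative to $\sqrt{\log(ed)}$ by \eqref{e:msp:cond}. Hence $|\gamma_i(c,h^\ast)|>q$ with probability tending to one, uniformly over all $(i,k)$ because $M_N$ is controlled uniformly.

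The main obstacle is this second case, because $I(c,h^\ast)$ might be blocked by a previously reported interval $I'$ that does not contain $c$. By the first part, with probability $\ge 1-\alpha-o(1)$ every reported interval contains a true change. Under \eqref{e:msp:cond} I would sharpen this argument with $q$ replaced by the diverging level, so that every reported interval contains a change with probability tending to one. So $I'$ contains a neighbouring change $c'\neq c$. Since $|c'-c|\ge\delta_{k,i}$ while $I'$ was reported at a scale $\le h^\ast$, an overlap with $I(c,h^\ast)$ is possible.

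To handle this, I would first try shifting the test window. Among the pairs $(n,h^\ast)$ with $|n-c|\le h^\ast/2$, the signal part is still at least half of its maximum, since the signal is linear in the offset. At least one of these windows avoids $I'$, using the separation on both sides. This requires adjusting $h^\ast$ to roughly $\delta_{k,i}/4$, which changes only constants. Alternatively I would use a smaller scale where needed.

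Finally, a union over all $(i,k)$ costs nothing, because all randomness enters only through the single variable $M_N$.
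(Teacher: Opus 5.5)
Your argument for \eqref{e:wlp} is correct: a reported $I(n,h)$ that misses every $c_{k,i}$ has constant mean on $\{n-h+1,\dots,n+h\}$, so the bad event lies in $\{M_N>q^N_{1-\alpha}\}$. Theorem~\ref{thm:seg:gauss} then applies because $q^N_{1-\alpha}$ is bounded away from zero.

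The gap is in \eqref{e:slp}, at exactly the blocking step you flag. You cannot ``replace $q$ by a diverging level'' under \eqref{e:msp:cond} and conclude that every reported interval contains a change with probability tending to one. COMBI-SCAN uses the fixed threshold $q^N_{1-\alpha}$, and a change-free window is reported as soon as $|\gamma_i^\varepsilon(n,h)|>q^N_{1-\alpha}$. Condition \eqref{e:msp:cond} constrains only the signal and has no effect on this event. For instance, with $d=1$ and a single change at $\lfloor N/2\rfloor$, a change-free interval is reported with probability bounded away from zero. Such a false interval need not contain a neighbouring change. It can sit directly next to $c=c_{k,i}$: the interval $I'=\{c-2h'+1,\dots,c-1\}$ has window $\{c-2h'+1,\dots,c\}$, which has constant mean. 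Once $I'$ is reported, every admissible interval containing $c$ must start at $c$. Each such window carries signal only $\Delta_{k,i}/(N^{1/2-\beta}h^{\beta})$, which \eqref{e:msp:cond} does not make large. Your window-shifting step assumes the blocker contains some $c'$ at distance $\ge\delta_{k,i}$, so it does not cover this case, and as written $c$ can be lost.

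What is missing is a proof that such false blockers do not occur, with probability tending to one, at the scales where changes are detected.
\begin{itemize}
\item Take $h_{k,i}:=\max\{1,\lfloor\lambda_N\delta_{k,i}\rfloor\}$ with $\lambda_N\downarrow0$ slowly enough that $\min_{k,i}\Delta_{k,i}h_{k,i}^{1-\beta}/(N^{1/2-\beta}\sqrt{\log(ed)})\to\infty$; \eqref{e:msp:cond} allows this.
\item Any interval that can block $I(c_{k,i},h_{k,i})$ is reported at scale at most $h_{k,i}\le\max\{1,\lambda_N N\}$. So it suffices to show $\mathbb P\big(\max_{i,\,(n,h)\in\mathcal S^\star,\,h\le\lambda_N N}|\gamma_i^\varepsilon(n,h)|>q^N_{1-\alpha}\big)\to0$.
\item On this event every blocker contains a change other than $c_{k,i}$. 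Since $h_{k,i}\le\max\{1,\delta_{k,i}/4\}$ for large $N$, such a blocker is disjoint from $I(c_{k,i},h_{k,i})$. Your signal bound then finishes the proof, and no shifting is needed.
\end{itemize}
This small-scale negligibility is where the Hölder weight matters, since noise at scale $h$ has standard deviation of order $(h/N)^{1/2-\beta}$. For $h\le N^{1-\delta}$ it follows from the sub-Gaussian union bound as in \eqref{eq_step1}, using only $q^N_{1-\alpha}\ge x_0$. For $N^{1-\delta}<h\le\lambda_N N$ the union bound gives only order $\lambda_N^{1/2-\beta}\sqrt{\log(dN)}$. That need not be small relative to $q^N_{1-\alpha}$, e.g.\ when $d$ is bounded or the components are strongly correlated. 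Here you need a genuine comparison with $q^N_{1-\alpha}$. One route is a Gaussian approximation on this index set, combined with Sudakov--Fernique and Borell--TIS, to show that the Gaussian maximum over scales $\le\lambda_N N$ is $o_P(q^N_{1-\alpha})$.
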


In \cite{kutta:dette:wang:2025}, the bound on the false discovery rate \eqref{e:wlp} was called the "weak localization property". It states that there are no wrongly detected changes except with probability asymptotically bounded by $\alpha$. The combination of \eqref{e:wlp} and \eqref{e:slp} was called the "strong localization property". It states, additionally to \eqref{e:wlp}, that each change point is detected and contained in a unique output interval. Strong localization requires a multiscale condition on the interplay of separation and height of the $k$th change in the $i$th component. All changes that are high enough and well separated enough will be detected with asymptotic certainty.
It is even possible to quantify the interval length $|I|$ needed to detect a change. Under \eqref{e:msp:cond}, it holds uniformly in $k,i$ and with probability tending to one that
\[
|I| \le C_d \bigg(\frac{\log^{1/2}(ed)N^{1/2-\beta}}{\Delta_{k,i}}\bigg)^{\frac{1}{1-\beta}}.
\]
Here $C_d>0$ is a constant that may be written as $C_d= C C_d'$, where $C$ depends only on the constant $C_2$ from Assumption \ref{ass:main}, part \ref{ass3}, and $(C_d')_{d\ge 1}$ is any monotonically increasing sequence going to $\infty$ with $C_1' \ge 1$. For polynomial dimensions, where $\beta$ can be chosen arbitrarily close to $1/2$, and for $\Delta_{k,i}$ uniformly bounded away from $0$, the interval width grows uniformly at a rate $\mathcal{O}_P(N^\zeta)$, with $\zeta=\zeta(\beta)$ arbitrarily close to $0$. It turns out that in the case of polynomially increasing dimensions, even sharper results are available; this is the subject of the next section.

\subsection{Optimality and dependence} \label{sec:opt:dep}

We want to conclude our theoretical section with some discussion on the optimality of the proposed procedures and on their applicability under dependence. We begin by commenting on optimality, and we focus for illustration on the high-dimensional segmentation task.  Suppose for this section that $d$ is only polynomially growing in $N$. Moreover, to motivate our following discussion, let us focus  on the case where all changes are relatively large, i.e. $\min_{k,i}\Delta_{k,i} \ge \Delta>0$ for $\Delta$ fixed. Then, the condition \eqref{e:msp:cond}, which is needed for reliable detection \eqref{e:slp} becomes
\begin{align} 
\label{e:msp:cond:2}
\min_{\substack{1\le i\le d\\1\le k\le K_i}}
\frac{
\delta_{k,i}^{\,2(1-\beta)}
}{
N^{1-2\beta}\log(ed)}
\to \infty \quad \Leftrightarrow \quad \min_{\substack{1\le i\le d\\1\le k\le K_i}}
\delta_{k,i} \gg 
\big(N^{1/2-\beta}\log^{1/2}(ed)\big)^{1/(1-\beta)}.
\end{align}
This means that (heuristically) for $\beta \uparrow 1/2$, we seem to approach a condition where even changes that persist only for some logarithm of $d$ (and thus $N$)  become detectable. 
The formal side of this argument is slightly more involved. Hölderian invariance principles were traditionally shown on the space of functions $f:[0,1] \to \mathbb{R}$ that are Hölder continuous in the sense that
\[
\lim_{h \downarrow 0}\sup_{0<|y-x|<h}\frac{|f(y)-f(x)|}{\rho(|y-x|)}=0,
\]
where $\rho(x)=x^\beta$; see also our Corollary \ref{cor:BrownianApprox:main}. However, it has been shown that for any appropriate modulus function $\rho(x)$ Hölderian invariance principles can be shown (under appropriate moment conditions), provided that $\rho$ decays slower as $x \downarrow 0$ than the function $\rho_{Levy}$ in Lévy's modulus of continuity given by $\rho_{Levy}(x) = x^{1/2} \log^{1/2}(1/x)$ . For example \cite{rackauskas:suquet:2009} suggest using $\rho_\beta(x) = x^{1/2}\log^\beta(e/x)$ for $\beta>1/2$. As shown in \cite{kutta:dette:wang:2025}, using $\rho_\beta(x) $ for weighting  yields narrower confidence intervals  and better detection properties against shortly persistent changes (if, of course, the Hölderian invariance principle can be proved).  This motivates the use of such "logarithmic weight functions" (terminology as in \cite{kutta:dette:wang:2025}) for our high-dimensional setting, with scan statistic
\begin{align}
\label{eq:def:segmentation:scan:2}
    \gamma_i(n,h)
    =
    \frac{\sum_{m=n-h+1}^{n}X_{m,i}-\sum_{m=n+1}^{n+h}X_{m,i}}{h^{1/2}\log^\beta(eN/h)}.
\end{align}
Notice that this motivation seems to only make sense when $d$ is only polynomial; if however $d$ grows with $N$ at stretched exponential rate both the justification and benefits seem unclear because $\delta_{k,i}\gg \log(ed) \sim N^c$ then mechanically requires at least polynomial separation. In Appendix \ref{appen:segment}, we have worked out in detail the Hölder-Gauss approximation for the more strongly weighted scan statistic from \eqref{eq:def:segmentation:scan:2}. It is not difficult to show that with this weight (and again under $\log(d) \le C \log(N)$), condition \eqref{e:msp:cond} can be weakened to  \begin{align} 
\label{e:msp:cond:3}
\min_{\substack{1\le i\le d\\1\le k\le K_i}}
\frac{
\delta_{k,i}
\Delta_{k,i}^2
}{
\log^{2\beta}(N)\log(ed)}
\to \infty,
\end{align}
and in the special case where change point heights are bounded away from $0$ (which before gave us \eqref{e:msp:cond:2})
we get
\[
\min_{\substack{1\le i\le d\\1\le k\le K_i}}
\delta_{k,i} \gg \log^{2\beta}(N)\log(ed).
\]
The right side is bounded by $\log^{2\beta+1}(N)$ for polynomially bounded dimension $d$. For fixed $d$, our condition $\min_{k,i} \delta_{k,i} \gg \log^{2\beta}(N)$ for some $\beta>1/2$ is close to the (essentially) optimal one of $\min_{k,i}  \gg \log(N)$ . Even though we have not formally proved it, it seems clear that using the suggested weighting schemes based on $\rho_\beta$ could also be used for the task of sequential monitoring. Here, stronger weighting yields  shorter detection delays, and heuristically, for polynomially growing dimension it seems that delay times of logarithmic orders are then possible, which again is close to the optimum, even for univariate data \cite{yu:padilla:wang:rinaldo:2023}.\\
Finally, we turn to the issue of dependence. For most of this work, we have assumed that noise vectors are independent, to allow for a clean presentation of our theory. Yet, Hölder-Gauss approximations remain valid even under weak dependence. To illustrate this fact, the Hölder-Gauss approximations for the strongly weighted scan \eqref{eq:def:segmentation:scan:2}  are justified in Appendix \ref{appen:segment} under $\beta$-mixing, which is a popular dependence concept in time series analysis; see \cite{bradley:2007}. The statements are only given in the Appendix, because they are slightly more technical and do require the formal introduction of mixing coefficients that might be distracting here. Our approximations would likely also remain true under different dependence concepts such as physical dependence, even though we do not work it out here. 
We point out that a practical problem implied by dependence is, of course, the estimation of the long-run covariance matrix, but we do not discuss this issue here any further and refer the interested reader to \cite{li:chen:wang:wu:2024} for an estimator that is suitable in our setting.

\section{Finite-sample performance} \label{sec:3}

In this section, we investigate finite-sample performance of our new methodology. To avoid redundancy, we restrict this investigation to the sequential change point detection part of this work, where there exists an established benchmark by \cite{gosmann:stoehr:heiny:dette:2022} to which we can compare our results. \\

\subsection{Simulations}

\noindent \textbf{Model.}
In our simulation study data are generated according to the following scheme:
\begin{align*}
    X_{ij}=\varepsilon_{ij}+\delta\mathbb{I}\{i> N+k^\star, j\leq s\}, \quad 1 \leq j \leq d, 1 \leq i \leq N+T_0,
\end{align*}
i.e. we have a reference period length of $N$, a monitoring horizon of $T_0$ with $s$ changes of size $\delta$ at time $k^\star$. Recall that our methods are justified for an open-end model (theoretically $T_0=\infty$ is justified), but for our simulations monitoring has to end at some point, which is specified by $T_0$.
The mean zero error vectors $\varepsilon_{i}$ are i.i.d. normal random variables such that
\begin{align*}
    \text{Cov}(\epsilon_i)=\Sigma_1=(\sigma^{(1)}_{ij})_{1 \leq i,j \leq d}=(\frac{1}{|i-j|+1})_{1 \leq i,j \leq d}. 
\end{align*}
In the following, we always perform 1000 tests to determine the empirical rejection rates with a nominal level of $\alpha=0.05$. The computation of the quantiles via Algorithm \ref{alg:bootstrap_quantile} is detailed \hyperref[par:computation-quantiles]{below}. 
Our numerical results under the null hypothesis support the choice $\beta = 0.4$, which is kept fixed throughout all simulations under the alternative.

\noindent \textbf{Nominal approximation.} We first study the impact of the dimension $d \in \{100,200,400,800,1600\}$ on the approximation of the nominal level for different choices of $\beta \in \{0.1,0.2,0.3,0.4\}$. To investigate the impact of non-normality of the data, we also include results where $\epsilon_{ij}\overset{iid}{\sim}U([-1,1])$. The reference sample has a size of $N=100$ with monitoring horizon $T_0=100$.  
The results are recorded in Table \ref{tab:level} below and show that the choice of $\beta$ has no discernible impact on the quality of the level approximation. Consequently we recommend choosing a $\beta\geq 0.4$ as our theoretical results suggest that $\beta$ closer to 1/2 has smaller detection delays and better power. For the remainder of the section we will therefore apply the proposed tests with $\beta=0.4$. 

\begin{table}[ht]
\centering
\begin{tabular}{c|ccccc|ccccc}
\hline
& \multicolumn{5}{c|}{Normal data} & \multicolumn{5}{c}{Uniform data} \\
\hline
$\beta \backslash d$
& 100 & 200 & 400 & 800 & 1600
& 100 & 200 & 400 & 800 & 1600 \\
\hline
0.1 & 0.043 & 0.022 & 0.039 & 0.036 & 0.022
    & 0.048 & 0.042 & 0.044 & 0.041 & 0.027 \\
0.2 & 0.037 & 0.031 & 0.026 & 0.019 & 0.027
    & 0.039 & 0.042 & 0.045 & 0.048 & 0.040 \\
0.3 & 0.040 & 0.030 & 0.030 & 0.028 & 0.027
    & 0.051 & 0.042 & 0.040 & 0.038 & 0.032 \\
0.4 & 0.038 & 0.032 & 0.023 & 0.027 & 0.025
    & 0.041 & 0.034 & 0.031 & 0.042 & 0.026 \\
\hline
\end{tabular}
\caption{Rejection rates under the null for different choices of $\beta$ and $d$ 
for normal and uniform data. }
\label{tab:level}
\end{table}

Before turning to the power analysis, we first describe how we obtain the quantiles for each method.

\phantomsection
\noindent\textbf{Computation of quantiles.}\label{par:computation-quantiles}
To calculate the bootstrap quantiles via Algorithm \ref{alg:bootstrap_quantile} (both for our and for the method from  \cite{gosmann:stoehr:heiny:dette:2022}), we set the number of bootstrap samples as $B=200$. That algorithm requires a choice of the monitoring horizon $T$, and we always set $T=T_0$ here.
 As we will see in the results reported in Table \ref{tab:sim_power_combined_short}, the proposed method has a significantly higher power for late changes (for which only a short post-change segment is observed) than the one proposed in \cite{gosmann:stoehr:heiny:dette:2022}.
The fact that our procedure has higher power even when little post-change data is available will be discussed also in the below data application, because it translates into higher power of the proposed method against transient changes. \\
The computational complexity of  calculating the test statistic for the whole time period from $N+1$ to $N+T$ for either method is of order $dT^2$.
One may reduce computational complexity at the expense of less power against subtle changes, by scanning for changes only in recent data (say using only the last $B$ observations). 
This lowers the computational complexity to $\mathcal{O}(dTN)$, which scales only linearly in the length of the monitoring time $T$. \\  \cite{gosmann:stoehr:heiny:dette:2022} provide three different ways to obtain the necessary quantiles. Two of them are based on an extreme value approach that requires weak cross-sectional dependence (and a corresponding ordering of the components) and are observed to perform comparatively poorly to the third choice in their simulation studies. The third choice is a bootstrap procedure that the authors justify theoretically only in the case of cross-sectional independence. Its theoretical development is very similar to choosing $\beta=0$ for the method proposed in this paper and thus our theory suggests its validity even in the presence of cross-sectional dependence. The following comparisons are always based on the third approach. We also note that the choice of covariance matrix has little impact on the qualitative conclusions of the comparison and only influences the relationship between the dimension and the power for both methods. 
\\

\noindent\textbf{Power.}
Again we consider a moderate reference sample size $N=100$ together with short to moderate monitoring periods $T_0 \in \{100,200,400\}$. The change takes place in $s \in \{1,50\}$ components to model either a sparse or moderately dense setting. The size $\delta$ of the change will be specified in the tables below and is tailored towards choosing regimes with non-trivial power for the sake of comparison. We consider up to two different choices for the dimension $d \in \{100,200\}$ to elucidate its impact.  For the change location $k^\star\in\{T_0/2,(3/4)T_0\}$ we consider a change in the middle and a change towards the end of the monitoring period.\\
We report the results in Tables \ref{tab:sim_power_combined} and \ref{tab:sim_power_combined_short} below. We observe that the proposed method outperforms the one from \cite{gosmann:stoehr:heiny:dette:2022} for the sparse setting with $s=1$, particularly when the change is located towards the end of the monitoring period. For the denser setting with $s=50$ the proposed method performs better for later changes, while  \cite{gosmann:stoehr:heiny:dette:2022} have a slight edge for earlier changes, particularly when $\delta=0.375$. Simulations for $T_0>4N$ indicate that our method performs better (in comparison) the later the change and the longer the time horizon is, which is a result of the weight $\omega_\beta(\ell,k,N)$ decaying more quickly the smaller $\beta$ is. We elected to present the setting most favorable to \cite{gosmann:stoehr:heiny:dette:2022} to make the comparison as fair as possible.

\begin{table}[ht]
\centering
\small
\begin{tabular}{ccc|ccc|ccc}
\toprule
& & & \multicolumn{3}{c|}{s = 1} & \multicolumn{3}{c}{s = 50} \\
& & & \multicolumn{3}{c|}{$\delta$} & \multicolumn{3}{c}{$\delta$} \\
\cmidrule(lr){4-6} \cmidrule(l){7-9}
$T_0$ & Method & $d$ & 0.600 & 0.700 & 0.800 & 0.250 & 0.375 & 0.500 \\
\midrule
\multirow{6}{*}{100}
& \cite{gosmann:stoehr:heiny:dette:2022} & 100 & 0.156 & 0.243 & 0.392 & 0.173 & 0.455 & 0.753\\
& Proposed & 100 & 0.152 & 0.290 & 0.486 & 0.151 & 0.371 & 0.790 \\
& \cite{gosmann:stoehr:heiny:dette:2022} & 200 & 0.106 & 0.184 & 0.272 & 0.128 & 0.314 & 0.641 \\
& Proposed & 200 & 0.111 & 0.235 & 0.406 & 0.084 & 0.269 & 0.677 \\
\midrule
\multirow{6}{*}{200}
& \cite{gosmann:stoehr:heiny:dette:2022} & 100 & 0.218 & 0.343 & 0.595 & 0.220& 0.590 & 0.900\\
& Proposed & 100 & 0.259 & 0.450 & 0.684 & 0.168 & 0.430 & 0.851 \\
& \cite{gosmann:stoehr:heiny:dette:2022} & 200 & 0.142& 0.306 &0.453 & 0.161 &0.420 &  0.828\\
& Proposed & 200 & 0.171 & 0.371 & 0.611 & 0.118 & 0.321 & 0.768 \\
\midrule
\multirow{6}{*}{400}
& \cite{gosmann:stoehr:heiny:dette:2022} & 100 & 0.280 & 0.493 & 0.790 & 0.243& 0.661 & 0.945 \\
& Proposed & 100 & 0.314 & 0.628 & 0.852 & 0.221 & 0.531 & 0.932 \\
& \cite{gosmann:stoehr:heiny:dette:2022} & 200 &  0.231& 0.412 &0.657 &0.175  & 0.524&0.885  \\
& Proposed & 200 & 0.235& 0.515 & 0.791 & 0.147 & 0.389 & 0.865 \\
\bottomrule
\end{tabular}
\caption{Empirical rejection rates for the proposed method compared to that presented in \cite{gosmann:stoehr:heiny:dette:2022}.  Change is located at $k^\star=1/2T_0$, with a reference sample size of $N=100$.  }
\label{tab:sim_power_combined}
\end{table}

\begin{table}[ht]
\centering
\small
\begin{tabular}{ccc|ccc|ccc}
\toprule
& & & \multicolumn{3}{c}{s = 1}  & \multicolumn{3}{c}{s = 50}  \\
& & & \multicolumn{3}{c}{$\delta$} & \multicolumn{3}{c}{$\delta$} \\
\cmidrule(lr){4-6} \cmidrule(l){7-9}
$T_0$ & Method & $d$ & 1.00 & 1.125 & 1.25 &  0.6 & 0.7 & 0.8\\
\midrule
\multirow{3}{*}{100}
& \cite{gosmann:stoehr:heiny:dette:2022} & 100 & 0.077 & 0.117  & 0.146 & 0.241 & 0.349 & 0.512  \\
& Proposed & 100 & 0.178 & 0.319 & 0.476 & 0.224 & 0.413 & 0.676\\
\midrule
\multirow{3}{*}{200}
& \cite{gosmann:stoehr:heiny:dette:2022} & 100 & 0.095 & 0.169 & 0.261 &0.364 & 0.468 & 0.657  \\
& Proposed & 100 & 0.306 & 0.575 & 0.784 & 0.266 & 0.507 & 0.798  \\
\midrule
\multirow{3}{*}{400}
& \cite{gosmann:stoehr:heiny:dette:2022} & 100 & 0.145 & 0.229& 0.335 & 0.380 & 0.564 & 0.709 \\
& Proposed & 100 & 0.461 & 0.763 & 0.947  & 0.252 & 0.531 & 0.887\\

\bottomrule
\end{tabular}
\caption{Empirical rejection rates for the proposed method compared to that presented in \cite{gosmann:stoehr:heiny:dette:2022}.  Change is located at $k^\star=3/4T_0$, with a reference sample size of $N=100$.  }
\label{tab:sim_power_combined_short}
\end{table}

\noindent \textbf{Detection delay evaluation.}
We again consider $N=100$, this time for $(\delta,s)=(2,1)$ and $T_0 \in \{100,200,400\}$; this means we consider a fairly large change in a single component. The reason for a large change is that for our consideration of delay times we want
to ensure that both methods always detect the change, to disentangle delay from power considerations. We consider the two change locations $k^\star\in \{1/2T_0,3/4T_0\}$ and fix $d=100$ (results for other dimensions are similar). \\
We observe that, as $T_0$ increases, the bulk of the delay distribution for the proposed method gradually separates from that of \cite{gosmann:stoehr:heiny:dette:2022}. This effect is also reflected in the growth of the average delay. For instance, for $k^\star=3/4T_0$, average delays of our method are roughly half that of the benchmark; for $T_0=(100,200,400)$  they are $(11.53,17.37,28.59)$ for the proposed method, and $(19.32,34.21,62.75)$ for the method from \cite{gosmann:stoehr:heiny:dette:2022}. The discrepancy becomes larger the later the change and the longer the whole monitoring period is.

\begin{figure}[H]
\begin{center}
    \includegraphics[scale=0.3]{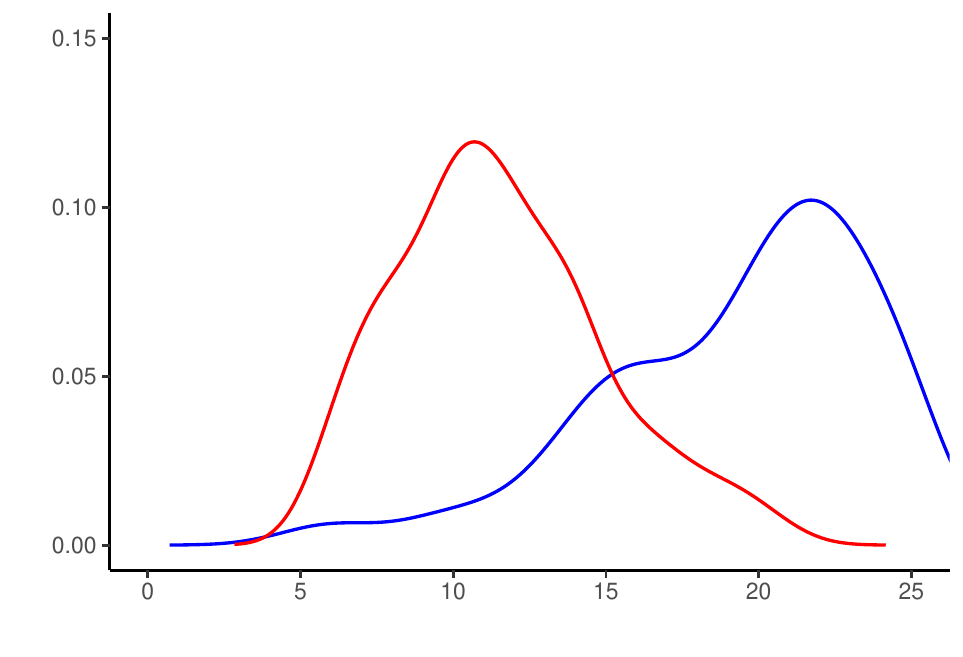}
    \includegraphics[scale=0.3]{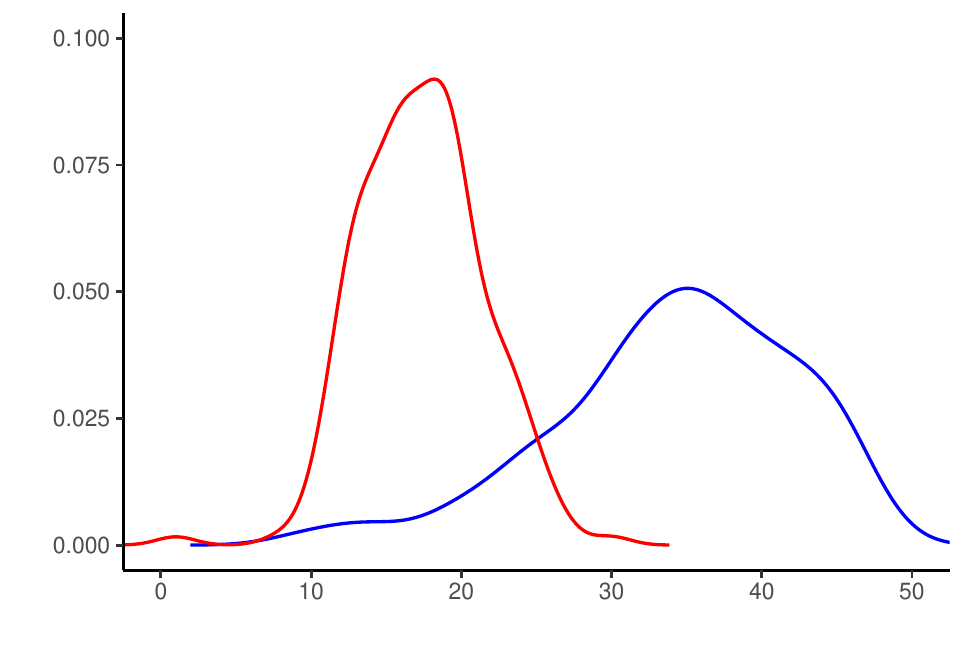}
    \includegraphics[scale=0.3]{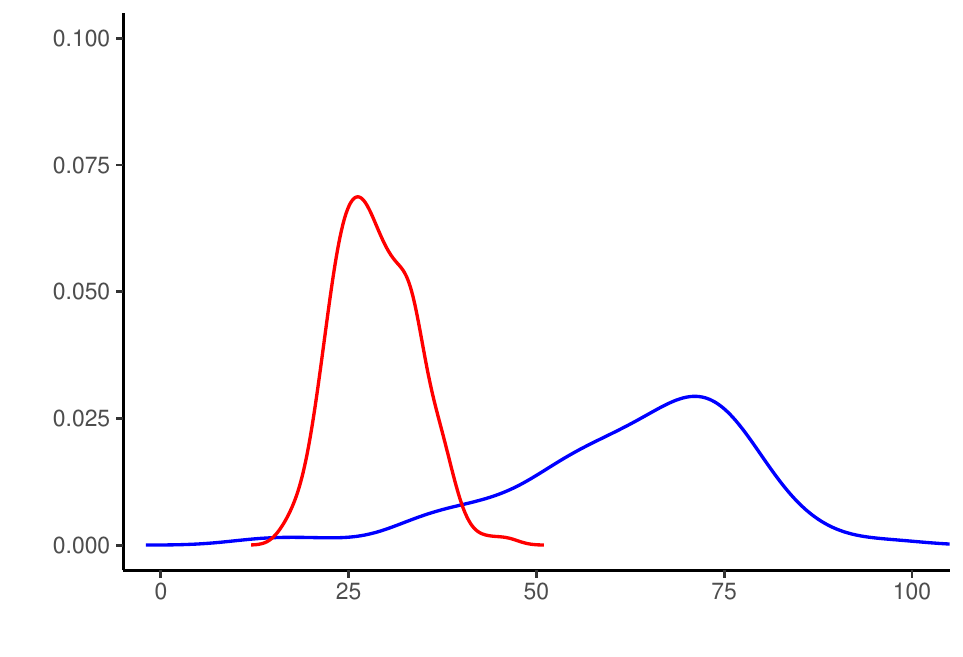}   
\end{center}
 \begin{center}
    \hspace{0.2cm}
    \includegraphics[scale=0.28]{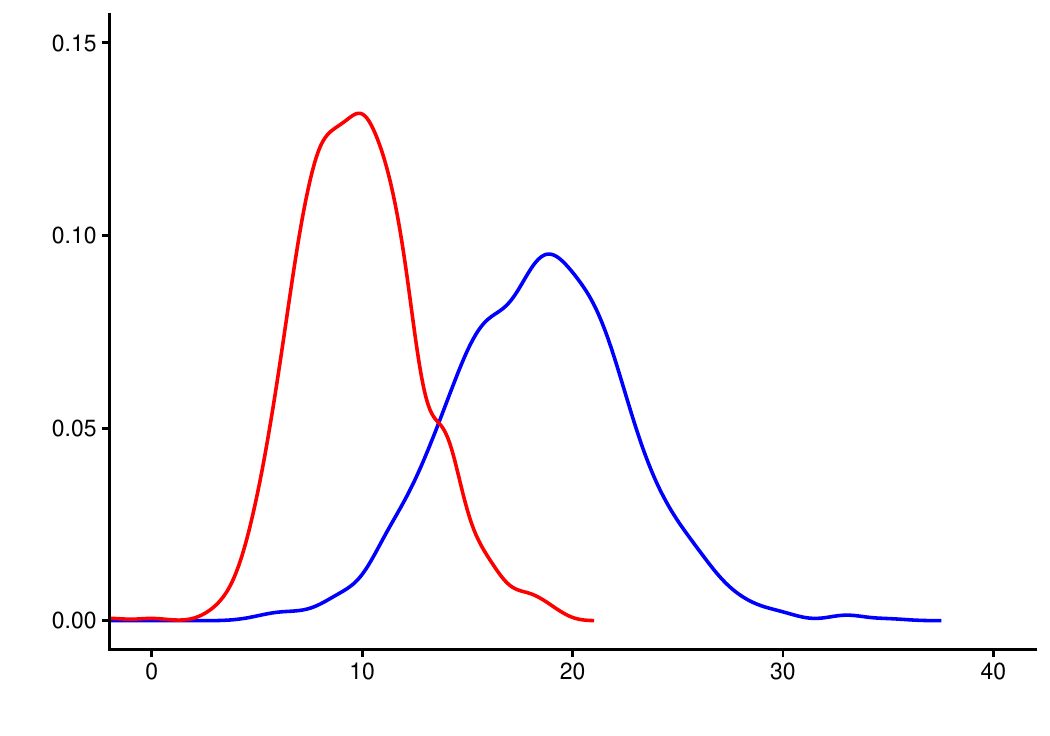}
    \includegraphics[scale=0.27]{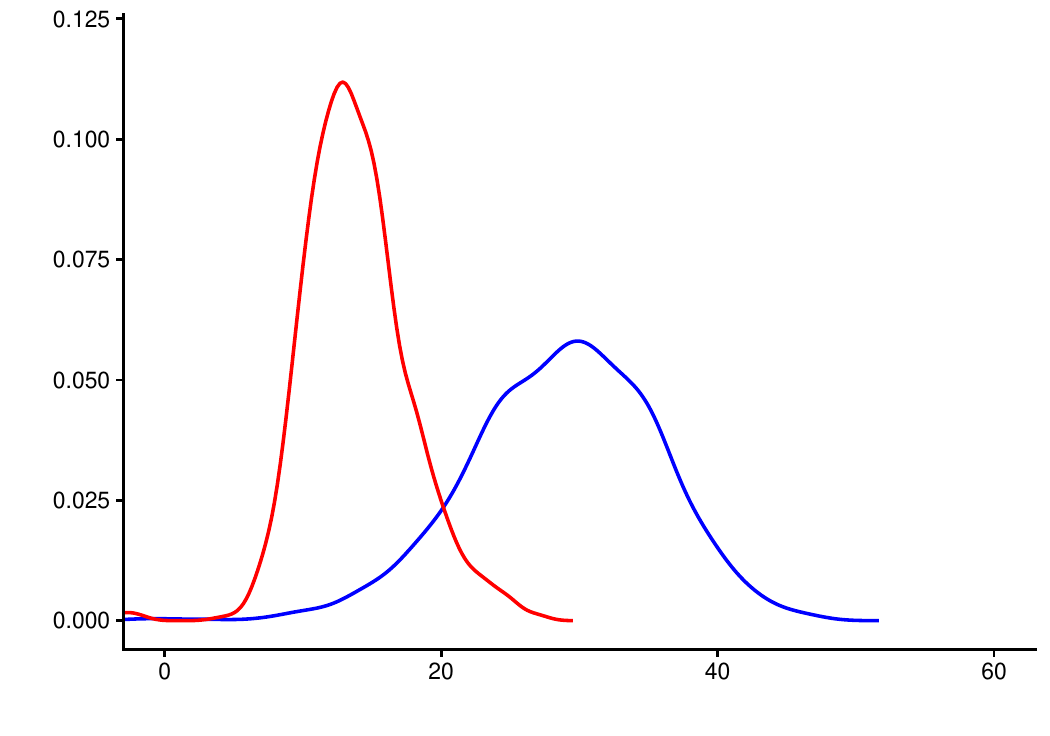} 
    \includegraphics[scale=0.28]{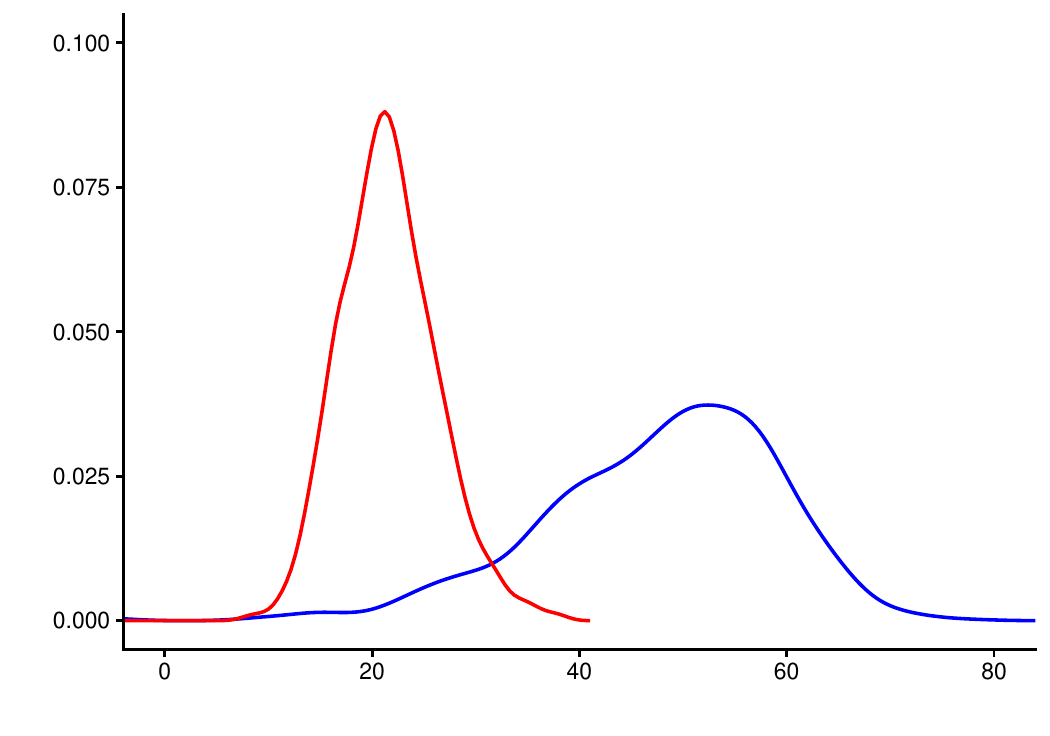}
\end{center}
    \caption{Density estimates of the delay times for $k^\star=3T_0/4$ (upper three) and $k^\star=T_0/2$ (lower three) for the method from \cite{gosmann:stoehr:heiny:dette:2022} (blue) and the method proposed in this paper with $\beta=0.4$ (red). The reference sample size is $N=100$ with the signal of size $\delta=2$ being present in $s=1$ of the $d=100$ components. The monitoring horizon $T_0$ is given by 100 (left), 200 (middle) and 400 (right).  }
    \label{fig:delays}
   
\end{figure}

\subsection{Data example}

We analyze hourly PM$2.5$ measurements from the U.S. Environmental Protection Agency (EPA) AirData database in California\footnote{The data set is available from \url{https://aqs.epa.gov/aqsweb/airdata/download_files.html}.} across different years. PM$2.5$ refers to tiny particles in the air that have a diameter of at most $2.5$ micrometers. PM$2.5$  concentration is a proxy for air pollution and can be used to detect wild-fires, which are associated with elevated  PM$2.5$ concentration. We analyze data from the years 2018, 2019 and 2020, which represent markedly different wildfire seasons. According to the California Department of Forestry and Fire Protection, 2019 was a comparatively mild wildfire season, whereas 2018
was highly destructive and 2020 was the largest wildfire season in
California's modern history
\citep{CALFIRE2018,CALFIRE2019,CALFIRE2020}. The purpose of this analysis is to apply our new sequential methodology to some real datasets, and to illustrate a hypothetical use-case. More precisely, we apply the statistical method developed in Section \ref{sec:2}, which is formulated for a data-stream with independent innovations. Notice that our model assumptions are somewhat stylized in this setting: In practice, PM$2.5$ values may not be perfectly independent across days, and may not have a perfectly piecewise constant mean. Nevertheless, wildfires are associated with sudden spikes in  PM$2.5$ values, much larger than typical deviations, making a  model with sudden jumps approximately plausible; the fact that changes usually do not persist, because wildfires are eventually put out, is a discrepancy to our theoretical model and will be discussed more below. Apart from model assumptions notice that, when analyzing datasets from past years, we are not genuinely conducting a real-time analysis. Rather, we conduct a "pseudo-real-time analysis", to study the behavior of our statistic. As in the previous section, we also compare our results to the established benchmark from \cite{gosmann:stoehr:heiny:dette:2022}.
We provide some additional details on the data: For preprocessing, we average the hourly PM${2.5}$ measurements
to daily values and keep monitoring sites for which at least \(80\%\) of
the hourly measurements are available over the respective year. We
remove days with more than ten missing observations and
impute the remaining missing values separately for each station by
linear interpolation. 
For each  year in $\{2018, 2019, 2020\}$, this gives a multivariate time series with $69$, $74$ and $78$ coordinates, respectively, each corresponding to one monitoring station. The geographical locations of the stations are displayed on the map in Figure \ref{fig:monitoring_stations}.
  We use the first $N=150$ days as the training period. 
  \begin{figure}[t]
    \centering
    \includegraphics[
        width=0.62\textwidth
    ]{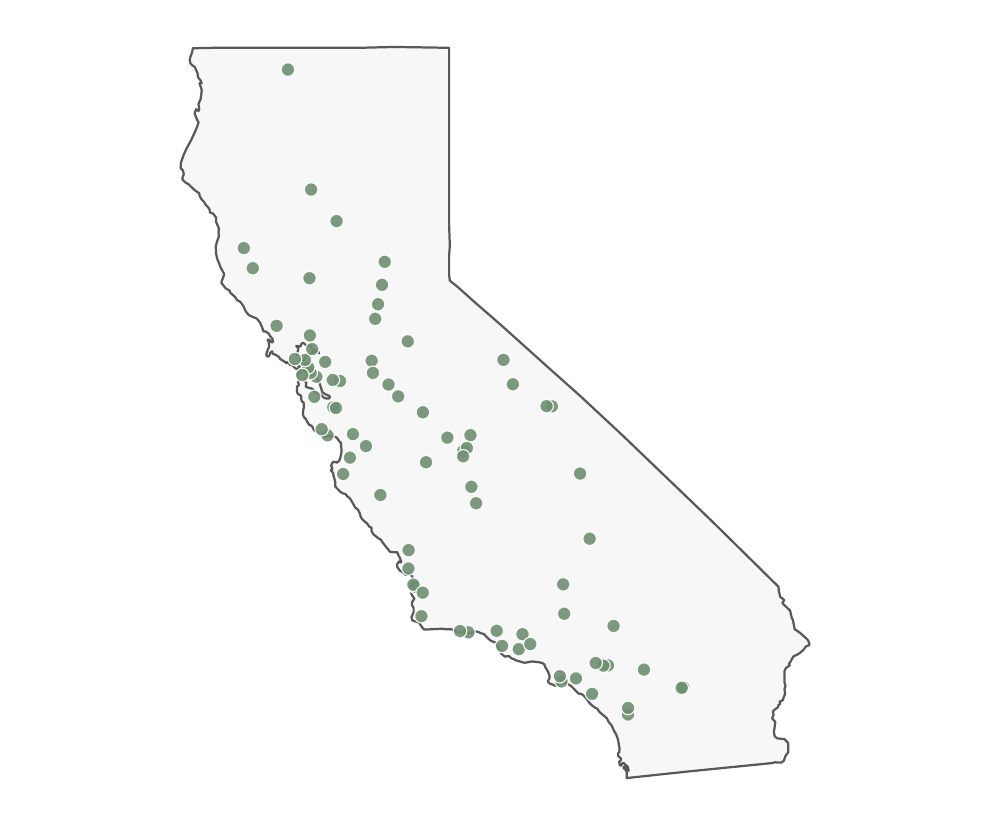}
    \caption{
        Locations of the PM${2.5}$ monitoring stations included in at
        least one of the analyses for 2018--2020.
    }
    \label{fig:monitoring_stations}
\end{figure}
To describe our statistical outcomes, we first notice that for most stations our proposed method and the benchmark from \cite{gosmann:stoehr:heiny:dette:2022} agree on the (non-)significance of a change. Indeed, both methods lead to the same conclusion for approximately $95\%$ of the stations considered in 2018, $91\%$ in 2019, and $100\%$ in 2020.  This congruence is a first plausibility check of our results. Next, we compare detection times. 
Figure \ref{fig:scatter} jointly reports the detection times of our proposed method with those of the competing method \cite{gosmann:stoehr:heiny:dette:2022} for all stations where both methods reject. This means that each point corresponds to one monitoring station. Red points indicate that our method detects the change earlier, blue points indicate that the method of \cite{gosmann:stoehr:heiny:dette:2022} detects earlier, and gray points correspond to equal detection times. We observe that our method delivers earlier detection times in an overwhelming majority of the cases. This effect is particularly pronounced in 2018 and 2020 (the stronger wild-fire seasons), where almost all points lie above the diagonal and several differences are substantial. The results for 2019 are slightly more mixed but still our method detects more than 2/3 of all changes earlier. To interpret Figure \ref{fig:scatter}, it is important to recall that it represents detection times and not delays (which are unknown without the true change locations). Accordingly, for any single change, it is unclear whether a value slightly above the line represents a relevant shortening of detection delay or not.
Still, qualitatively the earlier detections suggest generally shorter delays by our new method; the gap is often more than a week, which would be highly relevant for wildfire detection. Shorter delays fit the findings from our simulation study, see Figure \ref{fig:delays}. We also notice that there exists a non-negligible minority of changes with very large detection gap; this is most easily visible in 2018, but some cases also exist in 2020. These gaps can be up to 100 days long, and we believe they might have a distinct interpretation beyond just generally "shorter delays of the proposed method". Indeed, we believe that stations where the gap is very large might have experienced two separate spikes of PM${2.5}$ levels, with the first one detected by our proposed method and the second (much later one) by the benchmark method of \cite{gosmann:stoehr:heiny:dette:2022}. This is not really a difference in detection delay; rather it points to a different aspect related to power. Our theoretical model \eqref{e:amoc} is very stylized and postulates at most one change per component. A change then persists forever, making it relatively easy to detect. In reality, there often exist multiple transient changes, and wildfire data  is a good example of this, because every fire is eventually put out. 

\begin{figure}[h]
    \centering

    \begin{minipage}{0.32\textwidth}
        \centering
        \includegraphics[width=\linewidth]{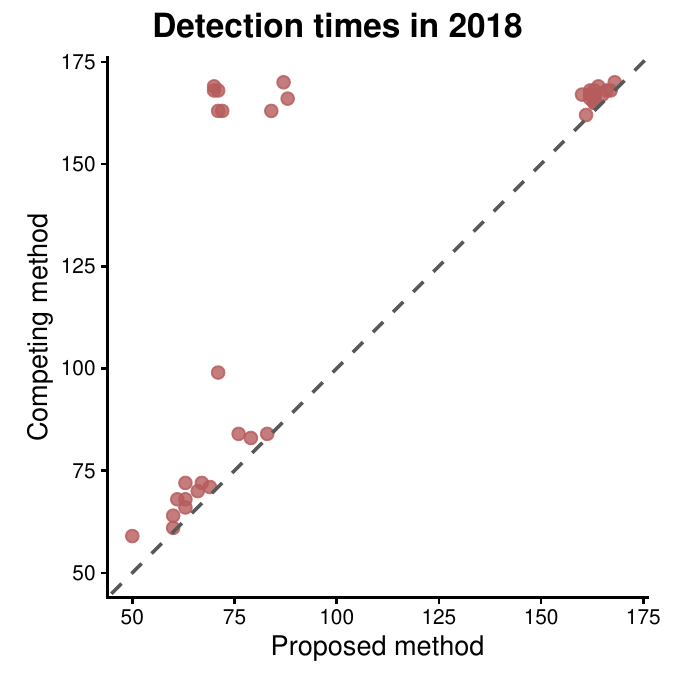}
    \end{minipage}
    \hfill
    \begin{minipage}{0.32\textwidth}
        \centering
        \includegraphics[width=\linewidth]{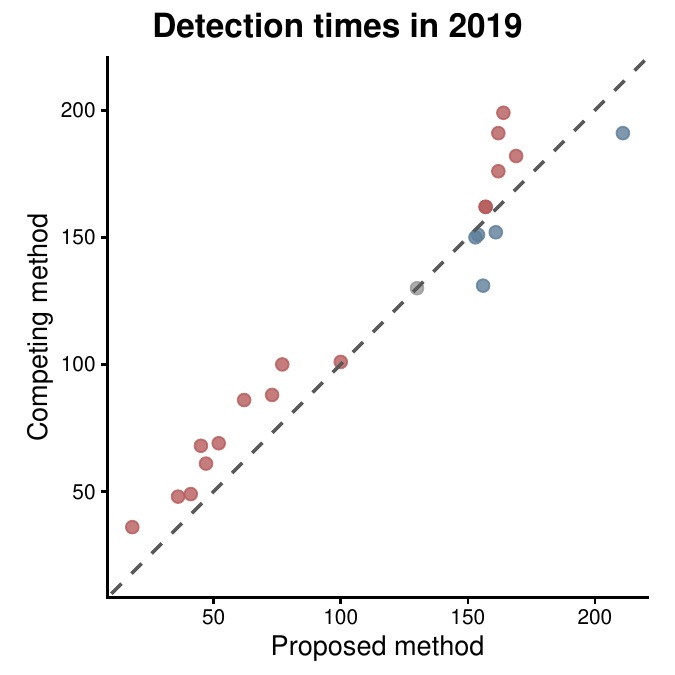}
    \end{minipage}
    \hfill
    \begin{minipage}{0.32\textwidth}
        \centering
        \includegraphics[width=\linewidth]{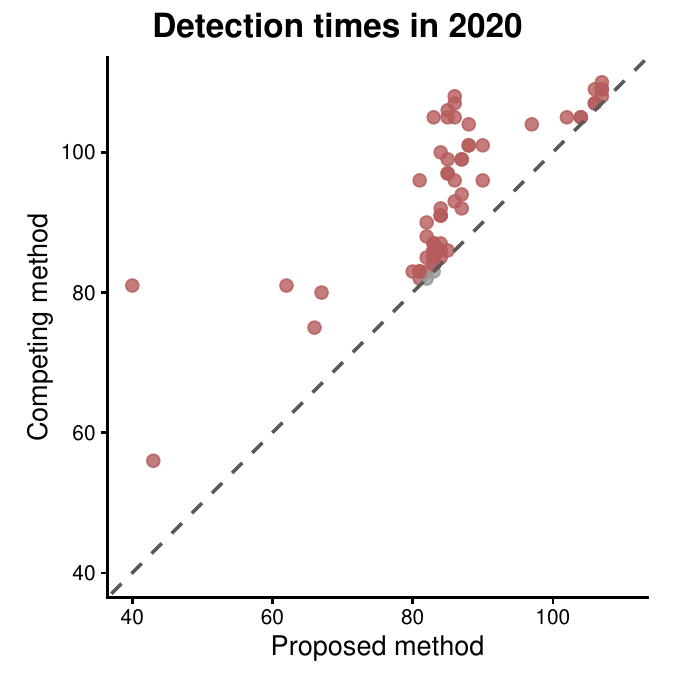}
    \end{minipage}

    \caption{
        Detection times of the proposed and competing methods for the California PM${2.5}$ data. Each point corresponds to one monitoring station. Red points indicate earlier detection by the proposed method, blue points indicate earlier detection by the method of \cite{gosmann:stoehr:heiny:dette:2022}, and grey points indicate equal detection times. The dashed line corresponds to equal detection times.
    }
    \label{fig:scatter}
\end{figure}

Against such transient changes Hölder statistics are distinctly more powerful than traditional methods such as  \cite{gosmann:stoehr:heiny:dette:2022} (see for a discussion of retrospective detection of transient changes eg. \cite{RackauskasSuquet2004}).
If, however, \cite{gosmann:stoehr:heiny:dette:2022} misses the first burst of PM${2.5}$ levels, but catches a second burst (due to some later wildfires) this would show in Figure \ref{fig:scatter} as large detection gaps. What it really demonstrates is an advantage of power of Hölder-type statistics against short transient changes. To evaluate this intuition, we consider the year 2018 and take the three measurement stations with the largest detection gap. Our first observation is that all of these stations were geographically quite close together (at most about 30 kilometers). Next, in  Figure \ref{fig:three:paths} we display the data from the monitoring period of these three stations, together with the first anomaly detection from our method and from the method by \cite{gosmann:stoehr:heiny:dette:2022}.  Figure \ref{fig:three:paths} confirms our theoretical intuition: There exist two spikes of PM$2.5$ that year, one in the summer (main wildfire season) detected by our proposed method but missed by \cite{gosmann:stoehr:heiny:dette:2022}. Their method then detects a later spike in the winter of 2018, likely due to a late wave of fires that year. This case reminds us that changes in applications are often transient and methods that are powerful against such transient changes, like Hölder-type statistics presented in this work, may be practically more powerful than the theoretical monitoring model suggests.

\begin{figure}[H]
    \centering
    \includegraphics[width=0.7\textwidth]{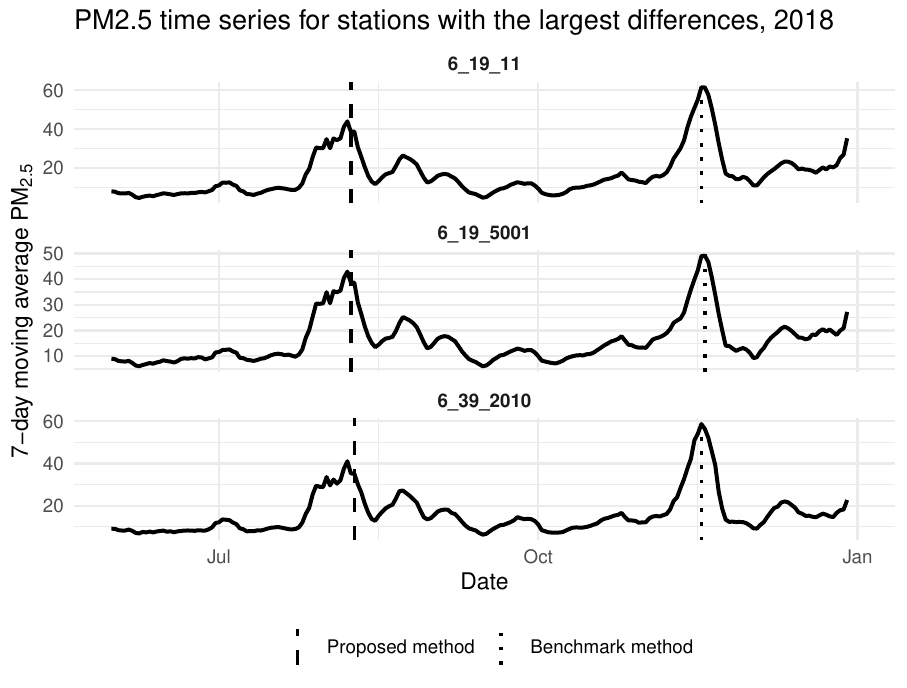}
 \caption{Seven-day moving averages of PM$_{2.5}$ concentrations in 2018 at the three stations with the largest differences between the detection times of the proposed and competing method \cite{gosmann:stoehr:heiny:dette:2022}. }
    \label{fig:three:paths}
\end{figure}

\section*{Acknowledgements}
The authors were partially supported by the Aarhus University Research Foundation (AUFF), project numbers 47331, 47222,  47221 and 47388.

\bibliographystyle{abbrv}
\bibliography{reference}

\newpage 
\appendix
\begin{center}
{\LARGE\bfseries Supplement}
\end{center}
$ $\\
The supplement is dedicated to the proofs of our main theoretical results. It is structured as follows: Section \ref{appen:A} contains the proof for the Hölder-Gauss approximations for the monitoring. Section \ref{appen:B} contains the proofs of the other statement regarding the monitoring. Section \ref{appen:segment} contains the proofs for the Hölder-Gauss approximation for the segmentation procedure under dependence, divided into subsections for the logarithmi and the polynomail weight. A third subsection contains concentration results for small scales. Section \ref{appen:GAP-dependent} contains high dimensional Gaussian approximations for dependent data, which are used in the proofs of Section \ref{appen:segment}. Section \ref{appen:brownian-approx} refines the Hölder-Gauss approximations from Section \ref{appen:segment}, showing that they can be represented via Hölder norms of Brownian motions. This is also used to obtain the precise presentation used in the statement of Theorem \ref{thm:seg:gauss} in the main text. Finally, Section \ref{appen:aux_res} contains auxiliary results used in all other sections of the Appendix.

\section{Proof of Theorem \ref{thm:main}}
\label{appen:A}
We begin by noting that Assumption \ref{ass:main}, part \ref{ass1} can be weakened in this proof to  $C_1\leq \frac{1-2\beta}{12-22\beta}$. The stricter bound in the main article is chosen to give a single condition for sequential monitoring and segmentation.\\

\noindent Central to our proof is a decomposition of $\sup \widehat{\Gamma}_N^{\beta}(k)$ over \textit{detectors corresponding to short- and long-time horizons}. To this end, choose fixed constants $\delta$ and $\zeta$, sufficiently
close to $1/(11-20\beta)$ and $2C_1$, respectively, such that
\begin{align}\label{ass:delta_eta}
    &\frac{1}{11-20\beta}<\delta<1,
    \qquad 2C_1<\zeta<\frac12,\\
    &C_1<\delta(1-2\beta),
    \qquad 2\beta\delta+11C_1<1,\nonumber\\
    &(1-\beta)(\delta+\zeta)+\frac{C_1}{2}<1+\zeta,\nonumber\\
    &\left(\frac23-\beta\right)(\delta+\zeta)+C_1<\frac23.\nonumber
\end{align}
Such a choice exists under the bound on $C_1$ stated above. Then, we define the detectors corresponding to short- and long-time horizons as $A_{1,N}$ and $A_{2,N}$, respectively, where 
\begin{align*}
    A_{1,N} := & \sup_{1 \le k\le N^{1+\zeta}}\widehat{\Gamma}_N^{\beta}(k), \quad 
    A_{2,N} :=  \sup_{k> N^{1+\zeta}}\widehat{\Gamma}_N^{\beta}(k).
\end{align*}
This allows us to decompose the detector as 
\begin{align}\label{eq:def_A1}
\sup_{k\ge 1}\widehat{\Gamma}_N^{\beta}(k) =\max\{A_{1,N},A_{2,N}\}.
\end{align}
Starting from this decomposition, we proceed in two steps. 
\begin{enumerate}
    \item[(1)] In the first step, we provide a further approximation for long-time horizons. In particular, Lemma \ref{lem1} stated below will show that $A_{2,N}$ can be approximated by some $A_{2,1,N}$ defined below.
    \item[(2)] By applying step (1) to $\widehat{\Gamma}_N^{\beta}(k)$ in \eqref{eq:def_A1} and its respective Gaussian counterpart $\tilde{\Gamma}_N^{\beta}(k)$, we will see that it is sufficient to show a Hölder-Gauss approximation for $\max(A_{1,N}, A_{2,1,N}).$  The latter will be provided in Lemma \ref{lemma:main_gauss_apprx}.
\end{enumerate}

To formalize this approach, we define
\begin{align}\label{e:def:A21N}
    A_{2,1,N}:=
    \sup_{k> N^{1+\zeta}}
    \max_{0 \leq \ell < k} \bigg\{w_{\beta}(\ell, k, N) \cdot \frac{k-\ell}{N}\bigg\} \Big\|\sum_{n=1}^{N} \varepsilon_{n} \Big\| \, .  
    \end{align}
We note that under the null hypothesis $H_0$, the change point detector $\widehat{\Gamma}_N^{\beta}(k)$ in \eqref{e:statmain} can be expressed as
\begin{align} \label{e:fullcwH0}
\widehat{\Gamma}_N^{\beta}(k):=  &\max_{0 \leq \ell < k} \bigg\{w_{\beta}(\ell, k, N) \cdot \Big\|\frac{k-\ell}{N} \sum_{n=1}^{N} \varepsilon_{n}-\sum_{n=N+\ell+1}^{N+k}\varepsilon_{n} \Big\| \bigg\}\, .  
\end{align}

\begin{lemma} \label{lem1}
    Suppose that the assumptions of Theorem \ref{thm:main} are satisfied. Then, it holds
    \[
    A_{2,N} = A_{2,1,N}+o_P(1).
    \]
\end{lemma}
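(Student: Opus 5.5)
Under $H_0$ we work with the representation \eqref{e:fullcwH0}. Write $S_N:=\sum_{n=1}^N\varepsilon_n$ and, for the monitoring part, $R(\ell,k):=\sum_{n=N+\ell+1}^{N+k}\varepsilon_n$. For each pair $(\ell,k)$ the reverse triangle inequality gives
\[
\Big|\,\Big\|\tfrac{k-\ell}{N}S_N-R(\ell,k)\Big\|-\tfrac{k-\ell}{N}\|S_N\|\,\Big|\le \|R(\ell,k)\|.
\]
Multiplying by $w_\beta(\ell,k,N)$ and taking suprema over $k>N^{1+\zeta}$ and $0\le\ell<k$ (a supremum of a difference is bounded by the supremum of the absolute difference) yields
\[
|A_{2,N}-A_{2,1,N}|\le \sup_{k>N^{1+\zeta}}\max_{0\le\ell<k} w_\beta(\ell,k,N)\,\|R(\ell,k)\|=:B_N .
\]
So the lemma reduces to showing $B_N=o_P(1)$. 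Only the monitoring noise enters here, and it is independent of the training sample; that independence is not even needed.

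To control $B_N$, write $W_m:=\sum_{n=N+1}^{N+m}\varepsilon_n$, so that $R(\ell,k)=W_k-W_\ell$. The quantity $B_N$ is a Hölder-type modulus of continuity of the $d$-dimensional random walk $W$ on the non-compact range $k>N^{1+\zeta}$, weighted by
\[
w_\beta=N^{1/2}(k-\ell)^{-\beta}(N+k)^{-(1-\beta)}.
\]
We use a dyadic blocking in $k$. For $j\ge j_0$ with $2^{j_0}\approx N^{1+\zeta}$, restrict to $k\in(2^j,2^{j+1}]$. On this block $(N+k)^{-(1-\beta)}\le 2^{-j(1-\beta)}$, and
\[
B_N^{(j)}\le N^{1/2}2^{-j(1-\beta)}\max_{0\le \ell<k\le 2^{j+1}}\frac{\|W_k-W_\ell\|}{(k-\ell)^\beta}.
\]
We bound the last maximum by a union bound over $d$ coordinates and over the $O(2^{2j})$ pairs $(\ell,k)$. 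Each coordinate $W_{k,i}-W_{\ell,i}$ is a sum of $k-\ell$ i.i.d. variables with $\Psi_2$-norm at most $C_2$, by \ref{ass2} and \ref{ass3}. Hoeffding's inequality then gives
\[
\|W_{k,i}-W_{\ell,i}\|_{\Psi_2}\lesssim (k-\ell)^{1/2}.
\]
Hence $\|W_k-W_\ell\|/(k-\ell)^\beta\lesssim (k-\ell)^{1/2-\beta}\le 2^{(j+1)(1/2-\beta)}$, up to a factor $\sqrt{\log(d\,2^{2j})+x}$, with probability at least $1-e^{-x}$.

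Combining these,
\[
B_N^{(j)}\lesssim N^{1/2}2^{-j/2}\sqrt{\log d + j + x_j}\quad\text{with probability at least } 1-e^{-x_j}.
\]
Choose $x_j=j$ (or $x_j = j + \log N$) and sum the failure probabilities over $j\ge j_0$; the sum tends to zero. On the complementary event,
\[
B_N\lesssim \sup_{j\ge j_0}N^{1/2}2^{-j/2}\sqrt{\log d+j}\lesssim N^{1/2}N^{-(1+\zeta)/2}\sqrt{N^{C_1}+\log N}=N^{-\zeta/2}\sqrt{N^{C_1}+\log N}.
\]
By \ref{ass1}, $\log d\lesssim N^{C_1}$, and \eqref{ass:delta_eta} gives $\zeta>2C_1>C_1$. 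So this bound is $O(N^{(C_1-\zeta)/2}\sqrt{\log N})\to 0$, because $2^{-j/2}$ decays geometrically and dominates $\sqrt{j}$ for large $j$. This yields $B_N=o_P(1)$ and hence the lemma.

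The main obstacle is the uniformity over the infinite range of $k$. A naive union bound over all pairs fails, and the dyadic peeling with summable failure probabilities is the step to get right. The crude bound $(k-\ell)^{1/2-\beta}\le 2^{(j+1)(1/2-\beta)}$ wastes nothing essential, since the full weight still gives decay $2^{-j/2}$. The only delicate point is checking that the exponent condition $\zeta>C_1$ (ensured by $\zeta>2C_1$) absorbs $\log d$. If this union bound turns out too lossy in $\log d$, the fallback is a maximal inequality of Lévy–Ottaviani type for the coordinate random walks, combined with chaining over $\ell$. We do not expect this fallback to be needed.
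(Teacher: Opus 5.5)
Your proof is correct, and its first step is the same as the paper's. By the reverse triangle inequality, $|A_{2,N}-A_{2,1,N}|$ is bounded by the weighted supremum of the monitoring increments alone; this is the paper's $A_{2,2,N}$. Both arguments then hinge on showing that this term is $o_P(1)$ via sub-Gaussian concentration and the condition $\zeta>C_1$.

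Where you differ is in how you get uniformity over the unbounded range of $k$ and over all $\ell$. The paper imports a uniform modulus-of-continuity bound, Lemma \ref{lem:psum_conc}, which comes from Lemma 2.1 of \cite{bastian:kutta:2025}. For each coordinate $j$ it gives a single sub-Gaussian variable $M_j$ with $|\sum_{n=\ell+1}^{k}\varepsilon_{n,j}|\le M_j\sqrt{(k-\ell)\log(e(k\vee N))}$ simultaneously for all $\ell<k$. Combined with $w_\beta(\ell,k,N)\sqrt{k-\ell}\le\sqrt{N/k}$, this yields $A_{2,2,N}\le D\max_j M_j N^{-\zeta'}$ for some $\zeta'\in(C_1/2,\zeta/2)$. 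A union bound over only the $d$ coordinates then finishes the argument.

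You instead carry out directly the peeling argument that typically underlies such a lemma. You use dyadic blocks in $k$ and a crude union bound over all $O(4^j)$ pairs and all $d$ coordinates, with Hoeffding's inequality and failure probabilities $e^{-x_j}$ that are summable. The union bound over pairs costs an extra factor $\sqrt{j}\asymp\sqrt{\log k}$. That is exactly the $\sqrt{\log(ek)}$ factor in the paper's modulus bound, so nothing is lost, and the geometric decay $2^{-j/2}$ from the weight does the rest.

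The paper's route is shorter and modular, since the same lemma serves other parts of the analysis. However, it rests on an external result whose proof is not reproduced. Your argument is self-contained, needing only Hoeffding's inequality and union bounds. It also gives the explicit rate $B_N=O_P\big(N^{-\zeta/2}\sqrt{\log d+\log N}\big)$. The Lévy--Ottaviani fallback you mention is not needed.
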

The proof is deferred to Section \ref{sec_proof_lem1}.
Next, note that Lemma \ref{lem1} implies that
\[
\sup_{k\ge 1}\widehat{\Gamma}_N^{\beta}(k)  = \max\{A_{1,N}, A_{2,1,N}\} +o_P(1).
\]
Applying the same results for the Gaussian case, we obtain that
\[
\sup_{k\ge 1}\widetilde{\Gamma}_N^{\beta}(k) =  \max\{\tilde A_{1,N}, \tilde A_{2,1,N}\} +o_P(1),
\]
where $\tilde A_{1,N}$, $\tilde A_{2,1,N}$ are the Gaussian counterparts of $A_{1,N}$ (defined in \eqref{eq:def_A1}), $A_{2,1,N}$ (defined in \eqref{e:def:A21N}), that is, with $\varepsilon_i$ replaced by $Z_i$ everywhere. We recall that by $Z_1,Z_2,...$ we denote  i.i.d. centered Gaussian vectors in $\mathbb{R}^d$ which have the same covariance matrix as $\varepsilon_1$. 
Our key technical lemma entails the following Hölder-Gauss approximation.
\begin{lemma} \label{lemma:main_gauss_apprx}
    Suppose that the assumptions of Theorem \ref{thm:main} are satisfied.
 
    Then, it holds for any $x_0>0$
\[
\lim_{N\to\infty} \sup_{x > x_0 }\big| \mathbb{P}\big(\max\{ A_{1,N},  A_{2,1,N}\}\le x\big)-\mathbb{P}\big(\max\{ \tilde A_{1,N},  \tilde A_{2,1,N}\}\le x\big)\big| =0.
\]
\end{lemma}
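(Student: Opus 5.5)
The plan is to treat $\max\{A_{1,N},A_{2,1,N}\}$ as the maximum of the absolute values of a very large but finite family of linear statistics of the independent vectors $\varepsilon_1,\ldots,\varepsilon_{N+\lceil N^{1+\zeta}\rceil}$, and to compare it with its Gaussian counterpart by a high-dimensional CLT for maxima of sums of independent, non-identically distributed vectors (\cite{chetverikov:wilhelm:kim:2021}). Scales too short for that CLT are removed first by a concentration bound.

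\textbf{Step 1: reduce $A_{2,1,N}$.} The factor in \eqref{e:def:A21N} is $\sup_{k>N^{1+\zeta}}\max_{\ell}w_\beta(\ell,k,N)(k-\ell)/N=\sup_{k>N^{1+\zeta}}(k/(N+k))^{1-\beta}N^{-1/2}$, which equals $N^{-1/2}$ because the supremum over $k$ tends to $1$. Hence $A_{2,1,N}=\|N^{-1/2}\sum_{n\le N}\varepsilon_n\|$, which is simply $d$ extra coordinates of the form $\pm N^{-1/2}\sum_{n\le N}\varepsilon_{n,j}$. The same identity holds for $\tilde A_{2,1,N}$.

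\textbf{Step 2: split $A_{1,N}$ by scale.} Write $A_{1,N}=\max\{A^{s}_{1,N},A^{\ell}_{1,N}\}$, where $A^{s}_{1,N}$ restricts to $k-\ell<N^{\delta}$ and $A^{\ell}_{1,N}$ to $N^{\delta}\le k-\ell$, $k\le N^{1+\zeta}$, with $\delta,\zeta$ from \eqref{ass:delta_eta}. For $m=k-\ell$ we have $w_\beta\le N^{1/2}m^{-\beta}N^{-(1-\beta)}$.
\begin{itemize}
\item The monitoring block sum $\sum_{n=N+\ell+1}^{N+k}\varepsilon_n$ is coordinatewise subgaussian with norm $\lesssim\sqrt m$. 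A union bound over the at most $dN^{2(1+\zeta)}$ triples $(j,k,\ell)$ then gives a small-scale block contribution of order
\[
N^{-(1/2-\beta)}m^{1/2-\beta}\sqrt{\log(dN)}\le N^{-(1-\delta)(1/2-\beta)}\sqrt{\log(dN)}.
\]
\item The training-mean term contributes even less, at most $w_\beta\, m N^{-1}\|\sum_{n\le N}\varepsilon_n\|$.
\end{itemize}
Since $\log d\lesssim N^{C_1}$, both terms are $o_P(\log^{-1/2}(dN))$ provided $(1-\delta)(1-2\beta)>2C_1$ (up to the exact bookkeeping of \eqref{ass:delta_eta}). This rate is what is needed: by Nazarov-type anti-concentration, the Gaussian maximum has Lévy concentration of order $\epsilon\sqrt{\log p}$ on intervals of length $\epsilon$. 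The same bound applies to the Gaussian version. Thus both $A^s_{1,N}$ and $\tilde A^s_{1,N}$ can be dropped inside $\max\{\cdot\}$ at the cost of a perturbation absorbed by anti-concentration.

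\textbf{Step 3: Gaussian approximation on large scales.} The remaining coordinates are indexed by $(j,k,\ell,\pm)$, with $p\le 2dN^{2(1+\zeta)}+2d$. Each has the form $\sum_{n}a^{(k,\ell)}_n\varepsilon_{n,j}$ with deterministic coefficients satisfying $|a_n|\le N^{1/2}m^{-\beta}(N+k)^{-(1-\beta)}\le N^{-(1/2-\beta)}N^{-\beta\delta}$. Coordinates whose variance falls below a threshold $\eta_N\downarrow0$ are negligible for thresholds $x>x_0$: a subgaussian union bound shows they stay below $x_0/2$ with probability tending to one, and this is exactly where the restriction $x>x_0$ enters. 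On the remaining coordinates the variances lie in $[\eta_N,C]$ by \ref{ass4}, so I will rescale and apply the hyperrectangle Gaussian approximation for independent non-identical summands from \cite{chetverikov:wilhelm:kim:2021}, with $n=N+N^{1+\zeta}$ summands and \ref{ass3} supplying the Orlicz envelope $B_n$. I expect an error of the form $\big(B_n^2\log^{c}(pn)/(\text{effective sample size})\big)^{1/6}$, where the effective sample size is driven by $\max|a_n|^{-2}$ relative to the variance. Plugging in $\log p\lesssim N^{C_1}$ and the coefficient bounds should reproduce conditions like $2\beta\delta+11C_1<1$ and $(2/3-\beta)(\delta+\zeta)+C_1<2/3$ from \eqref{ass:delta_eta}. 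Finally, Steps 1–3 are combined uniformly in $x>x_0$.

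I expect the main obstacle to be Step 3: the summands are heterogeneous, with coefficients supported on training and monitoring blocks of very different lengths, and the low-variance coordinates must be handled. Making the CLT rate vanish under the stated exponent constraints, rather than merely for polynomial $d$, is the delicate bookkeeping. I would first try applying the Chetverikov–Wilhelm–Kim bound directly to the rescaled vectors $\sqrt{n}\,a_n\varepsilon_n$ after thresholding the variances at $\eta_N=\log^{-1}(dN)$.
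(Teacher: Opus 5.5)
Your proposal is correct in outline, and Steps 1 and 2 are essentially the paper's: the identity \eqref{e:der:A21N} and the sub-Gaussian union bound behind \eqref{eq_step1}. The genuine difference is in Step 3.

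The paper keeps every long-scale coordinate, however small its variance. It truncates the summands $B_{n;f(m)}$ at level $u$ to verify the moment conditions of Lemma~\ref{lem:th_chetverikov}, which tolerates degenerate variances. It then pays for the truncation twice: an anti-concentration step absorbs the $\bar N^{-1}$ error, and a Gaussian comparison via Lemma~\ref{lem:compar_gaus} restores the untruncated covariance. Both steps work with the polynomially small variance floor $v_N^2\asymp N^{-2(1-\beta)(\delta+\zeta)}$, and this is where the last two lines of \eqref{ass:delta_eta} come from.

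You instead use $x>x_0$ a second time. Each coordinate is $\sum_n a_n\varepsilon_{n,j}$ with deterministic $a_n$, so \ref{ass3}--\ref{ass4} give $\|\sum_n a_n\varepsilon_{n,j}\|_{\Psi_2}^2\lesssim (C_2/\underline{\sigma})^2\operatorname{Var}(\sum_n a_n\varepsilon_{n,j})$. One union bound therefore removes all low-variance coordinates, and this even subsumes your Step 2. You then standardize and apply a non-degenerate CLT. This buys a shorter argument:
\begin{itemize}
\item For standardized summands $c_n\varepsilon_{n,j}$ (average variance one), the averaged third and fourth moments are at most constant multiples of $\max_n|c_n|$ and $\max_n c_n^2$, so no truncation is needed.
\item The Gaussian vector in the CLT already has the covariance of the $Z$-version, so no comparison lemma is needed.
\end{itemize}
The only price is a factor $\eta_N^{-1}$ in the Orlicz envelope, and the $\zeta$-dependent constraints of \eqref{ass:delta_eta} disappear.

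When you write it out, fix the following.
\begin{itemize}
\item The Nazarov-type bound $\epsilon\sqrt{\log p}$ in Step 2 needs a variance floor that the long-scale coordinates do not have. It is also unnecessary: for $x>x_0$ one has $0\le\mathbb{P}(R\le x)-\mathbb{P}(\max\{A^s_{1,N},R\}\le x)\le\mathbb{P}(A^s_{1,N}>x_0)$, so $A^s_{1,N}=o_P(1)$ suffices.
\item With $\eta_N=\log^{-1}(dN)$, the union bound $p\exp(-cx_0^2/\eta_N)$ vanishes only for large $x_0$. Take $\eta_N=1/(\kappa\log p)$ with $\kappa\gtrsim x_0^{-2}$, as in Lemma~\ref{Lemma:AntiConc}.
\item After standardization, the event $\{\max_j|S_j|\le x\}$ is a hyperrectangle with coordinate-dependent sides $x/\sigma_j$, where $\sigma_j^2=\operatorname{Var}(S_j)$. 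You therefore need a rectangle CLT for independent, non-identically distributed summands, e.g.\ Theorem 2.1 of \cite{chernozhukov:chetverikov:kato:koike:2022}. The common-threshold bound of Lemma~\ref{lem:th_chetverikov} does not apply.
\item Your $\delta$ plays the opposite role to the paper's: your short scales are $k-\ell<N^{\delta}$, the paper's are $k-\ell\le(N+k)^{1-\delta}$. In your convention the constraints read $(1-\delta)(1-2\beta)>C_1$ and $2\beta(1-\delta)+cC_1<1$, and these are compatible under \ref{ass1}.
\end{itemize}
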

This result will be proven in Section \ref{sec_proof_lem_gauss_apprx}.

\subsection{Proof of Lemma \ref{lemma:main_gauss_apprx}} \label{sec_proof_lem_gauss_apprx}

We start by simplifying the supremum in $A_{2,1,N}$; for a definition recall equation \eqref{e:def:A21N}. 
For this, we substitute in the definition for the weight function $w_\beta(\ell,k,N)$, yielding
\begin{align}
    A_{2,1,N}
    =\sup_{k\ge N^{1+\zeta}}\max_{0\le\ell<k}\bigg\{
   \frac{(k-\ell)^{1-\beta}}{N^{1/2}(N+k)^{1-\beta}}\bigg\}\bigg\|\sum_{n=1}^N\varepsilon_n\bigg\|.
\end{align}
To find the maximizing pair $(\ell, k)$, the norm of the random variables is irrelevant, because it is independent of both $\ell$ and $k$. To analyze the supremum then, we first focus on the inner maximum.
Since $\beta\in[0,1/2)$, it is easy to see that the maximum is attained when $\ell=0$.  Fixing $\ell=0$, the remaining function is monotonously increasing in $k$, hence its supremum is attained as $k\rightarrow+\infty$, which gives
\begin{align} \label{e:der:A21N}
    A_{2,1,N}=N^{-1/2}\bigg\|\sum_{n=1}^N\varepsilon_n\bigg\|.
\end{align}
We will below only work with the simplified form of $A_{2,1,N}$ derived in \eqref{e:der:A21N}.
In what follows, we use $c>0$ as a generic constant that may have different values from line to line and that does not depend on parameters of the model, i.e., $N,d, \beta,k$, etc.
Armed with these preparations, we are now in a position to prove Lemma \ref{lemma:main_gauss_apprx}. Some auxiliary results are placed in Section \ref{sec_aux_gauss_apprx}, while other, more general background results are presented in Section \ref{appen:aux_res}.

\begin{proof}[Proof of Lemma \ref{lemma:main_gauss_apprx}]
    
    We recall that $A_{1,N}$ is the maximum of absolute values of terms of the form
    \begin{align}\label{eq:terms_A_1}
      w_{\beta}(\ell, k, N) \cdot \Big(\frac{k-\ell}{N} \sum_{n=1}^{N} \varepsilon_{n,j}-\sum_{n=N+\ell+1}^{N+k}\varepsilon_{n,j}   \Big)
      =\frac{N^{1/2}\lb\frac{k-\ell}{N} \sum_{n=1}^{N} \varepsilon_{n,j}-\sum_{n=N+\ell+1}^{N+k}\varepsilon_{n,j}\rb }{(k-\ell)^{\beta} \big( N+k\big)^{1-\beta} }
    \end{align}
    where triplets $(k,\ell,j)$ satisfy $1\le k\le N^{1+\zeta}, 0\le \ell < k, 1\le j\le d$, and $A_{2,1,N}$ is the maximum of absolute values of terms of the form
    \begin{align}\label{eq:terms_A_21}
          N^{-1/2}\sum_{n=1}^N\varepsilon_{n,j},
    \end{align}
    which are indexed by just $j=1,\ldots,d$. In order to unite these elements, we define the set $\mathcal{S}$ as the union of the corresponding indices, i.e., $\mathcal{S}=\{(k,\ell,j)|1\le k\le N^{1+\zeta}, 0\le \ell < k, 1\le j\le d\}\cup\{j^\prime|1\le j^\prime\le d\}$. The cardinality of this set is  $|\mathcal{S}|=d\lfloor N^{1+\zeta}\rfloor (1+\lfloor N^{1+\zeta}\rfloor )/2+d$.    
    For further convenience we define a vector $\gamma_N=(\gamma_{N,1},\ldots, \gamma_{N,|\mathcal{S}|})^\top$ of dimension $|\mathcal{S}|$, which will contain all the elements of form \eqref{eq:terms_A_1} and \eqref{eq:terms_A_21}.  In order to address these elements we let $f:\{1,\ldots,|\mathcal{S}|\}\rightarrow \mathcal{S}$ be an arbitrary bijection. Thus we can rewrite 
    \begin{align}
       \max\{ A_{1,N},  A_{2,1,N}\}\!
        =\max_{m\in\{1,\ldots,|\mathcal S|\}}\{|\gamma_{N,m}|\}=\|\gamma_N\|.
    \end{align}
Next, we divide the coordinates of $\gamma_N$ into two parts:  \begin{itemize}
    \item (Small distance indices) Triplets $(k,\ell,j)$ for which $k-\ell\le(N+k)^{1-\delta}$ and $1\leq j \leq d$; the entire set of named triplets is $\mathcal{S}^-$. We remark that the number of such triplets is bounded by the total number of triplets, i.e.
    \begin{align}\label{eq:bound_S-}
        |\mathcal{S}^-|\leq d\lfloor N^{1+\zeta}\rfloor (1+\lfloor N^{1+\zeta}\rfloor )/2
    \end{align}
    \item (Large distance indices) All remaining indices are gathered in the set $\mathcal{S}^+$, that is, $\mathcal{S}^+=\mathcal{S}\setminus \mathcal{S}^-$. The number of its elements can be bounded with
    \begin{align}\label{eq:bound_S+}
        d\leq |\mathcal{S}^+|\leq |\mathcal{S}|= d\lfloor N^{1+\zeta}\rfloor (1+\lfloor N^{1+\zeta}\rfloor )/2+d
    \end{align}
\end{itemize} 
The rough idea is that the maximum over $\mathcal{S}^-$  will be negligible, while the maximum over the second part will be comparable with the respective Gaussian version. More precisely, let $x_0$ be a fixed, arbitrary small number.
     We denote by $\tilde{\gamma}_N$  the respective Gaussian counterpart of $\gamma_N$, i.e., in which we replaced all $\varepsilon_{n,j}$ by Gaussian $Z_{n,j}$.
Then, Lemma \ref{lemma:main_gauss_apprx} is implied by the following assertions: 
        \begin{align} \label{eq_step1}
        & \sup_{t>x_0} \mathbb{P}\Big(\max_{ f(m)\in\mathcal{\mathcal{S}}^-}|\gamma_{N,f(m)}|\ge t\Big) = o(1), \\
              \label{eq_step1_gauss}
        & \sup_{t>x_0} \mathbb{P}\Big(\max_{ f(m)\in\mathcal{\mathcal{S}}^-}|\tilde \gamma_{N,f(m)}|\ge t\Big) = o(1), \\
        \label{eq_step2}
        &  \sup_{t\ge x_0}\Big|\mathbb{P}\Big(\max_{f(m)\in\mathcal{S}^+}|\gamma_{N,f(m)}|\ge t\Big)-\mathbb{P}\Big(\max_{f(m)\in\mathcal{S}^+}|\tilde\gamma_{N,f(m)}|\ge t\Big)\Big|  =o(1).  
    \end{align}
    Note that \eqref{eq_step1_gauss} is an application of \eqref{eq_step1}. The rest of this section is devoted to the proofs of \eqref{eq_step1} and \eqref{eq_step2}. 
    \end{proof}

\begin{proof}[Proof of \eqref{eq_step1}]
    As a preparatory step, we bound the Orlicz norm by using \eqref{eq:sum_subgauss} and Assumption~\ref{ass3} 
     \begin{align} \label{e:SG:b1}
    \max_{f(m)\in \mathcal{S}^-}\left\| \gamma_{N,f(m)} \right\|_{\Psi_2}^2 \leq &\max_{(k,\ell,j)\in\mathcal{S}^-} \Big\{\frac{C_2^2(k-\ell)^{2(1-\beta)}}{(N+k)^{2(1-\beta)}}+\frac{C_2^2(k-\ell)^{1-2\beta}N}{(N+k)^{2(1-\beta)}}\Big\}\\
      &\le C_2^2\max_{(k,\ell,j)\in\mathcal{S}^-} \Big\{(N+k)^{-2(1-\beta)\delta}+N(N+k)^{-(1+\delta(1-2\beta))}\Big\}\\
      &\le C_2^2\Big(N^{-2(1-\beta)\delta}+N^{-\delta(1-2\beta)}\Big)
      \le2C_2^2N^{-\delta(1-2\beta)}:=K^2. \label{eq1}
    \end{align}
     In the last inequality, we controlled the distance $k-\ell$, using the definition of $\mathcal{S}^-$, and in the third step, we used $N\leq N+k$.
    Notice that the number $K$ depends on $N$, which is not reflected in the definition.
    Then,  the standard concentration bound \eqref{eq:max_subgauss} for a maximum of sub-Gaussian variables gives us  for any $t>0$
    \begin{align} \label{eq2}
        \mathbb{P}\Big(\max_{ f(m)\in\mathcal{S}^-}|\gamma_{N,f(m)}|\ge t\Big)\le |\mathcal{S}^-|\exp\Big(-\frac{t^2}{2cK^2}\Big),
    \end{align}
   where $c>0$ is a constant independent of $N$.

The cardinality of the set $\mathcal{S}^-$ is bounded as in \eqref{eq:bound_S-}, furthermore, Assumption~\ref{ass1}  states that $d$ is bounded by $\exp(N^{C_1})$ up to a constant, and we thus get that 
\begin{align} \label{eq3}
    |\mathcal{S}^-| \le  c \exp(N^{C_1})  N^{1+\zeta} (1+ N^{1+\zeta})/2.
\end{align}
Thus, combining \eqref{eq1}, \eqref{eq2} and \eqref{eq3} with a union bound
    \begin{align} 
        \mathbb{P}\Big(\max_{f(m)\in\mathcal{S}^-}|\gamma_{N,f(m)}|\ge t\Big)
        \le c (1+N^{1+\zeta})N^{1+\zeta}\exp\Big(N^{C_1}-\frac{ct^2N^{\delta(1-2\beta)}}{2C_2^2}\Big).
    \end{align}
    With this we obtained that for any $t> x_{0,N}:= \sqrt{cN^{C_1-\delta(1-2\beta)}}$, we have
        \begin{align} 
        \mathbb{P}\Big(\max_{ f(m)\in\mathcal{\mathcal{S}}^-}|\gamma_{N,f(m)}|\ge t\Big)= o(1).
    \end{align}
    Assumption~\ref{ass1} with the bound on $\delta$ from \eqref{ass:delta_eta} imply that     $C_1<\delta(1-2\beta)$, which assures that $x_{0,N}$ tends to 0 with the growth of $N$. This means that for $N$ big enough we  have $x_{0,N}<x_0$, which completes the proof of \eqref{eq_step1}. 
\end{proof}
To complete the proof of Lemma \ref{lemma:main_gauss_apprx}, it is left to prove assertion \eqref{eq_step2}. 
\begin{proof}[Proof of \eqref{eq_step2}]
    We recall that for  all elements of $\gamma_N$, that correspond to $\mathcal{S}^+$, we have $k\le  N^{1+\zeta}$ and $k-\ell>N^{1-\delta}$. From this follows that any $\gamma_{N,f(m)}$ with $f(m) \in \mathcal{S}^+$ is a sum of no more than $N+N^{1+\zeta}$ terms. Thus, if we denoted for brevity 
    \begin{align} \label{def_N_bar}
        \bar{N}=N+\lfloor N^{1+\zeta} \rfloor,
    \end{align}
    we can rewrite $\gamma_{N,f(m)}$ as the normalized sums of simpler elements $B_{n;f(m)}$, defined as
     \begin{align} \label{e:def:Bf}
         B_{n;f(m)}=\begin{cases}\frac{(k-\ell)^{1-\beta}\bar N^{1/2}}{N^{1/2}(N+k)^{1-\beta}}\varepsilon_{n,j}, & 1\le n \le N \text{ and } f(m)=(k,\ell,j),\\
         -\frac{N^{1/2}\bar N^{1/2}}{(k-\ell)^\beta(N+k)^{1-\beta}}\varepsilon_{n,j}, & N+\ell+1\le n \le N+k  \text{ and } f(m)=(k,\ell,j),\\
         \frac{\bar N^{1/2}}{N^{1/2}}\varepsilon_{n,j}, & 1\le n\le N  \text{ and } f(m)=j,\\
         0 ,& \text{ otherwise}.\end{cases}
     \end{align}
    One can check then 
     \begin{align}\label{eq:gamma_sum}
         \gamma_{N,f(m)}=\frac{1}{\bar{N}^{1/2}}\sum_{n=1}^{\bar{N}}B_{n;f(m)}.
     \end{align}
 Before proceeding, we remark that $B_{n;f(m)}$'s are still sub-Gaussian random variables whose Orlicz norms can be uniformly bounded for all indexes. To show this, we distinguish the cases as in \eqref{e:def:Bf}(except for the zero terms, as they are obviously bounded):
     \begin{align}
         \|B_{n;f(m)}\|_{\Psi_2}
         &\le C_2\begin{cases}\frac{(k-\ell)^{1-\beta}\bar N^{1/2}}{N^{1/2}(N+k)^{1-\beta}}\\
         \frac{N^{1/2}\bar N^{1/2}}{(k-\ell)^\beta(N+k)^{1-\beta}}\\
         \bar N^{1/2}/N^{1/2}
         \end{cases}
               \le  C_2\max\Bigg\{\frac{N(1+N^{\zeta})^{1/2}}{N^{1-\beta\delta}},(1+N^{\zeta})^{1/2}\Bigg\}
         \\
         &\le c\max\{ N^{\zeta/2+\beta\delta},  N^{\zeta/2}\}
         \le c N^{\zeta/2+\beta\delta}:=b_N.
         \label{eq_def_b_N}
     \end{align}
For this derivation we  used in the first inequality Assumption \ref{ass3}. Noting that $(k-\ell)/(N+k)<1$, the first and the third case both can be bounded with $(1+N^{\zeta})^{1/2}$, since $\bar N\le N+N^{1+\zeta}$. The second case we bound by using the latter inequality and  $N^{1-\delta} < k-\ell \le N^{1+\zeta}$. Finally, we use  $1+N^{1+\zeta} \le 2 N^{1+\zeta}$.
     
    For the essential part of the Gaussian approximation we turn to 
    Lemma~\ref{lem:th_chetverikov} (formally we apply it to the vector containing all $\pm B_{n;f(m)}$, for the sake of brevity some of the following calculations are only done for $B_{n;f(m)}$), for which we need to verify conditions~\ref{cond1_var}-\ref{cond3_orlizs}. 
Unfortunately,  condition  \ref{cond2_momen} of Lemma~\ref{lem:th_chetverikov} is not  generally satisfied. As a remedy, we   truncate  each element $B_{n;f(m)}$ at a level $u$, which will be specified below.  More precisely, we set 
\begin{align}
   B_{n;f(m)}^{\le u}&=B_{n;f(m)}1\{|B_{n;f(m)}|\le u\}, \label{e:def:B_le_u}\\
   B_{n;f(m)}^{> u}&=B_{n;f(m)}1\{|B_{n;f(m)}|>u\},
\end{align}
where 
\begin{align} \label{e:def_u}
    u=b_N\sqrt{8c\log(|\mathcal{S}^+|\bar {N)}}.
\end{align}
Note that Lemma \ref{lemma:negl_more_u} in Section \ref{sec_aux_gauss_apprx} says that the maximum over the centered version of  $\sum_nB_{n;f(m)}^{>u}$ is negligible, while we check that they satisfy conditions \ref{cond1_var}-\ref{cond3_orlizs} in Lemma \ref{lem:verf_trunc} of the same section. 

Returning to the initial $\gamma_{N,f(m)}$, since $B_{n;f(m)}$ are centered, we can write
\begin{align}
    B_{n;f(m)}=( B_{n;f(m)}^{\le u}-\E( B_{n;f(m)}^{\le u}))+( B_{n;f(m)}^{>u}-\E( B_{n;{f(m)}}^{>u})).
\end{align}
Using the simple inequality for the maximum of  the difference for two sequences $\{a_i\}_{i=1}^n$ and $\{b_i\}_{i=1}^n$, which states that $|\max_{i}a_i-\max_ib_i|\le\max_i|a_i-b_i|$, and, as consequence, $\max_i|b_i|-\max_i|a_i-b_i|\le\max_ia_i\le\max_ib_i+\max_i|a_i-b_i|$, we can obtain
\begin{align}
    &\max_{f(m)\in\mathcal{S}^+}\Big|\sum_{n=1}^{\bar{N}}B_{n;f(m)}\Big|
    \ge\max_{f(m)\in\mathcal{S}^+}\Big|\sum_{n=1}^{\bar N} B_{n;f(m)}^{\le u}-\E( B_{n;f(m)}^{\le u})\Big|-\max_{f(m)\in\mathcal{S}^+}\Big|\sum_{n=1}^{\bar N} B_{n;f(m)}^{>u}-\E( B_{n;{f(m)}}^{>u})\Big|,\\
&\max_{f(m)\in\mathcal{S}^+}\Big|\sum_{n=1}^{\bar{N}}B_{n;f(m)}\Big|
\le\max_{f(m)\in\mathcal{S}^+}\Big|\sum_{n=1}^{\bar N} B_{n;f(m)}^{\le u}-\E( B_{n;f(m)}^{\le u})\Big|+\max_{f(m)\in\mathcal{S}^+}\Big|\sum_{n=1}^{\bar N} B_{n;f(m)}^{>u}-\E( B_{n;{f(m)}}^{>u})\Big|.\label{eq:max_B_le}
\end{align}
From the first inequality we get
\begin{align}\label{eq:tranc_ge}
   & \PR\lb\max_{f(m)\in\mathcal{S}^+}\bar N^{-1/2}\Big|\sum_{n=1}^{\bar{N}}B_{n;f(m)}\Big|\ge t\rb\\
    &\ge\PR\lb\max_{f(m)\in\mathcal{S}^+}\bar N^{-1/2}\Big|\sum_{n=1}^{\bar N} B_{n;f(m)}^{\le u}-\E( B_{n;f(m)}^{\le u})\Big|\ge t+\max_{f(m)\in\mathcal{S}^+}\bar N^{-1/2}\Big|\sum_{n=1}^{\bar N} B_{n;f(m)}^{>u}-\E( B_{n;{f(m)}}^{>u})\Big|\rb\\
    &\ge\PR\lb\max_{f(m)\in\mathcal{S}^+}\bar N^{-1/2}\Big|\sum_{n=1}^{\bar N} B_{n;f(m)}^{\le u}-\E( B_{n;f(m)}^{\le u})\Big|\ge t+\bar N^{-1}\rb+o(\bar N^{-1}).
\end{align}
The last inequality follows from the law of full probability decomposition with respect to the event \\$\{\max_{f(m)\in\mathcal{S}^+}\bar N^{-1/2}\Big|\sum_{n=1}^{\bar N} B_{n;f(m)}^{>u}-\E( B_{n;{f(m)}}^{>u})\Big|>\bar N^{-1}\}$ and Lemma~\ref{lemma:negl_more_u}.

Let $\tilde B^{\le u}_n$ be the corresponding Gaussian versions of $ B^{\le u}_n$. Then, an application of Lemma~\ref{lem:th_chetverikov} and Lemma \ref{lem:verf_trunc}, where for  $a_M$ we take $u=b_N\sqrt{8c\log(|\mathcal{S}^+|\bar {N)}}$, gives
\begin{align}\label{eq:gaus_bounded_ge2}
   &\Bigg| \PR\lb\max_{f(m)\in\mathcal{S}^+}\bar N^{-1/2}\Big|\sum_{n=1}^{\bar N} B_{n;f(m)}^{\le u}-\E(  B_{n;f(m)}^{\le u})\Big|\ge t+\bar N^{-1}\rb\\
   &-\PR\lb\max_{f(m)\in\mathcal{S}^+}\bar N^{-1/2}\Big|\sum_{n=1}^{\bar N}\tilde B_{n;f(m)}^{\le u}-\E( \tilde B_{n;f(m)}^{\le u})\Big|\ge t+\bar N^{-1}\rb\Bigg|\\
   &\le \lb\frac{cb_N^2\log^{11}(|\mathcal{S}^+|\bar N)}{\bar N}\rb^{1/6}.
\end{align}

To accommodate for  "$+\bar N^{-1}$" we use an anti-concentration inequality as in Lemma~\ref{lem:anti_conc} for the centered  Gaussian vectors $\pm\bar N^{-1/2}\sum_n(\tilde B_{n;\cdot}^{\le u}-\E( \tilde B_{n;\cdot}^{\le u}))$.   Since the terms are independent along the index $n$, the variance of each element of the mentioned Gaussian vector can be  rewritten as the sum of variances and lower bounded by 
\begin{align}\label{eq:var_sum_B}
    &\Var\lb\bar N^{-1/2}\sum_n\tilde B_{n;f(m)}^{\le u}\rb
    =\bar N^{-1}\sum_{n=1}^{\bar N}\Var(\tilde B_{n;f(m)}^{\le u})
    =\bar N^{-1}\sum_{n=1}^{\bar N}\Var( B_{n;f(m)})+O(|\mathcal{S}^+|^{-1})\\
    &\ge\begin{cases}
         \underline{\sigma}^2\lb\frac{(k-\ell)^{2(1-\beta)}}{(N+k)^{2(1-\beta})}+\frac{N(k-\ell)^{1-2\beta}}{(N+k)^{2(1-\beta})}\rb+O(|\mathcal{S}^+|^{-1}), &    f(m)=(k,\ell,j),\\
         \underline{\sigma}^2+O(|\mathcal{S}^+|^{-1}), &  f(m)=j\end{cases}\\
         &\ge\underline{\sigma}^2\lb\frac{(k-\ell)^{2(1-\beta)}}{(N+k)^{2(1-\beta)}}+\frac{N(k-\ell)^{1-2\beta}}{(N+k)^{2(1-\beta)}}\rb+O(|\mathcal{S}^+|^{-1})
         \ge \frac{c}{N^{2(1-\beta)(\delta+\zeta)}} =: v_N^2,
\end{align}
here we used Lemma~\ref{lem:momem_B_u} and the fact that $k-\ell>N^{1-\delta}$ and $k\le N^{1+\zeta}$. Then, from Lemma~\ref{lem:anti_conc} with $a=1/\bar N$ and $b=v_N$, we have

\begin{align}
    & \PR\lb\max_{f(m)\in\mathcal{S}^+}\bar N^{-1/2}\Big|\sum_{n=1}^{\bar N}(\tilde B_{n;f(m)}^{\le u}-\E( \tilde B_{n;f(m)}^{\le u}))\Big|\ge t\rb\\
   & -\PR\lb\max_{f(m)\in\mathcal{S}^+}\bar N^{-1/2}\Big|\sum_{n=1}^{\bar N}(\tilde B_{n;f(m)}^{\le u}-\E( \tilde B_{n;f(m)}^{\le u}))\Big|\ge t+\bar N^{-1}\rb\\
    &\le \frac{1}{\bar Nv_N} (\sqrt{2\log |\mathcal{S}^+|}+2).
\end{align}
 Combining the last  inequality with \eqref{eq:tranc_ge} and \eqref{eq:gaus_bounded_ge2} gives
\begin{align}
   & \PR\lb\max_{f(m)\in\mathcal{S}^+}\bar N^{-1/2}\Big|\sum_{n=1}^{\bar{N}}B_{n;f(m)}\Big|\ge t\rb\\
    &\ge\PR\lb\max_{f(m)\in\mathcal{S}^+}\bar N^{-1/2}\Big|\sum_{n=1}^{\bar N}\tilde B_{n;f(m)}^{\le u}-\E( \tilde B_{n;f(m)}^{\le u})\Big|\ge t\rb-\lb\frac{cb_N^2\log^{11}(|\mathcal{S}^+|\bar N)}{\bar N}\rb^{1/6}\\
    &+o(\bar N^{-1})-\frac{1}{\bar Nv_N} (\sqrt{2\log |\mathcal{S}^+|}+2).
\end{align}
Following the same path but starting with \eqref{eq:max_B_le} we can bound the initial expression from above
\begin{align}
   & \PR\lb\max_{f(m)\in\mathcal{S}^+}\bar N^{-1/2}\Big|\sum_{n=1}^{\bar{N}}B_{n;f(m)}\Big|\ge t\rb\\
    &\le\PR\lb\max_{f(m)\in\mathcal{S}^+}\bar N^{-1/2}\Big|\sum_{n=1}^{\bar N}\tilde B_{n;f(m)}^{\le u}-\E( \tilde B_{n;f(m)}^{\le u})\Big|\ge t\rb+\lb\frac{cb_N^2\log^{11}(|\mathcal{S}^+|\bar N)}{\bar N}\rb^{1/6}\\
    &-o(\bar N^{-1})+\frac{1}{\bar Nv_N} (\sqrt{2\log |\mathcal{S}^+|}+2).
\end{align}
These give
\begin{align}
   & \Bigg|\PR\lb\max_{f(m)\in\mathcal{S}^+}\bar N^{-1/2}\Big|\sum_{n=1}^{\bar{N}}B_{n;f(m)}\Big|\ge t\rb
    -\PR\lb\max_{f(m)\in\mathcal{S}^+}\bar N^{-1/2}\Big|\sum_{n=1}^{\bar N}\tilde B_{n;f(m)}^{\le u}-\E( \tilde B_{n;f(m)}^{\le u})\Big|\ge t\rb\Bigg|\\
    &\le\lb\frac{cb_N^2\log^{11}(|\mathcal{S}^+|\bar N)}{\bar N}\rb^{1/6}
    +o(\bar N^{-1})+\frac{1}{\bar Nv_N} (\sqrt{2\log |\mathcal{S}^+|}+2).
\end{align}
Bounds \eqref{ass:delta_eta} together with Assumption~\ref{ass1} make sure that obtained quantities tend to 0. 

At this point we obtained a preliminary version of the final result, but  the variance profile of the involved Gaussian variables coincides  with the one of truncated version of $B_{n;f(m)}$. So the last step is to compare $\max_{f(m)\in\mathcal{S}^+}\bar N^{-1/2}\Big|\sum_{n=1}^{\bar N}(\tilde B_{n;f(m)}^{\le u}-\E( \tilde B_{n;f(m)}^{\le u}))\Big|$ with  $\max_{f(m)\in\mathcal{S}^+}\bar N^{-1/2}\Big|\sum_{n=1}^{\bar N}(\tilde B_{n;f(m)}-\E( \tilde B_{n;f(m)}))\Big|$, where  $\tilde B_{n;f(m)}$ is the Gaussian version of $ B_{n;f(m)}$ in \eqref{e:def:Bf}.
 For this  we use Lemma~\ref{lem:compar_gaus}. To  calculate the corresponding $\Delta$, we take arbitrary indexes $m,r$ such that $f(m),f(r)\in\mathcal{S}^+$, and write
\begin{align}   
\label{e:truncated-covariance-bound}
    &\Bigg|\cov\lb\bar N^{-1/2}\sum_{n=1}^{\bar N}\tilde B_{n;f(m)}^{\le u},\bar N^{-1/2}\sum_{n'=1}^{\bar N}\tilde B_{n';f(r)}^{\le u}\rb-\cov\lb\bar N^{-1/2}\sum_{n=1}^{\bar N}\tilde B_{n;f(m)},\bar N^{-1/2}\sum_{n'=1}^{\bar N}\tilde B_{n';f(r)}\rb\Bigg|\\
    &\le \bar N^{-1}\sum_{n=1}^{\bar N}\Bigg|\cov\lb \tilde B_{n;f(m)}^{\le u},\tilde B_{n;f(r)}^{\le u}\rb-\cov\lb\tilde B_{n;f(m)},\tilde B_{n;f(r)}\rb\Bigg|\\
    &=\bar N^{-1}\sum_{n=1}^{\bar N}\Bigg|\cov\lb  B_{n;f(m)}^{\le u}, B_{n;f(r)}^{\le u}\rb-\cov\lb B_{n;f(m)}, B_{n;f(r)}\rb\Bigg|\\
    &\le \max_{1 \leq n \leq \bar N}\Bigg|\cov\lb  B_{1;f(m)}^{\le u}, B_{1;f(r)}^{\le u}\rb-\cov\lb B_{1;f(m)}, B_{1;f(r)}\rb\Bigg|.
\end{align}
Due to Lemma~\ref{lem:momem_B_u} we know that the latter expression is of order $  \mathcal{O}\lb(|\mathcal{S}^+|\bar N)^{-1}\rb$, which means that we can take $\Delta=c(|\mathcal{S}^+|\bar N)^{-1}$. To find the value of $\sigma$, i.e. lower bound of $\Var(\bar N^{-1/2}\sum_{n=1}^{\bar{N}}\tilde B_{n;f(m)})$, we perform a very similar calculation to \eqref{eq:var_sum_B} and get that for any  $f(m)\in\mathcal{S}^+ $ $\Var(\bar N^{-1/2}\sum_{n=1}^{\bar{N}}\tilde B_{n;f(m)})\ge cN^{-2(1-\beta)(\delta+\zeta)}$. This immediately gives us
\begin{align}
   & \Bigg|\PR\lb\max_{f(m)\in\mathcal{S}^+}\bar N^{-1/2}\Big|\sum_{n=1}^{\bar N}(\tilde B_{n;f(m)}^{\le u}-\E( \tilde B_{n;f(m)}^{\le u}))\Big|\ge t\rb-\PR\lb\max_{f(m)\in\mathcal{S}^+}\bar N^{-1/2}\Big|\sum_{n=1}^{\bar{N}}\tilde B_{n;f(m)}\Big|\ge t\rb
    \Bigg|\\
    &\le \frac{cN^{(1-\beta)(\delta+\zeta)}\log |\mathcal{S}^+|\log^{1/3}\bar N}{(|\mathcal{S}^+|\bar N)^{1/3}}.
\end{align}
Since $N^{1+\delta}d\leq|\mathcal{S}^+|\leq 3dN^{2+2\zeta}$, and (by Assumption~\ref{ass1}) $d$ is upper bounded by a constant times $\exp(N^{C_1})$, we obtain convergence to 0 of the right hand side.
Recalling \eqref{eq:gamma_sum}, the proof of \eqref{eq_step2} is completed. 

\end{proof}

\subsubsection{Auxiliary results for the proof of Lemma \ref{lemma:main_gauss_apprx}} \label{sec_aux_gauss_apprx}

The first lemma ensure the validity of our truncation argument in the proof of Lemma \ref{lemma:main_gauss_apprx}. Recall the definition of $u=b_N\sqrt{8c\log(|\mathcal{S}^+|\bar {N)}}$ in \eqref{e:def_u}.

\begin{lemma}\label{lemma:negl_more_u}
  Suppose that the assumptions of Theorem \ref{thm:main} are satisfied.
Then, we have
    \begin{align}
        \PR\lb\bar{N}^{-1/2}\max_{f(m)\in\mathcal{S}^+}\Bigg|\sum_{n=1}^{\bar{N}}\lb B_{n;f(m)}^{>u}-\E(B_{n;f(m)}^{>u})\rb\Bigg|>\bar{N}^{-1}\rb=o(\bar{N}^{-1}).
    \end{align}
\end{lemma}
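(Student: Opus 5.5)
The strategy is to show that, with overwhelming probability, truncation is never active. The remaining centering terms are then bounded deterministically. Write the event in the statement as $E$.

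First consider the event
\[
\mathcal{E}_u:=\Big\{\max_{f(m)\in\mathcal{S}^+}\max_{1\le n\le \bar N}|B_{n;f(m)}|\le u\Big\}.
\]
On $\mathcal{E}_u$ every $B^{>u}_{n;f(m)}$ vanishes. The random part of the sum is therefore zero, and only the deterministic term $\bar N^{-1/2}\big|\sum_{n}\E(B^{>u}_{n;f(m)})\big|$ remains. This gives
\[
\PR(E)\le \PR(\mathcal{E}_u^c)+\mathbb{I}\Big\{\bar N^{1/2}\max_{f(m),n}|\E B^{>u}_{n;f(m)}|>\bar N^{-1}\Big\}.
\]

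To bound $\PR(\mathcal{E}_u^c)$, recall from \eqref{eq_def_b_N} that every $B_{n;f(m)}$ is sub-Gaussian with $\|B_{n;f(m)}\|_{\Psi_2}\le b_N$. Hence $\PR(|B_{n;f(m)}|>u)\le 2\exp(-u^2/b_N^2)$. A union bound over the at most $|\mathcal{S}^+|\bar N$ pairs $(n,f(m))$, with $u^2=8c\,b_N^2\log(|\mathcal{S}^+|\bar N)$, gives
\[
\PR(\mathcal{E}_u^c)\le 2(|\mathcal{S}^+|\bar N)^{1-8c}.
\]
Choosing the generic constant so that the exponent is at most $-1$ (for example $8c\ge 2$), the right-hand side is at most $2(|\mathcal{S}^+|\bar N)^{-1}$. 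Since $|\mathcal{S}^+|\ge d\to\infty$ or at least $|\mathcal{S}^+|\ge1$, this is $o(\bar N^{-1})$; strictly, it is $O(\bar N^{-1}|\mathcal{S}^+|^{-1})$ with $|\mathcal{S}^+|\ge N^{1+\delta}d\to\infty$. Only the sub-Gaussian tail used in \eqref{eq:max_subgauss} is needed here. Note also that $B_{n;f(m)}$ is zero for many $n$, which only helps.

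For the deterministic centering term, apply Cauchy–Schwarz:
\[
|\E B^{>u}_{n;f(m)}|\le (\E B^2_{n;f(m)})^{1/2}\,\PR(|B_{n;f(m)}|>u)^{1/2}\le c\,b_N\,(|\mathcal{S}^+|\bar N)^{-4c}.
\]
Multiplying by $\bar N^{1/2}$, the condition $\bar N^{1/2}\,b_N\,(|\mathcal{S}^+|\bar N)^{-4c}<\bar N^{-1}$ must hold for large $N$. This follows because $b_N=cN^{\zeta/2+\beta\delta}$ is polynomial in $N$, and $|\mathcal{S}^+|\bar N\ge N^{2}$, so a large enough $c$ makes the power decay faster than $\bar N^{-3/2}b_N^{-1}$. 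The indicator therefore vanishes eventually. This is the same moment computation that underlies Lemma~\ref{lem:momem_B_u}.

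The main obstacle is bookkeeping the generic constant $c$ in $u$. It must be large enough that the polynomial factors $\bar N^{3/2}b_N$ are beaten. This requires $\log(|\mathcal{S}^+|\bar N)\gtrsim\log N$, which holds because $\bar N\ge N$. No use of Assumption~\ref{ass1} beyond $|\mathcal{S}^+|\ge1$ is needed.
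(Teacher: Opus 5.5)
Your proposal is correct and follows essentially the same route as the paper: a union bound with the sub-Gaussian tail at level $u$ shows that truncation is inactive outside an event of probability $O\big((|\mathcal{S}^+|\bar N)^{-3}\big)$, Cauchy--Schwarz bounds $|\E B^{>u}_{n;f(m)}|$ by a negative power of $|\mathcal{S}^+|\bar N$, and on the complement only the deterministic centering term remains, which drops below $\bar N^{-1}$ for large $N$. The differences are cosmetic: you bound $\E B^2_{n;f(m)}$ by $b_N^2$ instead of $c\bar N$, and you treat the constant in $u$ as tunable, whereas in the paper it is tied to the tail constant, so each tail probability is already at most $2(|\mathcal{S}^+|\bar N)^{-4}$ without further choice.
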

\begin{proof}[Proof of Lemma \ref{lemma:negl_more_u}]
First we check that with $u$ as above, the maximum of $\sum_nB_{n;f(m)}^{>u}$ is with high probability equal to $0$.
To this end, we write 
    \begin{align}\label{eq:prob_A}
    \PR(A):=\PR\lb\max_{f(m)\in\mathcal{S}^+}\Bigg|\sum_{n=1}^{\bar{N}}B_{n;f
    (m)}^{>u}\Bigg|\ne 0\rb
    \le |\mathcal{S}^+|\bar{N}\max_{f(m)\in\mathcal{S}^{+}}\mathbb{P}(B_{n;f(m)}^{>u}\ne 0)\\\nonumber
    =|\mathcal{S}^+|\bar{N}\max_{f(m)\in\mathcal{S}^{+}}\mathbb{P}(|B_{n;f(m)}|>u)
    \le 2|\mathcal{S}^+|\bar{N}\exp\lb-\frac{u^2}{2cb_N^2 }\rb,
\end{align}
where $b_N$ was defined in \eqref{eq_def_b_N} and  the last inequality is the standard sub-Gaussian bound for a tail  \eqref{eq:tail_subgauss}. By plugging into the latter expression the definition of $u$ given in the lemma, we obtain
\begin{align}
\label{eq_prob_bound_A}
\PR (A) 
    \le\frac{2}{(|\mathcal{S}^+|\bar{N})^3}.
\end{align}
For future use we explicitly write the bound of $\mathbb{P}(|B_{n;f(m)}|>u) $
\begin{align}\label{eq:ineq_prob_B}
    \mathbb{P}(|B_{n;f(m)}|>u) \le\frac{2}{(|\mathcal{S}^+|\bar{N})^4}.
\end{align}
The expectation of $B_{n;f(m)}^{>u}$ can be bounded using Hölder's inequality and the just used sub-Gaussian tail bound, which leads to
\begin{align}
    \Big|\E(B_{n;f(m)}^{>u})\Big|\le \sqrt{\Var(B_{n;f(m)})\PR(|B_{n;f(m)}|>u)}\le\sqrt{\frac{2\Var(B_{n;f(m)})}{(|\mathcal{S}^+|\bar{N})^{4}}}.
    \end{align}
One can check that $\Var(B_{n;f(m)})\le c\bar{N}$, so we can bound the total expression by   
\begin{align}\label{eq:bound_exp_B}
    \Big|\E(B_{n;f(m)}^{>u})\Big|\le \frac{c}{(|\mathcal{S}^+|\bar{N})^{3/2}}. 
    \end{align}
    Then we rewrite  
    \begin{align}
        &\PR\lb\max_{f(m)\in\mathcal{S}^+}\Big|\sum_{n=1}^{\bar{N}}\lb B_{n;f(m)}^{>u}-\E(B_{n;f(m)}^{>u})\rb\Big|>\bar{N}^{-1/2}\rb\\
       & =    \PR\lb\max_{f(m)\in\mathcal{S}^+}\Big|\sum_{n=1}^{\bar{N}}\lb B_{n;f(m)}^{>u}-\E(B_{n;f(m)}^{>u})\rb\Big|>\bar{N}^{-1/2}\Bigg|A\rb\PR(A)\\
       &+\PR\lb\max_{f(m)\in\mathcal{S}^+}\Big|\sum_{n=1}^{\bar{N}}\lb B_{n;f(m)}^{>u}-\E(B_{n;f(m)}^{>u})\rb\Big|>\bar{N}^{-1/2}\Bigg|A^c\rb\PR(A^c)
       \\
       &\le\frac{2}{(|\mathcal{S}^+|\bar{N})^3}+\PR\lb\max_{f(m)\in\mathcal{S}^+}\Big|\sum_{n=1}^{\bar{N}} \E(B_{n;f(m)}^{>u})\Big|>\bar{N}^{-1/2}\rb\\
       &= \frac{2}{(|\mathcal{S}^+|\bar{N})^3}+1\{\max_{f(m)\in\mathcal{S^+}}\Big|\sum_{n=1}^{\bar{N}}\E(B_{n;f(m)}^{>u})\Big|>\bar{N}^{-1/2}\}\\
       &\le\frac{2}{(|\mathcal{S}^+|\bar{N})^3}+1\Big\{\bar{N}\max_{f(m)\in\mathcal{S^+}}\Big|\E(B_{n;f(m)}^{>u})\Big|>\bar{N}^{-1/2}\Big\}
       =\frac{2}{(|\mathcal{S}^+|\bar{N})^3}
    \end{align}
   The first inequality is true due to \eqref{eq_prob_bound_A} and the fact that on the event 
   \[
        A^c=\Bigg\{\max_{f(m)\in\mathcal{S}^+}\Big|\sum_{n=1}^{\bar{N}}B_{n;f
        (m)}^{>u}\Big|= 0\Bigg\},
   \] 
   we have $\sum_{n=1}^{\bar{N}}B_{n;f
    (m)}^{>u}= 0$ for all $f(m)\in\mathcal{S^+}$. 
    For the last inequality we used  \eqref{eq:bound_exp_B}. This concludes the proof of Lemma~\ref{lemma:negl_more_u}.
\end{proof}
The following simple lemma describes the relationship between the truncated random variables $B_{n;f(m)}^{\le u}$ and their original random variables $B_{n;f(m)}$ in terms of their  covariances. Recall the definition of $\bar N$ in \eqref{def_N_bar}. 
\begin{lemma}\label{lem:momem_B_u} 
Suppose that the assumptions of Theorem \ref{thm:main} are satisfied. 
For $u$ defined in \eqref{e:def_u} and $f(m),f(r)\in\mathcal{S}^+$, we have
    \begin{align}
        \cov(B_{n;f(m)}^{\le u} , B_{n;f(r)}^{\le u})=\cov(B_{n;f(m)}, B_{n;f(r)})+\mathcal{O}\lb(|\mathcal{S}^+|\bar{N})^{-1}\rb,
    \end{align}
    where the implied constants do not depend on $n$.
\end{lemma}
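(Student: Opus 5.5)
We write $X:=B_{n;f(m)}$ and $Y:=B_{n;f(r)}$, and decompose each into its truncated and tail parts, $X=X^{\le u}+X^{>u}$ and $Y=Y^{\le u}+Y^{>u}$. Bilinearity of the covariance gives
\[
\cov(X,Y)-\cov(X^{\le u},Y^{\le u})=\cov(X^{>u},Y^{\le u})+\cov(X^{\le u},Y^{>u})+\cov(X^{>u},Y^{>u}),
\]
so it suffices to show that each of the three terms on the right is $\mathcal{O}((|\mathcal{S}^+|\bar N)^{-1})$, uniformly in $n$, $m$ and $r$.

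For a generic term, e.g. $\cov(X^{>u},W)$ with $W\in\{Y^{\le u},Y^{>u}\}$, we bound
\[
|\cov(X^{>u},W)|\le \E|X^{>u}W|+|\E X^{>u}|\,|\E W|.
\]
Since $|W|\le|Y|$, the Cauchy--Schwarz inequality gives $\E|X^{>u}W|\le (\E X^2 1\{|X|>u\})^{1/2}(\E Y^2)^{1/2}$. The second factor is controlled by the variance bound $\Var(B_{n;f(r)})\le c\bar N$ noted in the proof of Lemma~\ref{lemma:negl_more_u}. For the first factor, we apply Cauchy--Schwarz once more: $\E X^21\{|X|>u\}\le (\E X^4)^{1/2}\PR(|X|>u)^{1/2}$. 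The fourth moment satisfies $\E X^4\le c\,\|X\|_{\Psi_2}^4\le c\, b_N^4$ by \eqref{eq_def_b_N}, and the tail probability is at most $2(|\mathcal{S}^+|\bar N)^{-4}$ by \eqref{eq:ineq_prob_B}. Combining these,
\[
\E|X^{>u}W|\le c\, b_N\,\bar N^{1/2}\,(|\mathcal{S}^+|\bar N)^{-1}.
\]
The product-of-means term is even smaller: \eqref{eq:bound_exp_B} gives $|\E X^{>u}|\le c(|\mathcal{S}^+|\bar N)^{-3/2}$, while $|\E W|\le (\E Y^2)^{1/2}\le c\bar N^{1/2}$.

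The main obstacle is that this bound carries the polynomial prefactor $b_N\bar N^{1/2}$, which cannot be absorbed into the constant as it stands. The remedy is to exploit that $u=b_N\sqrt{8c\log(|\mathcal{S}^+|\bar N)}$ contains the slack constant $8$. Enlarging this constant in the sub-Gaussian tail bound \eqref{eq:tail_subgauss}, or simply re-running that bound, yields $\PR(|X|>u)\le 2(|\mathcal{S}^+|\bar N)^{-K}$ for some larger $K$, e.g. $K=8$. With this, the square roots produce $(|\mathcal{S}^+|\bar N)^{-K/4}$, which dominates $b_N\bar N^{1/2}$ by a wide margin, since $b_N$ and $\bar N$ are polynomial in $N$ while $|\mathcal{S}^+|\bar N\ge N^{1+\delta}$. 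This leaves a bound $\mathcal{O}((|\mathcal{S}^+|\bar N)^{-1})$. If the constant in $u$ is to be kept fixed, the alternative is to state the lemma with the slightly weaker rate $\mathcal{O}(b_N\bar N^{1/2}(|\mathcal{S}^+|\bar N)^{-1})$, which is still more than sufficient for the uses in \eqref{eq:var_sum_B} and \eqref{e:truncated-covariance-bound}.

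All constants involved depend only on $C_2$ and on the fixed constant $c$ in \eqref{eq:tail_subgauss}, and the bounds $\Var(B_{n;f(m)})\le c\bar N$ and $\|B_{n;f(m)}\|_{\Psi_2}\le b_N$ hold uniformly over $f(m)\in\mathcal{S}^+$ and over $n$. The implied constants are therefore independent of $n$, which completes the argument.
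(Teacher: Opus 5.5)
There is a genuine gap. Your bilinearity decomposition and your treatment of the product-of-means terms are fine. The main cross-term estimate, however, does not reach the stated rate, and the repair you propose does not work for the $u$ of the lemma. You first isolate $X1\{|X|>u\}$ and then apply Cauchy--Schwarz a second time. This extracts only $\PR(|X|>u)^{1/4}$, which by \eqref{eq:ineq_prob_B} is $(|\mathcal{S}^+|\bar N)^{-1}$, and it leaves the polynomially growing factor $b_N\bar N^{1/2}$ in front.

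There is no slack in $u$ to absorb that factor. By construction $u^2/(2cb_N^2)=4\log(|\mathcal{S}^+|\bar N)$, which is exactly why \eqref{eq:ineq_prob_B} has exponent $4$. Re-running the tail bound therefore cannot give $K=8$. Replacing $8$ by $16$ in \eqref{e:def_u} would work, and it would be harmless downstream, since $u$ enters Lemma \ref{lem:th_chetverikov} only through $a_M^2\propto b_N^2\log(|\mathcal{S}^+|\bar N)$. But that proves the statement for a different $u$.

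Your fallback rate $b_N\bar N^{1/2}(|\mathcal{S}^+|\bar N)^{-1}$ is likewise not the lemma, and the claim that it is ``more than sufficient'' is unchecked. In \eqref{e:truncated-covariance-bound}, $\Delta$ enters Lemma \ref{lem:compar_gaus} as $\Delta^{1/3}$ multiplied by $\sigma^{-1}\asymp N^{(1-\beta)(\delta+\zeta)}$. The exponent condition $(2/3-\beta)(\delta+\zeta)+C_1<2/3$ in \eqref{ass:delta_eta} matches exactly the rate $(|\mathcal{S}^+|\bar N)^{-1}$ combined with the crude bound $|\mathcal{S}^+|\ge N^{1+\delta}d$. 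A weaker rate would force you to redo that bookkeeping.

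The missing step is a better Hölder split, and it is what the paper does: put the indicator alone into one Cauchy--Schwarz factor.
\[
\E|XW1\{|X|>u\}|\le (\E X^2W^2)^{1/2}\PR(|X|>u)^{1/2}\le (\E X^4)^{1/4}(\E Y^4)^{1/4}\PR(|X|>u)^{1/2}\le c\,b_N^2(|\mathcal{S}^+|\bar N)^{-2}.
\]
Here you use $|W|\le|Y|$, $\E X^4\le c\,b_N^4$ and \eqref{eq:ineq_prob_B}. The polynomial prefactor is then harmless: $b_N^2=cN^{\zeta+2\beta\delta}\le cN^{3/2}$, while $|\mathcal{S}^+|\bar N\ge N^{2+\delta}$. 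Hence the term is $\mathcal{O}((|\mathcal{S}^+|\bar N)^{-1})$ uniformly in $n$, $m$ and $r$, with $u$ exactly as defined. The rest of your argument, including the product-of-means terms, then goes through unchanged.
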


\begin{proof}
    We  write
    \begin{align*}
        &\cov(B_{n;f(m)}^{\le u},B_{n;f(r)}^{\le u})=\E \left[ \lb B_{n;f(m)}- (B_{n;f(m)}^{> u}-\E B_{n;f(m)}^{> u})\rb\lb B_{n;f(r)}- (B_{n;f(r)}^{> u}-\E B_{n;f(r)}^{> u})\rb \right] \\
       & =\cov(B_{n;f(m)},B_{n;f(r)})-\E(B_{n;f(r)}B_{n;f(m)}^{> u})-\E(B_{n;f(m)}B_{n;f(r)}^{> u})\\
       &+\E(B_{n;f(m)}^{> u}B_{n;f(r)}^{> u})-\E(B_{n;f(m)}^{> u})\E(B_{n;f(r)}^{> u})
    \end{align*}
  The last term with first moments can be bounded using   \eqref{eq:bound_exp_B}, terms with  second mixed moments are  bounded similarly, for example
    \begin{align}
    & \E(B_{n;f(m)}B_{n;f(r)}^{> u})\le \sqrt{\E(B_{n;f(m)}^2B_{n;f(r)}^2)\PR(|B_{n;f(r)}|>u)}\\
    \le & \frac{c\lb\E(B_{n;f(m)}^4)\E(B_{n;f(r)}^4)\rb^{1/4}}{(|\mathcal{S}^+|\bar{N})^{2}}\le\frac{c}{|\mathcal{S}^+|\bar{N}},
    \end{align}
   where we bound $\mathbb{P}(|B_{n;f(m)}|>u)$ with \eqref{eq:ineq_prob_B} and  taking into the account that the fourth moment of sub-Gaussian random variables is bounded by the 4th power of $\|\cdot\|_{\Psi_2}$-norm.
\end{proof}
Next, let us check that the centered truncated variables $B_{n;f(m)}^{\le u}-\E(B_{n;f(m)}^{\le u})$ defined in \eqref{e:def:B_le_u} satisfy conditions \ref{cond1_var}-\ref{cond3_orlizs}. Recall the definition of $u$ in \eqref{e:def_u}.

\begin{lemma}\label{lem:verf_trunc}
Suppose that the assumptions of Theorem \ref{thm:main} are satisfied. Then,
the random variables $B_{n;f(m)}^{\le u}-\E(B_{n;f(m)}^{\le u})$, $f(m) \in \mathcal{S}^+$,  satisfy conditions \ref{cond1_var}-\ref{cond3_orlizs} from Lemma \ref{lem:th_chetverikov} with the choice $a_M= u$.
\end{lemma}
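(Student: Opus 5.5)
The plan is to check the three conditions of Lemma~\ref{lem:th_chetverikov} one at a time for the array $Y_{n;f(m)}:=B_{n;f(m)}^{\le u}-\E(B_{n;f(m)}^{\le u})$, $1\le n\le \bar N$, $f(m)\in\mathcal S^+$. Formally we also include the sign-flipped copies $-Y_{n;f(m)}$, which changes nothing in the bounds. Each check rests on three facts already established. First, the summands are independent in $n$, because the $\varepsilon_n$ are i.i.d. by \ref{ass2}. Second, each $B_{n;f(m)}$ is sub-Gaussian with $\|B_{n;f(m)}\|_{\Psi_2}\le b_N$, as shown in \eqref{eq_def_b_N}. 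Third, by Lemma~\ref{lem:momem_B_u}, truncation at $u$ changes second moments only by $\mathcal O((|\mathcal S^+|\bar N)^{-1})$.

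\textbf{Step 1 (variance condition).} We use \eqref{eq:var_sum_B} together with Lemma~\ref{lem:momem_B_u}:
\[
\bar N^{-1}\sum_{n=1}^{\bar N}\E Y_{n;f(m)}^2=\bar N^{-1}\sum_{n=1}^{\bar N}\Var(B_{n;f(m)})+\mathcal O(|\mathcal S^+|^{-1}).
\]
By \ref{ass4} and the restrictions $k-\ell>N^{1-\delta}$, $k\le N^{1+\zeta}$ defining $\mathcal S^+$, this quantity is bounded below by $v_N^2=cN^{-2(1-\beta)(\delta+\zeta)}$. For indices of type $j$ it is even bounded below by $\underline\sigma^2/2$.

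Here is a point to be careful about. If the lemma demands a lower bound of order one, I would rescale all $Y$'s by $v_N^{-1}$. The rescaling enlarges $b_N$ and $u$ by the factor $N^{(1-\beta)(\delta+\zeta)}$. One then checks that the resulting error term still vanishes; this is exactly the role of the last inequality in \eqref{ass:delta_eta}, namely $(\tfrac23-\beta)(\delta+\zeta)+C_1<\tfrac23$.

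\textbf{Step 2 (moment condition).} Since $|B_{n;f(m)}^{\le u}|\le u$, we have $|Y_{n;f(m)}|\le 2u$. Hence, for $k=1,2$,
\[
\bar N^{-1}\sum_n\E|Y_{n;f(m)}|^{2+k}\le (2u)^k\,\bar N^{-1}\sum_n\E Y_{n;f(m)}^2 .
\]
The second-moment factor is at most $c\,b_N^2$, because $\E B^2\le c\|B\|_{\Psi_2}^2$. Since $u\ge b_N$, the whole expression is bounded by $(cu)^{k}\cdot(cu)^2/u^2\cdot$const. This gives the condition with $B_N\asymp u$, or after absorbing constants, with $a_M=u$.

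\textbf{Step 3 (maximal/Orlicz condition).} The almost sure bound $\max_{f(m)}|Y_{n;f(m)}|\le 2u$ immediately gives
\[
\big\|\max_{f(m)\in\mathcal S^+}|Y_{n;f(m)}|\big\|_{\Psi_2}\le 2u/\sqrt{\log 2},
\]
which is the required bound with $a_M=u$ up to a universal constant. Alternatively, one can bound the Orlicz norm of the maximum of $|\mathcal S^+|$ sub-Gaussians by $c\,b_N\sqrt{\log|\mathcal S^+|}\le cu$, but the truncation makes this unnecessary.

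The main obstacle I expect is Step 1. The variance lower bound degrades polynomially in $N$, so I would first pin down the exact normalization Lemma~\ref{lem:th_chetverikov} requires, and then verify that the resulting rate in \eqref{eq:gaus_bounded_ge2} still tends to zero under \eqref{ass:delta_eta}. Steps 2 and 3 are routine consequences of the truncation.
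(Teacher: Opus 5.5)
Your Step~2 does not establish \ref{cond2_momen} with $a_M=u$. The reduction $\bar N^{-1}\sum_n\E|Y_{n;f(m)}|^{2+r}\le (2u)^r\,\bar N^{-1}\sum_n\E Y_{n;f(m)}^2$ is correct. You then bound the averaged second moment by the per-summand sub-Gaussian bound $c\,b_N^2$, and that is too crude: $b_N=cN^{\zeta/2+\beta\delta}\to\infty$, so what you get is $c(2u)^r b_N^2$. Using $u\ge b_N$ only turns this into $c\,2^r u^{r+2}$. The factor $(cu)^2/u^2$ in your last line does not follow, and a polynomially growing factor cannot be absorbed into a constant. Taken at face value, your bound would force $a_M\asymp u\,b_N^2$, which spoils the rate in Lemma~\ref{lem:th_chetverikov}.

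The missing observation is that this average is $O(1)$. Since the summands are independent in $n$, $\bar N^{-1}\sum_n\Var(B_{n;f(m)})=\Var(\gamma_{N,f(m)})$. By \ref{ass4}, for $f(m)=(k,\ell,j)$ this is at most $\bar\sigma^2\big\{\frac{(k-\ell)^{2(1-\beta)}}{(N+k)^{2(1-\beta)}}+\frac{N(k-\ell)^{1-2\beta}}{(N+k)^{2(1-\beta)}}\big\}\le 2\bar\sigma^2$, because $k-\ell\le N+k$ and $N\le N+k$. For type-$j$ indices it equals $\E\varepsilon_{1,j}^2\le\bar\sigma^2$. Combined with Lemma~\ref{lem:momem_B_u}, this gives the bound $c\,\bar\sigma^2u^r$, which is how the paper closes this step. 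You already wrote down the identity relating $\bar N^{-1}\sum_n\E Y^2$ to $\bar N^{-1}\sum_n\Var(B)$ in Step~1; you only needed to bound it from above as well.

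On Step~1, you already have what is needed. Condition \ref{cond1_var} asks for the lower bound only for some coordinate. Your bound $\ge\underline\sigma^2/2$ for the type-$j$ indices (the $A_{2,1,N}$ part) is exactly the paper's argument, so the degeneracy $v_N^2$ of the other coordinates is irrelevant and no rescaling is needed. The rescaling would in fact be harmful. It multiplies the bound of Lemma~\ref{lem:th_chetverikov} by $N^{(1-\beta)(\delta+\zeta)/3}$, so one would need $2\beta\delta+2(1-\beta)(\delta+\zeta)+11C_1<1$. This fails, for example, at $\beta=0.4$, $C_1=0.035$, $\delta=0.34$, $\zeta=0.071$, even though all of \eqref{ass:delta_eta} holds there. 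The last inequality in \eqref{ass:delta_eta} is not for this purpose; it makes the Gaussian comparison bound vanish, where $\sigma^{-1}=v_N^{-1}$ enters via Lemma~\ref{lem:compar_gaus}.

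Your Step~3 is valid but takes a slightly different route from the paper. You use the almost sure bound $|Y|\le 2u$, which gives $\E\exp(|Y|/a)\le 2$ for $a=2u/\log 2$. The paper instead uses monotonicity of the $\Psi_1$-norm under truncation plus contraction for the mean to get a bound of order $b_N\le u$. Both give $a_M\asymp u$. Note that \ref{cond3_orlizs} is a coordinatewise $\Psi_1$-condition rather than a $\Psi_2$-bound on the maximum, although your bound implies it.
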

\begin{proof}[Proof of Lemma \ref{lem:verf_trunc}]
\textit{Verification of \ref{cond1_var}} It is easy to check that condition~\ref{cond1_var} of Lemma~\ref{lem:th_chetverikov} is satisfied for many elements that are in the third line of the distinction by cases in the definition of $B_{n;f(m)}$ in  \eqref{e:def:Bf},  i.e. for $(\bar N/N)^{1/2}\varepsilon_{n,j}$, according to Assumption \ref{ass:main}, part (A-4). 
Indeed, let us take $m$ such that $f(m)=j$, for some $j\in\{1,\ldots,d\}$, then using Lemma~\ref{lem:momem_B_u} we get 
\begin{align}
    \frac{1}{\bar N}\sum_{n=1}^{\bar N}\Var (B_{n;f(m)}^{\le u})
   &= \frac{1}{\bar N}\sum_{n=1}^{\bar N}\E ( B_{n;f(m)}^2)+\frac{1}{\bar N}\sum_{n=1}^{\bar N}\mathcal{O}\lb(|\mathcal{S}^+|\bar{N})^{-1}\rb\\
   &\geq\E(\varepsilon_{1,j}^2)>c,
\end{align}
since $\E ( B_{n;f(m)}^2)=\bar N N^{-1}\E[\varepsilon_{n,j}^2]$ for $n \leq N$. Such independent constant $c$, which will play the role of $\sigma^2$ in \ref{cond1_var}, exists due to \ref{ass4}. This is enough to satisfy \ref{cond1_var}.

\textit{Verification of \ref{cond2_momen}} Using again Lemma~\ref{lem:momem_B_u} we write for $r=1,2$
\begin{align} 
    &\frac{1}{\bar N}\sum_{n=1}^{\bar N}\E(|B_{n;f(m)}^{\le u}-\E( B_{n;f(m)}^{\le u})|^{2+r})\le  \frac{2^ru^r}{\bar N}\sum_{n=1}^{\bar N}\Var(B_{n;f(m)}^{\le u})\\
    &\le 2^ru^r\begin{cases}
        \bar\sigma^2\lb\frac{(k-\ell)^{2(1-\beta)}}{(N+k)^{2(1-\beta)}}+\frac{N(k-\ell)^{1-2\beta}}{(N+k)^{2(1-\beta)}}+o(1)\rb,\; & f(m)=(k,\ell,j)\\
        \bar\sigma^2(1+o(1)),\; &f(m)=j
    \end{cases} \\ & \le c\bar\sigma^2u^r.
\end{align}
At the last step we bounded ratios by 1, since $k-\ell\le N+k$ and $N\le N+k$.  
If we choose $a_M$ from the \ref{cond2_momen} and as $\max\{u,b_N\}=u$, we see that \ref{cond2_momen} holds. 

\textit{Verification of \ref{cond3_orlizs}} Here we want to show that for the constant $b_N$ from \eqref{eq_def_b_N} the  condition ~\ref{cond3_orlizs} is still satisfied, i.e. $\E(\exp(|B_{n;f(m)}^{\le u}-\E(B_{n;f(m)}^{\le u})|/b_N))\le 2$ for all $n=1,\ldots,\bar N$ and $f(m)\in\mathcal{S}^+$. To be precise we will show it for $2b_N$ instead of $b_N$, but this only changes the resulting bounds by a multiplicative factor. To that end note that by monotonicity we have
\[
    \|B_{n;f(m)}^{\le u}\|_{\Psi_1}\leq \|B_{n;f(m)}\|_{\Psi_1}\leq b_N.
\]
Additionally, using that the expected value can be expressed as a conditional expectation and the fact that the conditional expectation is a contraction for Orlicz norms, we also have
\[
    \|\E[B_{n;f(m)}^{\le u}]\|_{\Psi_1}\leq \|B_{n;f(m)}\|_{\Psi_1}\leq b_N.
\]
Using the triangle inequality yields the desired bound.

     \end{proof}

\subsection{Proof of Lemma \ref{lem1} } \label{sec_proof_lem1}

\begin{proof}[Proof of Lemma \ref{lem1}]
    To prove the lemma, it suffices to show that (by the triangle inequality)
    \begin{align} \label{goal_A22}
    A_{2,2,N}:=\sup_{k \ge N^{1+\zeta}}\max_{0 \leq \ell < k} \bigg\{w_{\beta}(\ell, k, N) \cdot \Big\|\sum_{n=N+\ell+1}^{N+k}\varepsilon_{n} \Big\| \bigg\}=o_P(1).
    \end{align}
    Consider some (arbitrarily small) $q>0$. Then, we aim to show that the following probability converges to 0:
\begin{align}
    & \mathbb{P}\big(A_{2,2,N}>q\big) \nonumber \\
    \overset{max- norm}{=} & \mathbb{P}\bigg(\max_{j=1,...,d}\sup_{k \ge N^{1+\zeta}}\max_{0 \leq \ell < k} \bigg\{w_{\beta}(\ell, k, N) \cdot \Big|\sum_{n=N+\ell+1}^{N+k}\varepsilon_{n,j} \Big| \bigg\}>q\bigg) \nonumber \\
    \overset{union\, bound}{\le} & d \max_j \mathbb{P}\bigg(\sup_{k \ge N^{1+\zeta}}\max_{0 \leq \ell < k} \bigg\{w_{\beta}(\ell, k, N) \cdot \Big|\sum_{n=\ell+1}^{k}\varepsilon_{n,j} \Big| \bigg\}>q\bigg).
    \label{eq_union}
\end{align}

We begin by noting that Lemma \ref{lem:psum_conc} implies the existence of random variables $M_1,...,M_d$ that satisfy 
\begin{align}
    \label{eq:moc:bound}
\Big|\sum_{n=\ell +1}^{k}\varepsilon_{n,j}\Big|&\leq M_j\sqrt{(k-\ell)\log(e(k \lor N))}, \quad \ell,k\geq 1,\\
    \label{eq:moc:subgaussian}
    \|M_j\|_{\Psi_2}&\leq c, \quad 1 \leq j \leq d,
\end{align}
for some $c>0$ that depends only on $C_2$. Moreover, we note that, using $\beta \in [0,1/2)$ and the definition of $w_\beta$ in  \eqref{eq_def_omega},
\begin{align} \label{a3}
    w_\beta(\ell,k,N)\sqrt{k-\ell}=\frac{\sqrt{N}(k-\ell)^{1/2-\beta}}{(N+k)^{1-\beta}}\leq \sqrt{\frac{N}{k}},
\end{align}
where we bounded $k-\ell\leq k$ and $N
+k\geq k$.
Consequently, combining \eqref{eq:moc:bound} and \eqref{a3}, we may bound
\begin{align*}
    \sup_{k \geq N^{1+\zeta}}\max_{1 \leq \ell \leq k}w_\beta(\ell,k,N)\Big|\sum_{n=\ell}^k\varepsilon_{n,j}\Big|\leq  \max_{k \geq N^{1+\zeta}}M_j\frac{\sqrt{N\log(ek)}}{\sqrt{k}}\leq DM_jN^{-\zeta'}
\end{align*}
for any $\zeta'<\zeta/2$, where $D>0$ is a constant that depends on $\zeta'$. Using \eqref{eq:moc:subgaussian}, Lemma \ref{lem:lem_sub_gaussian} and Assumption \ref{ass1} then yields
\begin{align}
    d \max_j \mathbb{P}\bigg(\sup_{k \ge N^{1+\zeta}}\max_{0 \leq \ell < k} \bigg\{w_{\beta}(\ell, k, N) \cdot \Big|\sum_{n=\ell+1}^{k}\varepsilon_{n,j} \Big| \bigg\}>q\bigg) &\leq 2d\exp\Big(-\frac{D^2q^2N^{2\zeta'}}{c^2}\Big)\\
    &\leq 2\exp\Big(-\frac{D^2q^2N^{2\zeta'}}{c^2}+N^{C_1}\Big) \\
    & = o(1), \label{a2}
\end{align}
where we used that we can choose $\zeta'$ such that $\zeta'>C_1/2$. Thus, combining \eqref{eq_union} and \eqref{a2} yields the desired result \eqref{goal_A22}.

\end{proof}

\section{Proofs of Theorem \ref{thm:alt}, Corollaries \ref{cor:main:level}, \ref{cor:alt1}, \ref{cor:alt2}} \label{sec_proof_alt}
\label{appen:B}
\begin{proof}[Proof of Corollary \ref{cor:main:level}]
The proof of the Corollary follows directly from the Gaussian approximation in Theorem \ref{thm:main} and the fact that convex functionals of Gaussian processes have continuous distribution functions (see Corollary 4.42 in \cite{Bogachev:1998}).
\end{proof}
 
\begin{proof}[Proof of Theorem \ref{thm:alt}]
First, we notice that, if there is a change in component $j$, we have the lower bound
\begin{align*}
    \widehat{\Gamma}_N^{\beta}(j,k)= & \max_{0 \leq \ell < k} \bigg\{w_{\beta}(\ell, k, N) \cdot \Big|\frac{k-\ell}{N} \sum_{n=1}^{N} X_{n,j}-\sum_{n=N+\ell+1}^{N+k}X_{n,j} \Big| \bigg\}\\
    \ge & w_\beta(k_j^\star,k,N)\Big|\frac{k-k_j^\star}{N} \sum_{n=1}^{N} \varepsilon_{n,j}-\sum_{n=N+k_j^\star+1}^{N+k}\varepsilon_{n,j} + (k-k_j^\star) (\mu^{(1)}_j-\mu^{(2)}_j)\Big| \\
    \ge & w_\beta(k_j^\star,k,N) |(k-k_j^\star) (\mu^{(1)}_j-\mu^{(2)}_j)|\\
    & - w_\beta(k_j^\star,k,N)\Big|\frac{k-k_j^\star}{N} \sum_{n=1}^{N} \varepsilon_{n,j}-\sum_{n=N+k_j^\star+1}^{N+k}\varepsilon_{n,j} \Big|\\
    \ge &  w_\beta(k_j^\star,k,N) (k-k_j^\star) \Delta_j - {O}_P\Big(\sup_{k\ge 1}\widetilde{\Gamma}_N^{\beta}(k)\Big).
\end{align*}
Focusing on the deterministic part, we get that  
\[
 w_\beta(k_j^\star,k,N) (k-k_j^\star) \Delta_j = \bigg(\frac{k-k^\star_j}{N+k}\bigg)^{1-\beta}(\Delta_j\sqrt{N}).
\]
This shows Theorem \ref{thm:alt}. 
\end{proof} 
\begin{proof}[Proof of Corollary \ref{cor:alt1}] We start with the first part. Choosing $k\gg N$, the above representation shows that the deterministic part is $\sim (\Delta_j\sqrt{N})$. If now 
\[
\lim_{N \to \infty} \frac{\Delta_j \sqrt{N}}{\sqrt{\log(ed)}} = \infty
\]
as posited in the corollary, then it follows from above that
\[
\sup_k\widehat{\Gamma}_N^{\beta}(k) -q_{1-\alpha}^{(N)}\ge (\Delta_j\sqrt{N}) - \mathcal{O}_P\big(\sqrt{\log(ed)}\big),
\]
where we have also used that $q_{1-\alpha}^{(N)} = \mathcal{O}(\sqrt{\log(ed)})$. The right side tends to infinity and thus the first part of Corollary \ref{cor:alt1} is proved. The second part follows by similar computations and is therefore omitted.
\end{proof}
Finally, we come to the proof of Corollary \ref{cor:alt2}. 
\begin{proof}[Proof of Corollary \ref{cor:alt2}]
We know that rejection occurs asymptotically in component $j$ with probability going to $1$.  
Rejection occurs with probability (asymptotically) $\ge 1-\epsilon$, if for some constant $C_\epsilon>0$ and some $k>k^\star_j$ it holds that
\[
\bigg(\frac{k-k^\star_j}{N+k}\bigg)^{1-\beta}(\Delta_j\sqrt{N})\ge C_\epsilon \sqrt{\log(ed)}.
\]
Rearranging terms, we get this if  $k-k^\star_j$ satisfies 
\[
k-k^\star_j \ge \bigg(\frac{C_\epsilon^2\log(ed)}{\Delta_j^2 N}\bigg)^{1/[2(1-\beta)]}(N+k). 
\]
Now, let us without loss of generality assume that $k_j^\star \asymp  N$. Then, for $k_j^\star \le k \le C k_j^\star$ we can simplify the above relation to 
\[
\bigg(\frac{\log(ed)}{\Delta_j^2 N}\bigg)^{1/[2(1-\beta)]}N =\mathcal{O}( k-k^\star_j).
\]
This implies that the delay time satisfies
\[
\tau_j = \mathcal{O}_P \bigg( \bigg(\frac{\log(ed)}{\Delta_j^2 N}\bigg)^{1/[2(1-\beta)]}N\bigg) = \mathcal{O}_P \bigg( \bigg(\frac{\log(ed)N^{1-2\beta}}{\Delta_j^2 }\bigg)^{1/[2(1-\beta)]}\bigg).
\]
This completes the proof.
\end{proof}

\section{Proofs for Segmentation}
\label{appen:segment}
The proof of Theorem \ref{thm:seg:gauss} works exactly along the same lines as that of Theorem \ref{thm:main} and is therefore omitted. In Appendix~\ref{app:log} we prove the analogue of the Theorem~\ref{thm:seg:gauss} for the statistic  $\gamma^{\log}(n,h)$ with a logarithmic weight defined in \eqref{eq:def:segmentation:scan:2}, while in Appendix~\ref{app:poly} we consider the statistic $\gamma(n,h)$ introduced in \eqref{eq:def:segmentation:scan}.

In order to formally introduce  dependence in our model, we consider so called $\beta$-mixing coefficients. For this, we start with a probability space $(\Omega, \mathcal{F},\mathbb{P})$, then, for any two $\sigma-$algebras $\mathcal{F}_1, \mathcal{F}_2 \subset \mathcal{F}$ we define 
\[
\beta(\mathcal F_1,\mathcal F_2)
=
\frac{1}{2}
\sup_{\substack{(A_i)\text{ partition of }\Omega,\; A_i\in\mathcal F_1\\
(B_j)\text{ partition of }\Omega,\; B_j\in\mathcal F_2}}
\sum_{i,j}
\left|
\mathbb P(A_i\cap B_j)-\mathbb P(A_i)\mathbb P(B_j)
\right|.
\]
For a general collection of random variables $(X_i)_{i \in \mathbb Z}$, the mixing at distance $k\in\mathbb{N}$ is defined by
\begin{align} \label{e:defbet}
\beta_{X_{1},...,X_{N}}(k)=\sup_{l \in \mathbb Z}\beta(\mathcal{F}_1^l,\mathcal{F}_{l+k}^N)~,
\end{align} 
where
$\mathcal{F}_i^j=\sigma(X_{i},...,X_{j})$. However, since in our model we have only $(X_i)_{i=1}^{N}$, 
we will typically use $\beta(k):=\beta_{X_{1},...,X_{N}}(k)=\max_{\ell \in \{1,\ldots,N-k\}}\beta(\mathcal{F}_1^\ell,\mathcal{F}_{\ell+k}^N)$ instead. If formally necessary to talk about indices of an entire time series, one may always complete the segment $X_1,..,X_N$ to a time series indexed in $\mathbb Z$ by adding further independent copies of $X_1$. In this manner, both definitions are equivalent. 

Further, let $q=q_N>r=r_N$ be two sequences of positive integers jointly satisfying 
\begin{align} \label{e:qr}
    2(q+r)<N.
\end{align}
Using $q$ and $r$, we define  index sets
\begin{align}
    &I_1=\{1,...,q\},\\
    &J_1=\{q+1,...,q+r\},\\
    &\quad \quad \quad \vdots \\
    &I_m=\{(m-1)(q+r)+1,...,(m-1)(q+r)+q\},\\
    &J_m=\{(m-1)(q+r)+q+1,...,m(q+r)\},\\
    &J_{m+1} = \{m(q+r)+1,...,N\},
\end{align} 
where we put $m=m_N=\lfloor N/(q+r)\rfloor$. In other words, we decompose the sequence $\{1,...,N\}$ into large blocks $I_1,...,I_m$ of length $q$ separated by small blocks $J_1,...,J_{m}$ of length $r$ and there exists (potentially) an "overhanging remainder" of terms called $J_{m+1}$.  We then define
\begin{align}
    \bar \sigma_q:=\max_{1 \leq j \leq d }\frac{1}{m}\sum_{I \in \{I_1,...,I_m\}}\text{Var}\left(q^{-1/2}\sum_{i \in I}\epsilon_{ij}\right) \\
    \bar{\sigma}_r:=\max_{1 \leq j \leq d }\frac{1}{m}\sum_{J \in \{J_1,...,J_m\}}\text{Var}\left(r^{-1/2}\sum_{i \in J}\epsilon_{ij}\right) 
\end{align}
and additionally set $\bar \sigma_0:=1$.

\subsection{ Logarithmic weighting}\label{app:log}

For  convenience we recall the statistic with the logarithmic weight, introduced in Section~\ref{sec:opt:dep}. To avoid confusion later in the section, we will denote it by $\gamma^{\log}(n,h)$
\begin{align*}
    \gamma_j^{\log}(n,h)=\frac{\sum_{m=n-h+1}^{n}X_{m,j}-\sum_{m=n+1}^{n+h}X_{m,j}}{h^{1/2}\log^\beta(eN/h)},\qquad j=1,\ldots,d,\; \text{ and }(n,h)\in\mathcal{S},
\end{align*}
where $\mathcal{S}$ satisfies \eqref{eq_def_mathcal_S}.

Before formulating the main result of the section, we list the assumptions on  $X_1,...,X_N$ and parameters of the statistic. 
\begin{ass} \label{ass:segm}
    \begin{enumerate}[label=(A-C\arabic*)] $ $
           \item \label{ass_dim_seg} There exists a positive constant $\gamma$ such that
        \[
        d\leq cN^{\gamma}.
        \]
        \item \label{ass_var_seg}
        \[
         \max_{1 \leq i \leq N, 1 \leq j \leq d}\|\epsilon_{i,j}\|_{\Psi_2}\leq K.
        \]
        \item \label{ass_mixing} For $\forall q,r$ we have
    \begin{align*} 
    \bar \sigma_q &\geq \underline{c} \\
       \beta(k)&\leq C \exp(-ck^a), \quad k \in \N   . 
    \end{align*}
    Note that the dependence on $r$ is implicit in the definition of $\bar \sigma_q$.
   
        \item \label{ass_par_beta_a}
The parameters $\beta$ and $a$ are such that
\begin{align*}
    \beta>1/2,\qquad 2\beta-1/a\geq 1.
\end{align*}
    \end{enumerate}
\end{ass}
\begin{theo} 
\label{theo1:dep}
    Let  $X_1,...,X_N$ have mean zero (i.e. $X_i=\epsilon_i)$) and satisfy Assumptions \ref{ass:segm} for some fixed $a, \beta, c, K, C,\gamma, \underline{c}>0$. We then have that there exists a Gaussian random variable $G \in \R^{d|\mathcal{S}|}$, indexed in $(j,n,h) \in  \{1,...,d\} \times \mathcal S$, such that
    \begin{align}
        \label{e:Hold:Gau:Log}
        \sup_{t \ge x_0} \big|\PR(\sup_{1 \leq j \leq d, (n,h) \in \mathcal{S}}|G_{j,n,h}|\le t)- \mathbb{P}(\sup_{1 \leq j \leq d, (n,h) \in \mathcal{S}}|\gamma^{\log}_{j}(n,h)| \le t)\big|=o(1)   
    \end{align}
    for some $x_0>0$ that depends only on $a,c, K, C,\gamma, \underline{c}>0$. The covariance of this Gaussian process is characterized by
     \begin{align}
        \text{Cov}(G_{j_1,n_1,h_1},G_{j_2,n_2,h_2})=\text{Cov}( \gamma^{\log}_{j_1}(n_1,h_1),  \gamma^{\log}_{j_2}(n_2,h_2))
    \end{align}
\end{theo}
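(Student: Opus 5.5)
We would follow the architecture of the proof of Theorem \ref{thm:main}. The two new ingredients are the $\beta$-mixing dependence, handled by Bernstein blocking, and the logarithmic weight, which changes where the small scales are cut off. Write $\gamma^{\log}_j(n,h)=N^{-1/2}\sum_{m=1}^N B_{m;(j,n,h)}$, with summands $B_{m;(j,n,h)}=\pm (N/h)^{1/2}\log^{-\beta}(eN/h)\,\epsilon_{m,j}$ on the two half-windows and $0$ elsewhere. The target Gaussian vector $G$ is taken centered with the exact covariance of $(\gamma^{\log}_j(n,h))$. By Assumption \ref{ass_dim_seg} the index set has cardinality $p:=d|\mathcal S|\le cN^{\gamma+2}$, so $\log p\lesssim \log N$.

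\textbf{Step 1 (small scales).} Fix $h_0:=\lfloor\log^{\kappa}N\rfloor$ for some $\kappa$ to be chosen. We show that $\max_{j,\,h\le h_0}|\gamma^{\log}_j(n,h)|\le x_0/2$ with probability $1-o(1)$, and the same for $G$. Each $\gamma^{\log}_j(n,h)$ is a normalized sum of $2h$ sub-Gaussian, exponentially $\beta$-mixing terms. A Bernstein-type inequality for mixing sequences (or a coupling of the blocks via Berbee's lemma) then gives a sub-Gaussian tail of scale $c\log^{-\beta}(eN/h)$, up to a correction controlled by $a$. A union bound over at most $cN^{\gamma+1}h_0$ indices yields
\[
\PR\Big(\max_{j,\,h\le h_0}|\gamma^{\log}_j(n,h)|>x\Big)\le cN^{\gamma+2}\exp\big(-c\,x^2\log^{2\beta}(N)\big)+o(1).
\]
Since $\beta>1/2$ we have $2\beta>1$, so the right-hand side vanishes for every fixed $x>0$. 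The mixing remainder is handled by Assumption \ref{ass_par_beta_a}: we use $2\beta-1/a\ge1$ to absorb the loss from blocking of length $\sim\log^{1/a}N$. The Gaussian analogue is easier, because $G$ is exactly Gaussian and has variance at most $c\log^{-2\beta}(eN/h)$.

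\textbf{Step 2 (large scales, Gaussian approximation).} For $h>h_0$ we apply the dependent high-dimensional Gaussian approximation of Section \ref{appen:GAP-dependent} (a Chernozhukov--Chetverikov--Kato-type bound under $\beta$-mixing via large/small blocks $I_k,J_k$). We would verify its conditions as follows.
\begin{itemize}
\item The block variance lower bound follows from $\bar\sigma_q\ge\underline c$ in Assumption \ref{ass_mixing}.
\item Write the Orlicz envelope of the rescaled summands as $b_N\lesssim (N/h_0)^{1/2}$. This is only polylogarithmically small relative to $N$ when rescaling by $N^{1/2}$. To avoid this, we re-express each statistic so that the envelope becomes $\sim(N/h_0)^{1/2}$ and check that $b_N^2\log^{c}(pN)/N\to0$, i.e.\ $\log^{c}N/h_0\to0$. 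This requires $\kappa$ large, which is why $h_0$ is chosen polylogarithmic.
\item Truncation at $u\asymp b_N\sqrt{\log(pN)}$ proceeds exactly as in Lemma \ref{lemma:negl_more_u} and Lemma \ref{lem:momem_B_u}.
\end{itemize}
The small blocks $J_k$ must be shown negligible using $\bar\sigma_r$. The coupling error from replacing the dependent blocks by independent copies is bounded by $m\,\beta(r)\le N\exp(-cr^a)\to0$ for $r\asymp\log^{2/a}N$.

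\textbf{Step 3 (covariance matching and anti-concentration).} The approximating Gaussian from Step 2 has the blocked covariance. We compare it with the exact covariance of $G$ using Lemma \ref{lem:compar_gaus}, with $\Delta\lesssim r/q+$ (truncation error). The lower variance bound needed there is $\gtrsim\log^{-2\beta}N$, since $\Var(\gamma^{\log}_j(n,h))\asymp\log^{-2\beta}(eN/h)$. The shifts introduced when discarding the small-block and truncation remainders are absorbed through Lemma \ref{lem:anti_conc}. Finally, we combine this with Step 1 by the max-decomposition argument, uniformly over $t\ge x_0$.

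\textbf{Main obstacle.} The hardest part is Step 1 together with the choice of $h_0$. We need a threshold $h_0$ that is small enough for the mixing concentration to beat the union bound through the factor $\log^{2\beta}N$. The requirement $2\beta-1/a\ge1$ is exactly where this balance is struck. At the same time $h_0$ must be large enough for the rate in the Gaussian approximation of Step 2. We would first try $h_0=\log^{\kappa}N$ with $\kappa$ large, with the Step 1 tail obtained from a blocked Bernstein inequality with block length $\log^{1/a}N$.
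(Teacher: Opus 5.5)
The overall architecture matches the paper: small scales are handled by concentration, large scales by a blocked Gaussian approximation, and then a Gaussian-to-Gaussian covariance comparison. The gap is in Step~2, and it comes from your cutoff $h_0=\lfloor\log^\kappa N\rfloor$.

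For $h$ just above $h_0$ the summands have $\Psi_2$-envelope $b_N\asymp (N/h_0)^{1/2}$, so $b_N/\sqrt N\asymp\log^{-\kappa/2}N$ is only polylogarithmically small. The dependent Gaussian approximation you invoke from Section~\ref{appen:GAP-dependent} is formulated with polynomial rates:
\begin{itemize}
\item Theorem~\ref{Thm:GaussApprox} needs $B_NN^{6.5\gamma}/\sqrt N\le\overline{C}N^{-c_1}$.
\item Theorem~\ref{pt1} needs, e.g., $q^{1/2}D_N\log^4(dN)N^{-1/2}\le CN^{-c_2}$, with block lengths $q,r$ polynomial in $N$.
\end{itemize}
Neither holds in your regime, so the cited results do not apply. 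Your plan to ``re-express'' the statistic so that the envelope becomes $(N/h_0)^{1/2}$ is circular, since that is the envelope you started from.

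Your proposed check $\log^cN/h_0\to0$ is also not the right condition. After blocking, the normalized block sums of a coordinate with window length $2h$ have envelope of order $(N/\max\{h,q\})^{1/2}$, against an effective sample size $m\asymp N/q$. The error therefore behaves like $(q\log^c(pN)/h_0)^{1/6}$. So $h_0$ must dominate $q$, which must dominate $r\log^2 d$, with $r\gtrsim\log^{1/a}N$. Step~3 then has $\Delta\lesssim r/q$ only polylogarithmically small, against a factor $\sigma^{-1}\asymp\log^{\beta}N$. Carrying this through would require a new polylogarithmic-rate version of Theorems~\ref{pt1} and~\ref{Thm:GaussApprox}, which the proposal does not supply.

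The missing idea is that the tension you call the main obstacle does not exist. Its premise, that Step~1's union bound can only be won for polylogarithmic $h$, is false. For every $h\le N^{1-\delta}$ one still has $\log(eN/h)\ge\delta\log N$. Lemma~\ref{lem:SumSplit} with block length $m=(2c^{-1}\log N)^{1/a}$ then gives
\[
\mathbb{P}\Big(\max_{j,\ (n,h)\in\mathcal S,\ h\le N^{1-\delta}}|\gamma^{\log}_j(n,h)|\ge x\Big)\le 2mN^{2+\gamma}\exp\big(-c'\delta^{2\beta}x^2\log^{2\beta-1/a}N\big)+4CN^{-1},
\]
with $c'$ depending on $a,c,K$. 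By $2\beta-1/a\ge1$ this is $o(1)$ once $x^2>(3+\gamma)/(c'\delta^{2\beta})$, say. This is Lemma~\ref{Lemma:TruncateScalesLog}. It is also why $x_0$ must depend on the constants: in the boundary case $2\beta-1/a=1$, your claim that the bound vanishes for every fixed $x>0$ is false.

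With the cutoff at $N^{1-\delta}$, the paper's route goes through with polynomial rates:
\begin{itemize}
\item The remaining summands satisfy $\|D_{i,j,n,h}\|_{\Psi_2}\lesssim KN^{\delta/2}$, which is polynomially small relative to $\sqrt N$.
\item Theorem~\ref{Thm:GaussApprox} therefore applies directly with $q=\lfloor N^{4\theta+\delta}\rfloor$ and $r=\lfloor N^\theta\rfloor$. The variance condition is checked at $(n,h)=(\lfloor N/2\rfloor,\lfloor N/2\rfloor-1)$.
\item The small scales of the blocked Gaussian are removed using $\operatorname{Var}(\tilde G_{j,n,h})\le C\log^{-2\beta}N$ and a Mills-ratio bound.
\item Every retained $h$ exceeds $q$, so the covariance discrepancy obeys $\Delta\lesssim r/q$, which is polynomially small. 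Lemma~\ref{lem:compar_gaus} then finishes Step~3.
\end{itemize}
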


\begin{proof}
    
In general, the proof has the same structure as for Theorem~\ref{thm:main}, however, because of the dependence in the data, some extra steps are required. The main part is to use the Gaussian approximation for dependent data, which is summarized in Appendix~\ref{Thm:GaussApprox}. For this, similar to \eqref{e:def:Bf}, we  rewrite $\gamma^{\log}$ as the normalized sum of simple elements $D_1,...,D_N \in \R^{d|\mathcal S|}$, indexed in $(j,n,h)\in \{1,...,d\}\times \mathcal S$, with components  
\begin{align}\label{def_D}
    D_{i,j,n,h}=
    \begin{cases}
        0    & \quad \begin{array}{c} 1 \leq i < n-h+1 \end{array}, \\
        \epsilon_{ij}/\Big((h/N)^{1/2}\log^\beta(eN/h)\Big) & \quad \begin{array}{c} n-h+1 \leq i \leq n \end{array}, \\
        -\epsilon_{ij}/\Big((h/N)^{1/2}\log^\beta(eN/h)\Big)& \quad \begin{array}{c} n < i \leq n+h \end{array}, \\
        0    & \quad \begin{array}{c} n+h < i \leq N \end{array}.
    \end{cases}
\end{align}
indexed in $1 \leq j \leq d$ and  $(n,h) \in \mathcal{S}$. The dimension of $D_i$'s we denote by $\tilde d=d|\mathcal{S}|\leq dN^2$. Consequently $\log(\tilde d)\leq  C$ under Assumption~\ref{ass_dim_seg}. It is easy to see that
\begin{align} \label{eq_gamma_D_ij}
    N^{-1/2}\sum_{i=1}^ND_{i,j,n,h}=\gamma^{\log}_j(n,h)~.
\end{align}
The desired claim will follow immediately by combining the following statements:
\begin{enumerate}
    \item[(I)] There exists a Gaussian process $\tilde G$ such that for any sufficiently small $\delta>0$ we can find an $x_0$ such that
        \begin{align} 
        \sup_{t \geq x_0}\Big|\PR(\max_{1 \leq j \leq d, (n,h) \in \mathcal{S}}|\gamma^{\log}_j(n,h)|\geq t)-\PR(\max_{\substack{1 \leq j \leq d, (n,h) \in \mathcal{S}\\ h >  N^{1-\delta}}}|\tilde G_{j,n,h}|\geq t)\Big|=o(1)
        \end{align}
    \item[(II)] We show that the coordinates with small bandwidths are negligible. In particular,
        \begin{align}
        \sup_{t \geq x_0}\bigg\{\PR(\max_{1 \leq j \leq d, (n,h) \in \mathcal{S}}|\tilde G_{j,n,h}|\geq t)-\PR(\max_{\substack{1 \leq j \leq d, (n,h) \in \mathcal{S}\\ h >  N^{1-\delta}}}|\tilde G_{j,n,h}|\geq t)\bigg\}=o(1)~.
        \end{align}
    \item[(III)] We show that $\tilde G$ is close to the  Gaussian process $G$ from Theorem \ref{theo1:dep}, i.e.
        \begin{align}
             \sup_{t \geq x_0}\Big|\PR(\max_{1 \leq j \leq d, (n,h) \in \mathcal{S}} |G_{j,n,h}|\geq t)-\PR(\max_{1 \leq j \leq d, (n,h) \in \mathcal{S}}|\tilde G_{j,n,h}|\geq t)\Big|=o(1)~.
        \end{align}
\end{enumerate}

\textbf{Proof of (I)}\\

First, we use Lemma \ref{Lemma:TruncateScalesLog} to obtain an $x_0>0$ that depends only on $\delta,c,a,\gamma$ and $K$ such that
\begin{align}
\label{pe:3}
        \PR\Big(\max_{\substack{1 \leq j \leq d, (n,h) \in \mathcal{S}\\ h \leq N^{1-\delta}}}|\gamma^{\log}_j(n,h)|\geq x_0\Big)=o(1).
    \end{align}   
For the main part, finding the appropriate Gaussian process $\tilde G$, we turn to  Theorem~\ref{Thm:GaussApprox}. In order to use its results, we need to check that $D_i$'s satisfy the assumptions~\eqref{e:vc}-\eqref{ass:th:seg:gaus}.   By definition of $D_{i,j,n,h}$, see \eqref{def_D}, and since  $h \geq 1$ we immediately observe that the properties 
\begin{align}
    \tilde d&\leq C N^{2+\gamma},\\
    \beta_{D_1,...,D_N}(k)&\leq C \exp(-ck^a),\\
    \max_{\substack{1 \leq j \leq d, (n,h) \in \mathcal{S},\\ h \geq N^{1-\delta}, 1\leq i \leq N}}\|D_{i,j,n,h}\|_{\Psi_2} &\leq CKN^{\delta/2}\log(N)^{-\beta}
\end{align}
are inherited from the Assumptions~\ref{ass:segm}. Thus we only need to verify the variance condition \eqref{e:vc} to apply Theorem \ref{Thm:GaussApprox} to the coordinates with $h>N^{1-\delta}$ (or, more specifically, to $\pm$ the coordinates to obtain the result for the maximum of the absolute values). We remark that assumption \eqref{ass:th:seg:gaus:dim} is satisfied for an arbitrarily small power, i.e.  for any $\theta>0$, we have $\log(\tilde dN)<N^\theta$, in the following $\theta$ will be fixed. To that end let $q, r$ be given as in Theorem \ref{Thm:GaussApprox}, that is 
\begin{align}
\label{eq:block-size-defin}
    q=\lfloor N^{4\theta+\delta}\rfloor;\qquad r=\lfloor N^\theta\rfloor.
\end{align}
In order to verify condition \eqref{e:vc} it is sufficient to lower-bound $\frac{1}{mq}\sum_{l=1}^m\E\Big[\sum_{i,k\in I_l}D_{i,j,n,h}D_{k,j,n,h}\Big]$ for one choice of $j,n,h$. We set   $n=\lfloor N/2\rfloor$, $h=\lfloor N/2\rfloor-1$ and choose the $j$ which maximizes the definition of $\bar \sigma_q$.  Then,
\begin{align}
    \frac{1}{mq}\sum_{l=1}^m\E\Big[\sum_{i,k\in I_l}D_{i,j,n,h}D_{k,j,n,h}\Big]
    =\frac{N}{mqh\log(eN/h)^{2\beta}}\sum_{1 \leq l \leq m}\E\Big[\sum_{i,k\in I_l}\epsilon_{ij}\epsilon_{kj}\Big]+o(1),
    \end{align}
where the $o(1)$-term corresponds to products of $D_{i,j,n,h}$ and $D_{k,j,n,h}$ that have different signs in their definitions, these can only appear when the block $I_l$ contains $n$. The maximum number of such terms, each of it is bounded by a constant, is of order $\mathcal{O}(q^2)$. For $\delta,\theta$ sufficiently small they will give a contribution of order $Nq^2/(mqh\log^{2\beta}(N/h))\approx(q+r)q/N=o(1)$, where we plug in our chosen values for $m,h,q,r$. We are continuing bounding  the expression above by noticing that for $h=\lfloor N/2\rfloor-1$ we have that $h\log^{2\beta}(eN/h)$ is of order $N$ , so that
    \begin{align}
    \frac{1}{mq}\sum_{l=1}^m\E\Big[\sum_{i,k\in I_l}D_{i,j,n,h}D_{k,j,n,h}\Big]\geq(1+o(1))\frac{2}{mq\log(2)^{2\beta}}\sum_{1 \leq l \leq m}\E\Big[\sum_{i,k\in I_l}\epsilon_{ij}\epsilon_{kj}\Big].
\end{align}  This quantity is bounded below by Assumption \ref{ass_mixing} so that \eqref{e:vc} is satisfied.

Theorem \ref{Thm:GaussApprox} and equation \eqref{pe:3} thus yield 
\begin{align} \label{e:x0app}
    \sup_{t \geq x_0}\Big|\PR(\max_{1 \leq j \leq d, (n,h) \in \mathcal{S}}| \gamma^{\log}_j(n,h)|\geq t)-\PR(\max_{\substack{1 \leq j \leq d, (n,h) \in \mathcal{S}\\ h >  N^{1-\delta}}}|\tilde G_{j,n,h}|\geq t)\Big|=o(1)
\end{align}
for a Gaussian $\tilde G_{j,n,h}$ with covariance structure
\begin{align}
    \text{Cov}(\tilde G_{j_1,n_1,h_1},\tilde G_{j_2,n_2,h_2})=\frac{1}{mq}\sum_{l=1}^m\E\Big[\sum_{i,k\in I_l}D_{i,j_1,n_1,h_1}D_{k,j_2,n_2,h_2}\Big] 
\end{align}
(the Gaussian is essentially the $Y$ defined following eq. \eqref{p400}; $Y$ depends on the objects $m,q,r$ and $I_l$ defined at the beginning of Section \eqref{e:secGA}). \\

\textbf{Proof of (II):}\\ 

We will first bound the variances of the entries at scale $h$ and then use a union bound argument in combination with a Gaussian tail bound to show that all entries with $h$ sufficiently small are negligible.\\

Note that
\begin{align}
    \text{Var}(\tilde G_{j,n,h})=\frac{1}{mq}\sum_{l=1}^m\E\Big[\sum_{i,k\in I_l}D_{i,j,n,h}D_{k,j,n,h}\Big]
\end{align}
and that $|I_l|=q$, by the definition of $D_1,...,D_N$ in \eqref{def_D},  we therefore have at most $Ch/q$ (choose $C$ so that this is an integer) many blocks $I_l$ for which $\E\Big[\sum_{i,k\in I_l}D_{i,j,n,h}D_{k,j,n,h}\Big]$ is not 0 (when $h \leq q$ there are at most two such blocks, we omit the structurally similar argument for it for the sake of brevity). By possibly reordering the blocks, we have 
\begin{align}
\label{e:Varbound}
     \text{Var}(\tilde G_{j,n,h})&=\frac{1}{mq}\sum_{l=1}^{Ch/q}\E\Big[\sum_{i,k\in I_l}D_{i,j,n,h}D_{k,j,n,h}\Big]\\
     &\leq C\frac{h}{qN}\max_{1 \leq l \leq Ch/q}\E\Big[\Big(\sum_{i \in I_l}D_{i,j,n,h}\Big)^2\Big]\\
     &\leq C\frac{h}{qN}qK^2N/h\log(eN/h)^{-2\beta},
\end{align}
where the last inequality follows by Theorem 3 from \cite{yoshihara:1978}. Note that some of the summands in the first line may well be zero and $Ch/q$ is merely an upper bound. Consequently we obtain
\begin{align} \label{e:Gaussex}
    \max_{\substack{1 \leq j \leq d, (n,h) \in \mathcal{S}\\ h \leq  N^{1-\delta}}}\text{Var}(\tilde G_{j,n,h})=:\sigma_G \le C_\sigma \log(N)^{-2\beta}~.
\end{align}
Here, $C_\sigma$ is some fixed, large enough constant. Now, for any $t >0$, we obtain using the union bound
\begin{align*}
   & \mathbb{P}(\max_{\substack{1 \leq j \leq d, (n,h) \in \mathcal{S}\\ h \leq  N^{1-\delta}}}|\tilde G_{j,n,h}|>t) \le d |\mathcal{S}| \max_{\substack{1 \leq j \leq d, (n,h) \in \mathcal{S}\\ h 
   \leq N^{1-\delta}}}\mathbb{P}(|\tilde G_{j,n,h}|>t) \\
   \le& C N^{2+\gamma} \mathbb{P}(|\mathcal{N}(0,1)|>t\sigma_G^{-1/2})\le C N^{2+\gamma} \mathbb{P}(|\mathcal{N}(0,1)|>tC_\sigma^{-1/2} \log^{\beta}(N)). 
\end{align*}
Using Mill's bound \cite{Gordon:1941}, we can bound the right side with
\begin{align}\label{e:mill}
\frac{C N^{2+\gamma} \exp\big(-t^2(2C_\sigma)^{-1}\log^{2\beta}(N)\big)}{t(C_\sigma)^{-1/2}\log^{\beta}(N)}.
\end{align}
We analyze the numerator further and first focus on the exponential. For some $N_0=N_0(t)$ sufficiently large, it follows for all $N>N_0$
that 
\[
t^2(2C_\sigma)^{-1}\log^{2\beta}(N)\ge (2+\gamma)\log(N)
\]
because $2\beta>1$ by assumption \ref{ass_par_beta_a}. Notice that we can choose $N_0(t)$  to be monotonically decreasing in $t$ here.
Thus,  for all $N>N_0$,
\begin{align*} 
    \frac{C N^{2+\gamma} \exp\big(-t^2(2C_\sigma)^{-1}\log^{2\beta}(N)\big)}{t(C_\sigma)^{-1/2}\log^{\beta}(N)}\le \frac{C}{t(C_\sigma)^{-1/2}\log^{\beta}(N)} 
\end{align*}
and the ratio in \eqref{e:mill} is $o(1)$. Due to monotonicity of $N_0(t)$, the decay is indeed uniform for all $t \in [t_0,\infty)$ where $t_0>0$ is an arbitrary constant. Consequently, we obtain
\[
\sup_{t \ge t_0}\mathbb{P}(\max_{\substack{1 \leq j \leq d, (n,h) \in \mathcal{S}\\ h \leq  N^{1-\delta}}}|\tilde G_{j,n,h}|>t) =o(1)
\]
and thus setting $t_0=x_0$ (the constant occurring in the inequality \eqref{e:x0app}) we get
\begin{align}
    \sup_{t \geq x_0}\bigg\{\PR(\max_{1 \leq j \leq d, (n,h) \in \mathcal{S}}|\tilde G_{j,n,h}|\geq t)-\PR(\max_{\substack{1 \leq j \leq d, (n,h) \in \mathcal{S}\\ h >  N^{1-\delta}}}|\tilde G_{j,n,h}|\geq t)\bigg\}=o(1)~.
\end{align}

\textbf{Proof of (III):}\\

Note that  $(G_{j,n,h})$ is a $\tilde d$ dimensional Gaussian random vector whose covariance matrix has the entries
\begin{align}
    \text{Cov}(G_{j_1,n_1,h_1}, G_{j_2,n_2,h_2})= \text{Cov}( \gamma^{\log}_{j_1}(n_1,h_1),  \gamma^{\log}_{j_2}(n_2,h_2))
    =
    \frac{1}{N}\E\Big[\sum_{1 \leq i,k \leq N}D_{i,j_1,n_1,h_1}D_{k,j_2,n_2,h_2}\Big],
\end{align}
where we used \eqref{e:Hold:Gau:Log} and \eqref{eq_gamma_D_ij}. 
Using the same arguments as before to remove coordinates with $h \leq N^{1-\delta}$ the proof is finished if we can establish that
\begin{align}
    \sup_{t \geq x_0}\Big|\PR(\max_{\substack{1 \leq j \leq d, (n,h) \in \mathcal{S}\\h\leq N^{1-\delta}}} |G_{j,n,h}|\geq t)-\PR(\max_{\substack{1 \leq j \leq d, (n,h) \in \mathcal{S}\\h\leq N^{1-\delta}}}|\tilde G_{j,n,h}|\geq t)\Big|=o(1)~.
\end{align}
Using Lemma~\ref{lem:compar_gaus} we can bound the l.h.s. by  $\Delta^{1/3}(\log N)^{2\beta}(\log\tilde d)^{2/3}$, where $\Delta$ is defined as 
\begin{align}
    \Delta=\max_{\substack{1 \leq j_1, j_2 \leq d \\ (n_1, h_1), (n_2, h_2) \in \mathcal{S}\\h_1,h_2 \leq N^{1-\delta}}}
\Big|\frac{1}{mq}\sum_{l=1}^m\E\Big[\sum_{i,k\in I_l}D_{i,j_1,n_1,h_1}D_{k,j_2,n_2,h_2}\Big]-\E\Big[\frac{1}{N}\sum_{1 \leq i,k \leq N}D_{i,j_1,n_1,h_1}D_{k,j_2,n_2,h_2}\Big] \Big|~.
\end{align}
Taking into account Assumption~\ref{ass_dim_seg}, it is enough to show that
\begin{align}
\label{pe:2log}
    \Delta^{1/3}(\log N)^2\leq C N^{-\delta_1} 
\end{align}
for some (arbitrarily small) constant $\delta_1>0$. To obtain a suitable bound for $\Delta$, consider the following calculations:
\begin{align}
\label{e:DelBound1}
 & \hspace{-4mm}\max_{\substack{1 \leq j_1, j_2 \leq d \\ (n_1, h_1), (n_2, h_2) \in \mathcal{S}}}\!
\Big|\frac{1}{mq}\sum_{l=1}^m\!\E\Big[\sum_{i,k\in I_l}D_{i,j_1,n_1,h_1}D_{k,j_2,n_2,h_2}\Big]\!-\!\frac{1}{N}\sum_{l=1}^m\E\Big[\sum_{i,k\in I_l}D_{i,j_1,n_1,h_1}D_{k,j_2,n_2,h_2}\Big]\Big|\\
&\leq  \Big|\frac{1}{mq}-\frac{1}{N}\Big| \max_{\substack{1 \leq j_1, j_2 \leq d \\ (n_1, h_1), (n_2, h_2) \in \mathcal{S}}}\Big|\sum_{l=1}^m\E\Big[\sum_{i,k\in I_l}D_{i,j_1,n_1,h_1}D_{k,j_2,n_2,h_2}\Big]\Big|\\
&\leq  C\Big|\frac{1}{mq}-\frac{1}{N}\Big| K^2N\leq CK^2(1-\frac{q}{q+r})\leq CK^2\frac{r}{q},
\end{align}
where we used similar arguments as those yielding \eqref{e:Varbound} to obtain the second to last inequality. Next we observe that

\begin{align}
\label{e:DelBound2}
   & 
\frac{1}{N}\E\Big[\sum_{1 \leq i,k \leq N}D_{i,j_1,n_1,h_1}D_{k,j_2,n_2,h_2}\Big]-\frac{1}{N}\sum_{l=1}^m\E\Big[\sum_{i,k\in I_l}D_{i,j_1,n_1,h_1}D_{k,j_2,n_2,h_2}\Big]\\
=&I_1+I_2+I_3+I_4+I_5+I_6 + I_7 + I_8,
\end{align}
where
\begin{align}
    I_1&=N^{-1}\sum_{1 \leq l_1\neq l_2 \leq m}\E\Big[\sum_{i \in I_{l_1},k\in I_{l_2}}D_{i,j_1,n_1,h_1}D_{k,j_2,n_2,h_2}\Big],\\
    I_2&=N^{-1}\sum_{1 \leq l_1\neq l_2 \leq m}\E\Big[\sum_{i \in J_{l_1},k\in J_{l_2}}D_{i,j_1,n_1,h_1}D_{k,j_2,n_2,h_2}\Big],\\
    I_3&=N^{-1}\sum_{1 \leq l_1,l_2 \leq m}\E\Big[\sum_{i \in I_{l_1},k\in J_{l_2}}D_{i,j_1,n_1,h_1}D_{k,j_2,n_2,h_2}\Big],\\
    I_4&=N^{-1}\sum_{1 \leq l_1,l_2 \leq m}\E\Big[\sum_{i \in J_{l_1},k\in I_{l_2}}D_{i,j_1,n_1,h_1}D_{k,j_2,n_2,h_2}\Big],\\
    I_5&=N^{-1}\sum_{l=1}^m\sum_{i \in J_{m+1}, k \in I_{l}}\E\Big[D_{i,j_1,n_1,h_1}D_{k,j_2,n_2,h_2}+D_{k,j_1,n_1,h_1}D_{i,j_2,n_2,h_2}\Big],\\
    I_6 &= N^{-1}\sum_{l=1}^m\sum_{i \in J_{m+1}, k \in J_{l}}\E\Big[D_{i,j_1,n_1,h_1}D_{k,j_2,n_2,h_2}+D_{k,j_1,n_1,h_1}D_{i,j_2,n_2,h_2}\Big],\\
    I_7&=N^{-1}\sum_{l=1}^m\sum_{i,k \in J_{l}}\E\Big[D_{i,j_1,n_1,h_1}D_{k,j_2,n_2,h_2}]~ , \\    
    I_8 &= N^{-1}\sum_{(i,k)\in J_{m+1}\times J_{m+1}}\E\Big[D_{i,j_1,n_1,h_1}D_{k,j_2,n_2,h_2}\Big]. 
\end{align}
The task is then to generate bounds for each of those terms (in absolute value).
For notational simplicity we consider only the cases where $n_1=n_2,h_1=h_2$, the cases with distinct indices are handled in the same manner. $|I_1|$ to $|I_5|$ can be bounded in similar ways, for instance
\begin{align}
     |I_1| &\leq CN^{-1}\Big( K^2(q^2+r^2+qr)m^2\beta(r)^{1/4}\Big)\\
     |I_3| &\leq  CN^{-1}K^2\frac{h}{q}\frac{N}{h}\sum_{1 \leq i \leq q}\sum_{1 \leq j \leq r}\beta(q+j-i)^{1/4}
\end{align}
follows from Lemma B in \cite{yoshihara:1978} in combination with similar arguments as in \eqref{e:Varbound}. For $|I_1|$ to $|I_6|$ we may upper bound the resulting quantities by $ C(K^2N^{-1}+\frac{r}{q})$. For $|I_7|$ we proceed similar as for \eqref{e:Varbound}: we have at most $C h/q$ many blocks which are not 0. By the same arguments as in \eqref{e:Varbound} we thus have
\[
    |I_7|\leq CN^{-1}\frac{h}{q} rK^2N/h\log(N/h)^{-2\beta}\leq C\frac{r}{q}.
\]
Using Lemma B in \cite{yoshihara:1978} we see that $|I_8|$ is $O(rN^{-1}N/h\log(N/h)^{-2\beta})=O(r/q)$ (here we use $h>q$). Combining \eqref{e:DelBound1} and \eqref{e:DelBound2} thus yields that $\Delta \leq C r/q$. Eventually, this then yields \eqref{pe:2log} due to \eqref{eq:block-size-defin}.
\end{proof}

\subsection{Polynomial weighting}\label{app:poly}

As in the previous subsection we also establish a more general result for the statistic with polynomial weighting: 
\begin{align}
    \gamma_i(n,h)
    =
    \frac{\sum_{m=n-h+1}^{n}X_{m,i}-\sum_{m=n+1}^{n+h}X_{m,i}}{N^{1/2-\beta}h^{\beta}}.
\end{align}
The assumptions necessary to establish the corresponding result are listed directly in the formulation of the theorem. Some of them, i.e. assumptions concerning the variances are identical to \ref{ass_var_seg} and \ref{ass_mixing}, however, because of the different weighting, assumptions on the dimensions are different.

\begin{theo}
\label{theo2:dep}
    Let $\beta<1/2$, let the sample $X_1,...,X_N$ have mean zero and satisfy
    \begin{align}
        \bar \sigma_q &\geq \underline{c}\\
       \beta(k)&\leq C \exp(-ck^a)\\              
       \max_{1 \leq i \leq N, 1 \leq j \leq d}\|\epsilon_{i,j}\|_{\Psi_2}&\leq K\\
       \log(d)&\leq CN^\gamma\\
       \frac{N^{\beta\gamma/(1/2-\beta)+6.5\gamma}}{\sqrt{N}}&\leq C_\gamma N^{-\tilde c}
    \end{align}
    for some fixed $a,c,\tilde c, \underline{c}, K, C, C_\gamma,\gamma>0$. We then have that there exists a Gaussian process $G \in \R^{d|\mathcal{S}|}$ such that
    \begin{align}
        \label{e:Hold:Gau:Poly}
        \sup_{t \ge x_0} \big|\PR(\sup_{1 \leq j \leq d, (n,h) \in \mathcal{S}}|G_{j,n,h}|\le t)- \mathbb{P}( \sup_{1 \leq j \leq d, (n,h) \in \mathcal{S}}|\gamma_{j}(n,h)|\le t)\big|=o(1),            
    \end{align}    
    for some $x_0>0$ that depends only on $a,c, K, C,C_\gamma,\gamma, \underline{c}$. The covariance  of this Gaussian process is characterized by
    \begin{align}
        \text{Cov}(G_{j_1,n_1,h_1},G_{j_2,n_2,h_2})=\text{Cov}( \gamma_{j_1}(n_1,h_1),  \gamma_{j_2}(n_2,h_2))
    \end{align}
\end{theo}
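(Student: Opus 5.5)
We follow the three-step architecture of the proof of Theorem~\ref{theo1:dep}, changing only the short-scale negligibility step and the bookkeeping that ties the cut-off to the dimension growth $\log d\le CN^\gamma$. As there, we write $\gamma_j(n,h)=N^{-1/2}\sum_{i=1}^N D_{i,j,n,h}$ with
\[
D_{i,j,n,h}=\pm\,\epsilon_{ij}\,(N/h)^{\beta}
\]
on the two half-windows and zero elsewhere. We fix a cut-off exponent $\delta\in(0,1)$ and split the index set into $h\le N^{1-\delta}$ (short scales) and $h>N^{1-\delta}$ (long scales). The dimension of the stacked vector is $\tilde d=d|\mathcal S|$, so $\log\tilde d\le CN^\gamma$.

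\textbf{Short scales.} We show that $\max_{j,\,h\le N^{1-\delta}}|\gamma_j(n,h)|$ and its Gaussian analogue are $o_P(1)$ uniformly beyond any $x_0>0$. By the blocking argument of \eqref{e:Varbound} (Yoshihara's moment bound) together with a Bernstein-type sub-Gaussian concentration for $\beta$-mixing sums with exponentially decaying coefficients, each $\gamma_j(n,h)$ has sub-Gaussian scale of order $(h/N)^{1/2-\beta}\le N^{-\delta(1/2-\beta)}$, up to logarithmic losses from the blocking. For the Gaussian part this is immediate from the same variance bound. A union bound over $dN^2$ indices then gives
\[
\mathbb P\big(\max|\gamma_j(n,h)|>x_0\big)\le \exp\big(CN^\gamma-cx_0^2N^{\delta(1-2\beta)}\big)\to0,
\]
provided $\gamma<\delta(1-2\beta)$. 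This is where the polynomial weight differs from the logarithmic case: negligibility comes from a power of $N$, not from $\log^{2\beta}N$, so $\delta$ must exceed $\gamma/(1-2\beta)$.

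\textbf{Long scales.} We apply Theorem~\ref{Thm:GaussApprox} to $\pm D_i$ restricted to $h>N^{1-\delta}$. The Orlicz bound is $\|D_{i,j,n,h}\|_{\Psi_2}\le K(N/h)^\beta\le KN^{\beta\delta}$; the mixing rate is inherited from the data; and the variance condition \eqref{e:vc} is checked exactly as in Theorem~\ref{theo1:dep} using the scale $(\lfloor N/2\rfloor,\lfloor N/2\rfloor-1)$ and $\bar\sigma_q\ge\underline c$. We take block sizes $q=\lfloor N^{4\theta+\delta}\rfloor$ and $r=\lfloor N^\theta\rfloor$, with $\theta$ slightly above $\gamma$ so that $\log(\tilde dN)\le N^\theta$. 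The Gaussian approximation error is then polynomial in $N^{\beta\delta}$ and in $\log\tilde d$, divided by a power of $N$. Choosing $\delta$ just above $\gamma/(1/2-\beta)$ (a safe choice for the previous step) turns the error into exactly the quantity $N^{\beta\gamma/(1/2-\beta)+6.5\gamma}/\sqrt N$ in the hypothesis, which is $\le C_\gamma N^{-\tilde c}$. The anti-concentration needed to absorb $x_0$-shifts uses the lower variance bound of order $N^{-\delta(1-2\beta)}$ on long scales, combined with Lemma~\ref{lem:anti_conc}.

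\textbf{Matching covariances.} Finally we compare the blocked covariance $\frac{1}{mq}\sum_l\mathbb E[\cdot]$ with the true covariance of $\gamma$ using Lemma~\ref{lem:compar_gaus}. We use the eight-term decomposition $I_1,\dots,I_8$ exactly as before, with $\log(N/h)^{-2\beta}$ replaced by $(N/h)^{2\beta}\cdot h/N$. This gives $\Delta\le Cr/q=CN^{-3\theta-\delta}$. Lemma~\ref{lem:compar_gaus} then needs $\Delta^{1/3}(\log\tilde d)^{2/3}$ times the inverse variance lower bound $N^{\delta(1-2\beta)/2}$ to vanish, which again reduces to the displayed rate condition. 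We then reinsert the short scales on the Gaussian side, by the same union bound, to recover $G$ indexed by all of $\mathcal S$.

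The main obstacle is the exponent bookkeeping. The cut-off $\delta$ must be large enough for short-scale negligibility against $e^{N^\gamma}$ coordinates, yet the Orlicz constant $N^{\beta\delta}$ and the inverse variance $N^{\delta(1/2-\beta)}$ both grow with $\delta$ in the Gaussian approximation error. We would first pin down $\delta$ as a function of $\gamma$ and $\beta$, then verify that every error term is bounded by the hypothesized rate.
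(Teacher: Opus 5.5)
The overall architecture matches the paper's: a cutoff in $h$, a union bound on short scales, Theorem~\ref{Thm:GaussApprox} on long scales, and the eight-term comparison of blocked and true covariances. Your cutoff $h\le N^{1-\delta}$ with $\delta$ slightly above $\gamma/(1/2-\beta)$, instead of the paper's $N^{1-\gamma/(1-2\beta)}/\log(N)^{\lambda^\star}$, is fine. It gives polynomial slack for short scales and an Orlicz constant $KN^{\beta\delta}$ that matches the hypothesis.

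\textbf{The gap is in the final Gaussian comparison.} You invoke Lemma~\ref{lem:compar_gaus} with the factor $\sigma^{-1}\asymp N^{\delta(1/2-\beta)}$. That presupposes that \emph{every} long-scale coordinate has variance of order $(h/N)^{1-2\beta}$, and the hypotheses do not give this.
\begin{itemize}
\item $\bar\sigma_q\ge\underline c$ is a maximum over $j$. It is exactly the one-coordinate condition that Theorem~\ref{Thm:GaussApprox} is built around.
\item Under dependence, a component $\epsilon_{i,j}=u_{i,j}-u_{i-1,j}$ with i.i.d.\ sub-Gaussian $u$ satisfies all assumptions. Yet it has $\Var(\gamma_j(n,h))=O(N^{-1})$ at $h\approx N/2$, so $\sigma^{-1}$ can be of order $\sqrt N$ and the bound is void.
\item Even if you added such a lower bound, the exponents do not close. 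You take $q=\lfloor N^{4\theta+\delta}\rfloor$ with $\delta$ the \emph{cutoff} exponent, copying \eqref{eq:block-size-defin}, where in the logarithmic case that exponent may be tiny. This multiplies the binding term $q^{1/2}D_N\log^4(\tilde dN)N^{-1/2}$ of Theorem~\ref{pt1} by $N^{\delta/2}\approx N^{\gamma/(1-2\beta)}$, which the rate hypothesis does not control.
\item If you instead use a small block exponent $\delta'$, then $\Delta\le Cr/q\approx N^{-3\theta-\delta'}$. In the worst case $\log d\asymp N^\gamma$, your bound $\sigma^{-1}\Delta^{1/3}(\log\tilde d)^{2/3}$ is then of order $N^{(2\gamma-\delta')/3}$, which does not vanish.
\end{itemize}
So the claim that this step ``reduces to the displayed rate condition'' is not correct.

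\textbf{What is missing} is a comparison that exploits $t\ge x_0$ instead of a variance lower bound. The paper's step (III) only requires $\Delta^{1/3}\log(\tilde dN)\le CN^{-\delta_1}$. This holds for any fixed small block exponent $\delta'>0$, because $\Delta^{1/3}\log(\tilde dN)\le CN^{-\theta-\delta'/3}N^{\theta}$.

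An error bound of this form follows from the smoothing proof behind Lemma~\ref{lem:compar_gaus} if you replace its anti-concentration step by Lemma~\ref{Lemma:AntiConc}. For $t\ge x_0$ that lemma needs no lower variance bound: coordinates with variance below $1/(C\log\tilde d)$ cannot reach $x_0$ and are discarded. It costs $\epsilon\log\tilde d$ rather than $\epsilon\sigma^{-1}\sqrt{\log\tilde d}$. With this, and with the block exponent chosen small and independent of the cutoff, your argument closes under the stated hypotheses.

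Two smaller points:
\begin{itemize}
\item The same objection applies to your use of Lemma~\ref{lem:anti_conc} with a long-scale variance lower bound. There it is unnecessary: Theorem~\ref{Thm:GaussApprox} already absorbs the truncation shift, and short scales are removed for $t\ge x_0$ without any anti-concentration.
\item In the short-scale step, do the mixing coupling once for the whole $\tilde d$-dimensional vector, as in Lemma~\ref{lem:SumSplit}. That way the remainder $4N\beta(m)$ is not multiplied by the $e^{CN^\gamma}$ indices of your union bound. A per-coordinate Bernstein bound followed by a union bound does not give your display in general.
\end{itemize}
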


\begin{proof}
Consider the vectors $D_1,...,D_N$ with components
\begin{align}
    D_{i,j,n,h}=
    \begin{cases}
        0    & \quad \begin{array}{c} 1 \leq i < n-h+1 \end{array}, \\
        \epsilon_{ij}/\Big((h/N)^{\beta}\Big) & \quad \begin{array}{c} n-h+1 \leq i \leq n \end{array}, \\
        -\epsilon_{ij}/\Big((h/N)^{\beta}\Big)& \quad \begin{array}{c} n < i \leq n+h \end{array}, \\
        0    & \quad \begin{array}{c} n+h < i \leq N \end{array}.
    \end{cases}
\end{align}
indexed by $1 \leq j \leq d$ and  $(n,h) \in \mathcal{S}$ which has dimension $\tilde d=d|\mathcal{S}|\leq dN^2$. Consequently $\log(\tilde d)\leq C \log(d)$ under the assumption that $\log(d)\leq C N^\gamma$ for $\gamma>0$. It is easy to see that
\begin{align}
    N^{-1/2}\sum_{i=1}^ND_{i,j,n,h}=\gamma_j(n,h)~.
\end{align}
The desired claim will follow by combining the following statements:
\begin{enumerate}
    \item[(I)] Let $\lambda^\star=\frac{1}{a(1-2\beta)}+1$. There exists a Gaussian process $\tilde G$ such that we can find an $x_0$ such that
        \begin{align} 
        \sup_{t \geq x_0}\Big|\PR(\max_{1 \leq j \leq d, (n,h) \in \mathcal{S}}| \gamma_j(n,h)|\geq t)-\PR(\max_{\substack{1 \leq j \leq d, (n,h) \in \mathcal{S}\\ h >  N^{1-\delta^\star}/\log(N)^{\lambda^\star}}}|\tilde G_{j,n,h}|\geq t)\Big|=o(1)
        \end{align}
    \item[(II)] We show that the coordinates with small bandwidths are negligible. 
        \begin{align}
        \sup_{t \geq x_0}\bigg\{\PR(\max_{1 \leq j \leq d, (n,h) \in \mathcal{S}}|\tilde G_{j,n,h}|\geq t)-\PR(\max_{\substack{1 \leq j \leq d, (n,h) \in \mathcal{S}\\ h >  N^{1-\delta^\star}/\log(N)^{\lambda^\star}}}|\tilde G_{j,n,h}|\geq t)\bigg\}=o(1)~.
        \end{align}
    \item[(III)] We show that $\tilde G$ is close to the  Gaussian process $G$ from Theorem \ref{theo2:dep}, i.e.
        \begin{align}
             \sup_{t \geq x_0}\Big|\PR(\max_{1 \leq j \leq d, (n,h) \in \mathcal{S}} |G_{j,n,h}|\geq t)-\PR(\max_{1 \leq j \leq d, (n,h) \in \mathcal{S}}|\tilde G_{j,n,h}|\geq t)\Big|=o(1)~.
        \end{align}
\end{enumerate}
Combining the three steps immediately yields the desired result.

\textbf{Proof of (I)}
Note that by Lemma \ref{Lemma:TruncateScalesPoly}, we obtain that
\begin{align}
\label{pe:1}
        \PR\Big(\max_{\substack{1 \leq j \leq d, (n,h) \in \mathcal{S}\\ h \leq N^{1-\delta^\star}/\log(N)^{\lambda^\star}}}| \gamma_j(n,h)|\geq x_0\Big)=o(1)
    \end{align}
    for any $x_0>0$, where $\delta^\star=\gamma/(1-2\beta)$ in the assumptions of the theorem (they match those in the statement of the lemma).

Using the definition of $D_{i,j,n,h}$
we immediately observe that the  following key properties
\begin{align}
    \beta_{D_1,...,D_N}(k)&\leq C \exp(-ck^a)\\
    \max_{\substack{1 \leq j \leq d, (n,h) \in \mathcal{S}\\ h \geq N^{1-\delta^\star}/\log(N)^{1/a}\\1 \leq i \leq N}}\|D_{i,j,n,h}\|_{\Psi_2} &\leq C KN^{\beta\gamma/(1/2-\beta)}
\end{align}
are inherited from the assumptions on $(\epsilon_{ij})$. Moreover, since we have for the dimension $\log(d) \leq C N^\gamma$ we obtain
\[
\frac{N^{\beta\gamma/(1/2-\beta)}\log(\tilde dN)^{6.5}}{\sqrt{N}}\le C_q N^{-c}
\]
for all $N$ sufficiently large and with some large enough, fixed  constant $C_q$ and 
\[
c = \frac{1}{2}-\gamma \Big( \frac{13}{2} + \frac{\beta}{1/2-\beta}\Big).
\]
For the calculation we have used that $\tilde d\le N^2 d$ (as noticed above in this proof).

Thus we only need to verify the variance condition \eqref{e:vc} to apply Theorem \ref{Thm:GaussApprox} to the remaining coordinates (or, more specifically, to $\pm$ the coordinates to obtain the result for the maximum of the absolute values). Choosing $n=\lfloor N/2\rfloor$ and $h=\lfloor N/2\rfloor-1$ yields that this condition is also inherited from $(\epsilon_{i,j})$ (see the proof of Theorem \ref{theo1:dep} for a more detailed derivation). Now, combining Theorem \ref{Thm:GaussApprox} which yields
\begin{align}
    \sup_{t \geq x_0}\Big|\PR(\max_{\substack{1 \leq j \leq d, (n,h) \in \mathcal{S}\\ h >  N^{1-\delta^\star}/\log(N)^{\lambda^\star}}}| \gamma_j(n,h)|\geq t)-\PR(\max_{\substack{1 \leq j \leq d, (n,h) \in \mathcal{S}\\ h >  N^{1-\delta^\star}/\log(N)^{\lambda^\star}}}|\tilde G_{j,n,h}|\geq t)\Big|=o(1)
\end{align}
and equation \eqref{pe:1}, gives us
\begin{align}
    \sup_{t \geq x_0}\Big|\PR(\max_{1 \leq j \leq d, (n,h) \in \mathcal{S}}| \gamma_j(n,h)|\geq t)-\PR(\max_{\substack{1 \leq j \leq d, (n,h) \in \mathcal{S}\\ h >  N^{1-\delta^\star}/\log(N)^{\lambda^\star}}}|\tilde G_{j,n,h}|\geq t)\Big|=o(1).
\end{align}

Here, $\tilde G_{j,n,h}$ are centered Gaussian random variables with covariance structure
\begin{align}
    \text{Cov}(\tilde G_{j_1,n_1,h_1}, \tilde G_{j_2,n_2,h_2})=\frac{1}{mq}\sum_{l=1}^m\E\Big[\sum_{i,k\in I_l}D_{i,j_1,n_1,h_1}D_{k,j_2,n_2,h_2}\Big]
\end{align}
corresponding to the object $Y$ in Theorem \ref{Thm:GaussApprox}. Notice that this object $Y$ in turn has an involved covariance structure involving the above objects $m,q,I_l$. For definitions of these objects, we refer to the
 beginning of Section \ref{e:secGA}, as well as the definitions following eq. \eqref{p400}. \\

 \textbf{Proof of (II):}\\
 Elementary calculations involving counting the blocks where $D_{i,j,n,h}$ is 0 and Theorem 3 from \cite{yoshihara:1978} (see the arguments following equation \eqref{e:Varbound} for a detailed derivation in a similar setting) show that 
\begin{align}
    \max_{\substack{1 \leq j \leq d, (n,h) \in \mathcal{S}\\ h \leq  N^{1-\delta^\star}}}\text{Var}(\tilde G_{j,n,h})=O(N^{-\gamma})~.
\end{align}
Now, using similar arguments as those following eq. \eqref{e:Gaussex}, we can deduce that
\[
\PR(\max_{\substack{1 \leq j \leq d, (n,h) \in \mathcal{S}\\ h \le  N^{1-\delta^\star}/\log(N)^{\lambda^\star}}}|\tilde G_{j,n,h}|\geq t)=o(1),
\]
uniformly over all $t \ge x_0$, which in turn implies
\begin{align}
    \sup_{t \geq x_0}\Big\{\PR(\max_{1 \leq j \leq d, (n,h) \in \mathcal{S}}|\tilde G_{j,n,h}|\geq t)-\PR(\max_{\substack{1 \leq j \leq d, (n,h) \in \mathcal{S}\\ h >  N^{1-\delta^\star}/\log(N)^{\lambda^\star}}}|\tilde G_{j,n,h}|\geq t)\Big\}=o(1)~.
\end{align}

\textbf{Proof of (III):}\\
Let $(G_{j,n,h})$ be a $\tilde d$ dimensional Gaussian with covariance matrix given by 
\begin{align}
    \text{Cov}(G_{j_1,n_1,h_1}G_{j_2,n_2,h_2})&=\frac{1}{N}\E\Big[\sum_{1 \leq i,k \leq N}D_{i,j_1,n_1,h_1}D_{k,j_2,n_2,h_2}\Big]\\
    &=  \text{Cov}\Big(\gamma_{j_1}(n_1,h_1),\gamma_{j_2}(n_2,h_2)\Big)
\end{align}
The proof is finished if we can establish that
\begin{align}
    \sup_{t \geq x_0}\Big|\PR(\max_{1 \leq j \leq d, (n,h) \in \mathcal{S}} |G_{j,n,h}|\geq t)-\PR(\max_{1 \leq j \leq d, (n,h) \in \mathcal{S}}|\tilde G_{j,n,h}|\geq t)\Big|=o(1)~.
\end{align}
By Lemma \ref{lem:compar_gaus} (and again first removing the coordinates with small $h$) we therefore only need to show that
\begin{align}
\label{pe:2}
    \Delta^{1/3}\log(\tilde dN)\leq C N^{-\delta_1}
\end{align}
for some (arbitrarily small)  $\delta_1>0$, and
\begin{align}
    \Delta:=\max_{\substack{1 \leq j_1, j_2 \leq d \\ (n_1, h_1), (n_2, h_2) \in \mathcal{S}}}
\Big|\frac{1}{mq}\sum_{l=1}^m\E\Big[\sum_{i,k\in I_l}D_{i,j_1,n_1,h_1}D_{k,j_2,n_2,h_2}\Big]-\E\Big[\frac{1}{N}\sum_{1 \leq i,k \leq N}D_{i,j_1,n_1,h_1}D_{k,j_2,n_2,h_2}\Big] \Big|.
\end{align}
The same calculations as those following \eqref{pe:2log} yield the desired result.

\end{proof}

\subsection{Small scales do not matter}

\begin{lemma}
\label{Lemma:TruncateScalesLog}
  Under the assumptions \ref{ass_dim_seg}-\ref{ass_par_beta_a},
for any fixed $\delta>0$, there exists $x_0=x_0(K,a,\beta,\gamma,\delta,\overline{C})$   such that
                \begin{align}
                    \PR\Big(\max_{\substack{1 \leq j \leq d, (n,h) \in \mathcal{S}\\ h \leq N^{1-\delta}}}|\gamma^{\log}_j(n,h)|\geq x_0\Big)=o(1)~.
                \end{align}        
\end{lemma}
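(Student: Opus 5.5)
The plan is a union bound over all $(j,n,h)$ with $h\le N^{1-\delta}$, combined with a sub-Gaussian tail bound for each block sum that holds under $\beta$-mixing. This mirrors the proof of \eqref{eq_step1}, except that independence is replaced by the mixing condition in \ref{ass_mixing}.

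Fix $j$ and $(n,h)$. The numerator of $\gamma^{\log}_j(n,h)$ is a signed sum of $2h$ consecutive errors $\epsilon_{i,j}$, and each summand has $\Psi_2$-norm at most $K$ by \ref{ass_var_seg}. The first step is a Bernstein-type concentration inequality for such sums under exponential $\beta$-mixing, for instance Merlevède--Peligrad--Rio. It gives, for $x\ge 0$,
\[
\PR\Big(\Big|\sum_{i\in B}\pm\epsilon_{i,j}\Big|\ge x\Big)\le C\exp\Big(-\frac{x^2}{C(h+x^{\kappa})}\Big)+C h\exp\big(-c\,x^{\kappa'}\big),
\]
with exponents $\kappa,\kappa'$ depending on $a$. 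Alternatively, I would use Berbee's coupling: split $B$ into blocks of length $\asymp \log^{1/a}(N)^{C}$, couple alternating blocks with independent copies at total cost $N\exp(-c\,r^a)=o(N^{-10})$, and apply Hoeffding's inequality to the independent sub-Gaussian block sums. Each block sum has $\Psi_2$-norm of order $K\sqrt{\text{block length}}$ up to logarithmic factors.

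Next I take $x=x_0\,h^{1/2}\log^{\beta}(eN/h)$. Since $h\le N^{1-\delta}$, we have $\log(eN/h)\ge \delta\log N$, so $x^2/h\ge x_0^2\delta^{2\beta}\log^{2\beta}N$. Because $2\beta>1$ by \ref{ass_par_beta_a}, this grows faster than any multiple of $\log N$. The union bound over at most $dN^2\le cN^{2+\gamma}$ triples then gives
\[
\PR(\cdot)\le cN^{2+\gamma}\exp\big(-c' x_0^2\delta^{2\beta}\log^{2\beta}N\big)+o(1)\to0
\]
for every fixed $x_0>0$, provided the non-Gaussian tail terms are controlled.

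The main obstacle is the correction term from dependence. The blocking costs a factor of the block length $r$ in the variance proxy, and the Bernstein term $x^{\kappa}$ matters when $h$ is small, e.g.\ $h=1$, where $x\asymp x_0\log^{\beta}N$. For such very small $h$ I would bypass the mixing inequality entirely: a direct union bound using the marginal sub-Gaussianity of each $\epsilon_{i,j}$ gives $|\sum_{i\in B}\pm\epsilon_{i,j}|\le 2h\max_{i,j}|\epsilon_{i,j}|=O_P(h\sqrt{\log N})$. This is below $x$ whenever $h\le \log^{2\beta-1}N$. For $h$ above a polylogarithmic level I would use the blocked Hoeffding bound with block length $r=\lfloor (C\log N)^{1/a}\rfloor$, which gives a variance proxy of order $hK^2$ up to a factor $r$. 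The condition $2\beta-1/a\ge1$ in \ref{ass_par_beta_a} is exactly what makes $\log^{2\beta}N/r\gtrsim \log N$. This step, choosing $x_0$ large in terms of $K,a,\gamma,\delta$ so that the exponent beats $(3+\gamma)\log N$, is where the choice of $x_0$ in the statement is determined.
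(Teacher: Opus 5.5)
Your proposal is correct and follows essentially the same route as the paper: a union bound over the at most $cN^{2+\gamma}$ triples $(j,n,h)$, combined with a coupling-based sub-Gaussian bound for the mixing window sums. The paper uses Yoshihara's interleaved-subsequence coupling (Lemma~\ref{lem:SumSplit}) with spacing $m=(2c^{-1}\log N)^{1/a}$ instead of Berbee blocks, which loses the same factor $(\log N)^{1/a}$ in the variance proxy and absorbs it the same way, via $\log(eN/h)\ge\delta\log N$, $2\beta-1/a\ge 1$ and a large $x_0$; your separate treatment of windows with $h\lesssim(\log N)^{1/a}$ is a detail the paper leaves implicit, since Lemma~\ref{lem:SumSplit} requires the spacing to be at most half the number of summands.
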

\begin{proof}
    Using the definition of $ \gamma^{\log}_j(n,h)$ and Lemma \ref{lem:SumSplit} with $m=(2c^{-1}\log(N))^{1/a}$  we obtain that for any $\delta>0$ 
    \begin{align}
        \PR\Big(\max_{\substack{1 \leq j \leq d, (n,h) \in \mathcal{S}\\ h \leq N^{1-\delta}}} |\gamma^{\log}_j(n,h)|\geq t\Big)\leq 2\overline{C}mN^{2+\gamma}\exp\Big(-\frac{Cc^{1/a}\delta^{2\beta}\log(N)^{2\beta-1/a}t^2 }{4K^2}\Big)+\overline{C}4N^{-1}
    \end{align}
   Using that $2\beta-1/a\geq 1$ we then have for any $x_0>0$ for which $x_0^2>\frac{4(2+\gamma)2K^2}{Cc^{1/a}\delta^{2\beta}}$ holds, that
    \begin{align}
        \PR\Big(\max_{\substack{1 \leq j \leq d, (n,h) \in \mathcal{S}\\ h \leq N^{1-\delta}}}|\gamma^{\log}_j(n,h)|\geq x_0\Big)=o(1)~.
    \end{align}        
\end{proof}

\begin{lemma}
\label{Lemma:TruncateScalesPoly}
  Under the assumptions of Theorem~\ref{theo2:dep}
    \begin{align}
        \max_{1 \leq i \leq N, 1 \leq j \leq d}\|\epsilon_{ij}\|_{\Psi_2}&\leq K  \\
        \beta(k)&\leq \overline{C} \exp(-ck^a)\\
        \log(d) &\leq \overline{C} N^\gamma.       
    \end{align} 
    We then have for $\beta<1/2$ that  
                \begin{align}
                   \PR\Big(\max_{\substack{1 \leq j \leq d, (n,h) \in \mathcal{S}\\ h \leq N^{1-\gamma/(1-2\beta)}/\log(N)^{\lambda^\star}}}| \gamma_j(n,h)|\geq x_0\Big)=o(1)
                \end{align}
   for any $x_0>0$ and $\lambda^\star=\frac{1}{a(1-2\beta)}+1$.

\end{lemma}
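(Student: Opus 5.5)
We prove Lemma \ref{Lemma:TruncateScalesPoly} with the same template as Lemma \ref{Lemma:TruncateScalesLog}: a union bound over all small-scale coordinates, combined with a sub-Gaussian tail bound for each single scan. The extra factor $\log(N)^{\lambda^\star}$ in the cutoff is there to absorb the loss in the tail bound coming from the mixing (block) structure.

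\textbf{Step 1: reduce to one scan.} Fix $(j,n,h)$ with $h\le h_N:=N^{1-\gamma/(1-2\beta)}/\log(N)^{\lambda^\star}$. The numerator of $\gamma_j(n,h)$ is a difference of two partial sums of length $h$ of a $\beta$-mixing sequence with uniformly sub-Gaussian entries. Apply Lemma \ref{lem:SumSplit} with block parameter $m=(2c^{-1}\log N)^{1/a}$, exactly as in the proof of Lemma \ref{Lemma:TruncateScalesLog}. This splits each sum into $m$ interlaced subsequences, coupled (Berbee) to independent ones at total cost $\overline C N^{-1}$ from $\beta(m)\le \overline C\exp(-cm^a)$. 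It yields
\[
\PR\Big(\Big|\sum_{i=n-h+1}^{n}\epsilon_{ij}-\sum_{i=n+1}^{n+h}\epsilon_{ij}\Big|\ge s\Big)\le 2\overline C m\exp\Big(-\frac{C s^2}{K^2 m h}\Big)+\overline C N^{-2},
\]
so the effective variance proxy is $mh$ rather than $h$.

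\textbf{Step 2: insert the weight and count coordinates.} Set $s=tN^{1/2-\beta}h^\beta$. The exponent becomes $-Ct^2N^{1-2\beta}h^{2\beta-1}/(K^2m)$. Since $2\beta-1<0$, this is worst at $h=h_N$, where
\[
N^{1-2\beta}h_N^{-(1-2\beta)}=N^{\gamma}\log(N)^{\lambda^\star(1-2\beta)}=N^\gamma\log(N)^{1/a+(1-2\beta)}.
\]
Dividing by $m\asymp \log(N)^{1/a}$ leaves an exponent of order $t^2N^\gamma\log(N)^{1-2\beta}/K^2$. This is exactly why $\lambda^\star=\frac{1}{a(1-2\beta)}+1$ was chosen: the $1/a$ part cancels $m$, and a surplus factor $\log(N)^{1-2\beta}\to\infty$ remains.

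Now take a union bound over at most $dN^2\le \exp(\overline C N^\gamma+2\log N)$ coordinates, and note that the coupling error is added at most once per block family. For each fixed $t=x_0>0$ this gives
\[
\PR(\cdot)\le 2\overline C m\exp\Big(\overline CN^\gamma+2\log N-C'x_0^2N^\gamma\log(N)^{1-2\beta}\Big)+O(N^{-1})\to 0.
\]
Hence the claim holds for every $x_0>0$, as stated.

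\textbf{Main obstacle.} The delicate part is Step 1: getting a uniform tail bound whose variance proxy inflates only by the factor $m\asymp\log(N)^{1/a}$, with a coupling error that is summable after the union bound. If Lemma \ref{lem:SumSplit} is stated only with the specific normalisation used in the logarithmic case, I will re-derive it. Split $\{n-h+1,\dots,n+h\}$ into residue classes modulo $m$. Replace each class by independent copies via Berbee's coupling, which costs $N\beta(m)\le N^{-1}$ overall. Then apply Hoeffding's sub-Gaussian bound to each class, which has $\lceil 2h/m\rceil$ terms. A further union over the $m$ classes with threshold $s/m$ gives the proxy $mh$. The remaining bookkeeping, namely $\lambda^\star$ and the $\log$-powers, is routine.
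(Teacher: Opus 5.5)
Your proposal is correct and matches the paper's proof: both apply Lemma~\ref{lem:SumSplit} with $m=(2c^{-1}\log N)^{1/a}$ and take a union bound over the at most $dN^2$ small-scale coordinates. Both then observe that at $h=N^{1-\gamma/(1-2\beta)}/\log(N)^{\lambda^\star}$ the exponent is of order $x_0^2N^\gamma\log(N)^{1-2\beta}/K^2$, which dominates $\log(mdN^2)\lesssim N^\gamma$. As you note at the end, the decoupling must be done once for the whole $\mathbb{R}^d$-valued sequence rather than per scan, since otherwise your displayed $+\overline{C}N^{-2}$ would be multiplied by $dN^2$; this is why the paper's bound carries a single additive $4\overline{C}N^{-1}$.
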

\begin{proof}    
    Using the definition of $ \gamma_j(n,h)$ and Lemma \ref{lem:SumSplit} with $m=(2c^{-1}\log(N))^{1/a}$  we obtain that
    \begin{align}
    \label{pe:11}
        \PR\Big(\max_{\substack{1 \leq j \leq d, (n,h) \in \mathcal{S}\\ h \leq N^{1-\delta^\star}/\log(N)^{\lambda^\star}}}| \gamma_j(n,h)|\geq x_0\Big)\leq  2mN^2d\exp\Big(-\frac{cx_0^2 N^{\gamma}\log(N)^{1-2\beta}}{4K^2}\Big)+C4N^{-1}
    \end{align}
    where we define $\delta^\star=\gamma/(1-2\beta)$.  Note that for $N$ sufficiently large we have $\log(mdN)\leq  2CN^\gamma$. We hence obtain that
    \begin{align}
        \PR\Big(\max_{\substack{1 \leq j \leq d, (n,h) \in \mathcal{S}\\ h \leq N^{1-\delta^\star}/\log(N)^{\lambda^\star}}}| \gamma_j(n,h)|\geq x_0\Big)=o(1)
    \end{align}
    as desired.    
\end{proof}

\section{Gaussian approximation for dependent data} \label{e:secGA}
\subsection{Approximation for bounded data}
\label{appen:GAP-dependent}
In this subsection we present a high dimensional Gaussian approximation result for dependent data that allows for the variances of all but one coordinate to degenerate. The proof is based on a (in other contexts well-known) large-block-small-block decomposition. The reader should recall the definition of the large and small blocks, their sizes and their number ($I_j, J_j, q,r, m$) given in Section \ref{appen:segment}. For our reference, we define the variable
\begin{align}
    \label{def:TN}
    T_N=\max_{1 \leq j \leq d}  N^{-1/2}\sum_{i=1}^{N}\eta_{ij}~.
\end{align}
In this section, we assume that there exists some constant $D_N$ such that 
\begin{align} 
\label{p400}
    |\eta_{ij}|\leq D_N \qquad 1 \leq i \leq N, \,\,1 \leq j \leq d.
\end{align}
Further let 
\begin{align}
    B_l=\sum_{ i \in I_l} \eta_i, \quad S_l= \sum_{ i \in J_l} \eta_i
\end{align}
and $\{ \tilde B_l , 1 \leq l \leq m\}$, $\{ \tilde S_l, 1\leq l \leq m\}$ be sequences of independent random vectors with marginal distributions given by those of $B_l$ and $S_l$. Additionally let 
\begin{align}
\label{def:Y}
    Y=(Y_1,...,Y_d)
\end{align} be a centered normal vector with covariance matrix $\E[YY^T]=(mq)^{-1}\sum_{i=1}^m\E[B_iB_i^T]$. \\
We also recall the definition of $\bar\sigma_r$ and $\bar\sigma_q$, wich in this section will correspond to the variables $\eta_i$:
\begin{align}
    \bar \sigma_q:=\max_{1 \leq j \leq d }\frac{1}{m}\sum_{I \in \{I_1,...,I_m\}}\text{Var}\left(q^{-1/2}\sum_{i \in I}\eta_{ij}\right) \\
    \bar{\sigma}_r:=\max_{1 \leq j \leq d }\frac{1}{m}\sum_{J \in \{J_1,...,J_m\}}\text{Var}\left(r^{-1/2}\sum_{i \in J}\eta_{ij}\right) 
\end{align}
\begin{theo}
\label{pt1}
   Suppose that \eqref{p400} holds and that for some $c_1,c_2>0$ we have
   \begin{align}
       \bar \sigma_q &\geq  \underline{c} \\
       \bar \sigma_q \lor  \bar {\sigma}_r &\leq \overline{c}\\
       (q+r\log(dN))\log(dN)^{1/2}D_NN^{-1/2}&\leq C N^{-c_1}\\
       r/q \log(d)^2 & \leq C N^{-c_1}\\
       q^{1/2}D_N\log(dN)^4N^{-1/2}&\leq C N^{-c_2}       ~.
   \end{align}
   Then
   \begin{align}
   \label{boundedGAP}
     \sup_{t >x_0}\left|\PR(T_N\leq t)-\PR\Big(\max_{1 \leq j \leq d}Y_j\leq t\Big)\right|\leq C N^{-c}+(m-1)\beta(r)
   \end{align}
for some $c>0$ only depending on $c_1,c_2$ and any $x_0>0$. In particular we also have that 
\begin{align}
\label{anticonc}
     \sup_{t >x_0}\left|\PR(T_N\leq t+\epsilon)-\PR(T_N\leq t)\right|\leq C \Big(N^{-c}+\epsilon\log(dN)\Big)+(m-1)\beta(r)
\end{align}
where $C$ does not depend on $\epsilon$.
\end{theo}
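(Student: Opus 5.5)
We will follow the classical large-block/small-block route. Write
\[
N^{-1/2}\sum_{i=1}^N \eta_i = N^{-1/2}\sum_{l=1}^m B_l + N^{-1/2}\Big(\sum_{l=1}^m S_l + \sum_{i\in J_{m+1}}\eta_i\Big),
\]
and treat the second bracket as a perturbation.

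\textbf{Step 1: the small blocks are negligible.} Each $S_l$ and the remainder over $J_{m+1}$ are sums of at most $q+r$ terms, each bounded by $D_N$. By the Berbee coupling applied to the alternating blocks, they can be replaced by independent copies $\tilde S_l$ at total variation cost $(m-1)\beta(r)$. A Bernstein/Hoeffding-type maximal inequality then gives, with probability $1-O(N^{-c})$,
\[
\max_j N^{-1/2}\Big|\sum_l \tilde S_{lj}\Big| \lesssim \sqrt{\bar\sigma_r\, (mr/N)\log(dN)} + r D_N \log(dN) N^{-1/2}.
\]
The remainder over $J_{m+1}$ is controlled by the trivial bound $(q+r)D_N N^{-1/2}$. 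By the third and fourth displayed conditions (note $mr/N\approx r/q$), the total perturbation is at most some $\epsilon_N$ with $\epsilon_N\log(dN)\le CN^{-c}$.

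\textbf{Step 2: couple and approximate the large blocks.} Again by Berbee's lemma, replace $(B_l)$ by the independent $(\tilde B_l)$ at cost $(m-1)\beta(r)$. Then apply the high-dimensional CLT for independent, non-identically distributed vectors (Lemma~\ref{lem:th_chetverikov}) to
\[
(mq)^{-1/2}\sum_l \tilde B_l .
\]
The summands $q^{-1/2}\tilde B_l$ satisfy the following. The variance condition holds only for the maximizing coordinate, via $\bar\sigma_q\ge\underline c$. Condition (ii) needs $\max_l\E|q^{-1/2}\tilde B_{lj}|^{2+r}$, which we bound by $\bar\sigma_q (q^{1/2}D_N)^{r}$. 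Condition (iii) needs an Orlicz-norm bound, and $q^{1/2}D_N$ up to a logarithmic factor works by Hoeffding under mixing, or simply via boundedness $qD_N/\sqrt q$, using the fifth condition. This yields an error of order $(q D_N^2\log^{7}(dN)/m)^{1/6}\le CN^{-c}$.

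The rescaling from $\sqrt{mq}$ to $\sqrt N$ is a multiplicative factor $1+O(r/q)$, which is absorbed like the Step 1 perturbation.

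\textbf{Step 3: anti-concentration and the main obstacle.} The perturbations from Steps 1 and 2 are transferred to probabilities using anti-concentration for $\max_j Y_j$ on $\{t>x_0\}$. This is the main obstacle, since only one coordinate is guaranteed non-degenerate, so Nazarov's inequality cannot be used directly. We will use instead the Lemma~\ref{lem:anti_conc}-type bound for maxima of Gaussians whose variance is bounded below only on the coordinates that matter. Our first attempt is to split coordinates by whether $\Var(Y_j)\ge v_N$: the low-variance coordinates exceed $x_0$ with probability $d\exp(-x_0^2/(2v_N))$, which is negligible for $v_N\asymp 1/\log(dN)$. On the rest, Nazarov gives a density bound $\lesssim \sqrt{\log(d)/v_N}\lesssim\log(dN)$. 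This explains the factor $\epsilon\log(dN)$ in \eqref{anticonc} and the restriction to $t>x_0$.

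Combining Steps 1--3 yields \eqref{boundedGAP}. Finally, \eqref{anticonc} follows by applying \eqref{boundedGAP} at $t$ and $t+\epsilon$ and using this Gaussian anti-concentration bound.
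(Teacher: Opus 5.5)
Your architecture matches the paper's. It has Berbee-type coupling of the big and small blocks (Corollary~2.7 in \cite{Yu:1994}), a maximal inequality for the independent small blocks, and a CLT for the independent big blocks that needs only one nondegenerate coordinate. It then uses anti-concentration under degenerate variances (the source of $\epsilon\log(dN)$) and the rescaling by $\sqrt{mq/N}$.

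The gap is in Step~2, which carries the whole Gaussian approximation. With $A_l=q^{-1/2}\tilde B_l$, $M=m$ and $a_M\asymp q^{1/2}D_N$, your error is $(qD_N^2\log^{7}(dN)/m)^{1/6}$; Lemma~\ref{lem:th_chetverikov} as stated even has $\log^{10}$. Since $m=\lfloor N/(q+r)\rfloor$, this is at least $(q(q+r)D_N^2\log^{7}(dN)/N)^{1/6}$, so the block length enters squared. The fifth hypothesis only controls $qD_N^2\log^{8}(dN)/N$. The third controls $q^2D_N^2\log(dN)/N$, but without the extra powers of $\log(dN)$. Hence your claim that this error is $\le CN^{-c}$ does not follow. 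Concretely, take i.i.d.\ Rademacher coordinates ($D_N=1$, $\beta(\cdot)\equiv 0$) with $r=1$, $q=N^{0.45}$ and $\log(dN)=N^{0.03}$. The third, fourth and fifth hypotheses hold with rates $N^{-0.035}$, $N^{-0.39}$ and $N^{-0.155}$, yet your bound is of order $N^{0.11/6}\to\infty$.

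What is missing is the correct scale of the normalized blocks. The paper records this step as an error of order $(qD_N^2\log^{8}(dN)/N)^{1/4}$, with $N$ rather than $m$ in the denominator. That is exactly the quantity the fifth hypothesis is calibrated to. The exponent $1/4$ and the power $\log^8$ come from substituting Theorem~2.1 of \cite{chernozhukov:chetverikov:kato:koike:2022} into the argument behind Lemma~\ref{lem:th_chetverikov}.

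For a bound of this form, the moment and Orlicz parameter of $q^{-1/2}\tilde B_{lj}$ has to be $O(D_N)$ up to logarithms. The almost sure bound $|q^{-1/2}\tilde B_{lj}|\le q^{1/2}D_N$ only yields $a_M\asymp q^{1/2}D_N$, i.e.\ $a_M^2/M=qD_N^2/m$. Your appeal to Hoeffding points the right way but is mis-scaled. Hoeffding bounds the $\psi_2$-norm of the \emph{unnormalized} block sum by a multiple of $q^{1/2}D_N$, hence that of $q^{-1/2}\tilde B_{lj}$ by a multiple of $D_N$. In the example above this repairs the error to $(D_N^2\log^{10}(dN)/m)^{1/6}\to 0$. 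Under dependence, however, you have to actually prove such a within-block concentration bound.

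As written, your argument yields the theorem only with the fifth hypothesis strengthened to something like $qD_N\log^{5}(dN)N^{-1/2}\le CN^{-c_2}$. A smaller point of the same kind arises in Step~1. There you need $\sqrt{r/q}\,\log^{3/2}(dN)\to 0$ polynomially, whereas the fourth hypothesis only gives $\sqrt{r/q}\,\log d\le CN^{-c_1/2}$.
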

\begin{proof}
    The proof follows by exactly the same arguments as  in the proof of Theorem 5.1 in \cite{bastian:2024}. The improvement to $\log(dN)^4$ instead of $\log(dN)^5$ stems from an improvement of Theorem 4.1 from \cite{chetverikov:wilhelm:kim:2021} that is used in the course of the proof. For the convenience of the reader we include it here. \\
    For the remainder of this proof $c$ denotes a generic and sufficiently small positive constant that may change from line to line that ultimately only depends on $c_1,c_2$. Its final value is the one appearing in the theorem statement.  First we bound the error incurred by leaving out the small blocks, i.e. we have that
    \begin{align}
        |T_N-\max_{1 \leq j \leq d}N^{-1/2}\sum_{i=1}^mB_{ij}|\leq \max_{1 \leq j \leq d}\Big|N^{-1/2}\sum_{i=1}^mS_{ij}\Big| + \max_{1 \leq j \leq d}\Big|N^{-1/2}S_{(m+1)j}\Big|
    \end{align}
and note that
\begin{align}  
\label{b1}
    \max_{1 \leq j \leq d}\Big|N^{-1/2}S_{(m+1)j}\Big|&\leq C  qD_N/\sqrt{N}     
\end{align}

By Corollary 2.7 from \cite{Yu:1994} we have
\begin{align}
\label{b2}
   &\Big| \PR\Big(\max_{1 \leq j \leq d}N^{-1/2}\sum_{i=1}^mB_{ij}\leq t\Big)-\PR\Big(\max_{1 \leq j \leq d}N^{-1/2}\sum_{i=1}^m\tilde B_{ij}\leq t\Big)\Big| \leq (m-1)\beta(r)\\
   &\Big| \PR\Big(\max_{1 \leq j \leq d}N^{-1/2}\Big|\sum_{i=1}^mS_{ij}\Big|\leq t\Big)-\PR\Big(\max_{1 \leq j \leq d}N^{-1/2}\Big|\sum_{i=1}^m\tilde S_{ij}\Big|\leq t\Big)\Big| \leq (m-1)\beta(q)
\end{align}
Using independence and the fact that $\tilde S_{ij}$ is bounded by $rD_N$ we obtain by Lemma D.3 from \cite{chetverikov:wilhelm:kim:2021} that
\begin{align}
    \E\left[\max_{1 \leq j \leq d}\Big|N^{-1/2}\sum_{i=1}^m\tilde S_{ij}\Big|\right]\leq C\Big(\sqrt{r/q \bar \sigma_r\log(d)}+N^{-1/2}rD_N\log(d) \Big)
\end{align}
so that 
\begin{align}
\label{p3}
     \PR\Big(\max_{1 \leq j \leq d}N^{-1/2}\Big|\sum_{i=1}^m\tilde S_{ij}\Big|> t\Big) &\leq C \left(\sqrt{r/q \bar \sigma_r\log(d)}+N^{-1/2}rD_N\log(d)\right)/t   
\end{align}

Let $\delta_1=\sqrt{N^{-c_1}/\log(dN)}$. Combining \eqref{b1}, \eqref{b2}, \eqref{p3} (choose $t=\delta_1$ in \eqref{p3}) we obtain for $t>x_0$ that
\begin{align}
\label{p300}
    \PR(T_N\leq t)& \leq  \PR\Big(\max_{1 \leq j \leq d}N^{-1/2}\sum_{i=1}^mB_{ij}\leq t+\delta_1+qD_NN^{-1/2}\Big)+CN^{-c}+(m-1)\beta(r)\\
    & \leq \PR\Big(\max_{1 \leq j \leq d}N^{-1/2}\sum_{i=1}^m\tilde B_{ij}\leq t+\delta_1+qD_NN^{-1/2}\Big)+CN^{-c}+(m-1)\beta(r)\\
    & \leq  \PR\Big(\max_{1 \leq j \leq d}N^{-1/2}\sum_{i=1}^m\tilde B_{ij}\leq t\Big)+CN^{-c}+ C\Big(\frac{qD_N^2\log(dN)^{8}}{N}\Big)^{1/6}+(m-1)\beta(r)
\end{align}
where we used Lemma 5.2 from \cite{bastian:2024} for the last line. A similar argument also yields the reverse inequality so that for all $t>x_0$ we have
\begin{align}
\label{p301}
   \sup_{t >x_0}\left|\PR(T_N\leq t)-\PR\Big(\max_{1 \leq j \leq d}N^{-1/2}\sum_{i=1}^m\tilde B_{ij}\leq t\Big)\right|\leq C N^{-c}+(m-1)\beta(r).
\end{align}
Note that for at least one $j$ we have $\bar \sigma_q \simeq N^{-1}\sum_{i=1}^m\text{Var}(\tilde B_{ij})$ and that for all $j$ it holds that $\tilde B_{ij}\leq C qD_N$. We may therefore apply a modification of Theorem 4.1 from \cite{chetverikov:wilhelm:kim:2021} (using Theorem 2.1 from \cite{chernozhukov:chetverikov:kato:koike:2022} instead of the Gaussian approximation used therein) to further obtain that
\begin{align}
     \sup_{t >x_0}\left|\PR(T_N\leq t)-\PR\Big(\max_{1 \leq j \leq d}\sqrt{mq/N}Y_j\leq t\Big)\right|\leq  C\Big(\frac{qD_N^2\log(dN)^8}{N}\Big)^{1/4}+(m-1)\beta(r)
\end{align}
Equation \eqref{boundedGAP} then follows by the same arguments as in step 4 of the proof of Theorem E.1 from \cite{Chernozhukov:Chetverikov:Kato:2019} where we use Lemma C.3 from \cite{chetverikov:wilhelm:kim:2021} instead of Step 3. The inequality \eqref{anticonc} is a consequence of equation \eqref{p301} and Lemma 5.2 from \cite{bastian:2024}. \\
\end{proof}

\subsection{Unbounded data}
In the previous step (Theorem \ref{pt1}), we imposed uniform boundedness on all random vectors. We now remove this assumption from the Gaussian approximation at the cost of a slightly stricter assumption on the growth rate of $d$. 
\begin{theo}
\label{Thm:GaussApprox}
Assume that 
     \begin{align}
       \bar \sigma_q \label{e:vc}&\geq  \underline{c} \\
       \beta(r)&=o(N^{-1})\\      
       \max_{1 \leq i \leq N, 1 \leq j \leq d}\|\eta_{ij}\|_{\Psi_2}& \leq B_N\\
       \log(dN)&\leq \overline{C}N^\gamma\label{ass:th:seg:gaus:dim}\\
       \frac{B_NN^{6.5\gamma}}{\sqrt{N}}& \leq \overline{C}N^{-c_1}       \label{ass:th:seg:gaus}
   \end{align}
   for some $c_1>0$, where $q$ and $r$ are given by
    \begin{align}
        \label{def:qr}
        q=\Big\lfloor N^{4\gamma+\delta}\Big\rfloor;\qquad
        r=\Big\lfloor N^\gamma\Big\rfloor 
    \end{align}
    for some small $\delta<c_1$.   
   Then, for some constant $c>0$,
   \begin{align}
   \label{unboundedGAP}
     \sup_{t >x_0}\left|\PR(T_N\leq t)-\PR\Big(\max_{1 \leq j \leq d}Y_j\leq t\Big)\right|\leq CN^{-c}.
   \end{align}  
   where $Y$ and $T_N$ are given in \eqref{def:Y} and \eqref{def:TN}, respectively.
Additionally it holds that
   \begin{align}
\label{anticoncUnb}
     \sup_{t >x_0}\left|\PR(T_N\leq t+\epsilon)-\PR(T_N\leq t)\right|\leq C N^{-c}+\epsilon\log(dN)
\end{align}

\end{theo}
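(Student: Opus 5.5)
The plan is to reduce Theorem \ref{Thm:GaussApprox} to the bounded case, Theorem \ref{pt1}, by truncation, in the same spirit as the truncation step in the proof of Lemma \ref{lemma:main_gauss_apprx}. Fix the truncation level
\[
D_N := B_N\sqrt{C\log(dN)}
\]
with $C$ large, and split $\eta_{ij}=\eta_{ij}^{\le}+\eta_{ij}^{>}$, where $\eta_{ij}^{\le}:=\eta_{ij}1\{|\eta_{ij}|\le D_N\}-\E[\eta_{ij}1\{|\eta_{ij}|\le D_N\}]$. Write $T_N^{\le}$ for the maximum in \eqref{def:TN} built from the $\eta^{\le}$. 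The centered truncated array is bounded by $2D_N$, so \eqref{p400} holds. It is also a measurable function of the original sequence coordinatewise in $i$, so its $\beta$-mixing coefficients are dominated by those of $(\eta_i)$ and $\beta(r)=o(N^{-1})$ is inherited. Since $m\le N$, the term $(m-1)\beta(r)$ in \eqref{boundedGAP} is $o(1)$. To get a polynomial rate, I would use that $\beta(r)\lesssim \exp(-cN^{a\gamma})$ in all applications, or simply read $o(N^{-1})$ as giving the needed rate.

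The first step is to show $T_N$ and $T_N^{\le}$ are close. By the sub-Gaussian tail bound \eqref{eq:tail_subgauss} and a union bound over $dN$ entries, $\PR(\exists\, i,j:|\eta_{ij}|>D_N)\le 2dN\exp(-C\log(dN)/c)\le (dN)^{-2}$. On the complement, $T_N-T_N^{\le}$ equals the deterministic shift $\max_j N^{-1/2}\sum_i|\E[\eta_{ij}1\{|\eta_{ij}|>D_N\}]|$. By Cauchy--Schwarz and the tail bound, this shift is at most $N^{1/2}B_N(dN)^{-2}$, which is polynomially small. Hence
\[
|\PR(T_N\le t)-\PR(T_N^{\le}\le t)|\le (dN)^{-2}+\sup_t\PR\big(t<T_N^{\le}\le t+\epsilon_N\big)
\]
with $\epsilon_N$ polynomially small. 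The last term will be absorbed by the anti-concentration bound \eqref{anticonc} for $T_N^{\le}$, giving an error of order $N^{-c}+\epsilon_N\log(dN)$.

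The second step is to verify the hypotheses of Theorem \ref{pt1} for $\eta^{\le}$ with $q=\lfloor N^{4\gamma+\delta}\rfloor$ and $r=\lfloor N^\gamma\rfloor$. For the variance bounds, $\bar\sigma_q\lor\bar\sigma_r\le\overline{c}$ should follow from the Yoshihara-type covariance inequality (Theorem 3 of \cite{yoshihara:1978}) under exponential mixing. A caveat: this gives a bound $\lesssim B_N^2$ rather than a constant. I would either normalize by $B_N$ or note that in the applications the block variances of $D$ are bounded; I expect this to need a short side argument. The lower bound $\bar\sigma_q\ge\underline c$ transfers from \eqref{e:vc} because truncation changes covariances by $O((dN)^{-1})$, as in Lemma \ref{lem:momem_B_u}. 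The rate conditions then become exponent bookkeeping. With $\log(dN)\le\overline C N^\gamma$ and $D_N\lesssim B_N N^{\gamma/2}$:
\begin{itemize}
\item $r/q\,\log(d)^2\lesssim N^{\gamma-4\gamma-\delta+2\gamma}=N^{-\gamma-\delta}$;
\item $q^{1/2}D_N\log(dN)^4N^{-1/2}\lesssim B_N N^{2\gamma+\delta/2+\gamma/2+4\gamma-1/2}=B_NN^{6.5\gamma+\delta/2}N^{-1/2}\le N^{-c_1+\delta/2}$, using \eqref{ass:th:seg:gaus} and $\delta<c_1$;
\item $(q+r\log(dN))\log(dN)^{1/2}D_NN^{-1/2}$ is dominated by $qN^{\gamma}B_NN^{-1/2}\lesssim B_NN^{5\gamma+\delta}N^{-1/2}$, which is again controlled by \eqref{ass:th:seg:gaus}, possibly after shrinking $\delta$ below $c_1/2$.
\end{itemize}
Theorem \ref{pt1} then yields \eqref{boundedGAP} for $T_N^{\le}$ against a Gaussian $Y^{\le}$ whose covariance is $(mq)^{-1}\sum_l\E[B_l^{\le}B_l^{\le\top}]$.

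The final step replaces $Y^{\le}$ by $Y$ via the Gaussian comparison Lemma \ref{lem:compar_gaus}. The entrywise covariance difference is $\Delta\lesssim q\cdot(dN)^{-1}$, from summing the truncation errors over pairs within a block, and the variance lower bound comes from $\bar\sigma_q\ge\underline c$. Combining the three steps gives \eqref{unboundedGAP}. The anti-concentration statement \eqref{anticoncUnb} follows from \eqref{anticonc} for $T_N^{\le}$ together with the closeness of $T_N$ and $T_N^{\le}$ from the first step.

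I expect the main obstacle to be the comparison step. Lemma \ref{lem:compar_gaus} needs a lower variance bound for every coordinate, whereas here only $\bar\sigma_q\ge\underline c$ (a maximum over $j$) is assumed, so coordinates may be degenerate. My first attempt would be to restrict attention to $t>x_0$: degenerate coordinates with variance below $N^{-c}$ exceed $x_0$ only with small probability by a Gaussian tail plus union bound, so they can be discarded, as in part (II) of the proof of Theorem \ref{theo1:dep}. The comparison is then applied only to coordinates with variance at least $N^{-c}$, paying a polynomial factor that is beaten by $\Delta^{1/3}$.
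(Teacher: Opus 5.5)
Your proposal follows essentially the same route as the paper. You truncate at level $\asymp B_N\sqrt{\log(dN)}$, show the removed part is negligible, and absorb the resulting shift with the anti-concentration bound \eqref{anticonc}. You then apply Theorem \ref{pt1} to the bounded array with $D_N\asymp B_NN^{\gamma/2}$, $q=\lfloor N^{4\gamma+\delta}\rfloor$, $r=\lfloor N^{\gamma}\rfloor$, and finally pass from the truncated to the original block covariance via Lemma \ref{lem:compar_gaus}. Your exponent bookkeeping for the three rate conditions is correct.

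The caveat you raise about $\bar\sigma_q\lor\bar\sigma_r\le\overline{c}$ is genuine, and the paper does not resolve it either; it merely calls it an easy consequence of the assumptions. The stated hypotheses do not give a constant upper bound: $B_N$ may grow, and only $\beta(r)=o(N^{-1})$ at lag $r$ is assumed. Even your $B_N^2$ bound via Yoshihara uses summable mixing at shorter lags, which is not among the hypotheses. Your first suggestion, normalizing by $B_N$, does not work, because it destroys $\bar\sigma_q\ge\underline c$. The clean fix is to carry the upper bound as an extra hypothesis. It does hold in the applications, by the block-counting argument leading to \eqref{e:Varbound}.

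Two smaller remarks. Since $m\le N/q$, the mixing term $(m-1)\beta(r)$ is already $o(N^{-4\gamma-\delta})$, so you need no additional mixing assumption to get a polynomial rate. Your device of discarding Gaussian coordinates with variance below $N^{-c}$ before invoking Lemma \ref{lem:compar_gaus} is sound, provided $c>\gamma$ so the union bound beats $d\le e^{\overline{C}N^\gamma}$. The paper's final step needs this, since only the maximum over $j$ of the block variances is bounded below, but the paper does not spell it out.
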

\begin{proof}

 For the proof, we define
 \begin{align}
         \eta_{ij}^{\le u}:=\eta_{ij}1\{|\eta_{ij}|\leq u\}, \qquad   \eta_{ij}^{> u}:=\eta_{ij}1\{|\eta_{ij}|> u\}
    \end{align}
    where $u>0$ will be specified later. The corresponding vectors are called $\eta_i^{\le u}, \eta_i^{>u}$. Note that
    \begin{align}
        &\PR\Big(\Big\|\sum_{i=1}^N\eta_i^{>u}\Big\|_\infty \neq 0\Big) \leq dN\max_{1 \leq i \leq N, 1 \leq j \leq d}\PR(\eta_{ij}^{> u}\neq 0)\\
        &=dN\max_{1 \leq i \leq N, 1\leq j \leq d}\PR(|\eta_{ij}|>u) \leq 2dN \exp\bigg( -\frac{u^2}{2B_N^2}\bigg).
    \end{align}
    In the last step, we have used the well-known tail-bound for subgaussian random variables (see Theorem 2.6.2 in \cite{Vershynin:2018}). Now, we may choose for an integer  $k \ge 2$ 
    \begin{align} \label{e:uchoice}
    u = \sqrt{2(k+1)} B_N\sqrt{\log(dN)},
    \end{align}
    giving for the right-hand side the further upper bound
    \begin{align*}
        2dN \exp\bigg(- (k+1)\log(dN) \bigg) = \frac{2}{(dN)^k}.
    \end{align*}
    Combining these results yields
    \begin{align}
    \label{pg2}
        \PR\Big(\|\sum_{i=1}^N\eta_{i}^{>u}\|_\infty \neq 0\Big) \le 2 (dN)^{-k}.
    \end{align}    
    A simple application of the Hölder inequality then also yields that by choosing $k=6$ we have 
    \begin{align}
    \label{pg3}
        \sqrt{N}\max_{1 \leq i \leq N, 1 \leq j \leq d}|\E[\eta_{ij}^{>u}]| \leq \frac{B_N}{(dN)^{6/2}}\leq CN^{-2}
    \end{align}
    Combining \eqref{pg2} and \eqref{pg3} we obtain that 
    \begin{align}
        \PR(N^{-1/2}\|\sum_{i=1}^N\eta_{i}^{>u}-\E[\eta_{i}^{>u}]\|_\infty>N^{-1})=o(N^{-1}).
    \end{align}
Equipped with this result (and still choosing $u$ as described in \eqref{e:uchoice} with $k=6$) we further deduce that 
     \begin{align}
    \label{pg4}
        &\PR\Big( \sup_{1 \leq j \leq d} N^{-1/2}\sum_{i=1}^N\eta_{ij} \leq t\Big)\\
        \geq& \PR\Big( \sup_{1 \leq j \leq d}N^{-1/2}\sum_{i=1}^N(\eta_{ij}^{\leq u}-\E[\eta_{ij}^{\leq u}]) \leq t-\sup_{1 \leq j \leq d} N^{-1/2}\sum_{i=1}^N (\eta_{ij}^{>u}-\E[\eta_{ij}^{>u}]) \Big)\\
        \geq &\PR\Big( \sup_{1 \leq j \leq d}N^{-1/2}\sum_{i=1}^N(\eta_{ij}^{\leq u}-\E[\eta_{ij}^{\leq u}]) \leq t-1/N \Big)+o(N^{-1})        
    \end{align}  
    where we used that $\eta_{ij}$ are centered random variables so that
    \begin{align}
        \eta_{ij}=(\eta_{ij}^{\leq u}-\E[\eta_{ij}^{\leq u}])+(\eta_{ij}^{> u}-\E[\eta_{ij}^{> u}])~.
    \end{align}
    Since the mixing property is stable under deterministic transformations (without increasing the mixing coefficients), the variables $(\eta_i^{\leq u})_{i=1,...,N}$ form a sequence of  bounded  $\beta$-mixing random variables and we may therefore apply \eqref{anticonc} from Theorem \ref{pt1} with the choices 
    \begin{align}
        D_N&=CB_NN^{\gamma/2} \\
        q&=N^{4\gamma+\delta}\\
        r&=N^\gamma
    \end{align} to obtain that
    \begin{align}
        \PR\Big( \sup_{1 \leq j \leq d} N^{-1/2}\sum_{i=1}^N\eta_{ij} \leq t\Big) \geq \PR\Big( \sup_{1 \leq j \leq d}N^{-1/2}\sum_{i=1}^N(\eta_{ij}^{\leq u}-\E[ \eta_{ij}^{\leq u}]) \leq t\Big)+o(1)        
    \end{align}
    Note that the conditions on $\overline{\sigma}_q$ and $\bar {\sigma}_r$ are an easy consequence of our assumptions and arguments similar to those leading to \eqref{pg2} and \eqref{pg3}.\\
    A similar argument for the reverse inequality and applying equation \eqref{boundedGAP} from Theorem \ref{pt1} yields a Gaussian approximation with the covariance determined by the covariance of the truncated variables $\eta^{\leq u}_{ij}$. Calculations similar to those following \eqref{e:truncated-covariance-bound} yield the desired conclusion.

\end{proof}

\section{Proof of Corollary \ref{cor:BrownianApprox:main}}
\label{appen:brownian-approx}
As before we provide a more general version for dependent data. 
\begin{cor}
\label{cor:BrownianApprox}
Denote by
\begin{align}
    \sigma_{ij}=\sum_{k \in \mathbb Z}\sigma_{ij}^{(k)}
\end{align}
the long run variances of $(\epsilon_{n,i})_{n \in \N, 1 \leq i \leq d(n)}$. Now let $B$ be a $d$ dimensional brownian motion with $\text{Cov}(B_i(1),B_j(1))=\sigma_{ij}$. Let $G^{(1)}$ and $G^{(2)}$ be the Gaussians from Theorem \ref{theo1:dep} and \ref{theo2:dep}, respectively. Further define the discretized modulus of continuity
\[\Psi_{\mathcal{S^\star}}^{\beta}(B)=\begin{cases}
    \max_{1 \leq i \leq d,(n,h)\in \mathcal{S}^\star}\frac{|B_i(\frac{n}{N})-B_i(\frac{n-h}{N})-(B_i(\frac{n+h}{N})-B_i(\frac{n}{N}))|}{(h/N)^\beta}, \quad \beta<1/2,\\
    \max_{1 \leq i \leq d,(n,h)\in \mathcal{S}^\star}\frac{|B_i(\frac{n}{N})-B_i(\frac{n-h}{N})-(B_i(\frac{n+h}{N})-B_i(\frac{n}{N}))|}{(h/N)^{1/2}\log(eN/h)^\beta}, \quad \beta>1/2.
\end{cases}    
\]
   \begin{enumerate}
       \item[i)] Let the assumptions of Theorem \ref{theo1:dep} hold.  Then
       \begin{align}
            \sup_{t \geq x_0}\Big| \PR(\Psi_{\mathcal{S^\star}}^{\beta}(B)\geq t)-\PR(\max_{1 \leq j \leq d, (n,h) \in \mathcal{S}^\star}|G^{(1)}_{j,n,h}|\geq t)\Big|\leq  o(1)
       \end{align}
       If $\beta>3/2$ we further have      
        \begin{align}
            \sup_{t \geq x_0}\Big| \PR(D^\beta\geq t)-\PR(\max_{1 \leq j \leq d, (n,h) \in \mathcal{S}^{\star}}|G^{(1)}_{j,n,h}|\geq t)\Big|\leq o(1)
        \end{align}
        \item[ii)] Let the assumptions of Theorem \ref{theo2:dep} hold.  Then
       \begin{align}
            \sup_{t \geq x_0}\Big| \PR(\Psi_{\mathcal{S}^\star}^{\beta}(B)\geq t)-\PR(\max_{1 \leq j \leq d, (n,h) \in \mathcal{S}^\star}|G^{(2)}_{j,n,h}|\geq t)\Big|\leq o(1)
       \end{align}
       We further have that
        \begin{align}
            \sup_{t \geq x_0}\Big| \PR(D^\beta\geq t)-\PR(\max_{1 \leq j \leq d, (n,h) \in \mathcal{S}^\star}|G^{(2)}_{j,n,h}|\geq t)\Big|\leq o(1)
        \end{align}
   \end{enumerate}
\end{cor}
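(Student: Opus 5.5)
The statement has two layers: a discrete identification and a continuum step. For the first layer, I would observe that the discretized modulus $\Psi^\beta_{\mathcal S^\star}(B)$ is exactly a maximum of absolute values of a centered Gaussian vector indexed by $(i,n,h)\in\{1,\dots,d\}\times\mathcal S^\star$. For $\beta<1/2$, each coordinate equals $N^{-1/2}\sum_i D_{i,j,n,h}$ with the $\epsilon_{ij}$ replaced by Brownian increments $\sqrt N\{B(i/N)-B((i-1)/N)\}$. These increments are i.i.d. $\mathcal N(0,\Sigma/N)$ after rescaling, with $\Sigma=(\sigma_{ij})$ the long-run covariance. I would therefore compute the covariance of $\Psi$'s coordinates explicitly and compare it entrywise with $\mathrm{Cov}(G^{(\cdot)})=\mathrm{Cov}(\gamma_{j_1}(n_1,h_1),\gamma_{j_2}(n_2,h_2))$.

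Under dependence the two differ only through the finite-sample autocovariance sums versus the full long-run sum $\sum_{k\in\mathbb Z}\sigma^{(k)}_{ij}$. Exponential $\beta$-mixing together with the sub-Gaussian bound, via the Yoshihara covariance inequality already used around \eqref{e:Varbound}, should give $\Delta\lesssim N^{-c}$ uniformly. Before comparing, I would restrict both maxima to $h>N^{1-\delta}$. This is legitimate by step (II) of the proofs of Theorems~\ref{theo1:dep} and \ref{theo2:dep}, since the Brownian coordinates have the same variance bounds. Lemma~\ref{lem:compar_gaus} then yields part (i) and the first claim of (ii). In the i.i.d. case of Corollary~\ref{cor:BrownianApprox:main} the covariances coincide exactly, so this step is immediate there.

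The second layer replaces the grid maximum by the supremum $D^\beta$ over all $0<y<x<x+y<1$. Trivially $D^\beta\ge\Psi^\beta_{\mathcal S^\star}(B)$, and the grid contains the nonzero scales. For the reverse direction, I would bound $D^\beta-\Psi^\beta_{\mathcal S^\star}(B)$ by splitting into small and large $y$. For $y\le N^{-1+\delta}$, I would argue that each coordinate of $\sup_{y\le \eta}\|\cdot\|/y^\beta$ is controlled by Lévy's modulus: with high probability it is at most $C\eta^{1/2-\beta}\log^{1/2}(1/\eta)$ times a sub-Gaussian variable. A union bound over $d\le e^{N^{C_1}}$ coordinates then costs a factor $N^{C_1/2}$, which is killed by $N^{-(1-\delta)(1/2-\beta)}$ under the dimension condition. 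For $y>N^{-1+\delta}$, I would round $(x,y)$ to the nearest grid point $(n/N,h/N)$. The error in the numerator is bounded by the oscillation of $B$ over intervals of length $1/N$, which is at most $C\sqrt{\log(dN)/N}$ uniformly in coordinates with high probability. The change of the weight $y^{-\beta}$ is relatively $O(N^{-\delta})$. Dividing by $y^\beta\ge N^{-\beta(1-\delta)}$ gives an error $o_P(1)$, indeed $N^{-c}$, provided $\log d\le N^{C_1}$ is small enough.

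Finally, I would combine $|D^\beta-\Psi|=O_P(N^{-c})$ with anti-concentration of $\max|G|$ on $[x_0,\infty)$. I would take this from Lemma~\ref{lem:anti_conc} with the variance lower bound from the $h\approx N/2$ coordinate, giving Kolmogorov closeness, which proves the $D^\beta$ claims.

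The main obstacle is the continuum step in the logarithmic case. There, $y^{1/2}\log^\beta(e/y)$ barely dominates Lévy's modulus, and the approximation is the reason for the stronger requirement $\beta>3/2$. I would attempt it by a chaining bound on dyadic scales, where the increment of the logarithmic weight near the grid must beat the $\sqrt{\log(dN)}$ oscillation. This gives the claim only when the weight exponent leaves sufficient room.
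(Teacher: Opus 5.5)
Your first layer matches the paper's argument. You discard the scales where the variances are small (the paper cuts at $h>N^{1/2}$ in the logarithmic case), bound the covariance discrepancy by $C(h_1h_2)^{-1/2}$ through summable autocovariances (Lemma~\ref{Lemma:LongCovEst}), and apply Lemma~\ref{lem:compar_gaus}. Your small/large-$y$ split for the polynomial $D^\beta$ claim likewise mirrors Lemma~\ref{Lemma:Brownian:Discr:Univ}, with the dimension factor tracked more carefully.

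The genuine gap is the $D^\beta$ claim of part (i), which you leave unproved. You correctly sense that $\sqrt{y}\log^\beta(e/y)$ barely dominates L\'evy's modulus. But you then place the difficulty in the rounding near the grid, where there is none. For $y\ge N^{-c}$ the numerator error is of order $\sqrt{\log(dN)/N}$ against a weight of at least $N^{-c/2}$, and the relative change of the weight is $O(N^{c-1})$. So the rounding error is polynomially small for the logarithmic weight too, and chaining that step gains nothing.

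The problem sits on the small scales $y\le N^{-c}$. There L\'evy's modulus only gives the per-coordinate bound $M_i\sqrt{\log(1/y)}/\log^\beta(e/y)\lesssim M_i\log(N)^{1/2-\beta}$. Hence $|D^\beta-\Psi^\beta_{\mathcal S^\star}(B)|$ is only logarithmically small, and your template ``$O_P(N^{-c})$ plus anti-concentration'' breaks down. The paper accepts a discretization error of order $\log(N)^{1/2-\beta+\delta}$ (Lemma~\ref{Lemma:Brownian:Discr:Univ}) and pays the factor $\log(d|\mathcal S^\star|)\asymp\log N$ from Lemma~\ref{Lemma:AntiConc}. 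This gives the error $\log(N)^{3/2-\beta+\delta}$, which is exactly where $\beta>3/2$ enters.

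Within your own split there is a simpler repair: do not bound the difference on small scales at all. Instead, show that the small-scale part of $D^\beta$ (which dominates that of $\Psi^\beta_{\mathcal S^\star}(B)$) exceeds $x_0$ with probability $o(1)$, using $\max_i M_i=O_P(\sqrt{\log d})$ and polynomial $d$. Then invoke anti-concentration only for the polynomially small rounding error on scales $y\ge N^{-c}$.

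A second, smaller slip affects the final step in both cases. Lemma~\ref{lem:anti_conc} needs $\mathbb E Y_i^2\ge b^2$ for every coordinate. A lower bound at the single coordinate $h\approx N/2$ is condition~\ref{cond1_var} of Lemma~\ref{lem:th_chetverikov}, not a hypothesis of Lemma~\ref{lem:anti_conc}. On the full grid with polynomial weight the smallest variance is of order $N^{-(1-2\beta)}$, and under Theorem~\ref{theo2:dep} individual long-run variances may even degenerate. The resulting factor of at least $N^{1/2-\beta}$ is therefore not beaten by your small-scale error $N^{C_1/2-(1-\delta)(1/2-\beta)}$. Use Lemma~\ref{Lemma:AntiConc}, as the paper does, which discards low-variance coordinates at the cost of a factor $\log(dN)$. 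Combine it with the repair above, so that anti-concentration is only needed for the polynomially small rounding error.
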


\begin{proof}
In the following we will denote $G^{(i)}$ simply by $G$ as it will be clear from the context which is meant. \\
\textbf{Logarithmic weighting:}\\
We define $\mathcal{S}^\star_{1/2}=\{(n,h)\in \mathcal{S}| h> N^{1/2}\}$.   By Lemma \ref{Lemma:LongCovEst} and Lemma \ref{lem:compar_gaus} we have
    \begin{align}
    \label{pe:7}
        \sup_{t \geq x_0}\Big| \PR(\max_{1 \leq j \leq d, (n,h) \in \mathcal{S}^\star_{1/2}}|G_{j,n,h}|\geq t)-\PR(\Psi_{\mathcal{S}^\star_{1/2}}^{\beta}(B)\geq t)\Big|\leq CN^{-1/6}\log(N)\log^{2/3}(|\mathcal{S}^\star_{1/2}|)=o(1)
    \end{align}
    By the union bound (see the arguments following \eqref{e:Gaussex} and recall that $|S^\star|$ is of polynomial size if $d$ is) we further also obtain that there exists a $C_1>0$ such that
    \begin{align}
    \label{pe:8}
        \Pr( \max_{1 \leq j \leq d, (n,h) \in \mathcal{S}^\star\setminus\mathcal{S}^\star_{1/2}}|G_{j,n,h}|\geq x_0)&\leq C\sqrt{\log(N)}\exp\Big(-C_1\log(N)^{2\beta}\Big)=o(1)\\
        \Pr( \Psi_{\mathcal{S}^\star\setminus\mathcal{S}^\star_{1/2}}^{\beta}(B)\geq  x_0)&\leq C\sqrt{\log(N)}\exp\Big(-C_1\log(N)^{2\beta}\Big)=o(1)
    \end{align}
    Combining the last three inequalities yields the first part of the statement, i.e.
    \begin{align}
        \sup_{t \geq x_0}\Big| \PR(\max_{1 \leq j \leq d, (n,h) \in \mathcal{S}^\star}|G_{j,n,h}|\geq t)-\PR(\Psi_{\mathcal{S}^\star}^{\beta}(B)\geq t)\Big|=o(1)~.
    \end{align}
    Next we observe that
    \begin{align}
    \label{pe:9}
        &\sup_{t \geq x_0}\Big| \PR(\max_{1 \leq i \leq d}\|B_i\|_{\beta}\geq t)-\PR(\Psi_{\mathcal{S}^\star}^{\beta}(B)\geq t)\Big|\\
        = &\sup_{t \geq x_0}\Big| \PR\Big(\Psi_{\mathcal{S}^\star}^{\beta}(B)\geq t-(\max_{1 \leq i \leq d}\|B_i\|_{\beta}-\Psi_{\mathcal{S}^\star}^{\beta})\Big)-\PR\Big(\Psi_{\mathcal{S}^\star}^{\beta}(B)\geq t\Big)\Big|~.
    \end{align}
    By Lemma \ref{Lemma:Brownian:Discr:Univ} we have that 
    \begin{align}
    \label{pe:10}
        \PR\Big(\Big|\max_{1 \leq i \leq d}\|B_i\|_{\beta}-\Psi_{\mathcal{S}^\star}^{\beta}\Big|\leq \log(N)^{1/2-\beta+\delta}\Big)=1-o(1)~.
    \end{align}
    Using \eqref{pe:10}, we have, uniformly in $t\geq x_0$ and for any $0<\delta<\beta-1/2$ that
    \begin{align}
    \label{pe:14}
       &\PR\Big(\Psi_{\mathcal{S}^\star}^{\beta}(B)\geq t-(\max_{1 \leq i \leq d}\|B_i\|_{\beta}-\Psi_{\mathcal{S}^\star}^{\beta})\Big)\\
       \leq&\PR\Big(\Psi_{\mathcal{S}^\star}^{\beta}(B)\geq t-\log(N)^{1/2-\beta+\delta}\Big)+o(1) 
    \end{align}
    Now we use Lemma \ref{Lemma:AntiConc} to obtain
    \begin{align}
    \label{pe:13}
        &\PR\Big(\Psi_{\mathcal{S}^\star}^{\beta}(B)\geq t-\log(N)^{1/2-\beta+\delta}\Big)\\
        =&\PR\Big(\Psi_{\mathcal{S}^\star}^{\beta}(B)\geq t\Big)+O\Big(\log(N)\log(N)^{1/2-\beta+\delta}\Big)
    \end{align}
    Combining equations \eqref{pe:9}, \eqref{pe:14} and \eqref{pe:13} for $\delta$ sufficiently small we obtain
    \begin{align}
        \PR\Big(\Psi_{\mathcal{S}^\star}^{\beta}(B)\geq t-(\max_{1 \leq i \leq d}\|B_i\|_{\beta}-\Psi_{\mathcal{S}^\star}^{\beta})\Big)\leq \PR\Big(\Psi_{\mathcal{S}^\star}^{\beta}(B)\geq t\Big)+o(1)
    \end{align}
    A similar argument also yields the reverse inequality. Combining this with \eqref{pe:9} yields
    \begin{align}
        \sup_{t \geq x_0}\Big| \PR(\max_{1 \leq i \leq d}\|B_i\|_{\beta}\geq t)-\PR(\Psi_{\mathcal{S}^\star}^{\beta}(B)\geq t)\Big|= o(1)
    \end{align}
    as desired.\\

    \textbf{Polynomial weighting:}\\
    Follows by exactly the same arguments as the logarithmic case. We use the sets from the proof of Lemma \ref{Lemma:TruncateScalesPoly} instead of $\mathcal{S}^\star_{1/2}$. The analogue of equation \eqref{pe:10} uses the threshold $N^{-\gamma-\delta}$ for some sufficiently small $\delta$, instead. 
\end{proof}

\begin{lemma}
\label{Lemma:Brownian:Discr:Univ}
   Let $B$ be a univariate Brownian motion with variance $\sigma^2$. Then we have that there exists a random variable $M$ with finite $\Phi_2$ norm that depends only on $\sigma^2$ such that, for $\beta<1/2$, we have 
   \begin{align}
        &\Bigg|\sup_{(n,h)\in \mathcal{S}^\star}\frac{\Big|B(\frac{n}{N})-B(\frac{n-h}{N})-\Big(B(\frac{n+h}{N})-B(\frac{n}{N})\Big)\Big|}{(h/N)^\beta}- \sup_{\substack{0< h\leq 1/2\\ h \leq   t \leq 1-h}}\frac{\Big|B(t)-B(t-h)-\Big(B(t+h)-B(t)\Big)\Big|}{h^\beta}\Bigg|\\
        \leq &CM\sqrt{\log(N)}N^{-\frac{\beta\land (1/2-\beta)}{2\beta+1}}
    \end{align}
    and for $\beta>1/2$ we have
    \begin{align}
        &\Bigg|\sup_{(n,h)\in \mathcal{S}^\star}\frac{\Big|B(\frac{n}{N})-B(\frac{n-h+1}{N})-\Big(B(\frac{n+h}{N})-B(\frac{n}{N})\Big)\Big|}{\sqrt{h/N}\log(eN/h)^\beta}- \sup_{\substack{0< h\leq 1/2\\ h \leq   t \leq 1-h}}\frac{\Big|B(t)-B(t-h)-\Big(B(t+h)-B(t)\Big)\Big|}{\sqrt{h}\log(e/h)^\beta}\Bigg|\\
        \leq &CM\log(N)^{1/2-\beta}
    \end{align}
\end{lemma}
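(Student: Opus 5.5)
The plan is a deterministic comparison on the event where $B$ satisfies a Lévy-type modulus bound with a random constant. First I will fix $M$. By the Garsia--Rodemich--Rumsey inequality, or equivalently a chaining argument on dyadic grids as in Lemma \ref{lem:psum_conc}, there is a random variable $M$ with $\|M\|_{\Psi_2}\le c\,\sigma$ such that
\begin{align}
|B(s)-B(t)|\le M\sqrt{|s-t|\log(e/|s-t|)},\qquad s,t\in[0,1].
\end{align}
Every quantity below will be bounded on this (almost sure) event, so the constant $C$ is deterministic and all randomness sits in $M$.

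Next I note that the discrete supremum is dominated by the continuous one. Each $(n,h)\in\mathcal S^\star$ gives an admissible pair $(t,h')=(n/N,h/N)$ with $h'\le 1/2$ and $h'\le t\le 1-h'$, and the weights coincide. For the log case this needs the harmless reindexing $n-h$ versus $n-h+1$, which costs one increment of length $1/N$; that is absorbed in the bounds below. It therefore suffices to bound (continuous sup) $-$ (discrete sup) from above. I split the continuous supremum at a threshold $\eta=\eta_N\in(1/N,1/2)$, to be chosen later.

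\emph{Small scales, $h<\eta$.} By the modulus bound, the weighted second difference is at most $2M\sqrt{\log(e/h)}\,h^{1/2-\beta}$ in the polynomial case, and at most $2M\log(e/h)^{1/2-\beta}$ in the logarithmic case. If the continuous supremum is attained at such $h$, the difference is bounded by this quantity. In the polynomial case this gives $CM\sqrt{\log N}\,\eta^{1/2-\beta}$ for $h\ge 1/N$; for $h<1/N$ the bound $h^{1/2-\beta}\sqrt{\log(e/h)}$ is even smaller.

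\emph{Large scales, $h\ge\eta$.} I round $(t,h)$ to the nearest grid pair $(n/N,k/N)$ with $k\ge 1$, which changes each of the three endpoints by at most $2/N$. The change in the numerator is at most $CM\sqrt{N^{-1}\log N}$. The numerator itself is $\lesssim M\sqrt{h\log N}$, and the relative change of the weight is $\lesssim 1/(Nh)$. Hence the error is
\begin{align}
\lesssim M\sqrt{\log N}\big(\eta^{-\beta}N^{-1/2}+\eta^{-1/2-\beta}N^{-1}\big)\lesssim M\sqrt{\log N}\,\eta^{-\beta}N^{-1/2},
\end{align}
since $\eta\ge 1/N$. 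In the polynomial case I balance $\eta^{1/2-\beta}$ against $\eta^{-\beta}N^{-1/2}$ (and, more crudely, against the second term). A choice such as $\eta=N^{-1/(2\beta+1)}$ makes both terms at most $N^{-(\beta\wedge(1/2-\beta))/(2\beta+1)}$ up to $M\sqrt{\log N}$, which yields the first claim.

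For $\beta>1/2$ I take $\eta=N^{-1/2}$. The large-scale rounding error is then $O\big(M N^{-1/4}\,\mathrm{polylog}(N)\big)=o\big(M\log(N)^{1/2-\beta}\big)$. For small scales, any $h\le N^{-1/2}$ has $\log(e/h)\ge\tfrac12\log N$, so both suprema restricted there are at most $CM\log(N)^{1/2-\beta}$, using $1/2-\beta<0$. This gives the second claim.

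The main obstacle is the first step: producing a single random constant $M$ that is genuinely sub-Gaussian, uniformly over all pairs $s,t$, including the endpoint behaviour near $h\to 0$ where the log factor matters. I would first try to adapt the chaining proof behind Lemma \ref{lem:psum_conc} to Brownian motion; as a fallback I would use the GRR inequality with $\Psi(x)=e^{x^2/4}$ and $p(u)=\sqrt u$. The optimization of $\eta$ is routine once that bound is in place.
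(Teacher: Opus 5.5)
Your proposal is correct and follows essentially the paper's own argument. The paper likewise takes a Lévy-type modulus bound $|B(s)-B(t)|\le M\sqrt{|s-t|\log(e/|s-t|)}$ with a sub-Gaussian random constant; it quotes Theorem~1 of Chevallier for this rather than running GRR or chaining. It then splits the scales at a polynomial threshold $N^{-c}$, bounds small scales directly by the modulus, and bounds large scales by a rounding plus mean-value estimate before choosing the threshold. The paper's large-scale bookkeeping is cruder than yours: it bounds a supremum over nearby pairs through a common denominator, whereas you use exact one-sided domination of the grid supremum and a relative weight-change estimate. With your choice $\eta=N^{-1/(2\beta+1)}$, this cleanly gives the stated exponent.
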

\begin{proof}

    \textbf{Polynomial weighting:}\\
    Let us first consider the difference
    \begin{align}
    \label{pe:12}
        \Bigg|\sup_{(n,h)\in \mathcal{S}^\star}\frac{\Big|B(\frac{n}{N})-B(\frac{n-h+1}{N})-\Big(B(\frac{n+h}{N})-B(\frac{n}{N})\Big)\Big|}{(h/N)^\beta}- \sup_{\substack{0< h\leq 1/2\\ h \leq   t \leq 1-h}}\frac{\Big|B(t)-B(t-h)-\Big(B(t+h)-B(t)\Big)\Big|}{h^\beta}\Bigg|.
    \end{align}
    Defining $\rho((t,s),(t',s'))=\max\{|t-t'|,|s-s'|\}$, we note that we can upper bound \eqref{pe:12} by 
    \begin{align}
    \label{pe:4}
        &2\sup_{\rho((s,t),(s',t'))\leq N^{-1}}\left|\frac{B(t)-B(s)}{(t-s)^\beta}-\frac{B(t')-B(s')}{(t'-s')^\beta}      \right|  .
    \end{align}
    We now consider, for some $c$ to be chosen later, the cases $|t-s|,|t'-s'|\leq N^{-c}$ and  $|t-s|,|t'-s'|> N^{-c}$ separately. 
    For the former we observe that
    \begin{align}
        2\sup_{\substack{\rho((s,t),(s',t'))\leq N^{-1}\\\rho((s,s'),(t,t')\leq N^{-c}}}\left|\frac{B(t)-B(s)}{(t-s)^\beta}-\frac{B(t')-B(s')}{(t'-s')^\beta} \right| 
        \leq  4\sup_{|t-s|\leq N^{-c}}\left|\frac{B(t)-B(s)}{(t-s)^\beta}\right|.
    \end{align}
    Now by Theorem 1 from \cite{chevallier:2023} we have for $|t-s|\leq N^{-c}$ that 
    \begin{align}
    \label{pe:5}
         \left|\frac{B(t)-B(s)}{(t-s)^\beta}\right|&\leq  \left|\frac{B(t)-B(s)}{\sqrt{(t-s)\log(\frac{1}{t-s})}}\right| \left|\frac{\sqrt{\log(\frac{1}{t-s})}}{(t-s)^{\beta-1/2}}\right|\\
         &\leq C M \sqrt{\log(N)}N^{-(1/2-\beta)c}
    \end{align}
    for some Subgaussian random variable $M$ whose $\Phi_2$ norm depends only on $\sigma^2$.\\

    For $|t-s|,|t'-s'|> N^{-c}$ we bring the fractions to a common denominator and add 0 to obtain
    \begin{align}
        &2\sup_{\substack{\rho((s,t),(s',t'))\leq N^{-1}\\\rho((s,s'),(t,t')> N^{-c}}}\left|\frac{B(t)-B(s)}{(t-s)^\beta}-\frac{B(t')-B(s')}{(t'-s')^\beta}\right|\\
        \leq & CN^{2c\beta}\sup_{\substack{\rho((s,t),(s',t'))\leq N^{-1}\\\rho((s,s'),(t,t')> N^{-c}}}\left|  (B(t)-B(s))((t-s)^\beta-(t'-s')^\beta)\right|\\
        &+ CN^{2c\beta}\sup_{\substack{\rho((s,t),(s',t'))\leq N^{-1}\\\rho((s,s'),(t,t')> N^{-c}}}\left|(t-s)^\beta(B(t)-B(s)-(B(t')-B(s')) \right|.
    \end{align}
    Using the mean value theorem and Theorem 1 from \cite{chevallier:2023} again we can upper bound this by
    \begin{align}
    \label{pe:6}
        CMN^{2c\beta}\Big(N^{(1-\beta)c-1}+\sqrt{\log(N)}N^{-1/2}\Big).
    \end{align}
    Choosing $c=\frac{1}{2(1+\beta)}$ we hence may combine the bounds \eqref{pe:5} and \eqref{pe:6} to obtain that
    \begin{align}
         &2\sup_{\rho((s,t),(s',t'))\leq N^{-1}}\left|\frac{B(t)-B(s)}{(t-s)^\beta}-\frac{B(t')-B(s')}{(t'-s')^\beta}    \right|\leq CM\sqrt{\log(N)}N^{-\frac{\beta\land (1/2-\beta))}{2\beta+1}}~,
    \end{align}
    as desired\\
    
    \textbf{Logarithmic weighting:}\\
    Arguing exactly as in the polynomial case (here we choose $c>0$ sufficiently small independent of the choice of $\beta$) we obtain
    \begin{align}
        &2\sup_{\rho((s,t),(s',t'))\leq N^{-1}}\left|\frac{B(t)-B(s)}{\sqrt{(t-s)}\log(\frac{e}{t-s})^\beta}-\frac{B(t')-B(s')}{\sqrt{(t'-s')}\log(\frac{e}{t'-s'})^\beta}    \right|\leq CM\log(N)^{1/2-\beta}~,
    \end{align}
    which finishes the proof.

\end{proof}

\begin{lemma}
\label{Lemma:LongCovEst}
For $(n,h)$ define
\[
    A(n,h) := \{n-h+1,\dots,n\},
    \qquad
    B(n,h) := \{n+1,\dots,n+h\},
\]
and
\[
    w_{n,h}(t)
    := \mathbf{1}_{\{t \in A(n,h)\}}
     - \mathbf{1}_{\{t \in B(n,h)\}} .
\]

Let
\[
    f_\beta(x) =
    \begin{cases}
         x^{\beta-1/2}, & \beta < 1/2, \\[1mm]
         \log(ex)^{-\beta}, & \beta > 1/2 .
    \end{cases}
\]
Then, if Assumption \ref{ass:segm} holds, there exists a constant $C>0$,
independent of $1 \le i_1,i_2 \le d$ and
$(n_1,h_1),(n_2,h_2) \in \mathcal S$, such that
\begin{align}
    &\Big|
    \operatorname{Cov}\big(
        \gamma_{i_1}(n_1,h_1),
        \gamma_{i_2}(n_2,h_2)
    \big)
    -
    \sigma_{i_1i_2}
    \kappa\big((n_1,h_1),(n_2,h_2)\big)
    f_\beta(N/h_1)f_\beta(N/h_2)
    \Big|
    \notag\\
    &\hspace{2cm}\le
    \frac{C}{\sqrt{h_1h_2}}
    f_\beta(N/h_1)f_\beta(N/h_2),
    \label{eq:LongCovEst}
\end{align}
where
\begin{align}
    \kappa\big((n_1,h_1),(n_2,h_2)\big)
    &:=
    \frac{1}{\sqrt{h_1h_2}}
    \sum_{t=1}^N
    w_{n_1,h_1}(t)w_{n_2,h_2}(t)
    \notag\\
    &=
    \frac{
        |A_1\cap A_2|
        +|B_1\cap B_2|
        -|A_1\cap B_2|
        -|B_1\cap A_2|
    }{\sqrt{h_1h_2}},
\end{align}
with
\[
    A_j=A(n_j,h_j),
    \qquad
    B_j=B(n_j,h_j),
    \qquad j=1,2.
\]

Moreover, define
\[
    d:=
    \operatorname{dist}
    \big(A_1\cup B_1,A_2\cup B_2\big).
\]
There exist constants $C,c>0$, independent of $i_1,i_2$ and the two
elements of $\mathcal S$, such that, whenever $d>0$,
\begin{align}
    \Big|
    \operatorname{Cov}\big(
        \gamma_{i_1}(n_1,h_1),
        \gamma_{i_2}(n_2,h_2)
    \big)
    \Big|
    \le
    \frac{C}{\sqrt{h_1h_2}}
    f_\beta(N/h_1)f_\beta(N/h_2)e^{-cd}.
    \label{eq:LongCovDistance}
\end{align}

In particular,
\[
    \Big|
    \operatorname{Cov}\big(
        \gamma_{i_1}(n_1,h_1),
        \gamma_{i_2}(n_2,h_2)
    \big)
    \Big|
    \le
    C f_\beta(N/h_1)f_\beta(N/h_2),
\]
and
\[
    \operatorname{Var}\big(\gamma_i(n,h)\big)
    =
    2\sigma_{ii}f_\beta(N/h)^2
    +
    O\left(
        h^{-1}f_\beta(N/h)^2
    \right).
\]
\end{lemma}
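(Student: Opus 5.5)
The plan is to write each statistic as a weighted linear functional of the noise. Under Assumption \ref{ass:segm} the scan satisfies
\[
\gamma_i(n,h)=\frac{f_\beta(N/h)}{\sqrt h}\sum_{t=1}^N w_{n,h}(t)\,\varepsilon_{t,i}.
\]
To check the normalisation, for $\beta<1/2$ we have $N^{1/2-\beta}h^\beta=\sqrt h\,(N/h)^{1/2-\beta}=\sqrt h/f_\beta(N/h)$. For $\beta>1/2$ the denominator is $\sqrt h\log^\beta(eN/h)$, which again equals $\sqrt h/f_\beta(N/h)$. Bilinearity then gives
\[
\operatorname{Cov}\big(\gamma_{i_1}(n_1,h_1),\gamma_{i_2}(n_2,h_2)\big)=\frac{f_\beta(N/h_1)f_\beta(N/h_2)}{\sqrt{h_1h_2}}\sum_{t,s}w_1(t)w_2(s)\,\sigma^{(t-s)}_{i_1i_2},
\]
where $\sigma^{(k)}_{ij}=\operatorname{Cov}(\varepsilon_{0,i},\varepsilon_{k,j})$ and stationarity is assumed, as implicit in the definition of long-run variances. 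By Davydov's or Yoshihara's covariance inequality combined with the uniform $\Psi_2$ bound, we get $|\sigma^{(k)}_{ij}|\le CK^2\beta(|k|)^{1/2}\le Ce^{-c|k|^a}$ uniformly in $i,j$. In particular $\sum_k|\sigma^{(k)}_{ij}|$ and $\sum_k|k||\sigma^{(k)}_{ij}|$ are bounded uniformly.

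For \eqref{eq:LongCovEst} I would substitute $s=t-k$ and write the double sum as
\[
\sum_k\sigma^{(k)}_{i_1i_2}\sum_t w_1(t)w_2(t-k).
\]
Replacing $w_2(t-k)$ by $w_2(t)$ turns this into $\sigma_{i_1i_2}\sum_t w_1(t)w_2(t)$ plus two error terms.
\begin{itemize}
\item The first error is $\sum_k\sigma^{(k)}\sum_t w_1(t)\{w_2(t-k)-w_2(t)\}$. Since $w_2$ is a step function with at most three jumps of size at most $2$, we have $\sum_t|w_2(t-k)-w_2(t)|\le 4|k|$, so this term is at most $4\sum_k|k||\sigma^{(k)}|\le C$.
\item The second error comes from the tails $\sum_{k\notin\mathbb Z}$, which vanish, and from boundary effects at $t-k\notin\{1,\dots,N\}$. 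These are handled the same way and are also at most $C$.
\end{itemize}
Dividing by $\sqrt{h_1h_2}$ and multiplying by the weight factors gives exactly the stated bound, with $\kappa$ identified as $(h_1h_2)^{-1/2}\sum_t w_1w_2$ and the inclusion–exclusion formula for $\kappa$ immediate.

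For \eqref{eq:LongCovDistance}, positive distance $d$ means the supports are disjoint, so $\kappa=0$. Moreover every pair $(t,s)$ with $w_1(t)w_2(s)\neq0$ satisfies $|t-s|\ge d$. The double sum is therefore bounded by $\sum_{|k|\ge d}|\sigma^{(k)}|\cdot\#\{t:w_1(t)\neq0,\ w_2(t-k)\neq0\}$. The count is at most $2\min(h_1,h_2)$, but a sharper count is at most the number of pairs at distance exactly $|k|$, which is bounded by a constant times the multiplicity. The cleanest route is the bound $\sum_t|w_1(t)|\,|\{s:|t-s|=|k|\}\cap \mathrm{supp}\,w_2|\le 2\cdot 2$ for each $t$ near the boundary. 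More carefully, for fixed $k$ the pairs $(t,t-k)$ lying in the two supports require $t$ within $|k|$ of the gap, giving at most $|k|-d+1\le |k|$ values of $t$. Hence the sum is at most $\sum_{|k|\ge d}|k|e^{-c|k|^a}$. For $a\ge1$ this is $\le Ce^{-c'd}$. For $a<1$ one obtains $e^{-c'd^a}$, so I would either note that the assumption effectively gives this rate or read the statement with $d^a$; this is the one delicate point.

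The final two displays follow directly. We have $|\kappa|\le 2\sqrt{h_1h_2}/\sqrt{h_1h_2}\cdot$ overlap $\le2$, using $|A_1\cap A_2|\le\min(h_1,h_2)\le\sqrt{h_1h_2}$, and $\sigma_{ij}$ is uniformly bounded, which gives the first. For the variance, $\kappa((n,h),(n,h))=2h/h=2$.

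The main obstacle I expect is the uniform summability of the autocovariances, i.e.\ deriving $|\sigma^{(k)}|\le Ce^{-ck^a}$ uniformly in the dimension, together with the exponent mismatch $e^{-cd}$ versus $e^{-cd^a}$ in the distance bound.
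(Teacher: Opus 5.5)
Your proposal is correct and follows essentially the same route as the paper: the linear representation $\gamma_i(n,h)=\frac{f_\beta(N/h)}{\sqrt h}\sum_t w_{n,h}(t)\varepsilon_{t,i}$, stationarity, the lag decomposition into $\sigma_{i_1i_2}S(0)$ plus an error controlled by $|S(k)-S(0)|\le C|k|$ and $\sum_k|k||\Gamma_{i_1i_2}(k)|<\infty$, an exponentially decaying double sum for disjoint supports, and $|\kappa|\le 2$ with $\kappa=2$ on the diagonal. Your caveat about $a<1$ is well taken: the paper simply asserts $|\Gamma_{i_1i_2}(k)|\le Ce^{-c|k|}$ ``by geometric $\beta$-mixing,'' so the stated $e^{-cd}$ rate is justified only for $a\ge 1$, and otherwise your $e^{-cd^a}$ is the honest conclusion (your lag label $\sigma^{(t-s)}$ should read $\sigma^{(s-t)}$, which is harmless since you sum over all $k$).
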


\begin{proof}
Write
\[
    \gamma_i(n,h)
    =
    \frac{f_\beta(N/h)}{\sqrt h}
    \sum_{t=1}^N w_{n,h}(t)\varepsilon_{t,i},
\]
where
\[
    w_{n,h}(t)
    =
    \mathbf 1_{\{t\in A(n,h)\}}
    -
    \mathbf 1_{\{t\in B(n,h)\}}.
\]
Hence
\begin{align}
    &\operatorname{Cov}\big(
        \gamma_{i_1}(n_1,h_1),
        \gamma_{i_2}(n_2,h_2)
    \big)
    \notag\\
    &\quad =
    \frac{f_\beta(N/h_1)f_\beta(N/h_2)}
         {\sqrt{h_1h_2}}
    \sum_{s,t=1}^N
    w_{n_1,h_1}(s)w_{n_2,h_2}(t)
    \operatorname{Cov}
    \big(\varepsilon_{s,i_1},\varepsilon_{t,i_2}\big).
    \label{eq:cov-double-sum}
\end{align}

By stationarity,
\[
    \operatorname{Cov}
    \big(\varepsilon_{s,i_1},\varepsilon_{t,i_2}\big)
    =
    \Gamma_{i_1i_2}(t-s),
    \qquad
    \Gamma_{i_1i_2}(k)
    :=
    \operatorname{Cov}
    \big(\varepsilon_{0,i_1},\varepsilon_{k,i_2}\big).
\]
Consequently,
\begin{align}
    &\operatorname{Cov}\big(
        \gamma_{i_1}(n_1,h_1),
        \gamma_{i_2}(n_2,h_2)
    \big)
    \notag\\
    &\quad =
    \frac{f_\beta(N/h_1)f_\beta(N/h_2)}
         {\sqrt{h_1h_2}}
    \sum_{k\in\mathbb Z}
    \Gamma_{i_1i_2}(k)
    \sum_{s=1}^N
    w_{n_1,h_1}(s)w_{n_2,h_2}(s+k),
    \label{eq:cov-lag-sum}
\end{align}
where we set $w_{n,h}(u)=0$ for
$u\notin\{1,\dots,N\}$.

Let
\[
    S(k)
    :=
    \sum_{s=1}^N
    w_{n_1,h_1}(s)w_{n_2,h_2}(s+k).
\]
Since
\[
    \sigma_{i_1i_2}
    =
    \sum_{k\in\mathbb Z}\Gamma_{i_1i_2}(k),
\]
we have
\begin{align}
    \sum_{k\in\mathbb Z}
    \Gamma_{i_1i_2}(k)S(k)
    &=
    \sigma_{i_1i_2}S(0)
    +
    \sum_{k\in\mathbb Z}
    \Gamma_{i_1i_2}(k)\big(S(k)-S(0)\big).
    \label{eq:cov-decomposition}
\end{align}

Because each $w_{n,h}$ takes values in $\{-1,0,1\}$ and has only a
fixed number of jump points, shifting the second block pattern by $k$
changes the overlap only within distance $|k|$ of its boundaries.
Therefore,
\[
    |S(k)-S(0)|
    \le C|k|
\]
uniformly in $n_1,n_2,h_1,h_2$. More explicitly,
\begin{align}
    |S(k)-S(0)|
    &\le
    \sum_{s=1}^N
    \left|
        w_{n_2,h_2}(s+k)-w_{n_2,h_2}(s)
    \right|
    \le C|k|.
\end{align}

By geometric $\beta$-mixing and the moment condition in Assumption
\ref{ass:segm}, for some $c>0$,
\[
    |\Gamma_{i_1i_2}(k)|
    \le
    C\beta(|k|)^{\eta/(2+\eta)}
    \le
    Ce^{-c|k|},
\]
uniformly in $i_1,i_2$. In particular,
\[
    \sum_{k\in\mathbb Z}
    |\Gamma_{i_1i_2}(k)|
    <\infty,
    \qquad
    \sum_{k\in\mathbb Z}
    |k|\,|\Gamma_{i_1i_2}(k)|
    <\infty
\]
uniformly in $i_1,i_2$. It follows that
\[
    \left|
    \sum_{k\in\mathbb Z}
    \Gamma_{i_1i_2}(k)S(k)
    -
    \sigma_{i_1i_2}S(0)
    \right|
    \le C.
\]
Together with \eqref{eq:cov-lag-sum}, this yields
\begin{align}
    &\left|
    \operatorname{Cov}\big(
        \gamma_{i_1}(n_1,h_1),
        \gamma_{i_2}(n_2,h_2)
    \big)
    -
    \frac{f_\beta(N/h_1)f_\beta(N/h_2)}
         {\sqrt{h_1h_2}}
    \sigma_{i_1i_2}S(0)
    \right|
    \notag\\
    &\hspace{2cm}\le
    \frac{C}{\sqrt{h_1h_2}}
    f_\beta(N/h_1)f_\beta(N/h_2).
\end{align}

Finally,
\begin{align}
    S(0)
    &=
    \sum_{t=1}^N
    w_{n_1,h_1}(t)w_{n_2,h_2}(t)
    \notag\\
    &=
    |A_1\cap A_2|
    +|B_1\cap B_2|
    -|A_1\cap B_2|
    -|B_1\cap A_2|.
\end{align}
Thus,
\[
    \kappa\big((n_1,h_1),(n_2,h_2)\big)
    =
    \frac{S(0)}{\sqrt{h_1h_2}},
\]
which proves \eqref{eq:LongCovEst}.

It remains to prove the distance-dependent estimate. Let
\[
    I_j:=A_j\cup B_j,
    \qquad j=1,2.
\]
Suppose that
\[
    d=\operatorname{dist}(I_1,I_2)>0.
\]
Then $I_1$ and $I_2$ are disjoint. Using
\eqref{eq:cov-double-sum}, the bound
$|w_{n_j,h_j}(t)|\le 1$, and the exponential decay of
$\Gamma_{i_1i_2}$, we obtain
\begin{align}
    &\left|
    \operatorname{Cov}\big(
        \gamma_{i_1}(n_1,h_1),
        \gamma_{i_2}(n_2,h_2)
    \big)
    \right|
    \notag\\
    &\quad\le
    \frac{C f_\beta(N/h_1)f_\beta(N/h_2)}
         {\sqrt{h_1h_2}}
    \sum_{s\in I_1}\sum_{t\in I_2}e^{-c|t-s|}.
    \label{eq:distance-double-sum}
\end{align}

Without loss of generality, suppose that $I_1$ lies to the left of
$I_2$, and write
\[
    r_1:=\max I_1,
    \qquad
    \ell_2:=\min I_2.
\]
Then
\[
    d=\ell_2-r_1
\]
and, for $s\in I_1$ and $t\in I_2$,
\[
    t-s
    =
    d+(r_1-s)+(t-\ell_2).
\]
Therefore,
\begin{align}
    \sum_{s\in I_1}\sum_{t\in I_2}e^{-c|t-s|}
    &=
    \sum_{s\in I_1}\sum_{t\in I_2}e^{-c(t-s)}
    \notag\\
    &\le
    e^{-cd}
    \left(
        \sum_{u=0}^{\infty}e^{-cu}
    \right)
    \left(
        \sum_{v=0}^{\infty}e^{-cv}
    \right)
    \notag\\
    &\le
    Ce^{-cd}.
\end{align}
Substitution into \eqref{eq:distance-double-sum} proves
\eqref{eq:LongCovDistance}.

To obtain the uniform covariance bound, note that
\[
    |S(0)|
    \le
    \sum_{t=1}^N
    |w_{n_1,h_1}(t)w_{n_2,h_2}(t)|
    \le
    2\min\{h_1,h_2\}.
\]
Hence
\[
    \left|
    \kappa\big((n_1,h_1),(n_2,h_2)\big)
    \right|
    \le
    \frac{2\min\{h_1,h_2\}}{\sqrt{h_1h_2}}
    \le 2.
\]
The claimed uniform bound now follows from \eqref{eq:LongCovEst} and
the uniform boundedness of $\sigma_{i_1i_2}$.

Finally, taking
\[
    (n_1,h_1)=(n_2,h_2)=(n,h),
    \qquad
    i_1=i_2=i,
\]
gives
\[
    S(0)
    =
    \sum_{t=1}^N w_{n,h}(t)^2
    =
    |A(n,h)|+|B(n,h)|
    =
    2h,
\]
and therefore
\[
    \kappa\big((n,h),(n,h)\big)=2.
\]
Applying \eqref{eq:LongCovEst} proves
\[
    \operatorname{Var}\big(\gamma_i(n,h)\big)
    =
    2\sigma_{ii}f_\beta(N/h)^2
    +
    O\left(
        h^{-1}f_\beta(N/h)^2
    \right).
\]
\end{proof}

\begin{lemma}
\label{Lemma:AntiConc}
    Let $(G_1,...,G_d)$ be a d-dimensional Gaussian vector with zero means and bounded variances. We then have for any $x_0>0$
    \begin{align}
        \sup_{t \geq x_0}\Big(\PR(\max_{1 \leq i \leq d}G_i\geq t)-\PR(\max_{1 \leq i \leq d}G_i\geq t+\epsilon)\Big)\leq C\Big(\epsilon  \log(d)+d^{-1}\Big),
    \end{align}
    where $C$ depends only on $x_0$.
\end{lemma}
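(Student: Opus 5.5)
We will bound $\PR(\max_i G_i\ge t)-\PR(\max_i G_i\ge t+\epsilon)=\PR(t\le \max_i G_i<t+\epsilon)$ for all $t\ge x_0$, splitting the coordinates according to their variance. The idea is that coordinates with tiny variance essentially never reach level $x_0$, while the remaining coordinates have variances bounded below, and for them the standard Nazarov-type anti-concentration inequality applies. Concretely, with $\bar\sigma^2$ the uniform upper bound on the variances, we fix a threshold $v_d^2:=c_0 x_0^2/\log(ed)$, where $c_0>0$ is a small constant to be chosen, and we split $\{1,\dots,d\}=L\cup H$ with $L=\{i:\Var(G_i)< v_d^2\}$ and $H=\{i:\Var(G_i)\ge v_d^2\}$.

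First we dispose of the low-variance block. The Gaussian tail bound and a union bound give
\[
\PR\Big(\max_{i\in L}G_i\ge x_0\Big)\le d\exp\big(-x_0^2/(2v_d^2)\big)=d\exp\big(-\log(ed)/(2c_0)\big)\le d^{-1},
\]
once $c_0\le 1/4$. Since $t\ge x_0$, on the complement of this event the event $\{t\le \max_i G_i<t+\epsilon\}$ forces $\max_{i\in H}G_i\in[t,t+\epsilon)$. Hence it suffices to show $\PR(t\le\max_{i\in H}G_i<t+\epsilon)\le C\epsilon\log(d)$.

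Second, we apply the anti-concentration inequality (Lemma~\ref{lem:anti_conc}, in the form of Nazarov's inequality / Chernozhukov--Chetverikov--Kato) to the vector $(G_i)_{i\in H}$, whose standard deviations are all at least $v_d$. This yields
\[
\PR\Big(t\le\max_{i\in H}G_i<t+\epsilon\Big)\le \frac{\epsilon}{v_d}\big(\sqrt{2\log d}+2\big)\le C\,\frac{\epsilon\sqrt{\log(ed)}\,\sqrt{\log(ed)}}{x_0}\le C(x_0)\,\epsilon\log(ed).
\]
Adding the two contributions gives the claim, with $C$ depending only on $x_0$ (and $\bar\sigma$), uniformly in $t\ge x_0$. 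For small $d$, the factor $\log(ed)$ versus $\log d$ is absorbed into the constant, using $d\ge 2$ or treating $d=1$ directly via the density bound $1/(\sqrt{2\pi}\,\sigma)$ together with the same variance split.

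The main obstacle is that Nazarov's bound needs a lower bound on all variances, which the statement does not assume. The variance-splitting step, with the threshold chosen as $\propto 1/\sqrt{\log d}$ so that the union bound over degenerate coordinates costs only $d^{-1}$ while the anti-concentration constant degrades only by a factor $\sqrt{\log d}$, is exactly what produces the $\epsilon\log d+d^{-1}$ form. One must also check that this choice of threshold makes both error terms come out as stated simultaneously, which it does.
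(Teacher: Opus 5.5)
Your proof is correct and follows the paper's argument. You discard the coordinates whose variance lies below a threshold of order $x_0^2/\log d$ by a union bound at cost $d^{-1}$, and you apply a variance-lower-bounded anti-concentration inequality to the remaining coordinates to get $\epsilon\log d$. The differences are cosmetic: you invoke Nazarov's inequality (Lemma~\ref{lem:anti_conc}) where the paper cites Theorem C.3 of \cite{chetverikov:wilhelm:kim:2021}, and your choice $c_0\le 1/4$ correctly keeps the factor $1/2$ in the Gaussian tail exponent, which the paper's choice $\delta=2/x_0^2$ drops (there one actually needs $\delta=4/x_0^2$ to obtain $d^{-1}$).
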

\begin{proof}
    Define $I_\delta=\Big\{ 1 \leq i \leq d | \text{Var}(G_i)\leq \frac{1}{\delta \log(d)}\Big\} $ and note that by the union bound we have
    \begin{align}
        \PR(\max_{i \in I_\delta}G_i\geq t)\leq Cd\exp(-t^2\delta\log(d))=Cd^{1-t^2\delta}.
    \end{align}
    Choosing $\delta=2/x_0^2$ we hence obtain for any $x_0>0$ that
    \begin{align}
        \sup_{t \geq x_0}\Big(\PR(\max_{1 \leq i \le d}G_i\geq t)-\PR(\max_{i \in I_\delta^c }G_i\geq t)\Big)\leq Cd^{-1}~.
    \end{align}
    We may then apply Theorem C.3 from \cite{chetverikov:wilhelm:kim:2021} to $G$ restricted to $I_\delta^c$ to obtain the desired conclusion.
\end{proof}

\section{Background results}\label{appen:aux_res}
In this section, we state auxiliary results that facilitate the proof of our main results. We start with standard concentration bounds for subgaussian random variables.

\begin{lemma}\label{lem:lem_sub_gaussian}
Let $X_1,\ldots,X_N$ be independent random variables such that $\|X_i\|_{\psi_2}\le K$ for all $1\leq i \leq N$. Then for any $t>0$,  scalars $a_1,\ldots,a_N$, and some constants $c_1,c_2$ that do not depend on random variables, the following statements hold:
\begin{align}\label{eq:sum_subgauss}
    \|\sum_{i=1}^{N}a_iX_i\|_{\psi_2} & \le \sqrt{\sum_{i=1}^Na_i^2}K ,\\ 
\label{eq:max_subgauss}
        \PR(\max_{1\le i\le N}|X_i|>t)& \le 2Ne^{-\frac{t^2}{2c_1K^2}},
    \\ \label{eq:tail_subgauss}
        \PR(|X_i|>t)& \le 2e^{-\frac{t^2}{2c_2K^2}}.
    \end{align}

\end{lemma}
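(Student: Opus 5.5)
The three claims in Lemma \ref{lem:lem_sub_gaussian} are of very different difficulty, so I would handle them in reverse order. The plan proves \eqref{eq:tail_subgauss} and \eqref{eq:max_subgauss} exactly as stated. For \eqref{eq:sum_subgauss} the plan is to attempt it as worded, but I expect the literal version with constant $1$ and no centering to fail.

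\textbf{The tail bound \eqref{eq:tail_subgauss}.} This follows directly from the definition of $\|\cdot\|_{\psi_2}$. Take any $t_0>K$, so that $\mathbb{E}\exp(X_i^2/t_0^2)\le 2$. Markov's inequality applied to $\exp(X_i^2/t_0^2)$ gives $\mathbb{P}(|X_i|>t)\le 2\exp(-t^2/t_0^2)$. Letting $t_0\downarrow K$ yields \eqref{eq:tail_subgauss} with $2c_2=1$.

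\textbf{The maximal bound \eqref{eq:max_subgauss}.} This is a union bound: $\mathbb{P}(\max_i|X_i|>t)\le\sum_{i=1}^N\mathbb{P}(|X_i|>t)$. Each summand is bounded by \eqref{eq:tail_subgauss}, which gives the claim with $c_1=c_2$. Independence is not needed for this step.

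\textbf{The sum bound \eqref{eq:sum_subgauss}.} This is the only substantive step, and I would first try to prove it as worded. The natural route is via moment generating functions.
\begin{itemize}
\item[(i)] Convert $\|X_i\|_{\psi_2}\le K$ into an mgf bound $\mathbb{E}e^{\lambda X_i}\le e^{C\lambda^2K^2}$.
\item[(ii)] Multiply these bounds using independence, obtaining $\mathbb{E}e^{\lambda\sum a_iX_i}\le e^{C\lambda^2K^2\sum a_i^2}$.
\item[(iii)] Convert back to a $\psi_2$-norm bound.
\end{itemize}
For centered variables each conversion costs an absolute constant. The main obstacle is therefore getting the constant down to exactly $1$ rather than an absolute constant.

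Before investing in sharp constants, I would test the statement on two examples, and both show that the literal version cannot hold.
\begin{itemize}
\item \emph{Non-centered case.} Take $X_i\equiv c$ deterministic. Then $\|X_i\|_{\psi_2}=c/\sqrt{\log 2}$. With $a_i=1$, the sum $Nc$ has $\psi_2$-norm $Nc/\sqrt{\log 2}$, which exceeds $\sqrt{N}\,c/\sqrt{\log 2}$ for $N\ge2$. So centering is necessary.
\item \emph{Centered case.} Take $X_i$ Rademacher, for which $\|X_i\|_{\psi_2}=1/\sqrt{\log 2}\approx1.20$. With $a_i=N^{-1/2}$, the normalized sum converges to $\mathcal N(0,1)$, whose $\psi_2$-norm is $\sqrt{8/3}\approx1.63$. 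Thus constant $1$ fails even for centered variables.
\end{itemize}

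What the argument above actually delivers is therefore $\|\sum a_iX_i\|_{\psi_2}\le C\sqrt{\sum a_i^2}\,K$ for independent \emph{centered} $X_i$, with $C$ an absolute constant. I would record this as the correct reading of \eqref{eq:sum_subgauss}. In its only use, the bound \eqref{e:SG:b1}, the $X_i$ are the centered $\varepsilon_{n,j}$, and the constant $C$ is absorbed into the generic $c$ in \eqref{eq2}. Consequently the downstream arguments are unaffected.
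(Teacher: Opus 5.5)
Your proposal is correct and follows essentially the paper's route: the paper simply defers to \cite{Vershynin:2018}, and your Markov-plus-union-bound argument and your mgf scheme (i)--(iii) are the standard proofs there (the latter is Vershynin's Proposition~2.6.1, which assumes mean zero and carries an absolute constant). Your objection to the literal form of \eqref{eq:sum_subgauss} is also right, and it can be checked at finite $N$ without a limit: for $N=2$ Rademacher variables with $a_1=a_2=2^{-1/2}$ one computes $\|a_1X_1+a_2X_2\|_{\psi_2}=\sqrt{2/\log 3}\approx 1.35>1/\sqrt{\log 2}\approx 1.20$, so only the centered version with an absolute constant holds, and that is all the paper's applications (to centered $\varepsilon_{n,j}$, with generic constants) actually require.
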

\begin{proof}
    Proofs can be found e.g. in \cite{Vershynin:2018}.
\end{proof}

 Since in this article we also deal with dependent subgaussian vectors, for typical bounds above cannot be applied, we introduce the concentration inequality for high dimensional $\beta$-mixing mean vectors. We point out that the result only requires polynomial, but not geometric, decay of the mixing coefficients (which is often imposed in the literature).

\begin{lemma}
\label{lem:SumSplit}
Let $(X_n)_{1 \leq n \leq N}$ be a collection of centered $\R^d $-valued random variables and suppose that for some constant $D_N$ it holds that
\begin{align*}
    \|X_{ij}\|_{\Psi_2}\leq D_N, \quad i=1,...,N, \,\,j=1,...,d.
\end{align*}
Then with $S_N:= \sum_{i=1}^NX_i$ we obtain, for arbitrary $m\leq N/2$, the following concentration result 
\begin{align*}
     \PR(\|S_N\|_\infty\geq t) \leq 2md\exp\Big(-\frac{t^2}{2D_N^2mN}\Big)+4N\beta(m).
\end{align*}
Here the mixing coefficient $\beta(m)$ is defined as in \eqref{e:defbet}.
\end{lemma}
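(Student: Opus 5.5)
The plan is the standard blocking-plus-coupling argument: split $S_N$ into $m$ interleaved subsequences whose summands are $m$ time steps apart, replace each subsequence by an independent copy at a cost controlled by $\beta(m)$, and then apply the sub-Gaussian bounds of Lemma \ref{lem:lem_sub_gaussian} together with a union bound over components and subsequences.

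\textbf{Step 1 (interleaving).} For $r=1,\ldots,m$ put $\mathcal{J}_r:=\{i\in\{1,\ldots,N\}: i\equiv r \bmod m\}$ and $S^{(r)}:=\sum_{i\in\mathcal{J}_r}X_i$. Then $S_N=\sum_{r=1}^m S^{(r)}$, and $|\mathcal{J}_r|\le\lceil N/m\rceil\le 2N/m$, using $m\le N/2$. The triangle inequality gives
\[
\{\|S_N\|_\infty\ge t\}\subset\bigcup_{r=1}^m\{\|S^{(r)}\|_\infty\ge t/m\}.
\]

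\textbf{Step 2 (decoupling).} Fix $r$. Consecutive indices in $\mathcal{J}_r$ differ by exactly $m$. I would apply Berbee's coupling lemma sequentially along $\mathcal{J}_r=\{i_1<i_2<\cdots\}$. At stage $p$ it constructs $X^\ast_{i_p}$ with the same law as $X_{i_p}$, independent of $(X_{i_1},\ldots,X_{i_{p-1}},X^\ast_{i_1},\ldots,X^\ast_{i_{p-1}})$, and with
\[
\PR(X^\ast_{i_p}\ne X_{i_p})\le\beta\bigl(\sigma(X_1,\ldots,X_{i_{p-1}}),\,\sigma(X_{i_p},\ldots,X_N)\bigr)\le\beta(m),
\]
where the last bound uses \eqref{e:defbet}. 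The resulting family $(X^\ast_i)_{i\in\mathcal{J}_r}$ is independent with the correct marginals, and a union bound gives $\PR(\exists i\in\mathcal{J}_r: X^\ast_i\ne X_i)\le|\mathcal{J}_r|\,\beta(m)\le(2N/m)\beta(m)$. Summing over $r$ costs at most $2N\beta(m)$, well inside the stated $4N\beta(m)$.

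\textbf{Step 3 (independent bound).} Each coordinate $S^{\ast(r)}_j=\sum_{i\in\mathcal{J}_r}X^\ast_{ij}$ is a sum of independent centered variables with $\|X^\ast_{ij}\|_{\Psi_2}\le D_N$. By \eqref{eq:sum_subgauss},
\[
\|S^{\ast(r)}_j\|_{\Psi_2}\le D_N\sqrt{2N/m}.
\]
By \eqref{eq:tail_subgauss},
\[
\PR\bigl(|S^{\ast(r)}_j|\ge t/m\bigr)\le 2\exp\Bigl(-\frac{t^2/m^2}{c\,D_N^2\,N/m}\Bigr)=2\exp\Bigl(-\frac{t^2}{c\,D_N^2\,mN}\Bigr).
\]
A union bound over $j\le d$ and $r\le m$ yields the term $2md\exp(\cdot)$. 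Adding the coupling error from Step 2 completes the bound. The absolute constant $c$ is absorbed into the normalisation of the Orlicz norm, or equivalently into $D_N$, so that the exponent takes the stated form $-t^2/(2D_N^2mN)$.

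\textbf{Main obstacle.} I expect the delicate point to be Step 2: one must check that the sequential Berbee construction only ever involves $\beta$-coefficients between a past $\sigma$-field and a future $\sigma$-field separated by a gap of at least $m$. This is what lets the errors accumulate additively and be bounded by $\beta(m)$ as defined in \eqref{e:defbet}. In particular, the coupled variables $X^\ast_i$ must be built on an enlarged probability space, and it must be verified that this enlargement does not increase the relevant mixing coefficients. If the sequential version causes trouble, I would fall back on the Eberlein/Yu blocking inequality (Corollary 2.7 of \cite{Yu:1994}, already used in the proof of Theorem \ref{pt1}), which directly gives a total-variation error of at most $(|\mathcal{J}_r|-1)\beta(m)$ per subsequence.
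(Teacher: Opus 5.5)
Your proposal is correct and follows the same route as the paper. The paper also splits $S_N$ into the $m$ interleaved subsequences $X_i+X_{i+m}+\cdots$ and replaces them by independent copies at cost $4N\beta(m)$; it gets this step by citing Theorem 1 of \cite{yoshihara:1978b} instead of building the Berbee coupling by hand, as you do (your construction even gives the sharper cost $N\beta(m)$). It then finishes, as you do, with a union bound over the $d$ coordinates and a sub-Gaussian tail bound, and like the paper you only recover the exponent $-t^2/(2D_N^2mN)$ up to an absolute constant, which does not matter where the lemma is used.
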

\begin{proof}

As a first step, we reduce the problem of bounding the tail of $S_N$, a sum of mixing random vectors, to bounding the tail of a sum of independent random vectors. To that end, we use Theorem 1 from \cite{yoshihara:1978b}.  This theorem states that for any $N \in \mathbb{N}$, $m \in \{1,2,...,N\}$ and $t \in (0,\infty)$ the inequality
    \begin{align}    
        \PR(\|S_N\|_\infty \geq t) \leq \sum_{i=1}^m\PR(\|Z_{i}+...+Z_{i+k_im}\|_{\infty}\geq t/m)+4N\beta(m).
    \end{align} 
    holds.  Here  $k_i$ is the largest integer such that $i+k_im \leq N$ (note that $\frac{N}{2m}\leq k_i$) and $(Z_i)_{i \in \N}$ is a sequence of independent random vectors such that $Z_i\overset{d}{=}X_i$. \\ 
We now further analyze the first term on the right of the above equation. We have that 
\[
\|Z_{i}+...+Z_{i+k_im}\|_\infty\geq t/m \Leftrightarrow 
 \exists j \in \{1,...,d\}: \,|Z_{i,j}+...+Z_{i+k_im,j}|\geq t/m\,. 
\]
We can thus use the union bound (along dimensions; giving us a factor $d$) and standard concentration properties  of subgaussian random variables 
\eqref{eq:max_subgauss} to obtain that
    \begin{align*}
        \PR(\|S_N\|_\infty\geq t) \leq 2md\exp\Big(-\frac{t^2}{2D_N^2mN}\Big)+4N\beta(m)~.
    \end{align*}    
\end{proof}

Next Lemma is an anticoncentration result for maxima of Gaussian random vectors, which is an immediate consequence of Theorem 1 in \cite{chernozhukov:chetverikov:kato:2017}. 
\begin{lemma} \label{lem:anti_conc}
     Let $Y = (Y_1,...,Y_d)^\top$ be a centered Gaussian random vector in $\mathbb{R}^d$ such that $\E(Y^2_i ) \ge b^2$ for all $j = 1,\ldots,d$ and some constant $b > 0$. 
    Then, it holds for every $x\in\R$ and $a>0$
    \begin{align}
        \PR(\max_{1 \leq i \leq d}Y_i\leq x+a)-\PR(\max_{1 \leq i \leq d}Y_i\leq x)\leq \frac{a}{b} (\sqrt{2\log d}+2).
    \end{align}
\end{lemma}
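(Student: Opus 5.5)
The statement is the Gaussian anti-concentration inequality of Lemma~\ref{lem:anti_conc}. I would derive it directly from Nazarov's inequality, i.e. Theorem~1 in \cite{chernozhukov:chetverikov:kato:2017}. That result says that for a centered Gaussian vector $Y$ with $\min_i \E Y_i^2 \ge b^2>0$, the maximum $\max_i Y_i$ satisfies, for every $x\in\R$ and $a>0$,
\[
\PR\Big(\max_{1\le i\le d} Y_i\le x+a\Big)-\PR\Big(\max_{1\le i\le d} Y_i\le x\Big)\le \frac{a}{b}\big(\sqrt{2\log d}+2\big).
\]
Here the probability difference equals $\PR(x<\max_i Y_i\le x+a)$. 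So the plan is mainly to check that the hypotheses match, and then to state the conclusion in the form used in the proofs of \eqref{eq_step2} and Theorem~\ref{pt1}.

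The first step is a reduction to unit variances. Set $\sigma_i^2=\E Y_i^2\ge b^2$ and $W_i=Y_i/\sigma_i$. The event $\{x<\max_i Y_i\le x+a\}$ is contained in $\bigcup_i\{Y_i\in(x,x+a],\,Y_k\le x+a\ \forall k\}$. Rescaling each coordinate turns the width-$a$ window into a window of width $a/\sigma_i\le a/b$ for $W_i$.

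The second step is Nazarov's density argument. The distribution function of $\max_i W_i$ is absolutely continuous, and its density at $t$ is bounded by $\sum_i \phi(t_i)\,\PR(\text{coordinate } i \text{ is the argmax}\mid W_i=t_i)$. This sum is then bounded by $\E\max_i W_i+2\le\sqrt{2\log d}+2$, using Gaussian integration by parts and the elementary bound $\E\max_i W_i\le\sqrt{2\log d}$. Integrating this density bound over a window of length at most $a/b$ gives the claim.

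Since this is exactly the content of the cited theorem, I would not reproduce the density estimate. I would only record two points. First, no upper bound on the variances and no independence across coordinates are needed. Second, the bound holds uniformly in $x$, which is what allows the later applications with $x=t$ and $a=\bar N^{-1}$ or $a=\epsilon$.

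The only potential obstacle is the edge case $d=1$, where $\log d=0$; there the constant $2$ still dominates and the one-dimensional normal density bound $a/(b\sqrt{2\pi})$ suffices. I would also note that the lemma is applied to $\pm$ coordinates (dimension $2d$), which only changes $\log d$ by a constant.
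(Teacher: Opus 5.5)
Your proposal is correct and matches the paper, which likewise obtains the lemma as an immediate consequence of Nazarov's inequality (Theorem~1 in \cite{chernozhukov:chetverikov:kato:2017}). That theorem holds for an arbitrary threshold vector $y\in\R^d$, and choosing $y=(x,\ldots,x)^\top$ gives exactly the stated bound; your informal sketch of the density argument is therefore not needed, since the cited result already covers the coordinate-dependent thresholds $x/\sigma_i$ that appear after rescaling to unit variances.
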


Next, we present an error bound for multivariate Gaussian approximation from \cite{chetverikov:wilhelm:kim:2021}  which requires some preparation. 
 For a sequence $\{A_i\}_{i=1}^M$ of $d$-dimensional random variables with zero mean, we define 
    \begin{align}
        S_M:=\max_{1\le j\le d}\frac{1}{\sqrt{M}}\sum_{i=1}^MA_{ij}, \quad M\in \N.
    \end{align}
The error bound is formulated under the following assumptions for some sequence of positive constants $\{a_i\}_{i \in \mathbb N}$. 
\begin{enumerate}[label=(C-\arabic*)] 
     \item\label{cond1_var} $\frac{1}{M}\sum_{i=1}^{M}\mathbb{E}(A_{ij}^2)\ge \sigma^2 $ for some  $j$; 
    \item\label{cond2_momen}  $\frac{1}{M}\sum_{i=1}^{M}\mathbb{E}(|A_{ij}|^{2+r})\le a_M^r $ for  $r=1,2$ and all $1\le j\le d$; 
    \item\label{cond3_orlizs} $\mathbb{E}[\exp(|A_{ij}|/a_M)]\le 2$ for all  $1\le j\le d$.
\end{enumerate}
Also, let $\{\tilde A_i\}_{i=1}^M$ be independent Gaussian vectors with mean zero and the same covariance profile as $\{A_i\}_{i=1}^M$. We define the analog of $S_M$ as
\begin{align} \label{eq_def_tilde_S_M}
           \tilde S_M:=\max_{1\le j\le d}\frac{1}{\sqrt{M}}\sum_{i=1}^M\tilde A_{ij}.
\end{align}
We are now in the position to state the  auxiliary result quantifying a Kolmogorov-type distance between $S_M$ and $\tilde S_M$ in relation to $M$, $a_M$ and $d$. 

\begin{lemma} [Theorem 4.1 from \cite{chetverikov:wilhelm:kim:2021}]\label{lem:th_chetverikov}
Suppose that $\{A_i\}_{i=1}^M$ is a sequence of independent centered random vectors satisfying assumptions \ref{cond1_var}-\ref{cond3_orlizs}. Let $\tilde S_M$ be given as in \eqref{eq_def_tilde_S_M}.
Then for any $x_0>0$, we have
\begin{align}
    \sup_{x\ge x_0}|\PR(S_M\le x)-\PR(\tilde S_M\le x)|\le \lb\frac{c a_M^2\log^{10}(dM)}{M}\rb^{1/6} .
\end{align}

\end{lemma}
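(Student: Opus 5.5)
The natural route is to reduce to a Gaussian approximation for maxima of sums of independent vectors whose coordinates all have non-degenerate variance. Degenerate coordinates are then handled separately, and the restriction $x\ge x_0$ is what makes that possible. Condition \ref{cond1_var} only guarantees a variance lower bound for \emph{some} coordinate $j$, so the standard results of \cite{chernozhukov:chetverikov:kato:2017} cannot be applied directly. However, coordinates with tiny variance essentially never reach a fixed positive level $x_0$.

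\textbf{Step 1: split the coordinates.} Fix a threshold $\tau=\tau_M=c_0/\log(dM)$, with $c_0$ depending on $x_0$. Split $\{1,\dots,d\}$ into a high-variance set
\[
J^+=\Big\{j:\tfrac1M\textstyle\sum_i\E A_{ij}^2\ge\tau\Big\}
\]
and its complement $J^-$. By \ref{cond1_var}, $J^+\neq\emptyset$ once $\tau\le\sigma^2$. Write $S_M^\pm$ and $\tilde S_M^\pm$ for the maxima restricted to $J^\pm$. For $x\ge x_0$,
\[
\PR(S_M\le x)=\PR(S_M^+\le x)-\PR(S_M^+\le x,\ S_M^->x),
\]
and the second term is at most $\PR(S_M^->x_0)$; the same holds for the Gaussian version. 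It therefore suffices to show two things: $\PR(S_M^->x_0)$ and $\PR(\tilde S_M^->x_0)$ are below the target rate, and the Kolmogorov distance between $S_M^+$ and $\tilde S_M^+$ is bounded.

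\textbf{Step 2: the low-variance coordinates.} For $j\in J^-$, each $M^{-1/2}\sum_iA_{ij}$ has variance at most $\tau$. By \ref{cond3_orlizs}, its summands are sub-exponential with $\Psi_1$-norm at most $a_M$. Bernstein's inequality and a union bound over at most $d$ coordinates give
\[
\PR(S_M^->x_0)\le 2d\exp\Big(-c\min\Big\{\frac{x_0^2}{\tau},\frac{x_0\sqrt M}{a_M}\Big\}\Big).
\]
With $\tau=c_0/\log(dM)$ and $c_0$ small relative to $x_0^2$, the first branch gives a polynomially small term in $dM$. The second branch is negligible whenever the claimed bound is nontrivial, i.e.\ when $a_M^2\log^{10}(dM)\le M$. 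The Gaussian counterpart is bounded by the standard Gaussian tail together with a union bound.

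\textbf{Step 3: the high-variance coordinates (main obstacle).} On $J^+$ all variances are at least $b^2=\tau$, so I would invoke the high-dimensional CLT for maxima (Theorem 2.1 of \cite{chernozhukov:chetverikov:kato:koike:2022}, or Proposition 2.1 of \cite{chernozhukov:chetverikov:kato:2017}) under moment conditions \ref{cond2_momen} and \ref{cond3_orlizs}. That theorem has the form $\big(B_M^2\log^{7}(dM)/M\big)^{1/6}$, with a constant depending on the minimal variance $b$. The delicate point is to track this $b$-dependence explicitly rather than treating $b$ as fixed. I would first rescale coordinates by their standard deviations, which inflates $a_M$ by a factor of at most $\tau^{-1/2}$ and so multiplies $a_M^2$ by $\log(dM)$. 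I would then use the anti-concentration bound of Lemma \ref{lem:anti_conc} with $b=\sqrt\tau$ inside the smoothing argument, which contributes a further factor of order $\log(dM)$. If the bookkeeping goes as expected, these extra factors raise $\log^7$ to at most $\log^{10}$, yielding the bound $(c\,a_M^2\log^{10}(dM)/M)^{1/6}$. If the constant in the cited CLT is not explicit in $b$, the fallback is to redo the Slepian–Stein smoothing step with the smoothing parameter chosen as a function of $\tau$.

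\textbf{Step 4: combine.} Collecting Steps 1–3 uniformly over $x\ge x_0$ gives the claim. The bounds from Step 2 are absorbed into the main rate.
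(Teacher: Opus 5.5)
The paper does not prove this lemma; it imports it verbatim from \cite{chetverikov:wilhelm:kim:2021}. Your route is a sound way to derive it: split the coordinates at variance level $\tau\asymp 1/\log(dM)$, discard the low-variance ones using $x\ge x_0$, and apply a standard hyperrectangle CLT to the rest. It is the same device the paper uses in its proof of Lemma~\ref{Lemma:AntiConc}. The exponents the paper reports are $\log^{10}=\log^7\cdot\log^3$ here, and $\log^8=\log^5\cdot\log^3$ in the proof of Theorem~\ref{pt1} when \cite{chernozhukov:chetverikov:kato:koike:2022} is substituted. Both are consistent with the cited proof being of exactly this form, with $B^2$ inflated by $\tau^{-3}$. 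Two of your computations need correcting, though.

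In Step~3, rescaling inflates the constant by $\tau^{-1/2}$ only for \ref{cond3_orlizs}, because \ref{cond2_momen} is not scale-homogeneous. Write $U_j:=M^{-1/2}\sum_{i}A_{ij}$ and $\sigma_j^2:=\frac1M\sum_i\E A_{ij}^2\ge\tau$ for $j\in J^+$. The standardized summands $A_{ij}/\sigma_j$ satisfy the conditions of Proposition~2.1 in \cite{chernozhukov:chetverikov:kato:2017} only if $B\ge a_M\sigma_j^{-3}$ (third moments), $B^2\ge a_M^2\sigma_j^{-4}$ (fourth moments) and $B\ge a_M\sigma_j^{-1}$ (exponential moment). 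So $B=a_M\tau^{-3/2}$, and $B^2\log^7(dM)\asymp a_M^2\log^{10}(dM)$.

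No separate anti-concentration step is needed. The event $\{\max_{j\in J^+}U_j\le x\}$ is the hyperrectangle $\{U_j/\sigma_j\le x/\sigma_j\ \forall j\in J^+\}$ in standardized coordinates, so that proposition applies directly with unit variances and a universal constant. Lemma~\ref{lem:anti_conc} concerns a common threshold and is not the relevant tool here. Your ``one log from rescaling plus one from anti-concentration'' happens to land below $\log^{10}$, but not by a valid mechanism. (Bounding $\frac1M\sum_i\E|A_{ij}|^3\le\sigma_j a_M$ by Cauchy--Schwarz would even allow $B=a_M/\tau$, i.e.\ a $\log^9$ rate.)

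In Step~2, your display does not follow from Bernstein's inequality for $\Psi_1$ variables. That inequality has $a_M^2$ rather than $\tau$ in the quadratic branch, and a variance-sensitive version must lose a logarithm in the linear branch. The fix is to truncate at $u=3a_M\log(dM)$. By \ref{cond3_orlizs}, nothing is truncated outside an event of probability $O((dM)^{-2})$, and the centering shift is negligible. Bounded Bernstein then gives the exponent $c\min\{x_0^2/\tau,\,x_0\sqrt M/(a_M\log(dM))\}$. This still dominates $\log(dM)$ whenever the claimed bound is below $1$.

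Finally, you need to absorb the polynomially small remainders into the rate. By \ref{cond1_var}, \ref{cond2_momen} and Cauchy--Schwarz, $a_M\ge\sigma^2$. The right-hand side is therefore at least of order $M^{-1/6}$, with $c$ depending on $x_0$ and $\sigma$.
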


Next, we state a bound for the Kolmogorov-distance between the maxima of Gaussian random variables with similar variance profiles. 

\begin{lemma}[Theorem 2 in \cite{chernozhukov:chetverikov:kato:2013:comparison}]\label{lem:compar_gaus}
Let $d\ge 2$ and $ X = (X_1,\ldots,X_d)^\intercal$, $Y = (Y_1,\ldots,Y_d)^\intercal$ be two centered Gaussian random vectors in $\mathbb{R}^d$ with covariance matrices $\Sigma^X = (\sigma^{X}_{ jk})$ for $1\le j,k\le d$ and $\Sigma^Y = (\sigma^{Y}_{ jk})$ for $1\le j,k\le d$, respectively. Define $\Delta= \max_{ 1\le j,k\le d} |\sigma^X_{jk}- \sigma^Y_{jk}|$ and
      $\sigma^2=\min_{1\le j\le d}\sigma^{Y}_{jj} > 0$. Then 
    \begin{align} \label{eq_ineq}
        \sup_{x\in\R} |\PR( \max_{1\le j\le d} X_j \le x)-\PR(\max_{1\le j\le d} Y_j \le x)| \le c\sigma^{-1}\Delta^{1/3} (1\vee  \log(d/\Delta))^{1/3} \log^{1/3} d,
    \end{align}
    where $c > 0$ depends only on  $\max_{1\le j\le d}\sigma^Y_{ jj}$. The right hand side of \eqref{eq_ineq} is understood to be 0 when $\Delta= 0$.
\end{lemma}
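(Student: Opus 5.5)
This result is imported from \cite{chernozhukov:chetverikov:kato:2013:comparison}. If I had to reprove it, I would use the standard smoothing, interpolation and anti-concentration route. Only the Gaussianity of $X$ and $Y$ and the lower bound $\sigma^2$ on the variances of $Y$ enter.

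\emph{Smoothing.} For $\vartheta>0$ I would replace the maximum by the soft-max
\[
F_\vartheta(z)=\vartheta^{-1}\log\textstyle\sum_{j=1}^d e^{\vartheta z_j}.
\]
It satisfies $0\le F_\vartheta(z)-\max_j z_j\le \vartheta^{-1}\log d$. Its derivatives $\pi_j=\partial_jF_\vartheta$ are nonnegative and sum to one. The second derivatives satisfy
\[
\textstyle\sum_{j,k}|\partial_{jk}F_\vartheta|\le 2\vartheta .
\]
Next I would fix $x\in\mathbb R$ and $\psi>0$ and take a $C^2$ function $g:\mathbb R\to[0,1]$ with $g=1$ on $(-\infty,x]$ and $g=0$ on $[x+\psi^{-1},\infty)$, with $\|g'\|_\infty\lesssim\psi$ and $\|g''\|_\infty\lesssim\psi^2$.

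\emph{Interpolation.} I would take $X$ and $Y$ independent and set $Z(t)=\sqrt t\,X+\sqrt{1-t}\,Y$ for $t\in[0,1]$. Gaussian integration by parts (Stein's identity) gives
\[
\frac{d}{dt}\mathbb E\, g(F_\vartheta(Z(t)))=\tfrac12\textstyle\sum_{j,k}(\sigma^X_{jk}-\sigma^Y_{jk})\,\mathbb E\,\partial_{jk}(g\circ F_\vartheta)(Z(t)).
\]
Using the derivative bounds on $F_\vartheta$ and $g$, this yields
\[
|\mathbb E\, g(F_\vartheta(X))-\mathbb E\, g(F_\vartheta(Y))|\le C\Delta(\psi^2+\vartheta\psi).
\]

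\emph{Sandwich and anti-concentration.} From the sandwich properties of $F_\vartheta$ and $g$,
\[
\mathbb P(\max_j X_j\le x)\le \mathbb E\, g(F_\vartheta(X))\le \mathbb E\, g(F_\vartheta(Y))+C\Delta(\psi^2+\vartheta\psi)\le \mathbb P\big(\max_j Y_j\le x+\vartheta^{-1}\log d+\psi^{-1}\big)+C\Delta(\psi^2+\vartheta\psi).
\]
The reverse inequality follows symmetrically by shifting $x$ downward. To remove the shift $e:=\vartheta^{-1}\log d+\psi^{-1}$, I would apply Nazarov's anti-concentration bound, Lemma \ref{lem:anti_conc}, to $Y$. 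This uses only $\min_j\sigma^Y_{jj}\ge\sigma^2$ and costs $C e\,\sigma^{-1}\sqrt{\log d}$. This step is why the variance lower bound is needed for $Y$ only, and it produces the factor $\sigma^{-1}$ in \eqref{eq_ineq}.

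\emph{Optimization.} I would choose $\vartheta=\psi\log d$. The total error is then of order
\[
\Delta\psi^2\log d+\sigma^{-1}\psi^{-1}\sqrt{\log d}.
\]
Balancing the two terms gives $\psi\asymp(\sigma\Delta)^{-1/3}(\log d)^{-1/6}$. The error is then $\sigma^{-2/3}\Delta^{1/3}(\log d)^{2/3}$, which matches \eqref{eq_ineq} only up to the logarithmic factor and the power of $\sigma$.

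The main obstacle is sharpening the logarithmic power from $(\log d)^{2/3}$ to $(1\vee\log(d/\Delta))^{1/3}\log^{1/3}d$. My first attempt would replace the crude Hessian bound by the refined Slepian-type estimate of \cite{chernozhukov:chetverikov:kato:2013:comparison}. In that estimate one restricts to the event where $F_\vartheta$ is close to $x$, which gives an extra factor of the anti-concentration probability. One then uses that $\mathbb E\max_j|Y_j|\lesssim\sqrt{\log d}$, and lets the tail truncation level scale like $\sqrt{\log(d/\Delta)}$ rather than $\sqrt{\log d}$.
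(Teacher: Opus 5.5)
Your argument follows the same route as the paper. The paper's proof says to rerun the proof of Theorem~2 in \cite{chernozhukov:chetverikov:kato:2013:comparison} (soft-max smoothing, Slepian/Stein interpolation, anti-concentration for $Y$ only), with the anti-concentration step replaced by Nazarov's bound, Lemma~\ref{lem:anti_conc}; that is what you do. There is one small slip. Since $F_\vartheta(z)\ge\max_j z_j$, the inequality $\PR(\max_j X_j\le x)\le\mathbb E\,g(F_\vartheta(X))$ needs $g=1$ on $(-\infty,x+\vartheta^{-1}\log d]$ and $g=0$ on $[x+\vartheta^{-1}\log d+\psi^{-1},\infty)$. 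The total shift $\vartheta^{-1}\log d+\psi^{-1}$ that you carry through afterwards is then correct.

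The ``main obstacle'' in your last paragraph does not exist, and that paragraph should be dropped. Your optimization gives $C\sigma^{-2/3}\Delta^{1/3}\log^{2/3}d$ with an absolute constant $C$, and this already implies \eqref{eq_ineq}; the log factor you would ``sharpen'' towards is actually larger. Write $\bar\sigma^2=\max_j\sigma^Y_{jj}$. Then $\sigma\le\bar\sigma$ gives $\sigma^{-2/3}\le\bar\sigma^{1/3}\sigma^{-1}$. If $\Delta\le 1$, then $\log(d/\Delta)\ge\log d$, so $\log^{2/3}d\le(1\vee\log(d/\Delta))^{1/3}\log^{1/3}d$. If $\Delta>1$, the right-hand side of \eqref{eq_ineq} is at least $c\,\bar\sigma^{-1}(\log 2)^{1/3}$. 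This is $\ge 1$, so the claim is trivial, once $c\ge\bar\sigma(\log 2)^{-1/3}$. Hence \eqref{eq_ineq} holds with $c=\max\{C\bar\sigma^{1/3},\,\bar\sigma(\log 2)^{-1/3}\}$, which depends only on $\max_j\sigma^Y_{jj}$. The case $\Delta=0$ follows by letting $\psi\to\infty$.

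The $\log(d/\Delta)$ factor in the cited theorem comes from the anti-concentration inequality used there for unequal variances, which carries an extra $\log(1/\epsilon)$ term. Nazarov's inequality has no such term, which is why your version lands directly on the sharper $\Delta^{1/3}\log^{2/3}d$ rate. No restriction to events near $x$ and no truncation at level $\sqrt{\log(d/\Delta)}$ is needed.
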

\begin{rem}
    Theorem 2 in \cite{chernozhukov:chetverikov:kato:2013:comparison} is formulated using a generic constant $C$ whose value depends on
      $\min_{1\le j\le d}\sigma^{Y}_{jj} $, without specifying the precise form of this dependence. However, in our application of Lemma \ref{lem:compar_gaus}, $\min_{1\le j\le d}\sigma^{Y}_{jj} $ will tend to 0 as the dimension $d$ grows. 
     To deal with this issue, we find an upper bound in Lemma \ref{lem:compar_gaus} that explictly depends on  $\min_{1\le j\le d}\sigma^{Y}_{jj} $. This bound can be obtained by following the proof of  Theorem 2 in \cite{chernozhukov:chetverikov:kato:2013:comparison} but replacing the anti-concentration result used there by the one presented in Lemma~\ref{lem:anti_conc}. 
\end{rem}

We conclude this section with a bound for the partial-sum process of subgaussian random variables. 
\begin{lemma}
    \label{lem:psum_conc}
    Let $(\epsilon_i)$ denote a sequence of independent mean zero random variables that satisfy
    \[ 
        \|\epsilon_i\|_{\Psi_2}\leq c_1.
    \]
    Then there exists a constant $c_2$ that depends only on $c_1$ and a random variable $M$ with $\|M\|_{\Psi_2}\leq c_2$ such that
    \[
        \Big|\sum_{n=l+1}^k\epsilon_n\Big|\leq M\sqrt{(k-l)\log(ek)}, \quad l\leq k, k\geq 1~.
    \]
\end{lemma}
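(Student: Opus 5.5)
We will take the smallest admissible random variable, namely
\[
M:=\sup_{k\ge 1}\,\max_{0\le l<k}\frac{\big|\sum_{n=l+1}^k\epsilon_n\big|}{\sqrt{(k-l)\log(ek)}},
\]
and show that $M$ has a sub-Gaussian tail with constants depending only on $c_1$. With this choice the displayed inequality holds by definition for all $l<k$, and it holds trivially for $l=k$ since the sum is empty. The supremum runs over a countable index set, so $M$ is measurable. The only thing to prove is therefore that $\|M\|_{\Psi_2}\le c_2$.

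\textbf{Step 1 (single pair).} For a fixed pair $l<k$, \eqref{eq:sum_subgauss} with $a_i\in\{0,1\}$ gives $\|\sum_{n=l+1}^k\epsilon_n\|_{\Psi_2}\le c_1\sqrt{k-l}$. Then \eqref{eq:tail_subgauss} yields, for some absolute constant $c$,
\[
\PR\Big(\Big|\sum_{n=l+1}^k\epsilon_n\Big|>t\sqrt{(k-l)\log(ek)}\Big)\le 2\exp\Big(-\frac{t^2\log(ek)}{2cc_1^2}\Big)=2(ek)^{-s},\qquad s:=\frac{t^2}{2cc_1^2}.
\]
The key point is that the $\sqrt{k-l}$ normalisation exactly cancels the variance of the block. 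What remains is a penalty $(ek)^{-s}$ that depends only on the right endpoint $k$.

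\textbf{Step 2 (union bound).} For each $k$ there are exactly $k$ admissible values of $l$. A union bound over all pairs therefore gives
\[
\PR(M>t)\le 2\sum_{k\ge1}k\,(ek)^{-s}=2e^{-s}\sum_{k\ge1}k^{1-s}.
\]
If $s\ge 3$, the last series is at most $\zeta(2)<2$, so $\PR(M>t)\le 4e^{-s}=4\exp(-t^2/(2cc_1^2))$. This holds whenever $t\ge t_0:=\sqrt{6c}\,c_1$. For $t<t_0$ we use the trivial bound $\PR(M>t)\le1\le e^{3}\exp(-t^2/(2cc_1^2))$. Combining the two ranges, $\PR(M>t)\le C\exp(-t^2/(C'c_1^2))$ for all $t\ge0$, with absolute constants $C$ and $C'$.

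\textbf{Step 3 (conversion to the Orlicz norm).} By the standard equivalence between sub-Gaussian tail bounds and finiteness of the $\Psi_2$-norm (e.g.\ Proposition 2.5.2 in \cite{Vershynin:2018}), Step 2 implies $\|M\|_{\Psi_2}\le c_2$, where $c_2$ depends only on $c_1$. In particular $M<\infty$ almost surely.

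The only delicate point is the summability in Step 2. The $\log(ek)$ factor is precisely what makes the double sum over pairs $(l,k)$ converge, since it turns the per-pair Gaussian tail into the polynomially decaying factor $k^{-s}$. That factor defeats the linear growth, in $k$, of the number of pairs, once $s$ exceeds a fixed threshold. Everything else is routine.
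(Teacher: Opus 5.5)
Your proof is correct, and it takes a different, more elementary route than the paper. The paper does not argue from scratch. It invokes an external modulus-of-continuity bound, Lemma 2.1 of \cite{bastian:kutta:2025}, for the rescaled partial sum process $P_N(k/N)=N^{-1/2}\sum_{n\le k}\epsilon_n$ on a non-compact domain. That bound controls increments with the finer normalization $\sqrt{(k-l)/N\,\big(1+\log(k/(k-l))+|\log(k/N)|\big)}$, which the paper then coarsens to $\sqrt{3(k-l)\log(e(k\vee N))}$.

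You instead work directly with the crude normalization $\sqrt{(k-l)\log(ek)}$. The block Hoeffding bound turns each pair's tail into $2(ek)^{-s}$, depending only on the right endpoint. Since only $k$ pairs share that endpoint, a plain union bound over all pairs is summable once $s\ge 3$. Your handling of the range $t<t_0$ via the factor $e^{3}$ and the conversion to the $\Psi_2$-norm are both fine.

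The paper's route buys a sharper, L\'evy-type modulus: only a $\log(k/(k-l))$ penalty for long blocks, plus a $|\log(k/N)|$ horizon term. That refinement is not needed for this lemma. Your route buys a self-contained argument with $c_2$ explicitly linear in $c_1$. It also yields exactly the $\log(ek)$ of the statement, whereas the paper's final display carries $\log(e(k\vee N))$. That difference is harmless, because the proof of Lemma \ref{lem1} only uses $k\ge N^{1+\zeta}$, where the two coincide.

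One cosmetic remark: the correct sub-Gaussian Hoeffding inequality behind \eqref{eq:sum_subgauss} carries an absolute constant factor, which the paper's displayed version omits. Your argument is unaffected, since that constant simply gets absorbed into $c$.
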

\begin{proof}[Proof of Lemma \ref{lem:psum_conc}]
    Let $P_N(k/N)=N^{-1/2}\sum_{n=1}^{k}\epsilon_n$ denote the partial sum process. We apply Lemma 2.1 from \cite{bastian:kutta:2025} to obtain that
    \[
        M:=\sup_{l,k\geq 1}\frac{\Big|P_N(k/N)-P_N(l/N)\Big|}{\sqrt{(k-l)/N(1+\log(k/(k-l))+|\log(k/N)|}}
    \]
    satisfies $\|M\|_{\Psi_2}<\infty$ with the precise value depending only on $c_1$. Appropriate multiplication then yields that
    \[
         \Big|\sum_{n=l+1}^k\epsilon_n\Big|\leq M \sqrt{N}\sqrt{(k-l)/N(1+\log(k/(k-l))+|\log(k/N)|}\leq M\sqrt{3(k-l)\log(e(k\lor N))}
    \]
    from which the desired conclusion follows.
\end{proof}

\end{document}